\documentclass[11pt]{article}

\usepackage[dvipsnames]{xcolor}
\definecolor{Gred}{RGB}{219, 50, 54}
\definecolor{ToCgreen}{RGB}{0, 128, 0}

\usepackage[margin=1in]{geometry}
\usepackage[T1]{fontenc}

\usepackage[scale=0.97]{XCharter} 
\usepackage[libertine,bigdelims,vvarbb,scaled=1.05]{newtxmath} 

\usepackage{babel}
\usepackage[spacing=true,kerning=true,babel=true,tracking=true]{microtype}

\DeclareMathAlphabet{\pazocal}{OMS}{zplm}{m}{n} 
\renewcommand{\mathcal}[1]{\pazocal{#1}}

\usepackage{makecell}
\usepackage{dsfont}

\usepackage{multirow}
\usepackage{amsmath,amsthm}
\usepackage{bm}
\usepackage{bbm}
\usepackage{textgreek}
\usepackage{mathtools}
\usepackage[shortlabels]{enumitem}
\usepackage[numbers,comma,sort&compress]{natbib}
\usepackage{authblk}
\usepackage{graphicx}
\usepackage[font=small]{caption}
\usepackage[labelformat=simple]{subcaption}

\usepackage{float}
\usepackage[linesnumbered,ruled,vlined]{algorithm2e}
\SetKwInput{KwInput}{Input}
\SetKwInput{KwOutput}{Output}
\SetKwInOut{Promise}{Promise}
\SetKwInput{Goal}{Goal}
\SetKwProg{Fn}{function}{}{}
\SetKwFor{RepTimes}{repeat}{times}{}
\SetKwFunction{Ver}{Verify}
\SetKwFunction{Prep}{PrepareState}
\usepackage{algorithmic}

\usepackage{physics}
\usepackage{footnote}
\usepackage{xcolor}
\usepackage{mathrsfs}
\usepackage{bbm}
\usepackage{braket}
\usepackage[colorlinks]{hyperref}
\usepackage{cleveref}
\usepackage{tikz}
\usetikzlibrary{matrix, arrows.meta}
\hypersetup{
      colorlinks=true,
  citecolor=ToCgreen,
  linkcolor=Sepia,
  filecolor=Gred,
  urlcolor=Gred
  }
\usepackage{placeins}

\numberwithin{equation}{section}

\newtheorem{theorem}{Theorem}[section]
 
\newtheorem{definition}[theorem]{Definition}
\newtheorem{lemma}[theorem]{Lemma}

\newtheorem{corollary}[theorem]{Corollary}

\newtheorem{problem}{Problem}
\newtheorem{problemp}{Problem}

\renewcommand{\eqref}[1]{\hyperref[#1]{Eq.~(\ref{#1})}}
\newcommand{\probsref}[1]{Problems~\hyperref[#1]{\ref{#1}}}
\newcommand{\probref}[1]{\hyperref[#1]{Problem~\ref{#1}}}
\newcommand{\secref}[1]{\hyperref[#1]{Section~\ref{#1}}}
\newcommand{\lemmaref}[1]{Lemma~\hyperref[#1]{\ref{#1}}}

\newcommand{\algo}[1]{\hyperref[algo:#1]{Algorithm~\ref*{algo:#1}}}

\newcommand{\N}{\mathbb{N}}

\newcommand{\R}{\mathbb{R}}
\newcommand{\E}{\mathbb{E}}

\newcommand{\cM}{\mathcal{M}}
\newcommand{\cN}{\mathcal{N}}
\newcommand{\cT}{\mathcal{T}}

\newcommand{\cP}{\mathcal{P}}
\newcommand{\cD}{\mathcal{D}}
\newcommand{\tcD}{\widetilde{\mathcal{D}}}
\newcommand{\cS}{\mathcal{S}}

\newcommand{\cX}{\mathcal{X}}
\newcommand{\tM}{\tilde{M}}
\usepackage{mathrsfs}
\newcommand{\scrp}{\mathscr{P}}
\newcommand{\tI}{\tilde{I}}
\newcommand{\tQ}{\tilde{Q}}
\newcommand{\trho}{\tilde{\rho}}
\newcommand{\bR}{{\mathbb R}}
\newcommand{\bE}{{\mathbb E}}
\newcommand{\htheta}{{\hat \theta}}
\newcommand{\MoM}{{\rm MoM}}
\newcommand{\hrho}{{\hat \rho}}
\newcommand{\ttheta}{{\tilde \theta}}
\newcommand{\dtheta}{{\mathrm{d}\theta}}
\newcommand{\hato}{{\hat{o}}}
\newcommand{\DD}{{\mathscr{D}}}
\newcommand{\DDD}{{\mathscr{N}}}
\newcommand{\BB}{{\mathscr{B}}}
\newcommand{\prob}{{\text{Pr}}}

\usepackage{textgreek}

\DeclareMathOperator{\poly}{poly}
\newcommand{\polylog}{{\rm polylog}}
\DeclareMathOperator{\Var}{Var}
\DeclareMathOperator{\diag}{diag}

\renewcommand{\tr}{\mathrm{tr}}
\renewcommand{\Tr}{\mathrm{tr}}

\newcommand{\0}{\mathbf{0}}

\newcommand{\ob}{\mathrm{ob}}
\renewcommand{\emptyset}{\varnothing}
\def\Tr{\tr}\def\:{\hbox{\bf:}}

\newcommand{\tO}{\tilde{O}}
\newcommand{\tOmega}{\tilde{\Omega}}
\newcommand{\tTheta}{\tilde{\Theta}}

\newcommand{\coleq}{\mathrel{\mathop:}\nobreak\mkern-1.2mu=}

\newcommand{\bb}{\begin{equation}\begin{aligned}\hspace{0pt}}
\newcommand{\bbb}{\begin{equation}\begin{aligned}}
\newcommand{\ee}{\end{aligned}\end{equation}}
\newcommand{\eee}{\end{aligned}\end{equation}}
\renewcommand{\epsilon}{\varepsilon}

\definecolor{gold}{rgb}{0.85, 0.65, 0.13}

\newcommand{\id}{{\mathbbm{1}}}

\allowdisplaybreaks

\begin{document}
%%%%%%%%%%%%%%%%%%%%%%%%%%%%%%%%%%%%%%%%%%%%%%%%%%%%%%%%%%%%%
%%%%%%%%%%%%%%%%%%%%%%%%%%%%%%%%%%%%%%%%%%%%%%%%%%%%%%%%%%%%%
%%%%%%%%%%%%%%%%%%%%%%%%%%%%%%%%%%%%%%%%%%%%%%%%%%%%%%%%%%%%%
%%%%%%%%%%%%%%%%%%%%%%%%%%%%%%%%%%%%%%%%%%%%%%%%%%%%%%%%%%%%%

\title{Instance-optimal high-precision shadow tomography with few-copy measurements: A metrological approach}
% \title{Tight bounds for high-precision shadow tomography with few-copy measurements}
\author{Senrui Chen
\thanks{IQIM, California Institute of Technology. Email: \href{mailto:csenrui@gmail.com}{csenrui@gmail.com}.}
\qquad\qquad
Weiyuan Gong
\thanks{SEAS, Harvard University. Email: \href{mailto:wgong@g.harvard.edu}{wgong@g.harvard.edu}. A part of this work was conducted while visiting IQIM.}
\qquad\qquad
Sisi Zhou
\thanks{Perimeter Institute. Email: \href{mailto:sisi.zhou26@gmail.com}{sisi.zhou26@gmail.com}.}
\,\thanks{Department of Physics and Astronomy, Department of Applied Mathematics and IQC, University of Waterloo.}\,
\thanks{Authors are listed in alphabetical order.}
}
\date{\today}
\maketitle

\pagenumbering{gobble}

\begin{abstract}

We study the sample complexity of shadow tomography in the high-precision regime under realistic measurement constraints. Given an unknown $d$-dimensional quantum state $\rho$ and a known set of observables $\{O_i\}_{i=1}^m$, the goal is to estimate expectation values $\{\mathrm{tr}(O_i\rho)\}_{i=1}^m$ to accuracy $\epsilon$ in $L_p$-norm, using possibly adaptive measurements that act on $O(\mathrm{polylog}(d))$ number of copies of $\rho$ at a time. We focus on the regime where $\epsilon$ is below an instance-dependent threshold. 

Our main contribution is an instance-optimal characterization of the sample complexity as $\tilde{\Theta}(\Gamma_p/\epsilon^2)$, where $\Gamma_p$ is a function of $\{O_i\}_{i=1}^m$ defined via an optimization formula involving the inverse Fisher information matrix. Previously, tight bounds were known only in special cases, e.g. Pauli shadow tomography with $L_\infty$-norm error. Concretely, we first analyze a simpler oblivious variant where the goal is to estimate an observable of the form $\sum_{i=1}^m \alpha_i O_i$ with $\|\alpha\|_q = 1$ (where $q$ is dual to $p$) revealed after the measurement. For single-copy measurements, we obtain a sample complexity of $\Theta(\Gamma^{\mathrm{ob}}_p/\epsilon^2)$. We then show $\tilde{\Theta}(\Gamma_p/\epsilon^2)$ is necessary and sufficient for the original problem, with the lower bound applying to unbiased, bounded estimators. Our upper bounds rely on a two-step algorithm combining coarse tomography with local estimation. Notably, $\Gamma^{\mathrm{ob}}_\infty = \Gamma_\infty$. In both cases, allowing $c$-copy measurements improves the sample complexity by at most $\Omega(1/c)$. 

Our results establish a quantitative correspondence between quantum learning and metrology, unifying asymptotic metrological limits with finite-sample learning guarantees.

\end{abstract}

% \clearpage
% \newpage
\tableofcontents
\clearpage
\newpage

%%%%%%%%%%%%%%%%%%%%%%%%%%%%%%%%%%%%%%%%%%%%%%%%%%%%%%%%%%%%%%%%%%%%%%%%%%%%%%%%%%%%%%%%%%%%%%%%%%%%%%%%%%%%%%%%%%%%%%%%%%%%%%%%%%%%%%%%%%%%%%%%%%%%%%%%%%%%%%%%%%%%%%%%%%%%%%%%%%%%%%%%%%
\pagenumbering{arabic}

\section{Introduction}

A fundamental task in quantum information is to characterize an unknown or partly unknown state, with applications in \textit{e.g.} quantum sensing~\cite{degen2017quantum,pirandola2018advances}, quantum algorithms~\cite{Dalzell2025QuantumAlgorithms}, as well as benchmarking noisy quantum devices~\cite{harper2020efficient,hashim2024practical}. There are two fields of research that have been devoted to conquering this task: One is Quantum Learning; the other is Quantum Metrology. Despite the same over-arching goal of understanding how efficiently one can extract information about an unknown quantum system, the languages, techniques, and communities are surprisingly different. 

Quantum metrology, or quantum estimation theory, is a topic with a long history and is widely applied in experiments~\cite{giovannetti2011advances,degen2017quantum,pezze2018quantum,pirandola2018advances}. As a typical setting, a quantum state is parameterized by one or many unknown parameters. The goal is to understand how precisely the parameters can be determined as the number of available state copies goes to infinity. Modern theoretical quantum metrology research mostly relies on (quantum) Fisher information and the (quantum) Cram\'{e}r--Rao bound, which are powerful tools in asymptotic statistics~\cite{van2000asymptotic}. 

Quantum learning is a younger field, strongly influenced by the computer science and machine learning community~\cite{arunachalam2017guest}. One typical question, known as shadow tomography~\cite{aaronson2018shadow}, asks how many copies of an $d$-dimensional unknown state are needed to estimate a set of observables to certain precision $\varepsilon$. Crucially, instead of investigating the $\varepsilon\rightarrow0$ limit, one is interested in the scaling of complexity with finite $\varepsilon$ and $d$. This falls into the regime of non-asymptotic statistics, and people are using very different methods than those used in quantum metrology research.

Nevertheless, many have wondered about the following question:
\begin{center} 
    {\em Can we establish a rigorous correspondence between quantum metrology and quantum learning?}
\end{center}

We provide an affirmative answer by using quantum metrological approaches to solve an important open problem in quantum learning theory: namely, the instance-optimal sample complexity for high-precision shadow tomography with few-copy measurements. At a high level, the problem is to estimate many given observables of an unknown quantum states to $\varepsilon$ precision in $L_p$ norm., with the restriction that each round of measurement acts upon one or a few copies. Here, ``instance-optimal'' means the bounds depend on the set of observables, and ``high-precision'' means we restrict $\varepsilon$ to be smaller than a concrete threshold that may depend on the dimension of the Hilbert space and the specific set of observables. For this task, we essentially show that an intuitive bound one would expect using Fisher information gives a tight characterization of the sample complexity. 
Our results show that quantum metrology and learning are not only closely related in concept, but has exact mathematical correspondence in an appropriate regime.

\section{Results}

\subsection{Six problems: learning, estimation and distinguishing}

Consider a $d$-dimensional Hilbert space.
Let $\{O_i\}_{i=1}^m$ be a set of linearly-independent (thus $m\le d^2-1$) traceless Hermitian operators that we want to learn. Define the dual operator basis $\{Q_a\}_{a\in A}\cup \{T_b\}_{b\in B}$ which form a complete basis of traceless Hermitian observables satisfying
    \begin{equation}
    \label{eq:dual}
        \begin{aligned}
            \Tr(O_iQ_a)&=d\delta_{ia},\quad\forall i, a\in A,\\     
            \Tr(O_iT_b)&=0,\quad\forall i\in A ,b\in B.
        \end{aligned}
    \end{equation}
    We use $A = \{1,\ldots,m\}$ and $B = \{m+1,\ldots,d^2-1\}$ to denote the set of indices for $Q_a$ and $T_b$. 
    Now, fix any (known) reference quantum state $\rho_0$, one can always parameterize any quantum state $\rho$ by
    \begin{equation}
    \label{eq:parameterization}
        \rho_{\theta,\varphi}\coleq \rho_0+\frac{1}{d}\sum_{a \in A }\theta_a Q_a +\frac{1}{d}\sum_{b \in B}\varphi_b T_b.
    \end{equation}  
    With these definitions, we are ready to introduce the following problems about quantum state learning, estimation, and distinguishing. 
    $p \in [1,\infty]$ is a variable index in the following problems.

\begin{problem}[Learning of observables with $p$-norm error]  \label{prob:p-learning} 
Given $N$ i.i.d. copies of a quantum state $\rho$, find estimators $\{\hato_i\}_{i=1}^m$ such that $(\sum_{i=1}^m \abs{\trace(O_i\rho) - \hato_i}^p)^{1/p} < \epsilon$ with high probability. \sloppy
\end{problem}

\begin{problem}[Estimation of parameters with $p$-norm error] \label{prob:p-estimation}
Given $N$ i.i.d. copies of a parameterized quantum state $\rho_{\theta,\varphi}$ for some known (but arbitrary) choice of $\rho_0,\{Q_a\}_{a\in A}$ and $\{T_b\}_{b\in B}$, find estimators $\{\htheta_i\}_{i=1}^{m}$ such that $(\sum_{i=1}^m | \htheta_i - \theta_i |^p)^{1/p} < \epsilon$ with high probability.  
\end{problem}

\begin{problem}[Distinguishing between one and many states]\label{prob:distinguish}
  Given some known (but arbitrary) choice of $\rho_0,\{Q_a\}_{a\in A}$ and $\{T_b\}_{b\in B}$, and $N$ identical copies of a quantum state that is either $\rho_0$ or $\rho_{\theta,\varphi}$ with equal probability, where $\theta \in \bR^m$ and $\varphi \in \bR^{d^2-m-1}$ can be arbitrary unknown vectors such that $\rho_{\theta,\varphi}$ is well-defined and $\norm{\theta}_p = 3\epsilon$, distinguish the two cases with high probability.
\end{problem}

\Cref{prob:p-learning} is a standard quantum state learning scenario, where $p = \infty$ is traditionally known as shadow tomography where expectation values of observables are learned up to a constant additive error. In our work, we extend the discussion to a more general form of shadow tomography where the target precision is $p$-norm error (for any $p \geq 1$) and obtain tight bounds on the sample complexity in the high-precision regime where $\epsilon$ is sufficiently small. We assume the choices of observables $\{O_i\}_{i=1}^m$ are known to the experimentalists prior to the experiments, which allows optimization of the learning algorithms based on the knowledge of target observables. As a result, the bounds we obtain are \emph{instance-optimal}, as functions of observables $\{O_i\}_{i=1}^m$, which capture different levels of difficulties when learning different types of observables. 

To put the quantum state learning problem in the context of quantum state estimation where techniques from quantum metrology are available, we first observe that \Cref{prob:p-learning} is equivalent to the state estimation problem \Cref{prob:p-estimation}. 
\begin{proof}[Proof (Equivalence between \Cref{prob:p-learning} and \Cref{prob:p-estimation}).]
We note that any state $\rho$ can be parametrized as \eqref{eq:parameterization} because $\rho_0$ and $\{Q_a\}_{a\in A}\cup \{T_b\}_{b\in B}$ form a complete basis of $d$-dimensional Hermitian operators. Furthermore, $\theta_a = \trace(\rho O_a) - \trace(\rho_0 O_a)$. Therefore, estimating $\theta_a$ up to $p$-norm error is equivalent to estimating $\trace(\rho O_a)$ up to $p$-norm error because they differ only by a known, constant vector. 
\end{proof}

Hypothesis testing or distinguishing tasks are in general easier than learning (or estimation) tasks, and they conveniently provide lower bounds on the resource required for learning tasks from an information-theoretic point of view. The intuition is successful learning of quantum state properties can be used to distinguish different types of states. In our case, we consider a many-versus-one distinguishing task where a reference state $\rho_0$ is to be distinguished from $\rho_{\theta,\varphi}$ which is close to $\rho_0$ under our $p$-norm metric. 

\begin{proof}[Proof (\Cref{prob:distinguish} is no harder than \Cref{prob:p-estimation}).]\sloppy
Consider the unknown state $\rho$ given in \Cref{prob:distinguish}. Let $\hat{\theta}$ be the estimator constructed from \Cref{prob:p-estimation}. If $\rho = \rho_0$, $\|\hat\theta\|_p < \epsilon$ with high probability; and if $\rho = \rho_{\theta,\varphi}$, $\|\hat\theta - \theta\|_p < \epsilon$ and $\|\hat\theta\|_p \geq \|\theta\|_p - \|\hat\theta - \theta\|_p > 2\epsilon$ with high probability. Thus, by determining whether $\|\hat\theta\|_p$ is closer to $0$ or $3\epsilon$, \Cref{prob:distinguish} can be solved with high probability using $\hat{\theta}$. 
\end{proof}

As we will see later, when $p = \infty$ and $\epsilon$ is sufficiently small, \Cref{prob:distinguish} is as hard as 
\Cref{prob:p-estimation} up to a logarithmic overhead. It implies the distinguishing capability can almost capture the learning/estimation capability in certain regimes which in general does not hold.

\begin{figure}[tb]
    \centering

\begin{equation*}
\begin{tikzpicture}[baseline=(m.center),
  doublearrow/.style={
  double,
  double distance=1pt,
  -{Straight Barb[length=3pt]}
},
doublearrowrev/.style={
  double,
  double distance=1pt,
  {Straight Barb[length=3pt]}-
},
doublearrowboth/.style={        % <--- INSERTED HERE
    double,
    double distance=1pt,
    {Straight Barb[length=3pt]}-{Straight Barb[length=3pt]}
}
]

\matrix (m) [
  matrix of nodes,
  column sep=3em,
  row sep=1.5em,
  nodes={align=center}
]{
  \begin{tabular}{@{}c@{}}
Learning\\
(\Cref{prob:p-learning})\end{tabular} &   \begin{tabular}{@{}c@{}}
Estimation\\
(\Cref{prob:p-estimation})\end{tabular} &   \begin{tabular}{@{}c@{}}
Distinguishing\\
(\Cref{prob:distinguish})\end{tabular} \\
       &     &     \\
  \begin{tabular}{@{}c@{}}
Oblivious Learning\\
(\Cref{prob:oblivious-p-learning})\end{tabular} &   \begin{tabular}{@{}c@{}}
Oblivious Estimation\\
(\Cref{prob:oblivious-p-estimation})\end{tabular} & \begin{tabular}{@{}c@{}}
Oblivious Distinguishing\\
(\Cref{prob:oblivious-distinguish})\end{tabular} \\
};

% horizontal arrows:
\draw[doublearrowboth] (m-1-1) -- (m-1-2);  % double-ended (⇔)
\draw[doublearrow]                 (m-1-2) -- (m-1-3);  % right-pointing (⇒)

\draw[doublearrowboth] (m-3-1) -- (m-3-2);  % double-ended (⇔)
\draw[doublearrow]                 (m-3-2) -- (m-3-3);  % right-pointing (⇒)

% vertical arrows:
\draw[doublearrow] (m-1-1) -- (m-3-1);
\draw[doublearrow] (m-1-2) -- (m-3-2);
\draw[doublearrow] (m-1-3) -- (m-3-3);

\end{tikzpicture}
\end{equation*}

    \caption{Relationship between six problems. $\Leftrightarrow$ means two problems are equivalent, and $\Leftarrow$ means one problem is no harder than (reduces to) the other. Note that within each problem, increasing $p$ (or decreasing $q$) will not increase the hardness of the problem. }
    \label{fig:six-prob}
\end{figure}
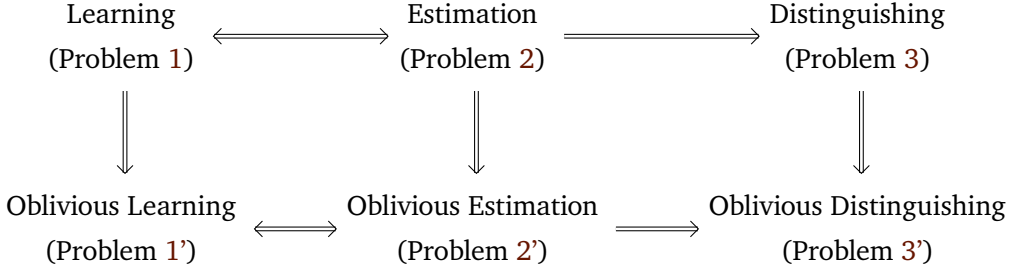

Next, we consider a different but also practically relevant setting which we call \emph{oblivious} learning or oblivious estimation. Here, instead of aiming at precisely estimating all expectation values of observables, we need to only estimate a single observable chosen arbitrarily from a set which we call the $L^q$-ellipsoid of $\{O_i\}_{i=1}^m$, 
\begin{equation}
    \left\{O_\alpha := \sum_{i=1}^m \alpha_i O_i, \quad \forall \alpha \in \bR^m, \norm{\alpha}_q = 1 \right\},
\end{equation}
where $q \in [1,\infty]$ satisfying $1/p+1/q=1$ is the dual index to $p$. The choice of $O_\alpha$ (i.e. the value of $\alpha$) will only be revealed after all quantum measurements are performed, forbidding any informed optimization of the measurement prior to data processing---which explains the name ``oblivious''. Below we have the three oblivious versions of the previously defined learning, estimation, and distinguishing problems.

\begin{problemp}[Oblivious learning of an observable from $L_q$-ellipsoid]  \label{prob:oblivious-p-learning} Given $N$ i.i.d. copies of quantum state $\rho$ and an arbitrary $\alpha \in \bR^m$ satisfying $\norm{\alpha}_q \leq 1$, where the values of $\alpha$ is revealed after all quantum measurements are performed, find an estimators $\hato_\alpha$ such that $|\trace(\sum_{i=1}^m \alpha_iO_i\rho) - \hato_\alpha | < \epsilon$ with high probability. \sloppy
\end{problemp}

\begin{problemp}[Oblivious estimation of a linear function of parameters]  Given $N$ i.i.d. copies of the parametrized quantum state $\rho_{\theta,\varphi}$ described by some known (but arbitrary) choice of $\rho_0,\{Q_a\}_{a\in A}$ and $\{T_b\}_{b\in B}$ and an arbitrary $\alpha \in \bR^m$ satisfying $\norm{\alpha}_q \leq 1$, with the values of $\alpha$ revealed after all quantum measurements are performed, find an estimator $\htheta_\alpha$ such that $|\theta_\alpha  - \htheta_\alpha| < \epsilon$ with high probability where $\theta_\alpha : = \alpha\cdot\theta$.  
\label{prob:oblivious-p-estimation}
\end{problemp}

\begin{problemp}[Oblivious distinguishing between two states]\label{prob:oblivious-distinguish}
   Given some known (but arbitrary) choice of $\rho_0,\{Q_a\}_{a\in A}$ and $\{T_b\}_{b\in B}$ and $N$ i.i.d. copies of a quantum state that is either $\rho_0$ or $\rho_{\theta,\varphi}$ with equal probability, where $\theta \in \bR^m$ and $\varphi \in \bR^{d^2-m-1}$ can be arbitrary unknown vectors such that $\rho_{\theta,\varphi}$ is well-defined and $\norm{\theta}_p = 3\epsilon$, with the values of $(\theta,\varphi)$ revealed after all quantum measurements are performed, distinguish the two cases with high probability. 
\end{problemp}

Analogous to the previous proof, we observe that  \Cref{prob:oblivious-p-learning} is equivalent to \Cref{prob:oblivious-p-estimation}, and is no easier than  \Cref{prob:oblivious-distinguish}.

\begin{proof}[Proof (\Cref{prob:oblivious-distinguish} is no harder than \Cref{prob:oblivious-p-estimation}).]\sloppy
Let $(\theta,\varphi)$ be the revealed parameters and $\rho$ be the unknown state in \Cref{prob:oblivious-distinguish}. Pick $\alpha \in \bR^m$ proportional to $\theta$ such that $\alpha\cdot\theta = 3\epsilon$ and $\norm{\alpha}_q = 1$. Let $\hat{\theta}_\alpha$ be the estimator constructed from \Cref{prob:oblivious-p-estimation}. If $\rho = \rho_0$, $|\hat\theta_\alpha| < \epsilon$ with high probability; and if $\rho = \rho_{\theta,\varphi}$, $|\alpha\cdot \theta - \hat\theta_\alpha| < \epsilon$ and $|\hat\theta_\alpha| \geq |\alpha\cdot \theta| - |\hat\theta_\alpha - \alpha\cdot \theta| > 2\epsilon$ with high probability. Thus, by determining whether $|\hat\theta_\alpha|$ is closer to $0$ or $3\epsilon$, \Cref{prob:distinguish} can be solved with high probability using $\hat{\theta}_\alpha$. 
\end{proof}

Moreover, the oblivious versions of tasks described above are no harder than the original learning, estimation and distinguishing tasks. While \Cref{prob:oblivious-distinguish} is trivially no harder than \Cref{prob:distinguish}, the relationship between \Cref{prob:oblivious-p-estimation} and \Cref{prob:p-estimation} can be seen from the following. 

\begin{proof}[Proof (\Cref{prob:oblivious-p-estimation} is no harder than \Cref{prob:p-estimation}).]\sloppy
Consider \Cref{prob:p-estimation} and let $\htheta$ be an estimator satisfying $\|\htheta-\theta\|_p < \epsilon$ with high probability. Then $\htheta_\alpha = \alpha\cdot\htheta$ solves \Cref{prob:oblivious-p-estimation} because $| \alpha\cdot\theta - \alpha\cdot\htheta | \leq \norm{\alpha}_q \|\htheta-\theta\|_p < \epsilon$ using H\"{o}lder's inequality. 
\end{proof}

Oblivious estimation fits well into the ``measure first, ask questions later'' regime where the observable of interest is determined after the experiments. In our case, $O_\alpha$ can be chosen freely as any linear combination of $\{O_i\}_{i=1}^m$. As a result, the oblivious estimation task is also challenging to solve. As we will see later when $p = \infty$, the oblivious estimation and the original shadow estimation tasks have the same sample complexity up to logarithmic overhead. 

\subsection{Oblivious estimation from \texorpdfstring{$L^q$}{Lq}-ellipsoid}

In this section, we summarize the results we obtain on oblivious estimation and distinguishing tasks (see also \Cref{tab:summary}). In particular, we will show tight upper and lower bounds on the sample complexity in the high-precision regime, i.e. when the target precision $\epsilon$ is sufficiently small. 

\begin{theorem}[Oblivious estimation with one-copy measurements, informal] \label{thm:main_ob_full}
    Given any $p \in [1,\infty]$, to solve \Cref{prob:oblivious-p-learning}, \Cref{prob:oblivious-p-estimation} or \Cref{prob:oblivious-distinguish} with (adaptive) single-copy measurement protocols, there exists a threshold on the target precision $\epsilon$ below which 
    \begin{equation}
        N  
        = \Theta \left( \frac{\Gamma^\ob_{p}(\{O_i\}_{i=1}^m)}{\epsilon^2} \right)
    \label{eq:sample-ob-full}
    \end{equation}
    copies of quantum states $\rho$ are necessary and sufficient. Here $\Gamma^\ob_{p}(\{O_i\}_{i=1}^m)$ is a positive function of the observables defined by 
    \begin{equation}
        \Gamma^\ob_{p}(\{O_i\}_{i=1}^m) := \inf_{M\in\cM}\sup_{\rho_0\in\cS^\circ} \max_{\substack{\|\alpha\|_q\le1 \\ \alpha \in \bR^m}} \alpha^\top(I(\rho_0,M)^{-1})_{AA}\alpha,
    \end{equation}
    where $I(\rho_0,M)$ is the Fisher information matrix (FIM) $I(\rho_{\theta,\varphi},M)$ of measuring $\rho_{\theta,\varphi}$ with measurement $M$ at $(\theta,\varphi) = (0,0)$, $\cM$ is the set of all single-copy measurements, $\cS^\circ$ is the set of all full-rank density matrices and $AA$ denotes the upper-left matrix block of $I(\rho_0,M)$ when indices are restricted to $a \in A$.  
\end{theorem}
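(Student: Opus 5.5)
By the reductions summarized in \Cref{fig:six-prob}, it suffices to prove two things. We need an upper bound for \Cref{prob:oblivious-p-estimation}, and a lower bound for \Cref{prob:oblivious-distinguish}, which is the easiest of the three oblivious problems. Both bounds should be of order $\Gamma^\ob_p/\epsilon^2$ once $\epsilon$ is below a threshold.

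\textbf{Upper bound.} We use a two-step ``coarse tomography plus local estimation'' protocol. First, spend $N_1=o(N)$ copies on single-copy full tomography, for example with a uniform POVM. This produces $\hat\rho$ with $\|\hat\rho-\rho\|\le\delta$, where $\delta$ is a small inverse power of $N$. We mix $\hat\rho$ slightly with $\id/d$ so that it is full rank, and take it as the reference state $\rho_0$.

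Second, let $M^\star$ be a near-optimal single-copy measurement in the definition of $\Gamma^\ob_p$. It is chosen once, independent of $\rho_0$, because the infimum over $M$ sits outside the supremum over $\rho_0$. Measure the remaining $N_2$ copies with $M^\star$. Form the locally unbiased linearized estimator
\[
\hat\theta=(I^{-1}J^\top)_{A\cdot}\,\hat p/p,
\]
i.e.\ the standard one-step MLE correction around $\rho_0$. For any $\alpha$ revealed afterwards, output $\alpha\cdot\hat\theta$. Its variance is $\alpha^\top (I(\rho_0,M^\star)^{-1})_{AA}\alpha/N_2\le \Gamma^\ob_p/N_2$. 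The bias is quadratic in the distance $\|\rho-\rho_0\|\le\delta$. Median-of-means converts the variance bound into a high-probability guarantee. The bias is $O(\delta^2)\ll\epsilon$ once $\epsilon$ is below an instance-dependent threshold, which is where that threshold comes from.

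The upper bound carries one technical issue. Near rank-deficient $\rho$ the Fisher information of $M^\star$ can blow up or become singular. We handle this by replacing $M^\star$ with a mixture $(1-\eta)M^\star+\eta\,M_{\rm tomo}$ of $M^\star$ with an informationally complete POVM. This keeps $I$ invertible and uniformly controlled, at a cost of a factor $1+O(\eta)$.

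\textbf{Lower bound.} Fix a single-copy adaptive protocol. Apply a minimax version of the Cram\'er--Rao bound. We place a prior on $(\theta,\varphi)$ localized in a ball of radius $\sim\epsilon$ around an arbitrary full-rank $\rho_0$, and use a van Trees / Bayesian Cram\'er--Rao inequality. For adaptive single-copy protocols, the total Fisher information is the sum of per-round informations $I(\rho,M_t)$. Each of these is close to $I(\rho_0,M_t)$ for $\epsilon$ small.

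The difficulty is that different rounds use different measurements, and the supremum over $\rho_0$ comes after the infimum over $M$. We address this with two observations. The average $\bar M=\frac1N\sum_t M_t$ is a single-copy measurement, realized by random choice. Fisher information is linear under such mixtures, so $\sum_t I(\rho_0,M_t)=N\,I(\rho_0,\bar M)$. Since $\alpha$ is revealed only afterwards, the estimator must succeed for every direction $\alpha$. The Cram\'er--Rao bound on $\alpha\cdot\theta$, with nuisance parameters $\varphi$ kept free, gives error at least $\alpha^\top (I(\rho_0,\bar M)^{-1})_{AA}\alpha/N$. Maximizing over $\rho_0$ and $\alpha$ and then infimizing over $\bar M$ yields $N\gtrsim\Gamma^\ob_p/\epsilon^2$.

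To turn this into a lower bound on the distinguishing problem rather than on the MSE, we replace the van Trees step with a two-point test. Take $\rho_0$ versus $\rho_{\theta,\varphi}$ with $(\theta,\varphi)\propto I^{-1}(\alpha,0)$, scaled so that $\|\theta\|_p=3\epsilon$. The $\varphi$-component is chosen as the least-favorable nuisance direction. Then bound the total-variation distance of the outcome distributions using the chi-square (Hellinger) divergence, which to second order is $N\,\delta^\top I\delta$. A Taylor expansion controls the remainder for $\epsilon$ below the threshold.

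\textbf{Main obstacle.} The hardest step is the order of quantifiers in the lower bound. In the two-point argument, the alternative must be chosen from the protocol's averaged measurement $\bar M$. In the distinguishing problem, however, $(\theta,\varphi)$ is revealed only after the measurements. This is exactly why an adversarial choice of alternative depending on $\bar M$ is legitimate, and we use obliviousness precisely here. A secondary obstacle is making the second-order expansions uniform over adaptive strategies and over states near the boundary of $\cS^\circ$. We handle it by choosing the supremizing $\rho_0$ bounded away from the boundary, up to an arbitrarily small loss.
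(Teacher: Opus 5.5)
Your architecture matches the paper's: coarse tomography, then the optimal locally unbiased estimator at $\rho_0$ for one fixed near-optimal $M^\star$, then median-of-means; and for the lower bound, a $\chi^2$/Fisher bound with alternative $(\theta,\varphi)\propto I^{-1}(\alpha,0)$, which is the paper's duality lemma. But in both directions you misidentify the step that actually needs proof.

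\textbf{Upper bound.} Bias is not the issue. Since $\tr(\rho_{\theta,\varphi}M_x)$ is linear in $(\theta,\varphi)$ and the dual operators are traceless, your estimator is exactly unbiased at every state $\rho$, not merely up to $O(\delta^2)$. What fails is the variance claim. The quantity $\alpha^\top(I(\rho_0,M^\star)^{-1})_{AA}\alpha$ is the variance under $\rho_0$, but the data come from $\rho$. The single-copy variance is $\sum_x\gamma_x^2\tr(\rho M_x)-\theta_\alpha^2$, where $\gamma_x$ carries a factor $1/\tr(\rho_0M_x)$. This is uncontrolled unless $\tr(\rho M_x)/\tr(\rho_0M_x)$ is bounded uniformly in $x$. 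The paper's Lemma~\ref{lemma:local-opt} supplies exactly this: if $2\rho_0-\rho\succeq0$ (i.e.\ $\rho\in\DDD(\rho_0)$), then $p_\rho+p_{2\rho_0-\rho}=2p_{\rho_0}$ forces the MSEM below $2(I(\rho_0,M)^{-1})_{AA}$. The paper enforces this condition by depolarizing the input, $\rho\mapsto\tfrac12\rho+\tfrac{\id}{2d}$, so that $\lambda_{\min}(\rho)\ge\tfrac{1}{2d}$. It then does tomography to operator-norm error $1/(4d)$ at cost $O(d^3)$. The threshold $\overline{\eta}^{\ob}=\sqrt{\Gamma^{\ob}_p/d^3}$ is just where this cost becomes negligible; it is not a bias condition. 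Your ``$\delta$ a small inverse power of $N$'' with ``slight'' mixing of $\hat\rho$ with $\id/d$ gives the ratio bound only if $\delta\lesssim\eta'/d$ for mixing weight $\eta'$, which you never impose. Mixing $M^\star$ with an informationally complete POVM fixes invertibility of $I$, not this ratio.

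\textbf{Lower bound.} There is no Taylor remainder: $\chi^2_M(\rho_{\theta,\varphi}\|\rho_0)=(\theta,\varphi)^\top I(\rho_0,M)(\theta,\varphi)$ exactly. So the threshold serves only to make the least-favorable alternative a valid state, and this is where your argument is incomplete.
\begin{itemize}
\item The nuisance part of $I^{-1}(\alpha,0)$ is $\varphi=-I_{BB}^{-1}I_{BA}\theta$. It depends on the measurement, hence on the protocol, and can be large.
\item Choosing a supremizing $\rho_0$ ``bounded away from the boundary at arbitrarily small loss'' gives no uniform radius. The supremum may be approached only as $\rho_0$ tends to the boundary, where the admissible radius vanishes.
\item A protocol-dependent threshold does not prove the theorem.
\end{itemize}
The paper gets a threshold depending only on $\{O_i\}$ from three ingredients you lack:
\begin{enumerate}
\item The inequality $I(\tfrac12\sigma+\tfrac{\id}{2d},M)\preceq 2I(\sigma,M)$, which restricts $\rho_0$ to $\cS_{1/2}$ at a factor $2$.
\item The basis change $Q'=Q+TC_1$ with $C_1=-I_{BB}^{-1}I_{BA}$. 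This block-diagonalizes the FIM, so the worst alternative is $(\theta,0)$, which is a state once $3\epsilon\,\|(\|Q'_a\|_\infty)_a\|_q\le 1/2$.
\item Checking this only at one near-minimizing $M^\star$. The constrained value at any $M$ is at least half that at $M^\star$, and at $M^\star$ it coincides with the unconstrained value.
\end{enumerate}

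Three smaller points:
\begin{itemize}
\item The transcript $\chi^2$ does not tensorize to $N\,v^\top I(\rho_0,\bar M)v$ under adaptivity. Use the learning-tree martingale bound as the paper does, or a KL chain rule under the null.
\item $\bar M$ is weighted by history probabilities that depend on $\rho_0$. Averaging alone therefore does not let you move $\inf_{\bar M}$ outside $\sup_{\rho_0}$.
\item Letting the alternative depend on the protocol is legitimate in any minimax lower bound and has nothing to do with obliviousness. The paper's bound also covers the non-oblivious distinguishing problem.
\end{itemize}
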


\noindent \Cref{thm:main_ob_full} provides tight bounds on the sample complexity required to obliviously estimate observables from the $L^q$-ellipsoid of $\{O_i\}_{i=1}^m$. Here the function $\Gamma^\ob_{p}(\{O_i\}_{i=1}^m)$ fully characterizes the sample complexity required to perform oblivious estimation or distinguishing tasks in the high-precision regime. To understand the operational meaning of $\Gamma^\ob_{p}(\{O_i\}_{i=1}^m)$, we first note that the FIM \begin{equation}
    I(\rho_0,M)_{cc'} = \sum_{x} \frac{\trace( R_c M_x)\trace(R_{c'} M_x)}{\trace(\rho_0 M_x)},\quad R_c = \begin{cases}
        Q_a & c=a\in A, \\
        T_b & c=b\in B. 
    \end{cases}
\end{equation} 
characterizes the amount of information a parametrized quantum state contains about unknown parameters that can be extracted from quantum measurement $M = \{M_x\}_x$. Its inverse is considered an asymptotically attainable lower bound on the variance of estimators by the Cram\'{e}r--Rao bound~\cite{kay1993fundamentals,lehmann2006theory}. Here the scalar $\alpha^\top (I(\rho_0,M)^{-1})_{AA} \alpha$ exactly corresponds to the variance of the $\theta_\alpha$ estimator for different $\alpha$. \sloppy The maximizations over $\alpha$ and $\rho_0$ are due to our requirement that the estimation is oblivious for all $\alpha$ and the algorithm applies to arbitrary reference state $\rho_0$. The minimization over measurements $M$ guarantees the optimal single-copy measurement is chosen. In particular, adaptivity (i.e. the ability to adjust later measurements based on previous measurement outcomes) provides no advantages, once the high-precision regime is reached. Finally, it is a mathematical property that the function $(I(\rho_0,M)^{-1})_{AA}$ is independent of different choices of dual bases $\{Q_a\}_{a\in A},\{T_b\}_{b\in A}$ (satisfying \eqref{eq:dual}), and is solely a function of observables $\{O_i\}_{i=1}^m$. It implies although choosing different dual bases might affect the sample complexity required to distinguish $\rho_{\theta,\varphi}$ from $\rho_0$, the influence is negligible in the high-precision regime.

\begin{table}[tb]
\centering
\small
\begin{tabular}{|l|c|c|}
\hline
\multicolumn{1}{|c|}{\shortstack[c]{Tasks\\~~}} & \shortstack[c]{\rule{0pt}{2.6ex}Oblivious estimation from $L^q$-ellipsoid\\(\probsref{prob:oblivious-p-learning} and  \ref{prob:oblivious-p-estimation}, $p \in [1,\infty]$)} & 
\shortstack[c]{\rule{0pt}{2.6ex}Shadow estimation with $p$-norm error\\(\probsref{prob:p-learning} and \ref{prob:p-estimation}, $p \in [1,\infty]$)} \\
\hline
\multirow{2}{*}{\parbox{3.8cm}{Lower bounds for many-versus-one distinguishing (applying to \probsref{prob:oblivious-distinguish})}} & \shortstack[c]{\rule{0pt}{2.6ex}$\Omega\big(\Gamma^\ob_{p}(\{O_i\}_{i=1}^m)/\epsilon^2\big)$ \\($\epsilon \leq \eta^\ob$, single-copy)}  &      \multicolumn{1}{c|}{\shortstack[c]{Same as the left column\\~~}}     \\
\cline{2-3}
                           & \shortstack[c]{\rule{0pt}{2.6ex} $\Omega\big(\Gamma^\ob_{p}(\{O_i\}_{i=1}^m)/(c\epsilon^2)\big)$\\ ($\epsilon \leq \min\{\eta^\ob,\eta^{\ob}_{c}\}$, $c$-copy) } &   \multicolumn{1}{c|}{\shortstack[c]{Same as the left column\\~~}}         \\
\hline
\shortstack[l]{\rule{0pt}{2.6ex}Lower bounds for \\ unbiased, bounded \\ estimation} & \shortstack[c]{ $\Omega\big(\Gamma^\ob_{p}(\{O_i\}_{i=1}^m)/(c\epsilon^2\log(1/\epsilon))\big)$\\ ($c$-copy)} &             \shortstack[c]{ $\Omega\big(\Gamma_{p}(\{O_i\}_{i=1}^m)/(c\epsilon^2\log(m^{1/p}/\epsilon))\big)$\\ ($c$-copy, $p \in [2,\infty]$)}  \\
\hline
\shortstack[l]{Upper bounds using \\ the two-step method} & \shortstack[c]{\rule{0pt}{3ex} $O\big(\Gamma^\ob_{p}(\{O_i\}_{i=1}^m)/\epsilon^2\big)$\\ ($\epsilon \leq \overline{\eta}^\ob$, single-copy)} &             \shortstack[c]{$O\big(\log(m)\Gamma_{p}(\{O_i\}_{i=1}^m)/\epsilon^2\big)$\\ ($\epsilon \leq \overline{\eta}$, single-copy)}  \\
\hline
\end{tabular}
\caption{\label{tab:summary} Summary of results. We use two different methods to derive lower bounds on the sample complexity using $c$-copy measurements. The first method is detailed in \Cref{sec:distinguish} where lower bounds and corresponding thresholds are derived for \Cref{prob:oblivious-distinguish} using the learning tree method.  The second method is detailed in \Cref{sec:unbiased} where lower bounds are derived directly for \probsref{prob:p-estimation} and \ref{prob:oblivious-p-estimation} using the Cram\'{e}r--Rao method assuming unbiased and bounded estimators. Explicit algorithms are provided in \Cref{sec:upper} using the two-step method where state tomography is first used to find a nearby state and locally optimal estimation is then used to achieve the target precision. }
\end{table}

To understand the threshold behavior, we explain the derivation of the lower and upper bounds separately. Our lower bound is derived through the distinguishing task (\Cref{prob:oblivious-distinguish}) where the goal is to distinguish all well-defined $\rho_{\theta,\varphi}$ with $\norm{\theta}_p = 3\epsilon$ from $\rho_{0}$. For any $\epsilon > 0$ without threshold, we show a lower bound equal to 
\begin{equation}
\label{eq:lower-dist-no-thres}
    \Omega\Bigg(\inf_{M \in \cM} \sup_{\rho_0\in\cS^\circ} \max_{\substack{(\theta,\varphi) \text{ s.t. }\rho_{\theta,\varphi} \succeq 0, \\ \norm{\theta}_p = 3\epsilon}} \Big((\theta,\varphi)^\top I(\rho_0,M) (\theta,\varphi) \Big)^{-1}\Bigg).
\end{equation}
Here $(\theta,\varphi)^\top I(\rho_0,M) (\theta,\varphi)$ can be interpreted as the Fisher information corresponding to a specific instance of state $\rho_{\theta,\varphi}$ satisfying the well-definedness constraint $\rho_{\theta,\varphi} \succeq 0$ and $\norm{\theta}_{p} = 3\epsilon$. In order to distinguish successfully all instances from $\rho_0$, we maximize the function over all possible $(\theta,\varphi)$. In fact, we show the duality between $p$- and $q$- norms guarantees 
\begin{equation}
\label{eq:duality}
    \max_{\norm{\theta}_p = 3\epsilon} \big((\theta,\varphi)^\top I(\rho_0,M) (\theta,\varphi)\big)^{-1} = \max_{\norm{\alpha}_q = 3\epsilon}
    \alpha^\top (I(\rho_0,M)^{-1})_{AA} \alpha. 
\end{equation}
That means in order to prove \eqref{eq:sample-ob-full} as a lower bound for the distinguishing task, it is sufficient to remove the constraint $\rho_{\theta,\varphi} \succeq 0$ from \eqref{eq:lower-dist-no-thres}, which is possible by introducing a threshold. 
One trick we apply here that allows us to consider only states in $\cS_{1/2}:= \{\rho|\rho = \sigma/2 + \id/(2d), \text{ for some density matrix }\sigma\}$ \sloppy is to observe that $I(\frac{1}{2}\rho_0+\frac{\id}{2d},M) \preceq 2 I(\rho_0,M)$, which means mixing $\rho_0$ evenly with a maximally mixed state $\frac{\id}{d}$ does not change our lower bound up to constant. This allows us to derive an explicit formula of a valid threshold $\eta^\ob$ as a function of $\{O_i\}_{i=1}^m$ below which \eqref{eq:sample-ob-full} is a lower bound. 

Conceptually, the requirement of a threshold in the lower bound implies that the distinguishing task may be fundamentally easier to solve than the estimation task when $\epsilon$ is too large, where the ensemble of states $\{\rho_{\theta,\varphi}\}$ satisfying $\norm{\theta}_p = 3\epsilon$ may no longer be a good hypothesis to characterize the difficulty in parameter estimation because $\rho_{\theta,\varphi}$ is not well-defined for too many $(\theta,\varphi)$. For the special case of $\infty$-norm Pauli estimation, however, the distinguishing task provides a tight lower bound for all $\epsilon > 0$~\cite{chen2024optimal} (when using single-copy measurements)---this is because restricting $(\theta,\varphi)$ to the set of $(e_i,0)$ where $e_i \in \bR^m$ is a vector that is $1$ in the $i$-th entry and $0$ in the others does not change the value of \eqref{eq:lower-dist-no-thres}. However, it is unknown whether the same property holds for general $p \in [1,\infty]$ and general observables.

The upper bound is derived using a two-step method where state tomography is first used to find a nearby state and locally optimal estimation is then
used to achieve the target precision. The two-step method is traditionally used in quantum metrology to show the attainability of the Cram\'{e}r--Rao bound asymptotically, i.e. when taking the limit $\epsilon \rightarrow 0$ and allowing infinitely many samples. We instead demonstrate the usefulness of the two-step method in a finite-sample regime, that was rarely explored previously, bridging a gap between quantum metrology and quantum learning. One interesting result we manage to show is the locally optimal estimator from quantum metrology is globally optimal (up to a constant factor) and unbiased within a finite-size region containing $\rho_0$. The bound in \eqref{eq:sample-ob-full} is tight when the sample complexity required in the pre-estimation stage for finding the finite-size region is negligible. To bound it, we again restrict our discussion to states in $\cS_{1/2}$ using the state mixture trick, and show state tomography that determines a nearby state within $O(1/d)$ in $\infty$-norm distance away from $\rho$ is sufficient for our purpose. When the target precision $\epsilon$ is too large, the pre-estimation stage can be too costly, making our algorithm suboptimal. 

In practice, the assumption of adaptive single-copy measurements can be too restrictive sometimes because entangled measurements across a few copies of states are also feasible on certain experimental platforms~\cite{huang2022quantum} and they can sometimes bring substantial improvement in sample complexity. For example, for $\infty$-norm Pauli estimation within the low-precision regime (e.g., $\varepsilon=0.1$), even two-copy measurements can provide exponential sample complexity reduction compared to single-copy protocols~\cite{chen2024optimal,huang2021information}. However, for $\infty$-norm Pauli estimation within the high-precision regime (e.g. when $\epsilon = O(1/d^{1/2})$), multi-copy measurements provide no substantial advantages over single-copy measurements unless measurements across an exponentially large number of copies are available. Here we observe a similar phenomenon for general oblivious estimation. 

\begin{theorem}[Oblivious estimation with $c$-copy measurements, informal]\label{thm:main_ob_c}
    Given any $p \in [1,\infty]$, to solve \Cref{prob:oblivious-p-learning}, \Cref{prob:oblivious-p-estimation} or \Cref{prob:oblivious-distinguish} with (adaptive) $c$-copy measurement protocols, there exists a threshold on the target precision $\epsilon$ below which 
    \begin{equation}
        N 
        = \Omega \left( \frac{\Gamma^\ob_{p}(\{O_i\}_{i=1}^m)}{c \epsilon^2} \right)
    \label{eq:sample-ob-c}
    \end{equation}
    copies of quantum states $\rho$ are necessary.  
\end{theorem}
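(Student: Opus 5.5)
Since \Cref{prob:oblivious-distinguish} is no harder than \Cref{prob:oblivious-p-estimation}, which is equivalent to \Cref{prob:oblivious-p-learning}, it suffices to prove \eqref{eq:sample-ob-c} for the distinguishing task. I will use the learning-tree method from the single-copy lower bound of \Cref{thm:main_ob_full}, applied to $\rho^{\otimes c}$. An adaptive $c$-copy protocol that uses $N$ copies is a learning tree of depth $T=N/c$. Each node applies a POVM $M=\{M_x\}$ on $(\mathbb{C}^d)^{\otimes c}$, chosen from the outcomes seen so far. I take the alternative hypothesis to be a mixture over admissible $(\theta,\varphi)$ with $\|\theta\|_p=3\epsilon$. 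For the oblivious version I simply fix one worst-case $(\theta,\varphi)$, since it is revealed only after the measurements. Because every node is a single measurement, the leaf distributions under $\rho_0$ and $\rho_{\theta,\varphi}$ satisfy the usual chain-rule bound: the KL divergence or $\chi^2$-type quantity is bounded by a sum over the $T$ levels of per-round divergences $\sum_x \frac{(\tr(M_x(\rho_{\theta,\varphi}^{\otimes c}-\rho_0^{\otimes c})))^2}{\tr(M_x\rho_0^{\otimes c})}$, up to higher-order terms.

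The central step is to bound the per-round information of a $c$-copy measurement by $c$ times a single-copy Fisher form. I expand $\rho_{\theta,\varphi}^{\otimes c}-\rho_0^{\otimes c}=\sum_{j=1}^c \rho_0^{\otimes (j-1)}\otimes \Delta\otimes\rho_0^{\otimes(c-j)}+O(\|\Delta\|^2)$, where $\Delta=\frac1d(\sum_a\theta_aQ_a+\sum_b\varphi_bT_b)$. The leading term is the Fisher information of the $c$-copy family at $(0,0)$ in direction $v=(\theta,\varphi)$, namely $v^\top I^{(c)}(\rho_0,M)v$. For each $j$, the operator $\tilde M^{(j)}_{x}$ obtained by contracting $M_x$ against $\rho_0$ on every slot except $j$ is a single-copy POVM. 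A Cauchy--Schwarz argument over $j$ then gives
\begin{equation}
v^\top I^{(c)}(\rho_0,M)v\le c\sum_{j=1}^c \frac{1}{c}\cdot c\, v^\top I(\rho_0,\tilde M^{(j)})v ,
\end{equation}
which is too crude. The sharper route is to use that the classical Fisher information is bounded by the quantum Fisher information (SLD), and that the SLD information is additive under tensor powers: $F_Q(\rho^{\otimes c})=cF_Q(\rho)$. So $I^{(c)}(\rho_0,M)\preceq c\,F_Q(\rho_0)$. I then need $F_Q(\rho_0)$ to be comparable, in the relevant worst case over $\rho_0$ and $\alpha$, to the single-copy optimum. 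Concretely, after taking inverses, restricting to the $AA$ block and applying the duality \eqref{eq:duality}, I need
\begin{equation}
\sup_{\rho_0}\max_{\|\alpha\|_q\le1}\alpha^\top(F_Q(\rho_0)^{-1})_{AA}\alpha\ \gtrsim\ \Gamma^\ob_p .
\end{equation}
This is the \textbf{main obstacle}. Generally the SLD bound is not attainable by a single measurement when the parameters are multiple and non-commuting, and for a fixed $\rho_0$ the Holevo/SLD gap can be up to a factor of $m$. I would therefore try the opposite direction first and keep the contraction idea. For a worst-case $\rho_0$, I would write the $c$-copy Fisher form as a sum of $c$ single-copy Fisher forms of effective POVMs $\tilde M^{(j)}$ that depend on the other copies' outcomes conditionally. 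I would do this by viewing the $c$-copy measurement as a sequential process in which the copies are perturbed one at a time: a hybrid argument replacing $\rho_0$ by $\rho_{\theta,\varphi}$ slot by slot. By the chain rule for divergences, each hybrid step costs at most the single-copy Fisher form $v^\top I(\rho_0,M')v$ of some adaptively chosen single-copy measurement $M'$. This gives $c$ times the single-copy quantity.

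Finally, the threshold plays the same role as in \Cref{thm:main_ob_full}. I mix $\rho_0$ into $\cS_{1/2}$ using $I(\frac12\rho_0+\frac{\id}{2d},M)\preceq 2I(\rho_0,M)$, applied to each effective POVM, so that the worst-case $(\theta,\varphi)$ from \eqref{eq:duality} yields a valid state. I also require $\epsilon\le\eta^\ob_c$ small enough that the second-order terms $O(c\|\Delta\|^2)$ in the expansion of $\rho^{\otimes c}$, which grow with $c$, are dominated by the linear term. Summing over the $T=N/c$ rounds gives a total divergence of $O(N\epsilon^2/\Gamma^\ob_p)$. Requiring this to be $\Omega(1)$ for success gives $N=\Omega(\Gamma^\ob_p/(c\epsilon^2))$. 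The factor $1/c$ appears because the per-round information is at most $c$ single-copy informations, while the number of copies is $cT$.
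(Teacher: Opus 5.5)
Your reduction to \Cref{prob:oblivious-distinguish} and the learning tree of depth $T=N/c$ match the paper, but the central step aims at the wrong inequality. The bound you discard as too crude, $v^\top I^{(c)}(\rho_0,M)v\le c\sum_{j=1}^c v^\top I(\rho_0,\tilde M^{(j)})v\le c^2\sup_{G\in\cM}v^\top I(\rho_0,G)v$, is exactly what the paper proves. In \Cref{thm:distinguish-c-copy}, each first-order term of $\sqrt{\chi^2}$ is the square root of a single-copy Fisher form, for example of $G_s=\tr_{[c]\setminus\{1\}}((\id\otimes\rho_0^{\otimes(c-1)})M_s)$ for the first slot. \Cref{thm:estimation-lower-c} states the same fact as $I(\rho_0^{\otimes c},M)\preceq c^2I(\rho_0,G)$. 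This bound is also all you need. A per-round divergence of order $c^2\epsilon^2/\Gamma^{\ob}_p$ over $T=N/c$ rounds totals order $Nc\epsilon^2/\Gamma^{\ob}_p$, which forces $N=\Omega(\Gamma^{\ob}_p/(c\epsilon^2))$. In the paper's notation this is $N=\Omega(c/\delta_{\cM_c})$ with $\delta_{\cM_c}\lesssim c^2\epsilon^2/\Gamma^{\ob}_p$. The stronger claim you try to prove instead, that one round carries at most $c$ single-copy informations, does not follow from your hybrid argument. A single joint POVM has no sequential structure, so there is no chain rule across slots and no ``adaptively chosen'' single-copy measurement per slot. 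Telescoping over hybrids only gives the triangle inequality for $\sqrt{\chi^2}$, which is again $c^2$ after squaring. The slot contributions really do add coherently: for a permutation-invariant $M$ all $\tilde M^{(j)}$ coincide, and $v^\top I^{(c)}(\rho_0,M)v=c^2\,v^\top I(\rho_0,\tilde M^{(1)})v$. In non-commuting multiparameter problems, collective measurements can extract more information per copy than any single-copy POVM, which is why the theorem only excludes gains beyond a factor $c$. Your final count is also inconsistent: $c\epsilon^2/\Gamma^{\ob}_p$ per round over $N/c$ rounds totals $N\epsilon^2/\Gamma^{\ob}_p$, which would give $\Omega(\Gamma^{\ob}_p/\epsilon^2)$ with no $1/c$.

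Two further steps are missing. First, as written you bound each round for one fixed $(\theta,\varphi)$ against whatever $M$ that node uses. That yields only $\min_v\sup_M$, which is far weaker than the $\sup_M\min_v$ that \lemmaref{lem:duality} turns into $\Gamma^{\ob}_p$. The paper fixes a mixture $\pi$ and exchanges $\sup_M$ and $\inf_\pi$ with Sion's theorem (\lemmaref{lemma:minimax}). With a single protocol-dependent alternative, you would first have to average the Fisher forms along the tree into one effective single-copy POVM, as $I(\rho_0^{\otimes N},M^{(1,N)})=NI(\rho_0,\tM)$ does in \Cref{thm:estimation-lower-single}. Second, the nuisance parameters are uncontrolled. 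The hardest alternative from \lemmaref{lem:duality} has $\varphi^\star=-I_{BB}^{-1}I_{BA}\theta^\star$, which depends on the measurement and is not bounded by $\epsilon$. Mixing into $\cS_{1/2}$ therefore secures neither positivity of $\rho_{\theta^\star,\varphi^\star}$ nor smallness of the higher-order terms. Those terms contribute $\binom{c}{k}[\tr(\Delta\rho_0^{-1}\Delta)]^{k/2}$ to $\sqrt{\chi^2}$ at order $k$, not $O(c\|\Delta\|^2)$. The paper passes to the basis $Q'=Q+TC_1$ that block-diagonalizes the Fisher matrix, so that $\varphi=0$ is optimal. It then squeezes a near-optimal $\pi^\star$ by $\Lambda_{c_2}$ until its $\varphi$-marginal collapses to $\delta(\varphi)$. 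The higher-order part then becomes $\bE[(1+x)^c-1-cx]$ with $x=\sqrt{a_{\rho_0}(\theta)}\le 3\epsilon\sqrt{a_{\max}}$. Comparing this with the \emph{minimized} first-order term, of order $c\epsilon/\sqrt{\Gamma^{\ob}_p}$, gives the $\Gamma^{\ob}_p$-dependent threshold $\eta^{\ob}_c$.
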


\noindent \Cref{thm:main_ob_c} implies allowing entangled measurements across $c$ copies of states can at most bring a $O(1/c)$ advantage in the sample complexity in the high-precision regime. When $c$ is not too large, e.g. $c = O(\polylog(d))$, the reduction in sample complexity will also be at most polynomial. We prove the result by first bounding the lowest-order part of the sample complexity for small $\epsilon$, and then find a threshold on $\epsilon$ below which the higher-order terms are negligible. 

Finally, we note that an alternative method to derive lower bounds for \probsref{prob:oblivious-p-estimation} (and \Cref{prob:p-estimation} in the later section) is through the Cram\'{e}r--Rao bound. It provides a lower bound on the variance of any unbiased estimator given by the inverse of the FIM. Assuming estimator values are always bounded away from true values by at most a constant, the variance of estimation can be directly converted to the oblivious estimation error or the $p$-norm error with a logarithmic overhead (see \Cref{tab:summary}). The bound holds for arbitrary $\epsilon > 0$ without any threshold, as we restrict our discussion to unbiased, bounded estimators and are no longer considering the distinguishing task (\Cref{prob:oblivious-distinguish}).

\subsection{Shadow estimation with \texorpdfstring{$p$}{p}-norm error}

Above we provide tight bounds for the oblivious estimation and distinguishing tasks. They can be fundamentally easier than the shadow estimation task. Luckily, our two-step algorithm applies also to the general shadow estimation tasks, and the Cram\'{e}r--Rao method to derive lower bounds also (partly) applies here. As a result, we have the following theorems. 

\begin{theorem}[Shadow estimation with one-copy measurements, informal]\label{thm:main_full}
    Given any $p \in [1,\infty]$, there exists an unbiased estimation algorithm using single-copy measurements that solves \Cref{prob:p-learning} (and \Cref{prob:p-estimation}) using 
    \begin{equation}
    \label{eq:sample-full}
        N 
        = \tTheta \left( \frac{\Gamma_{p}(\{O_i\}_{i=1}^m)}{\epsilon^2} \right)
    \end{equation}
    copies of quantum state $\rho$ when the target precision $\epsilon$ is below a threshold. Here $\Gamma^\ob_{p}(\{O_i\}_{i=1}^m)$ is a positive function of the observables defined by 
    \begin{equation}
        \Gamma_{p}(\{O_i\}_{i=1}^m) := \inf_{M\in\cM} \sup_{\rho_0 \in \cS^\circ} {\norm{\diag\big( (I(\rho_0,M)^{-1})_{AA}\big)^{1/2}}_p^2},
    \end{equation}
    where $\diag(\cdot)$ represents a diagonal matrix whose diagonal entries are those of $(\cdot)$ and $\norm{\cdot}_p$ represents the Schatten $p$-norm of matrices. In particular, for $p \in [2,\infty]$, \eqref{eq:sample-full} is also necessary for any $\epsilon > 0$ when assuming unbiased and bounded estimation. 
\end{theorem}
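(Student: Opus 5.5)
The plan has two halves: an upper bound from a two-step (pre-estimate, then local estimate) algorithm, and a lower bound for unbiased bounded estimators from the Cram\'er--Rao bound.

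\textbf{Upper bound.} Fix a near-optimal measurement $M$ for $\Gamma_p$. First apply the state-mixture trick: replace $\rho$ by $\frac12\rho+\frac{\id}{2d}$, which can be simulated classically by flipping a coin per copy. This halves $\theta$, so the target precision becomes $\epsilon/2$, and the FIM changes by at most a constant factor, since $I(\frac12\rho_0+\frac{\id}{2d},M)\preceq 2I(\rho_0,M)$ together with a matching lower bound. All states then lie in $\cS_{1/2}$, with eigenvalues at least $1/(2d)$.

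\emph{Step 1 (coarse tomography).} Using a small fraction of the copies, run single-copy tomography to obtain $\tilde\rho$ with $\|\tilde\rho-\rho\|_\infty\le \kappa/d$. This costs $\poly(d)$ copies, independent of $\epsilon$.

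\emph{Step 2 (local estimation).} Set $\rho_0:=\tilde\rho$, which is full rank. Measure the remaining copies with $M$ and form the locally unbiased linear estimator
\[
\htheta=\big(I(\rho_0,M)^{-1}\big)_{A\cdot}\,s,\qquad s_c=\frac{\tr(R_cM_x)}{\tr(\rho_0M_x)},
\]
averaged over the samples. Because $\rho_{\theta,\varphi}$ is affine in $(\theta,\varphi)$, this estimator is exactly unbiased on the whole affine family, not merely to first order; this is the ``locally optimal is globally unbiased'' phenomenon.

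Its covariance at the true $\rho$ is $I_0^{-1}J I_0^{-1}$, where $J$ is the analogous matrix with $\tr(\rho M_x)$ weights, i.e.\ $J_{cc'}=\sum_x \tr(R_cM_x)\tr(R_{c'}M_x)\tr(\rho M_x)/\tr(\rho_0M_x)^2$. Since $\tr(\rho M_x)\le (1+O(\kappa))\tr(\rho_0M_x)$ whenever $\|\rho-\rho_0\|_\infty\le\kappa/d$ and $\rho_0\succeq \id/(4d)$, the covariance is within a constant of $I(\rho_0,M)^{-1}$. Each coordinate $\htheta_i$ is then a mean of bounded i.i.d.\ terms with variance at most $C\,(I^{-1})_{ii}/N$.

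To convert this to $p$-norm error, apply Bernstein's inequality coordinatewise with failure probability $1/m$ and take a union bound. This gives $|\htheta_i-\theta_i|\lesssim\sqrt{(I^{-1})_{ii}\log m/N}$, so $\|\htheta-\theta\|_p^2\lesssim \log m\,\|\diag((I^{-1})_{AA})^{1/2}\|_p^2/N$. Taking the sup over $\rho_0$ and the inf over $M$ yields $N=O(\log(m)\Gamma_p/\epsilon^2)$. The threshold on $\epsilon$ must ensure two things: the Bernstein range term (the sup of $|s|$, which is at most $\poly(d)$) is dominated by the variance term, and the $\poly(d)$ cost of Step 1 is at most $\Gamma_p/\epsilon^2$. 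Both conditions hold once $\epsilon$ is polynomially small in $d$ and the instance-dependent constants.

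\textbf{Lower bound for $p\ge2$.} Let $\htheta$ be unbiased with $|\htheta_i-\theta_i|\le K$ almost surely, and suppose it achieves error $\epsilon$ with probability $1-\delta$. Boundedness gives $\E\|\htheta-\theta\|_p^2\lesssim\epsilon^2+\delta K^2m^{2/p}$. Choosing $\delta\sim\epsilon^2/(K^2m^{2/p})$, which costs a factor $\log(m^{1/p}/\epsilon)$ through median-of-means amplification, gives $\E\|\htheta-\theta\|_p^2\lesssim\epsilon^2$.

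For $p\ge2$ the map $v\mapsto\|v\|_{p/2}$ is convex. Jensen's inequality on $(\htheta_i-\theta_i)^2$ then gives $\E\|\htheta-\theta\|_p^2\ge\|(\Var\htheta_i)_i\|_{p/2}$. This is why the bound is restricted to $p\ge 2$.

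The multi-parameter Cram\'er--Rao bound for adaptive $c$-copy protocols gives $\Var\htheta_i\ge ((N/c)^{-1}\bar I^{-1})_{ii}$, where $\bar I$ is an average of $c$-copy FIMs. I expect the main obstacle here: reducing a $c$-copy FIM to $c$ times a single-copy one that is compatible with the inf over $M$ and sup over $\rho_0$ in $\Gamma_p$. I would handle it by choosing the worst-case $\rho_0$ for the given protocol, then using convexity of $I\mapsto (I^{-1})_{ii}$ and the fact that, at the point $(\theta,\varphi)=0$ under a product state, the $c$-copy FIM is dominated by $c$ times a valid single-copy FIM obtained by tracing out the other copies of $\rho_0^{\otimes c}$. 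Combining the pieces gives $N\gtrsim\Gamma_p/(c\,\epsilon^2\log(m^{1/p}/\epsilon))$.
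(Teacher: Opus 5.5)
Your plan has the same architecture as the paper's proof. For the upper bound: depolarize into $\cS_{1/2}$, do single-copy tomography to accuracy $O(1/d)$ in operator norm, and apply the locally optimal estimator at $\rho_0=\tilde\rho$, which is exactly unbiased because $p_{\theta,\varphi}(x)$ is affine. For the lower bound: Cram\'er--Rao on the $AA$ block with $\varphi$ as nuisance, amplification of a bounded estimator at a $\log(m^{1/p}/\epsilon)$ cost, convexity for $p\ge 2$, and adaptivity absorbed into an averaged FIM.

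Your variance bound is a more direct form of the paper's symmetrization argument in Lemma~\ref{lemma:local-opt}. You use $\tr(\rho M_x)\le(1+4\kappa)\tr(\rho_0M_x)$, hence $J\preceq(1+4\kappa)I_0$ and MSEM $\preceq(1+4\kappa)(I_0^{-1})_{AA}$. The paper's condition $\rho\in\DDD(\rho_0)$ means exactly $\rho\preceq 2\rho_0$, which gives $J\preceq 2I_0$ the same way.

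The genuine difference is the concentration step. You use coordinatewise Bernstein plus a union bound; the paper uses coordinatewise median-of-means (Lemma~\ref{lemma:MoM}). Bernstein needs the range of the single-shot estimator $\htheta_i(x)=\tr(Y_iM_x)/\tr(\rho_0M_x)$, with $Y_i=\sum_c (I_0^{-1})_{ic}R_c$, and that range is what creates your extra, instance-dependent threshold. Median-of-means uses only second moments, so the paper's threshold comes solely from the $O(d^3)$ tomography cost, $\overline{\eta}=\sqrt{\log(m)\Gamma_p/d^3}$.

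Two small corrections to the upper bound:
\begin{itemize}
\item Each coordinate needs failure probability $\delta/m$, not $1/m$, for the union bound to give anything.
\item There is no ``matching lower bound'' $I(\tfrac12\rho_0+\tfrac{\id}{2d},M)\succeq\Omega(1)\,I(\rho_0,M)$; it fails for nearly singular $\rho_0$. None is needed, because the depolarized state lies in $\cS^\circ$ and is already covered by the supremum in $\Gamma_p$.
\end{itemize}

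Your lower bound follows the paper step for step, so it inherits two steps that neither argument justifies.
\begin{itemize}
\item \emph{Amplification and unbiasedness.} Median or geometric-median amplification does not preserve unbiasedness, so Cram\'er--Rao cannot be applied as stated to the amplified estimator. The paper makes the same move (it calls the geometric median unbiased). A real fix must control the derivative of the $O(\epsilon)$ bias, e.g.\ via a biased Cram\'er--Rao or van Trees bound.
\item \emph{Quantifier order.} You rightly flag this as the main obstacle, but fixing the worst-case $\rho_0$ for the given protocol does not remove it. The effective single-copy POVM depends on $\rho_0$: it is the outcome-probability-weighted average over rounds, and for $c>1$ also involves the partial trace against $\rho_0^{\otimes(c-1)}$. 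So this route yields only $\sup_{\rho_0}\inf_M$, not $\Gamma_p=\inf_M\sup_{\rho_0}$. Closing the gap needs a minimax exchange, which the paper's one-line claim $v^{(N)}\ge v^{(1)}$ also omits.
\end{itemize}

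Lastly, in your $c$-copy remark the domination constant should be $c^2$. The paper proves $I(\rho_0^{\otimes c},M)\preceq c^2 I(\rho_0,G)$, and that is what produces the $1/c$ in your final bound. With the constant $c$ you wrote, the same computation would give no $1/c$ loss at all.
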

\begin{theorem}[Shadow estimation with $c$-copy measurements, informal]\label{thm:main_full_c}
    Given any $p \in [2,\infty]$, to solve \Cref{prob:p-learning} (and  \Cref{prob:p-estimation}) with (adaptive) $c$-copy measurement protocols, 
    \begin{equation}
        N 
        = \tOmega \left( \frac{\Gamma_{p}(\{O_i\}_{i=1}^m)}{c \epsilon^2} \right)
    \label{eq:sample-c}
    \end{equation}
    copies of quantum states $\rho$ are necessary, when assuming unbiased and bounded estimation.  
\end{theorem}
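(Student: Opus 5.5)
The plan is to reduce the $c$-copy lower bound to the single-copy Cram\'{e}r--Rao argument behind the lower bound in \Cref{thm:main_full}. I would treat a $c$-copy protocol as a single-copy protocol on the enlarged system $\rho^{\otimes c}$, with $N/c$ rounds.

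\textbf{Step 1: Cram\'{e}r--Rao for adaptive $c$-copy protocols.} Fix a reference $\rho_0\in\cS^\circ$, mixed with $\id/d$ if needed so that a neighbourhood of $(\theta,\varphi)=(0,0)$ consists of valid states. Consider any adaptive protocol with $N/c$ rounds, where round $t$ applies a POVM $M^{(t)}$ on $\rho_{\theta,\varphi}^{\otimes c}$. By the chain rule for Fisher information, the total FIM of the transcript at $(0,0)$ is $\sum_t \mathbb{E}[I^{(c)}(\rho_0,M^{(t)})]$, where $I^{(c)}$ is the FIM with respect to $(\theta,\varphi)$ of the $c$-copy family. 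An unbiased estimator of $\theta$ therefore satisfies
\[
\mathrm{Cov}(\hat\theta)\succeq \Big(\big(\textstyle\sum_t \mathbb{E}\, I^{(c)}(\rho_0,M^{(t)})\big)^{-1}\Big)_{AA}.
\]

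\textbf{Step 2: Compare $I^{(c)}$ with single-copy FIMs.} I need, for every $c$-copy POVM $M$,
\[
I^{(c)}(\rho_0,M)\preceq c\, \max_{M'\in\cM} \text{(single-copy FIM)}
\]
in a form compatible with the $\inf_M\sup_{\rho_0}$ structure of $\Gamma_p$. The quantum Fisher information alone would only give a bound by the SLD QFI, which is the wrong quantity. The key identity I would use instead is the derivative $\partial_c\rho^{\otimes c}=\sum_{j}\rho^{\otimes j-1}\otimes R_c\otimes\rho^{\otimes c-j-1}$. With Cauchy--Schwarz, this gives
\[
I^{(c)}(\rho_0,M)\preceq c\sum_{j=1}^c \tfrac1c\, I(\rho_0,M_j),
\]
where $M_j$ is the single-copy POVM induced on copy $j$ by tracing the other copies against $\rho_0^{\otimes(c-1)}$. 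Summing over rounds then bounds the total FIM by $c\sum_t \bar I_t$, where each $\bar I_t$ is a convex combination of single-copy FIMs and hence itself the FIM of a single-copy measurement, namely a randomized mixture in $\cM$. Carrying this out needs a careful Cauchy--Schwarz on the outcome probabilities. I expect this matrix inequality, a multi-copy version of the $c$-fold additivity of classical FIM, to be the main obstacle. If it fails in operator order, I would fall back to the $\epsilon\to0$ local expansion used for \Cref{thm:main_ob_c}.

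\textbf{Step 3: Convert to the $p$-norm error.} Combining the two steps, the averaged single-copy FIM $\bar I$ satisfies
\[
\mathrm{Var}(\hat\theta_i)\ge \frac{1}{cN}\,\big((\bar I)^{-1}\big)_{ii},
\]
using operator monotonicity of the inverse and the fact that $(X^{-1})_{AA}$ is monotone. Next, choose the worst $\rho_0$ for this $\bar I$, which the $\sup$ in $\Gamma_p$ allows because $\bar I$ comes from a fixed measurement.

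\textbf{Step 4: Boundedness gives high-probability failure.} For $p\ge2$, use boundedness ($|\hat\theta_i-\theta_i|\le O(1)$) together with the argument of the single-copy theorem. Small $p$-norm error with high probability gives
\[
\mathbb{E}\|\hat\theta-\theta\|_p^2\lesssim \epsilon^2\log(m^{1/p}/\epsilon),
\]
and $p\ge2$ gives $\mathbb{E}\|\hat\theta-\theta\|_p^2\ge \|(\mathrm{Var}\,\hat\theta_i)_i^{1/2}\|_p^2$ by convexity, Minkowski applied to $\|\cdot\|_{p/2}$. Rearranging yields
\[
N\ge \frac{\Gamma_p}{c\,\epsilon^2\log(m^{1/p}/\epsilon)},
\]
which is the claimed $\tOmega(\Gamma_p/(c\epsilon^2))$.
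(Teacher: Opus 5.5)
Your architecture is the paper's. You build induced single-copy POVMs by tracing the other copies against $\rho_0^{\otimes(c-1)}$ (the paper's $G^{[i]}$), compare the $c$-copy FIM with the FIM of their uniform mixture $G$ via Cauchy--Schwarz, and then run the single-copy Cram\'er--Rao/chain-rule/median argument. However, the key inequality in Step~2 is off by a factor of $c$, and as written it is false.

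Write $\partial p_s=\sum_{j=1}^c x_{s,j}$. Cauchy--Schwarz gives $(\sum_j x_{s,j})^2\le c\sum_j x_{s,j}^2$, hence $I^{(c)}(\rho_0,M)\preceq c\sum_{j}I(\rho_0,M_j)=c^2 I(\rho_0,G)$, which is the bound the paper proves. It does not give $c\,\bar I=\sum_j I(\rho_0,M_j)$. The cross terms $\sum_s x_{s,i}x_{s,j}/q_s$ can be positive, so there is no ``additivity'' to prove. Here is a counterexample:
\begin{itemize}
\item Take $d=c=2$, $\rho_0=\id/2$, $\rho_\theta=(\id+\theta Z)/2$, and $M=\{|00\rangle\langle 00|,\,|11\rangle\langle 11|,\,|01\rangle\langle 01|+|10\rangle\langle 10|\}$.
\item Then $I^{(2)}(\rho_0,M)=2$.
\item Each induced POVM is $\{|0\rangle\langle 0|/2,\,|1\rangle\langle 1|/2,\,\id/2\}$, with Fisher information $1/2$.
\item So $\sum_j I(\rho_0,M_j)=1<2$.
\end{itemize}

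With the correct constant, the $N/c$ rounds give total FIM $\preceq c^2\cdot\frac{N}{c}\,\bar I=cN\,\bar I$. This is exactly the $\frac{1}{cN}$ you use in Step~3. As written, Step~2 would give $N\bar I$, i.e.\ the claim that $c$-copy measurements give no advantage at all, which is false in general. So Steps~2 and~3 are currently inconsistent. Once the constant is fixed, you do not need the $\epsilon\to0$ fallback, since the FIM only involves the first-order ($|S|=1$) terms.

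There are two smaller points.

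\textbf{Step~4.} The logarithm does not belong in a bound on $\mathbb{E}\|\hat\theta-\theta\|_p^2$ for the original estimator. If that estimator only succeeds with probability $2/3$, this quantity can be of order $m^{2/p}u^2$. As in the paper, first run the protocol $K=O(\log(m^{1/p}/\epsilon))$ times and take a geometric median. This gives $p$-average RMSE $O(\epsilon)$ from $KN$ copies, and then you apply Cram\'er--Rao. The logarithm then sits in the copy count, and the final bound is the same.

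\textbf{Step~3.} The claim that $\bar I$ comes from a fixed measurement is not accurate. The $M_j$ are defined by tracing against $\rho_0^{\otimes(c-1)}$, and in the adaptive chain rule the mixture weights are outcome probabilities under $\rho_0$. So the single-copy measurement changes with $\rho_0$, and passing to $\inf_{G\in\cM}\sup_{\rho_0}$ is not automatic. The paper's proof makes the same identification without comment, so here you do not depart from it. It is, however, the step a fully rigorous write-up would need to justify.
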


\noindent To understand why the Cram\'{e}r--Rao method and the two-step method still apply to the shadow estimation tasks, we note that they provide lower and upper bounds on the variance of estimators which can be converted to and from our $p$-norm error using standard statistical techniques like median-of-means estimation. One special point is we can only prove lower bound on cases with $p \geq 2$ because when $p \in [1,2)$, it is not guaranteed a small $p$-norm error can lead to a small variance of estimation. The result on few-copy measurements follows from a property of FIM, where increasing $c$, the number of copies of states, can at most increase the FIM by $O(c^2)$.

We note that  
when $p = \infty$ (which corresponds to the traditional shadow estimation scenario), the two tasks are equivalent in the high-precision regime up to logarithmic overhead. 

\begin{corollary}[Equivalence between shadow estimation and oblivious estimation, $p = \infty$]
Let $p = \infty$. When $\epsilon$ is below a threshold, the necessary and sufficient sample complexities to solve \Cref{prob:p-learning} and to solve \Cref{prob:oblivious-p-learning} using (adaptive) single-copy measurements are both 
\begin{equation}
    N = \tTheta \left( \frac{\Gamma_{\infty}(\{O_i\}_{i=1}^m)}{\epsilon^2} \right),
\end{equation}
where 
\begin{equation}
    \Gamma_{\infty}(\{O_i\}_{i=1}^m) = \Gamma_{\infty}^\ob(\{O_i\}_{i=1}^m) = \inf_{M\in\cM}\sup_{\rho_0\in\cS^\circ}\max_{a \in A}  (I(\rho_0,M)^{-1})_{aa}. 
\end{equation} 
\end{corollary}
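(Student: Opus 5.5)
The corollary should follow by combining \Cref{thm:main_ob_full} and \Cref{thm:main_full} at $p=\infty$ (so $q=1$) with one algebraic identity: $\Gamma_\infty=\Gamma^\ob_\infty$, both equal to the stated min--max formula. Once that identity holds, the two theorems close the argument together.

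\textbf{The identity.} Fix $M$ and $\rho_0$, and write $J:=(I(\rho_0,M)^{-1})_{AA}$, a positive semidefinite $m\times m$ matrix. For the oblivious quantity, $\alpha\mapsto\alpha^\top J\alpha$ is convex, so its maximum over the $\ell_1$ unit ball is attained at an extreme point $\pm e_a$. Hence $\max_{\|\alpha\|_1\le1}\alpha^\top J\alpha=\max_a J_{aa}$. For the shadow quantity, $\|\diag(J)^{1/2}\|_\infty^2$ is the largest diagonal entry of $\diag(J)^{1/2}$, squared, which is again $\max_a J_{aa}$. The two objectives agree pointwise in $(M,\rho_0)$, so after $\inf_M\sup_{\rho_0}$ we get $\Gamma_\infty=\Gamma^\ob_\infty=\inf_{M}\sup_{\rho_0}\max_a(I(\rho_0,M)^{-1})_{aa}$.

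\textbf{Upper bounds.} \Cref{thm:main_full} gives an algorithm for \Cref{prob:p-learning} using $O(\log(m)\,\Gamma_\infty/\epsilon^2)$ single-copy samples below its threshold $\overline\eta$. Since \Cref{prob:oblivious-p-learning} is no harder than \Cref{prob:p-learning}, by the H\"older reduction proved earlier, this bound also covers the oblivious problem. \Cref{thm:main_ob_full} gives the sharper $O(\Gamma^\ob_\infty/\epsilon^2)$ for the oblivious problem.

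\textbf{Lower bounds.} \Cref{thm:main_ob_full} gives $\Omega(\Gamma^\ob_\infty/\epsilon^2)$ for \Cref{prob:oblivious-distinguish} with adaptive single-copy measurements below $\eta^\ob$. It transfers to \Cref{prob:oblivious-p-learning} through the reduction chain in Figure~\ref{fig:six-prob}, and further to \Cref{prob:p-learning}, which is at least as hard as its oblivious version. This lower bound holds for general estimators, not only unbiased ones, so no unbiasedness assumption is needed for the corollary.

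Taking $\epsilon\le\min\{\eta^\ob,\overline\eta^\ob,\overline\eta\}$, both problems have complexity between $\Omega(\Gamma_\infty/\epsilon^2)$ and $O(\log(m)\Gamma_\infty/\epsilon^2)$, which is the claimed $\tTheta$ bound. The only step that needs any real thought is the identity, specifically the convexity argument reducing the $\ell_1$-ball maximum to the diagonal entries; everything else is bookkeeping of thresholds and reductions that have already been established.
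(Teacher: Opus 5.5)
Your proposal is correct and follows the paper's own argument. The corollary is obtained from \Cref{thm:main_ob_full} and \Cref{thm:main_full} together with the identity $\Gamma_\infty=\Gamma^\ob_\infty$, which the paper proves in \Cref{lem:simple_relation} by the same extreme-point argument over the $\ell_1$ ball that you use, and routing the lower bound for \Cref{prob:p-learning} through the oblivious distinguishing bound, as you do, correctly avoids the unbiasedness assumption attached to the Cram\'er--Rao lower bound in \Cref{thm:main_full}.
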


\noindent The equivalence can be understood from two different perspectives. On one hand, it directly follows from our instance-optimal bounds (\Cref{thm:main_ob_full} and \Cref{thm:main_full}), and the mathematical property that $\Gamma_{\infty}(\{O_i\}_{i=1}^m) = \Gamma_{\infty}^\ob(\{O_i\}_{i=1}^m)$. On the other hand, we can use any algorithm that can solve the oblivious estimation task to solve the shadow estimation task by estimating each observable $O_i$ one by one. As a result, if the sample complexity of the oblivious estimation task has a $\delta$ dependence of $O(\log(1/\delta))$ where $\delta$ is the failure probability---which is satisfied by our algorithm---then the $\delta$ dependence becomes $O(\log(m/\delta))$ in shadow estimation by union bound, introducing at most a logarithmic overhead. 

Finally, as an example, we show in \Cref{sec:example} when $\{O_i\}_{i=1}^m$ is the set of all Pauli observables, 
\begin{equation}
    \Omega\!\left({d}\right)\le \Gamma^\ob_p\le
    \left\{
    \begin{aligned}
         &O\!\left({d\log d}\right),\;&&\textrm{if~}p\in[2,\infty],\\
         &O\!\left({d^{\frac4p-1}\log d}\right),\;&&\textrm{if~}p\in[1,2),\\
    \end{aligned}
    \right. \quad\; \Omega(d) \leq \Gamma_2 \leq O(d^3\log d). 
\end{equation}
The scalings of our thresholds (see \Cref{tab:summary}) are
\begin{equation}
    \eta^\ob = \Omega(d^{\frac{2}{p}-2}),\quad  \eta^\ob_c =  \begin{cases}
\Omega( d^{-\frac{2}{p}+\frac{1}{2}} (\log d)^{-\frac{1}{2}} / c)
, \quad&\textrm{if}~p\in[1,2),\\
\Omega( d^{-\frac{5}{2}+\frac{4}{p}} ) (\log d)^{-\frac{1}{2}} / c)
, \quad&\textrm{if}~p\in[2,\infty], 
\end{cases}\quad \overline{\eta} \geq \overline{\eta}^\ob = \Omega(d^{-1}).  
\end{equation}
The $p < \infty$ case was previously unknown to the best of our knowledge.

\section{Technical overview}

In this section, we provide an overview of the techniques for all theorems mentioned so far.

\subsection{Learning tree method for single-copy measurements}
Our lower bounds for the oblivious distinguishing task in \Cref{prob:oblivious-distinguish} (and thus for all remaining problems) throughout this paper exploit and improve the well-established ``learning tree'' framework gradually developed in a series of work~\cite{aharonov2022quantum,bubeck2020entanglement,chen2022exponential,chen2023complexity,chen2025efficient,chen2024optimal}. From a high level, we model the learning protocol as a decision tree, and a choice of the underlying unknown state $\rho$ results in a distribution on the leaves. We then consider the ensemble $\rho_{\theta,\varphi}$ parametrized by dual observable bases $\{Q_a,T_b\}_{a\in A,b\in B}$ with $A=\{1,...,m\}$ and $B=\{m+1,...,d^2-1\}$, and $\theta\in\R^m$ and $\varphi\in\R^{d^2-m-1}$. We argue that the resulting distributions on leaves are statistically indistinguishable from the distributions on leaves for $\rho_0$ unless the depth of the tree is sufficiently large. In this paper, we take the parametrized ensemble $\rho_{\theta,\varphi}$ to be randomly sampled from a probability distribution $\pi$. Here, $\rho_{\theta,\varphi}$ is defined to be $\rho_{\theta,\varphi}=\rho_0+\tfrac{1}{d}\sum_{a\in A}\theta_a Q_a+\tfrac{1}{d}\sum_{b\in B}\varphi_b T_b$, where $\rho_0$ is the density matrix in the null hypothesis, and $\theta$ and $\varphi$ can be arbitrary vectors such that $\rho_{\theta,\varphi}$ is well-defined and $\norm{\theta}_p=3\epsilon$. 

For each fixed $\rho_0$, we consider protocols with single-copy measurements, the set of which we denote as $\cM$. We utilize the learning tree model equipped with martingale analysis~\cite{chen2023complexity,chen2025efficient,chen2024optimal} to obtain a sample complexity lower bound of $1/\delta_\cM(O)$ with $\delta_\cM(O)$ in a minimax optimization fashion as:
\begin{align}
\delta_{\cM}(O)=\sup_{M\in\cM}\min_{(\theta,\varphi)} \chi_{M}^2\left(\rho_{\theta,\varphi}\|\rho_0\right)=\sup_{M\in\cM}\min_{(\theta,\varphi)}(\theta,\varphi)^\top I(\rho_0,M)(\theta,\varphi),  
\end{align}
where the second step follows from expanding the $\chi^2$-divergence~\cite{hu2025ansatz} and $I(\rho,M)$ denote the classical Fisher information matrix (FIM) obtained by measuring $\rho_0$ with measurement $M$. As one needs to solve the many-versus-one distinguishing any $\rho_0$, we obtain the sample complexity lower bound as
\begin{align}\label{eq:overview-distinguish-lower}
N=\Omega\left(\inf_{M\in\cM} \sup_{\rho_0} \max_{(\theta,\varphi)}\left((\theta,\varphi)^\top I(\rho_0,M)(\theta,\varphi)\right)^{-1}\right).
\end{align}
The detailed derivation is provided in \Cref{sec:distinguish-complete} (\Cref{thm:distinguish_complete}) for the case when $A=\{1,...,d^2-1\},B=\emptyset$ and \Cref{sec:distinguish-general} (\Cref{thm:distinguish_general}) for general $A$ and $B$. There remains a gap between the lower bound in \eqref{eq:overview-distinguish-lower} and the one for \Cref{prob:oblivious-distinguish} claimed in \Cref{thm:main_ob_full}. However, we will show that these two lower bounds are equivalent in the following for $\epsilon$ below a certain threshold.

\subsection{Duality between distinguishing and estimation, and the threshold on \texorpdfstring{$\epsilon$}{eps}}
We have shown earlier that the (oblivious) distinguishing task \Cref{prob:distinguish}(\ref{prob:oblivious-distinguish}) is no harder than the (oblivious) estimation task \Cref{prob:p-estimation}(\ref{prob:oblivious-p-estimation}). For \Cref{prob:oblivious-p-estimation}, we can obtain a lower bound claimed in \Cref{thm:main_ob_full}
\begin{align}\label{eq:overview-estimation-lower}
N=\Omega\left(\inf_{M\in\cM}\sup_{\rho_0} \max_{\substack{\|\alpha\|_q\le 3\epsilon \\ \alpha \in \bR^m}} \alpha^\top(I(\rho_0,M)^{-1})_{AA}\alpha\right)
\end{align}
for $1/p+1/q=1$ using the following quantum metrology argument. We first note that the inverse of the FIM is an asymptotically attainable lower bound on the variance of estimators by the Cram\'{e}r–Rao bound~\cite{kay1993fundamentals,lehmann2006theory}. We then show that the scalar $\alpha^\top(I(\rho_0,M)^{-1})_{AA}\alpha$ exactly matches the variance of the estimator for $\theta_\alpha$ asymptotically with at most poly-logarithmic overhead. We then minimize over vectors $\alpha$ and states $\rho_0$ as the estimation is oblivious for all $\alpha$ and any reference state $\rho_0$, and maximize over measurements $M$ as we can use the optimal single-copy measurements. Showing that adaptivity provides no advantages, we reach the lower bound claimed in \Cref{thm:main_ob_full} for \Cref{prob:oblivious-p-estimation}. 

To bridge the gap between \eqref{eq:overview-distinguish-lower} and \eqref{eq:overview-estimation-lower}, we show the duality between $p$- and $q$- norm:
\begin{equation}\label{eq:overview-duality}
\max_{\norm{\theta}_p = 3\epsilon} \big((\theta,\varphi)^\top I(\rho_0,M) (\theta,\varphi)\big)^{-1} = \max_{\norm{\alpha}_q = 3\epsilon}\alpha^\top (I(\rho_0,M)^{-1})_{AA} \alpha. 
\end{equation}
However, we note that the $(\theta,\varphi)$ satisfy not only the constraint $\norm{\theta}_p\leq 3\epsilon$, but also the well-definedness of the quantum state $\rho_{\theta,\varphi}$ (i.e. $\rho_{\theta,\varphi}\succeq 0$). Therefore, we still need to remove the constraint of $\rho_{\theta,\varphi}\succeq 0$ to reach the lower bound for \Cref{prob:oblivious-distinguish} claimed in \Cref{thm:main_ob_full}. 

An immediate thought is to hope that there is some threshold $\eta$ such that we can drop $\rho_{\theta,\varphi}\succeq 0$ when $\epsilon<\eta$. Unfortunately, even when $\{Q_a\}$ is complete and thus $A=\{1,...,d^2-1\},B=\emptyset$, there does not exist such a choice of $\eta$ that applies to all $\rho_0$ as it can be close to the boundary of the density matrix space. However, we can restrict the range of $\rho_0$ within
\begin{equation}
\cS_{1/2} := \Big\{\rho \Big| \rho = \frac{1}{2}\Big(\sigma + \frac{\id}{d}\Big),\text{ for some density matrix }\sigma\Big\}, 
\end{equation}
without increasing the value of $\delta_{\cM}(O)$ asymptotically. When $\{Q_a\}$ is complete, $I(\rho_0,M)$ is diagonal and we can choose
\begin{align}
\epsilon\leq\eta=\frac{1}{6 \norm{(\norm{Q_1}_\infty,\ldots,\norm{Q_{d^2-1}}_\infty)}_q}
\end{align}
to ensure that $\rho_{\theta,\varphi}$ is always well-defined and drop the verbose constraint $\rho_{\theta,\varphi}\succeq 0$. We refer to \Cref{thm:lower_threshold} for the detailed derivation.

However, when $\{Q_a\}$ is incomplete, $I(\rho_0,M)$ is not naturally block-diagonal. To address this issue, we consider the following linear transformation between $(Q~T)=(Q_1~Q_2~\cdots~Q_m~T_{m+1}~T_{m+2}~\cdots~T_{d^2-1})$
and $(Q'~T')=(Q'_1~Q'_2~\cdots~Q'_m~T'_{m+1}~T'_{m+2}~\cdots~T'_{d^2-1})$ related by the following: 
\begin{align}
\begin{pmatrix}
    Q' & T' 
\end{pmatrix}
=
\begin{pmatrix}
    Q+TC_1 & TC_2
\end{pmatrix} = 
\begin{pmatrix}
    Q & T
\end{pmatrix}
\begin{pmatrix}
    \id & 0 \\
    C_1 & C_2 
\end{pmatrix},
\end{align}
where we $C_1 \in \bR^{\abs{B} \times \abs{A}} = \bR^{d^2-1-m \times m}$ and $C_2 \in = \bR^{\abs{B} \times \abs{B}} = \bR^{d^2-1-m \times d^2-1-m}$ to represent linear transformations on the matrix blocks. Here, $C_1$ can be arbitrary and $C_2$ needs to be invertible. As a result, the corresponding FIMs are related by the following: 
\begin{align}
I(\rho_0,M)' &= 
\begin{pmatrix}
    \id & C_1^\top \\
    0 & C_2^\top 
\end{pmatrix} 
\begin{pmatrix}
    I(\rho_0,M)_{AA} & I(\rho_0,M)_{AB} \\ 
    I(\rho_0,M)_{BA} & I(\rho_0,M)_{BB}
\end{pmatrix}
\begin{pmatrix}
    \id & 0 \\
    C_1 & C_2 
\end{pmatrix}.
\end{align}
Finally, we only need to take $C_2 = \id$, which means $T' = T$ and $C_1 =  - ( I(\rho_0,M)_{BB} ) ^{-1}I(\rho_0,M)_{BA}$, where $(I(\rho_0,M)_{BB})^{-1}$ is the pseudoinverse of $I(\rho_0,M)_{BB}$ on its support. The corresponding choice of basis $Q' = Q+TC_1$ makes the corresponding FIM $I(\rho_0,M)'$ block-diagonal. We can choose 
\begin{align}
\epsilon \leq \eta^\ob := \inf_{\rho_0 \in \cS_{1/2}} \frac{1}{6 \norm{\left(\norm{Q'_1(\rho_0)}_\infty,\ldots,\norm{Q'_{\abs{A}}(\rho_0)}_\infty\right)}_q },
\end{align}
to ensure that $\rho_{\theta,\varphi}$ is always well-defined and drop the verbose constraint $\rho_{\theta,\varphi}\succeq 0$. We refer to \Cref{thm:distinguish_thres_general} for the detailed derivation.

\subsection{Learning tree method for few-copy measurements}
We now consider generalizing the above lower bound argument to $c$-copy measurements. For technical simplicity, we slightly modified \Cref{prob:oblivious-distinguish} and allow the reference state $\rho_0$ to be randomly chosen. Fix $T$ in the dual basis. Let $\tcD$ denote the set of all probability distributions of $(\rho_0,\tQ,\theta,\varphi)$ over the set,
\begin{equation}
\{(\rho_0,\tilde{Q},\theta,\varphi) | \norm{\theta}_p = 3\epsilon, \theta \in \bR^{\abs{A}}, \varphi \in \bR^{\abs{B}},\rho_0 \in \cS_{1/2}, \tilde{Q} = Q + T C_1, C_1 \in \bR^{\abs{B} \times \abs{A}}\}.
\end{equation}
We can then use the minimax theorem~\cite{sion1958general} to change the sequence of minimization over $\tilde{D}$ and maximization over $M$ as
\begin{align}
\sup_{M \in \cM} \inf_{\substack{\rho_0 \in \cS_{1/2},\tQ}} \min_{\norm{\theta}_p = 3\epsilon}  (\theta,\varphi)^\top \tI(\rho_0,M) (\theta,\varphi) = \inf_{\pi \in \tcD}  \sup_{M \in \cM} \bE_{\pi} (\theta,\varphi)^\top \tI(\rho_0,M) (\theta,\varphi),
\end{align}
where $\tI$ is a short-hand of the FIM when the dual basis is taken as $(\tQ~T)$.

An important property implied by the above argument is that, assume $\pi^\star$ is a nearly optimal distribution for some fixed $T$ and $\Lambda_{c_2} \pi(\rho_0,\tQ,\theta,\varphi) := c_2 \pi(\rho_0,\tQ,\theta,c_2 \varphi)$, $\Lambda_{c_2} \pi^\star$ for any $c_2 \geq 1$ is also a nearly optimal distribution with a constant overhead independent of $c_2$. 

Using the learning tree framework, the sample complexity bound for \Cref{prob:oblivious-distinguish} using $c$-copy measurements is $\Omega(c/\delta_{\cM_c}(O))$ where
\begin{align}
\delta_{\cM_c}(O) =  \sup_{M\in\cM_c} \inf_{\rho_0} \min_{(\theta,\varphi)} \chi_{M}^2\left(\rho_{\theta,\varphi}^{\otimes c}\|\rho_0^{\otimes c}\right).
\end{align}

We then relax $\rho_0$ to be randomly chosen, and decouple $\sqrt{\delta_{\cM_c}}(O)$ into the first-order term that depends on one copy and the higher-order terms that depend on more than one copy. We first argue that the first-order term exactly scales as $c^2$ times the single-copy $\delta_{\cM}(O)$. For the higher-order terms, we can apply the $\Lambda_{c_2}$ transformation to infinite squeeze in the domain of $\varphi$ and collapse the distribution into a delta function. As a result, we can reduce the upper bound on the higher-order terms into an optimization problem only over (distributions of) $\rho_0$, $M$, and $\theta$. We then compute the threshold on $\epsilon$ such that the higher-order terms are negligible compared to the first-order term. The detailed derivation is provided in \Cref{thm:distinguish-c-copy}.

\subsection{Cram\'{e}r--Rao method}

The Cram\'{e}r--Rao bound~\cite{kay1993fundamentals,lehmann2006theory} states that $V(\rho_\theta,M,\hat{\theta}) \succeq I(\rho_\theta,M)^{-1}$, where $V(\rho_\theta,M,\hat{\theta})$ is the mean square error matrix (MSEM) of an unbiased estimator $\hat{\theta}$ when performing a single-copy measurement $M$ on a parameterized state $\rho_\theta$. In our case, we assume unbiased and bounded estimators when estimating $\rho_{\theta,\varphi}$.
To prove the lower bound on the shadow estimation problem (\Cref{thm:estimation-lower-single-ob} for \Cref{prob:p-estimation}), we apply the following techniques: (1) we prove a lower bound on the number of repeated single-copy measurements needed to achieve a certain $p$-average root mean square error (RMSE), which is essentially the $p$-norm of the square root of the diagonal elements of the MSEM, directly using the CR bound; (2) we show that when $p \geq 2$, if an estimator has a small $p$-norm error, then it can also achieve a small $p$-average RMSE with a logarithmic overhead; (3) we prove that adaptivity in measurements does not decrease the value of the lower bound using the chain rule of Fisher information. Here we require the estimator to be bounded in order to rule out the possibility that the estimator takes very large values on a negligible support which may potentially contribute non-negligibly to the $p$-average RMSE.
To prove the lower bound on the oblivious estimation problem (\Cref{thm:estimation-lower-single} for \Cref{prob:oblivious-p-estimation}), the steps are very similar, except that in the second step the conversion from additive error to RMSE applies to all $p \in [1,\infty]$.
Finally, in order to prove the lower bounds with $c$-copy measurements (\Cref{thm:estimation-lower-c}), we use a previously used technique in \Cref{thm:distinguish-c-copy}, which shows $I(\rho_0^{\otimes c},M)^{1/2} \preceq \sum_{i=1}^c I(\rho_0,G^{[i]})^{1/2}$, where $M$ is a $c$-copy measurement and $G^{[i]}$ is some single-copy measurement that depends on $M$, $\rho_0$ and $i$. It then implies for any $c$-copy measurement $M$, there is some corresponding single-copy measurement $G$ such that $I(\rho_0^{\otimes c},M)^{1/2} \preceq c I(\rho_0, G)^{1/2}$. As a result, the $c$-copy lower bounds are at most a factor of $1/c$ smaller than the single-copy lower bounds.

\subsection{Two-step method: from local to global estimation}

The two-step method is traditionally used in quantum metrology to show the attainability of the CR bound~\cite{barndorff2000fisher,hayashi2011comparison,yang2019attaining}. In those cases, one starts from a locally unbiased estimator, whose expected value equals the true parameter value at (and infinitesimally around) a specific point, that achieves the CR bound at that point. The goal is to show that, without any prior knowledge of the location of the point, one can still achieve the CR bound asymptotically. If $N$ is the total number of samples, the two-step method first uses a negligible number of samples, e.g. $\sqrt{N}$, to obtain a coarse estimate of the parameter, and then applies the locally unbiased estimator defined at the coarse estimation point. The attainability of the CR bound as $N \rightarrow \infty$ can be shown when suitable bounds on the convergence rate are available. 

In the context of quantum learning, we aim to obtain the sample complexity needed for a finite target precision. The main challenge is to apply the two-step method in a non-asymptotic manner, and the advantage we can leverage is that we allow a logarithmic-factor discrepancy, whereas in quantum metrology the CR bound must be strictly attained. To find algorithms that saturate our lower bounds up to a logarithmic factor (\Cref{thm:p-estimation-upper} for \Cref{prob:p-estimation}, and \Cref{thm:ob-estimation-upper} for \Cref{prob:oblivious-p-estimation}), we first show that within a region $\DDD(\rho_0)$ around a specific point $\rho_0$, the locally unbiased estimator at $\rho_0$ achieves the CR bound up to a factor of two for any state in $\DDD(\rho_0)$. The key observation is that $\rho_{\theta,\varphi}$ is linear in $\theta$, such that any locally unbiased estimator of $\theta$ must also be a globally unbiased estimator. To find a coarse estimate $\hrho_0$ of $\rho$ such that $\rho \in \DDD(\hrho_0)$, we show that it is sufficient to apply state tomography with a $O(1/d)$ inaccuracy in operator norm. Here we consider only states whose minimum eigenvalues are above $1/2d$, as we can always add artificial depolarizing noise into the system. Finally, by allowing a logarithmic overhead, we turn our estimator, which has a small $p$-average RMSE, into an estimator with a small $p$-norm error using the median-of-means estimation. In order to attain the lower bounds, we need $\epsilon$ to be small enough such that the coarse estimation step takes a negligible number of samples.

\section{Outlook}

In this work, we give an instance-optimal characterization of high-precision shadow tomography---generalized to $p$-norm error---by expressing the fundamental sample complexity through Fisher-information-based quantities $\Gamma_p$ and $\Gamma_p^{\ob}$, and showing the $\tTheta(\Gamma_p/\epsilon^2)$ scaling below explicit thresholds with matching upper and lower bounds in the relevant regimes. Our results also clarify the role of experimentally feasible entanglement: in the high-precision regime, $c$-copy measurements can improve the leading term by at most a factor $O(1/c)$, indicating substantial asymptotic gains would require access to measurements across much larger numbers of copies. Conceptually, the work bridges quantum learning and quantum metrology by demonstrating that a finite-sample two-step procedure—coarse localization via tomography followed by locally optimal estimation—can be near-globally optimal on a finite neighborhood rather than only asymptotically. Below, we mention some concrete open questions closely related to the current work.

\vspace{0.5em}\noindent\textbf{Closed-form thresholds. }We have derived high-precision thresholds matching the lower and upper bounds up to logarithmic factors. For incomplete observables, we give a construction via a basis transform that yields a valid threshold. It is interesting to explore if we can turn these constructive thresholds into genuinely closed-form and ideally tight characterizations expressed directly in terms of the observables.

\vspace{0.5em}\noindent\textbf{Entangled measurements on a large number of samples. }We have investigated protocols with joint measurements across $c$ copies at $c=\polylog(d)$ in the high-precision regime. However, it remains open to study protocols with entangled measurements on a large number of samples (e.g. $c=\poly(d)$).

\vspace{0.5em}\noindent\textbf{Conditions for a $\polylog(d)^{-1}$ threshold. }As we have explicitly computed for the case of Pauli observables, the threshold for the high precision regime scales as $\poly(d)^{-1}$. It is natural to ask if we can identify the conditions on the set of observables and the index $p$ such that the threshold for the high precision regime scales as $\polylog(d)^{-1}$.

\section{Related works}

Technically, the lower bounds proved in our work rely on the tools that improved upon the refined learning tree framework from a line of aforementioned works~\cite{bubeck2020entanglement,chen2022exponential,chen2022complexity,chen2025efficient,chen2024optimal}. Here, we mention some other relevant works.

\paragraph{Shadow tomography.}The most standard task in quantum learning is quantum state tomography, which completely recovers the density matrix of an unknown quantum state to high accuracy in trace norm or fidelity~\cite{banaszek2013focus,blume2010optimal,gross2010quantum,hradil1997quantum}. Unfortunately, quantum state tomography suffers from an unavoidable exponential scaling on the system size in sample complexity~\cite{haah2016sample,odonnell2016efficient}. 

To circumvent this exponential barrier, a widely studied alternative task is shadow tomography~\cite{aaronson2018shadow}, the goal of which is to estimate the expectation values of a set of $M$ observables or measurements (up to certain additive, i.e. $\infty$-norm, error). 
A line of works has proposed sample-efficient algorithms using $\poly(\log M, n, 1/\epsilon)$ copies of unknown quantum states and highly entangled measurements~\cite{aaronson2018shadow,aaronson2018online,aaronson2019gentle,buadescu2021improved,brandao2019quantum,gong2023learning,watts2024quantum}. Up to date, the gap between the best known sample complexity upper bound scaling as $O(\log^2 M n/\epsilon^4)$~\cite{buadescu2019quantum} and the well-known lower bound of $\Omega(\log M/\epsilon^2)$ remains an open question. More recently, an optimal shadow tomography protocol with sample complexity $O(\log M/\epsilon^2)$ is proposed in the high precision regime of $\epsilon\lesssim d^{-12}$~\cite{chen2024optimalshadow}, with a very recent breakthrough improving this threshold to $\epsilon\lesssim d^{-1}$~\cite{pelecanos2025debiased}. From the lower bound perspective, it is known that an $\Omega(\min\{2^n/\epsilon^2,1/\epsilon^4\})$ scaling is necessary if one can only perform joint measurements on a restricted number of copies even for estimating all Pauli observables~\cite{chen2024optimal}.

All above protocols and limitations assume access to highly joint measurements, rendering them impractical to implement on near-term devices. In settings of shadow tomography with single-copy measurements, the classical shadows protocol of Huang, Kueng, and Preskill~\cite{huang2020predicting} requires $O(2^n\log M/\epsilon^2)$, which is proved to be optimal~\cite{chen2022exponential}. The algorithm is designed to use random basis measurements to produce a classical, unbiased estimator of the state, which is then
used to predict arbitrary observables. Following this random scheme~\cite{elben2023randomized}, more randomized unitaries and measurements are widely applied in quantum device benchmarking~\cite{knill2008randomized,dankert2009exact,emerson2005scalable}, and quantum learning and tomography~\cite{elben2023randomized,brydges2019probing}.

Our work also discussed protocols with single-copy and few-copy measurements, with a pioneering focus on instance optimality and general $p$-norm error, 
which can be regarded as part of a larger body of recent results exploring how local and joint measurements for various quantum learning tasks affect the underlying statistical complexity~\cite{aharonov2022quantum,bubeck2020entanglement,chen2022quantum,chen2024tight,seif2024entanglement,chen2021hierarchy,ye2025exponential,noller2025infinite,chen2022tight2,odonnell2025instance,chen2022toward,fawzi2023quantum,huang2022quantum,huang2021information,chen2024optimalstate}. We refer to the survey~\cite{anshu2023survey} for a more thorough overview along and beyond this line of work.

\paragraph{Quantum metrology.}Quantum metrology studies optimal measurements for estimating parameters of quantum systems with locally unbiased  estimators~\cite{giovannetti2011advances,degen2017quantum,pezze2018quantum,pirandola2018advances}. For estimating quantum states parametrized by a single parameter, the quantum Cram\'{e}r--Rao bound characterizes the ultimate precision limit of estimating one parameter in quantum states through the quantum Fisher information matrix~\cite{helstrom1967minimum,helstrom1968minimum,helstrom1969quantum,holevo2011probabilistic,braunstein1994statistical,barndorff2000fisher,paris2009quantum}, which is the classical Fisher information matrix maximized over all possible quantum measurements on quantum states. However, when estimating quantum states parametrized by multiple parameters, there are cases where the quantum Cram\'{e}r--Rao bound and the quantum Fisher information matrix are not achievable even asymptotically. 

To address this measurement incompatibility issue, one solution is to minimize the weighted sum of estimation variances for a fixed given cost matrix instead of optimizing the estimation variances for all parameters.  In Ref.~\cite{holevo2011probabilistic}, a well-known lower bound, known as the Holevo Cram\'{e}r--Rao bound, is proposed for the weight sum and has been further studied by a line of works~\cite{albarelli2019evaluating,gorecki2020optimal,tsang2020quantum,sidhu2021tight,hayashi2023tight,gardner2024achieving}. In the setting of asymptotically many copies where one can perform joint measurements on infinitely many copies of quantum states, the Holevo Cram\'{e}r--Rao bound is proved to be tight~\cite{kahn2009local,yamagata2013quantum,yang2019attaining}. When restricted to single-copy measurements on single copies of states, the Holevo Cram\'{e}r--Rao bound is attainable for pure states~\cite{matsumoto2002new}. Moreover, the Holevo Cram\'{e}r--Rao bound is shown to be stronger than the Quantum Cram\'{e}r--Rao bound by at most a factor of $2$~\cite{albarelli2019upper,carollo2019quantumness,demkowicz2020multi}. Unlike these results, which assume an infinite number of samples and use the (weighted) estimation variance as the metric, our work lies in the finite-sample regime and applies to a different metric.

\paragraph{Connections between shadow tomography and quantum learning.}It is natural to connect multi-parameter quantum metrology with (local) quantum state and shadow tomography. A line of work in multi-parameter estimation focuses on local state (shadow) tomography, where optimal measurements, known as Fisher-symmetric measurements, were found
and studied for uniformly estimating all parameters in pure states~\cite{li2016fisher,zhu2018universally,vargas2024near}. Conceptually, most closely related to the present work are the aforementioned works from Pelecanos, Spilecki, and Wright~\cite{pelecanos2025debiased}, and from Chen and Zhou~\cite{zhou2026randomized}, which explore the connections between estimators in shadow tomography and quantum metrology; they apply learning tools to metrology, while we apply a metrological approach to learning.  
In Ref.~\cite{zhou2026randomized}, a protocol using randomized and single-copy measurements is proposed as the locally unbiased estimator for quantum metrology, the mean square error matrix of which is further proved to be within a factor of $4$ of being optimal for pure states. They also generalized their results to hold for low-rank states. Later, using the debiased Keyl’s estimator, which requires joint measurements, Ref.~\cite{pelecanos2025debiased} obtains a locally unbiased estimator for quantum metrology with mean square error matrix at most a factor of $2$ from being optimal.

\section{Preliminaries}

In this section, we recap the basic concepts and results required throughout this paper. We use $\norm{A}_p$ to represent the Schatten $p$-norm of matrix $A$, $\norm{A}_\infty$ to represent the operator norm of $A$, and $\norm{v}_p$ to represent the $L_p$ norm of the vector $v$. We also use $\tO$ and $\tTheta$ to hide the poly-logarithmic dependence (on $m$ and $1/\epsilon$) in big-O notations. We will use $[c]$ to denote the set $\{1,2,...,c\}$. When we say ``with high probability'' without specification, we mean with probability at least $2/3$. We use $\id[\cdot]$ as the indicator function. We use $\succeq$ and $\preceq$ to denote partial orders on positive semidefinite matrices.
Given two distributions $p$ and $q$, the total variation distance between $p$ and $q$ is defined to be $d_{TV}(p,q)\coloneqq\tfrac 12\sum_i\abs{p_i-q_i}$, and the $\chi^2$-divergence between $p$ and $q$ is defined to be $\chi^2(q\|p)\coloneqq\sum_i p_i\left(\tfrac{q_i}{p_i}-1\right)^2$.

\subsection{Basic results in quantum information}

We first introduce some standard definitions and calculations in quantum information. We consider quantum states in $d$-dimensional Hilbert spaces represented as a positive semi-definite matrices $\rho\in\mathbb{C}^{d\times d}$ with $\tr(\rho)=1$. In particular, $d=2^n$ for $n$-qubit quantum states. 
When $\rho$ is rank-$1$ and thus $\tr(\rho^2)=1$, it is called a pure state and is denoted as $\ket{\psi}$ or $\ket{\phi}$ throughout this paper. An $n$-qubit observable $O\in\mathbb{C}^{d\times d}$ is a Hermitian matrix. For an $n$-qubit quantum state $\rho$ or observable $O$ and a subset $S\in[n]$, we use $\tr_S(\rho)$ or $\tr_S(O)$ to denote the remaining state or observables after tracing out the qubits in $S$. We denote by $\id$ the identity operator.

For simplicity, we denote by $\cS$ the set of all density matrices, $\cS^\circ$ the set of all full-rank density matrices, and $\cS_{1/2}$ the set of all mixed states that can be written as a linear combination of $\id/d$ and a mixed state $\rho\in\cS$ of equal weight $1/2$. We will also consider parametrized quantum state $\rho_{\theta,\varphi}$ with $\rho_{\0,\0}=\rho_0$. We denote $\DD(\rho_0)$ to be the set of $(\theta,\varphi)$ such that $\rho_{\theta,\varphi}$ is well-defined. We will also consider linear combination of observables. Given a weight vector $\alpha$ and a set of observables $\{O_i\}_i$, we will also denote the linear combination as $O_\alpha=\sum_i\alpha_i O_i$.

\paragraph{Quantum measurements.}A general quantum measurement is represented as positive operator-valued measures (POVMs). An $n$-qubit POVM is represented a set of positive-semidefinite matrices $\{F_s\}_s$ with $\sum_s F_s=\id$ and each $F_s$ a POVM element corresponding to measurement outcome $s$. When measuring a quantum state $\rho$ with POVM $\{F_s\}_s$, the probability of observing outcome $s$ is given by $\tr(F_s\rho)$. 

More generally, POVM is defined to be a mapping of a measurable set $S \subseteq \Omega$, the outcome space, to non-negative Hermitian operators, such that $F(\emptyset) = 0$, $F(\Omega) = \id$, $F(\cup_{i=1}^\infty S_i) = \sum_{i=1}^\infty F(S_i)$ and $\trace(\rho F(S))$ is the probability of obtaining measurement outcomes $\in S$. In this paper, without specification, we will assume without loss of generality $\Omega = \{s\}_s$ is discrete and denotes $F(\{s\}) =: F_s$ as above. 

Throughout this paper, we consider quantum measurements on multiple copies (replicas) of $d$-dimensional quantum states. We denote $\cM$ as set of all qudit (single-copy) POVMs. We denote $\cM^{[d^2]}$ all qudit (single-copy) POVMs with at most $d^2$ outcomes. For any integer $c>1$, we denote $\cM_{c}$ as the set of $c$-copy POVMs on $c$ qudits. We will also consider learning protocols represented by $\cM_{c,N}$, which contains the set of POVMs on $N$ copies of qudit states, which can be decomposed into $N/c$ (possibly adaptive) $c$-copy measurements.

We will also consider Pauli observables in this paper. We define $n$-qubit Pauli group $\cP_n=\{\id_2,X,Y,Z\}^{\otimes n}$ to be the set of $n$-qubit Pauli observables, where
\begin{align}
\id_2=\begin{pmatrix}
1 & 0 \\ 0 & 1
\end{pmatrix},\qquad
X=\begin{pmatrix}
0 & 1 \\ 1 & 0
\end{pmatrix},\qquad
Y=\begin{pmatrix}
0 & -i \\ i & 0
\end{pmatrix},\qquad
Z=\begin{pmatrix}
1 & 0 \\ 0 & -1
\end{pmatrix}
\end{align}
are single-qubit Pauli operators.

\paragraph{Haar random unitaries.}A common tool to prove bounds on the sample complexity of learning problems is to consider random instances. Haar random unitaries is the most common random unitary ensemble. The Haar measure $\mu$ on the unitary group $U(d)$ is the unique probability measure that is invariant under left- and right-multiplication
\begin{align}
\E_{U\sim\mu}f(UV)=\E_{U\sim\mu}f(VU)=\E_{U\sim\mu}f(U),
\end{align}
for any unitary $V\in U(d)$ and function $f(\cdot)$. We can also define a unique rotation invariant measure on states by $U\ket{\psi}$ with $U\sim\mu$ and an arbitrary state $\ket{\psi}$. We will also write $\psi\sim\mu$ throughout this paper.

We will need to explicitly compute expectation values over the Haar measure. A key subroutine is the following folklore formula (see e.g. Ref.~\cite{harrow2013church}):
\begin{align}
\E_{\psi\sim\mu}\left[\ketbra{\psi}^{\otimes k}\right]=\frac{\Pi_k}{\binom{d+k-1}{k}}=\frac{1}{d(d+1)...(d+k-1)}\sum_{\pi\in S_k}\pi^d,
\end{align}
where $\Pi_k$ denotes the projector onto the symmetric subspace $\mathrm{Sym}^k(\mathbb{C}^d)$, $S_k$ is the set of permutation over $k$ elements and $\pi^d$ acts on $(\mathbb{C}^d)^{\otimes k}$ by
\begin{align}
\pi^d |i_1,\ldots,i_k\rangle=|i_{\pi^{-1}(1)},\ldots, i_{\pi^{-1}(k)}\rangle.
\end{align}

\subsection{Tree representations and Le Cam's method}

We introduce the concepts of modeling adaptive protocols for quantum learning and distinguishing tasks with learning trees~\cite{aharonov2022quantum,chen2022exponential,chen2022complexity,chen2025efficient,chen2024optimal}. Here, we consider an arbitrary protocol using $c$-copy joint measurements and $N$ copies, which is described by an element in $\cM_{c,N}$. We split the protocol into $N/c$ iterations, select $c$ copies of the unknown state $\rho$ at each node, perform a $c$-copy POVM in (a subset of) $\cM_c$, and step to the next iteration corresponding to the outcome. We describe such a procedure with adaptivity using the learning tree representation adapted from Ref.~\cite{chen2024optimal}:
\begin{definition}[Tree representation for protocols in $\cM_{c,N}$~\cite{chen2024optimal}]\label{def:learning-tree}
Given an unknown $n$-qubit quantum state $\rho$, a protocol using $c$-copy joint measurements and $N$ copies of $\rho$ in $\cM_{c,N}$ can be represented as a rooted tree $\cT$ of depth $T=N/c$ with each node on the tree recording the measurement outcome history of the algorithm. It has the following properties:
\begin{enumerate}
\item We assign a probability $p^\rho(u)$ to each node $u$ on the tree $\cT$. The probability assigned to the root $r$ is $p^\rho(r)=1$.
\item At each non-leaf node $u$, we measure a fresh batch $\rho^{\otimes c}$ containing $c$ copies of $\rho$ using a joint measurement $M_u=\{F_s^u\}\in\cM_c$, resulting in a classical outcome $s$. Each child node $v$ corresponding to the classical outcome $s$ of the node $u$ is connected through the edge $e_{u,s}$. 
\item If a node $v$ is the child of a node $u$ through the edge $e_{u,s}$,  the probability assigned to this edge is
\begin{align}
p^\rho(v)=p^\rho(u)\cdot\tr(F_s^u\rho^{\otimes c}).
\end{align}
\item Each root-to-leaf path is of length $T=N/c$. At a leaf node $\ell$, $p^\rho(\ell)$ denotes the probability of the classical memory reaching $\ell$ at the end of the protocol. We also denote the set of leaves of $\cT$ by $\text{leaf}(\cT)$.
\end{enumerate}
At the end of the protocol, the classical post-processing maps each leaf node to a desired output of the protocol.
\end{definition}

Throughout this paper, we will use the learning tree representation defined in \Cref{def:learning-tree} as a tool to prove lower bounds for (oblivious) distinguishing tasks in \Cref{prob:distinguish} and \Cref{prob:oblivious-distinguish}. In particular, we are interested in the following distinguishing task. Given access to copies of a $d$-dimensional unknown state $\rho$, the goal is to distinguish between the following two cases:
\begin{itemize}
    \item (Null hypothesis) $\rho$ is a state $\rho_0$; or
    \item (Alternative hypothesis) $\rho$ is a parametrized state $\rho_{\theta,\varphi}$ randomly sampled from a probability distribution $\pi$. 
\end{itemize}
Note that here we would like our distinguishing algorithm to apply to any choice of $\rho_0 \in \cS^\circ$ and probability distribution $\pi$. That means for a fixed measurement protocol in $\cM_{c,N}$, $\rho_0$, $\{Q_a\}_{a\in A}$, $\{T_b\}_{b\in B}$ and $\pi$ can be chosen \emph{adversarially} to reduce the successful rate.

There is a well-established framework for proving lower bounds of this distinguishing task, consisting of Le Cam’s two-point method~\cite{yu1997assouad}, one-sided likelihood ratio~\cite{chen2022exponential}, and the martingale technique~\cite{chen2022complexity}. We recap the necessary concepts here.

\begin{definition}[Likelihood ratio]\label{defn:likelihood-ratio}
Consider a protocol described by a tree representation $\cT$ for the distinguishing task between the null and alternative hypotheses. For any leaf node $\ell\in\text{leaf}(\cT)$, we define the likelihood ratio to be
\begin{align}
L(\ell)\coloneqq\frac{\mathbb{E}_{(\theta,\varphi)\sim\pi}[p^{\rho_{\theta,\varphi}}(\ell)]}{p^{\rho_0}(\ell)}.
\end{align}
We can also define the likelihood ratio for each edge $e_{u,s}$ and each particular choice of $\rho_{\theta,\varphi}$ as:
\begin{align}
L_{\theta,\varphi}(\ell)\coloneqq\frac{p^{\rho_{\theta,\varphi}}(\ell)}{p^{\rho_0}(\ell)},\qquad L_{\theta,\varphi}(s|u)\coloneqq\frac{p^{\rho_{\theta,\varphi}}(s|u)}{p^{\rho_0}(s|u)}.
\end{align}
\end{definition}

We summarize the toolbox for showing lower bounds under this learning tree representation. 

\begin{lemma}[Toolbox of showing lower bounds]\label{lem:likelihood-ratio}
Suppose $\cT$ is a learning tree with depth $t=N/c$ that solves the distinguishing problem with probability $p_{\text{suc}}$.
\begin{enumerate}
    \item  (Le Cam’s two-point method~\cite{yu1997assouad}) 
    \begin{align}
    p_{\text{suc}}\leq d_{TV}(\mathbb{E}_{(\theta,\varphi)\sim\pi}[p^{\rho_{\theta,\varphi}}],p^{\rho_0})=\frac12\sum_{\ell\in\text{leaf}(\cT)}\abs{\mathbb{E}_{(\theta,\varphi)\sim\pi}[p^{\rho_{\theta,\varphi}}(\ell)]-p^{\rho_0}(\ell)}.
    \end{align}
    \item (One-sided likelihood ratio~\cite{chen2022exponential}) For any $\beta>0$, we have
    \begin{align}
    \begin{split}
    d_{TV}(\mathbb{E}_{(\theta,\varphi)\sim\pi}[p^{\rho_{\theta,\varphi}}],p^{\rho_0})&\leq\Pr_{\ell\sim p^{\rho_0},(\theta,\varphi)\sim\pi}[L_{\theta,\varphi}(\ell)\leq\beta]+1-\beta,\\
    d_{TV}(\mathbb{E}_{(\theta,\varphi)\sim\pi}[p^{\rho_{\theta,\varphi}}],p^{\rho_0})&\leq\Pr_{\ell\sim p^{\rho_0}}[L(\ell)\leq\beta]+1-\beta.
    \end{split}
    \end{align}
    \item (Martingale technique~\cite{chen2022complexity,chen2024optimal}) Suppose there is a $\delta>0$ such that for every node $u$ and outcome $s$ we have
    \begin{align}
    \mathbb{E}_{(\theta,\varphi)\sim\pi}\mathbb{E}_{s\sim p^{\rho_0}(s|u)}\left[\left(L_{\theta,\varphi}(u,s)-1\right)^2\right]\leq\delta.
    \end{align}
    We then have
    \begin{align}
    \Pr_{\ell\sim p^{\rho_0},(\theta,\varphi)\sim\pi}[L_{\theta,\varphi}(\ell)\leq0.9]\leq 0.1+O(\delta t).
    \end{align}
\end{enumerate}
\end{lemma}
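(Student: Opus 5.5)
The three items of \Cref{lem:likelihood-ratio} are standard. I would prove them one at a time, in order, each from the definitions in \Cref{def:learning-tree} and \Cref{defn:likelihood-ratio}.

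\emph{Item 1 (Le Cam).} Any protocol is a deterministic map $f:\text{leaf}(\cT)\to\{0,1\}$ applied after reaching a leaf; randomized post-processing can be folded into extra leaves. Write $P_0=p^{\rho_0}$ and $P_1=\mathbb{E}_\pi p^{\rho_{\theta,\varphi}}$. Success with equal priors is $\tfrac12 P_0(f=0)+\tfrac12 P_1(f=1)$. This is maximized by the likelihood-ratio test, giving $\tfrac12+\tfrac12 d_{TV}(P_1,P_0)$. So the statement should be read as the usual bias form, $p_{\text{suc}}-\tfrac12\le\tfrac12 d_{TV}$, or equivalently as success bounded in terms of $d_{TV}$. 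I would state it in that normalized form and note that the lower-bound arguments only need ``$d_{TV}$ small implies success at most $2/3$ fails.''

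\emph{Item 2 (one-sided likelihood ratio).} I would use
\[
d_{TV}(P_1,P_0)=\sum_\ell P_0(\ell)\,\bigl(1-L(\ell)\bigr)_+ .
\]
Split the leaves into those with $L(\ell)\le\beta$, which contribute at most $P_0(L\le\beta)$ because $(1-L)_+\le1$, and those with $L(\ell)>\beta$, which contribute at most $1-\beta$. This gives the second inequality. For the first, use $L(\ell)=\mathbb{E}_\pi L_{\theta,\varphi}(\ell)$ and convexity of $x\mapsto(1-x)_+$, which gives $(1-L)_+\le\mathbb{E}_\pi(1-L_{\theta,\varphi})_+$. Then apply the same split jointly over $(\ell,\theta,\varphi)$.

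\emph{Item 3 (martingale).} This is the main step. Fix $(\theta,\varphi)$ and walk down the tree under $P_0$. The likelihood ratio factorizes as $L_{\theta,\varphi}(\ell)=\prod_{k=1}^t L_{\theta,\varphi}(s_k|u_{k-1})$ along the path, and each factor has conditional mean $1$ under $P_0$. Taking logarithms, I would write $\log L_{\theta,\varphi}(\ell)=\sum_k \log(1+Y_k)$ with $Y_k=L_{\theta,\varphi}(s_k|u_{k-1})-1$. Here $\mathbb{E}[Y_k\mid\mathcal F_{k-1}]=0$, and averaging over $\pi$ gives $\mathbb{E}[Y_k^2]\le\delta$. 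The difficulty is that $\log(1+Y)$ is unbounded below. I would use the elementary inequality
\[
\log(1+y)\ge y-y^2 \quad\text{for } y\ge-\tfrac12,
\]
and handle the event $Y_k<-\tfrac12$ separately: its probability is at most $4\delta$ by Chebyshev, so a union bound over the $t$ steps costs $O(\delta t)$. On the complement,
\[
\log L_{\theta,\varphi}\ge \sum_k Y_k-\sum_k Y_k^2 .
\]
The martingale $\sum_k Y_k$ has second moment at most $\delta t$ by orthogonality of increments, so $\Pr\bigl[\sum_k Y_k<-0.05\bigr]\le O(\delta t)$ by Chebyshev. By Markov, $\Pr\bigl[\sum_k Y_k^2>0.05\bigr]\le O(\delta t)$. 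Since $\log0.9\approx-0.105<-0.1$, the event $L_{\theta,\varphi}(\ell)\le0.9$ forces one of these bad events. Summing their probabilities gives $O(\delta t)$, which is within the claimed $0.1+O(\delta t)$ with room to spare. The only subtlety I foresee is that the $\pi$-average sits inside the hypothesis rather than pointwise in $(\theta,\varphi)$. This is fine: every bound above is linear in expectations, so I average over $(\theta,\varphi)$ at the end.
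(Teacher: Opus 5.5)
The paper does not prove this lemma; it only cites the references, so there is no in-paper argument to compare against. Your three arguments are correct and self-contained, and they follow the standard route:
\begin{itemize}
\item item 1: the optimal-test characterization of $d_{TV}$;
\item item 2: the identity $d_{TV}=\sum_\ell P_0(\ell)(1-L(\ell))_+$, split at $\beta$, with Jensen giving the $(\theta,\varphi)$-resolved version;
\item item 3: a second-moment bound on $\log L_{\theta,\varphi}$ along the $P_0$-path.
\end{itemize}
A few places in your argument rely on conditions that the statement leaves implicit.

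In item 1 you are right that the inequality cannot hold literally. Blind guessing succeeds with probability $\frac12$ even when $d_{TV}=0$. The correct form is $p_{\text{suc}}\le\frac12+\frac12 d_{TV}$, and that is all the downstream bound $\Omega(c/\delta_{\cM_c})$ needs.

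In item 2, bounding the leaves with $L(\ell)>\beta$ by $1-\beta$ uses $\beta\le 1$. For $\beta>1$ the stated inequality is in fact false. Take two leaves with $P_0=(\frac12,\frac12)$ and $P_1=(1,0)$: then $d_{TV}=\frac12$, while the right-hand side is strictly below $\frac12$ for every $\beta>1$. So the lemma should be read for $\beta\in(0,1]$, which covers the only use, $\beta=0.9$.

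In item 3, the claim that each factor has conditional mean one needs $p^{\rho_0}(s|u)>0$ whenever $p^{\rho_{\theta,\varphi}}(s|u)>0$. This holds here because $\rho_0\in\cS^\circ$ is full rank, so $\tr(F^u_s\rho_0^{\otimes c})=0$ forces $F^u_s=0$. The condition is genuinely needed. Suppose at every node $p^{\rho_0}(\cdot|u)$ is a point mass and $p^{\rho_{\theta,\varphi}}(\cdot|u)$ leaks mass $\sqrt{\delta}$ outside its support. Then the hypothesis holds with equality, yet $L_{\theta,\varphi}(\ell)=(1-\sqrt{\delta})^t\le e^{-1}$ surely once $t\ge 1/\sqrt{\delta}$. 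Meanwhile $\delta t$ can be as small as about $\sqrt{\delta}$, so the conclusion fails for small $\delta$.

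Two smaller points on item 3:
\begin{itemize}
\item Your separate union bound over $\{Y_k<-\frac12\}$ is redundant. That event already forces $\sum_j Y_j^2>\frac14>0.05$, which your Markov step covers.
\item Your argument actually yields $\Pr[L_{\theta,\varphi}(\ell)\le 0.9]=O(\delta t)$ without the additive $0.1$. This is a harmless strengthening.
\end{itemize}
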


The proof of \Cref{lem:likelihood-ratio} can be referred to Refs.~\cite{chen2022exponential,chen2022complexity,chen2024optimal}. The first bound is a tree-based one, as we upper bound the total variation distance between the probability distributions of reaching each leaf under the two cases. The second bound is a path-based one, as we prove that the likelihood ratio is not too small for most of the paths from the root to leaves. The third bound is an edge-based one, as we focus on each edge and show that the likelihood ratio over the edge concentrates around $1$.

Note that in \Cref{prob:distinguish} and \Cref{prob:oblivious-distinguish}, $\rho_{\theta,\varphi}$ is specified to be 
\begin{align}
\rho_{\theta,\varphi}=\rho_0+\frac{1}{d}\sum_{a\in A}\theta_a Q_a+\frac{1}{d}\sum_{b\in B}\varphi_b T_b,
\end{align}
where $\rho_0$ is the density matrix in the null hypothesis, $\{Q_a,T_b\}_{a\in A,b\in B}$ are dual observable bases, and $\theta\in\R^m$ and $\varphi\in\R^{d^2-m-1}$  can be arbitrary vectors such that $\rho_{\theta,\varphi}$ is well-defined and $\norm{\theta}_p=3\epsilon$. 
We denote the set of all such $(\theta,\varphi)$ as $\DD_{3\epsilon,p}^{{ Q },{ T }}(\rho_0)$, i.e.
\begin{equation}
\DD_{3\epsilon,p}^{{ Q },{ T }}(\rho_0) =   \{(\theta,\varphi)  | \rho_{\theta,\varphi} \succeq 0,\norm{\theta}_p = 3\epsilon,\theta\in\R^m,\varphi\in\R^{d^2-m-1}\}.   
\end{equation}
where ${ Q } := (Q_1~Q_2~\cdots~Q_m)$ and ${ T } := (T_{m+1}~T_{m+2}~\cdots~T_{d^2-1})$ represent the choice of the dual basis and the superscript $^{{ Q },{ T }}$ highlights the dependence of $(\theta,\varphi)$ on the choice of basis ${ Q }$ and ${ T }$.  
We also denote the set of all probability distributions over such $(\theta,\varphi)$ as $\cD_{3\epsilon,p}^{{ Q },{ T }}(\rho_0)$. Furthermore, we denote the set of all ${Q },{T}$ satisfying \eqref{eq:dual} as 
\begin{equation}
\BB(O) =   \{(Q,T)  | \Tr(O_iQ_a) =d\delta_{ia},\forall i, a\in A, \Tr(O_iT_b) =0, \forall i\in A ,b\in B\},  
\end{equation}
where ${ O } := (O_1~O_2~\cdots~O_m)$. 

Assume there is a protocol in $\cM_{c,N}$ that solves this distinguishing problem in $T=N/c$ rounds. The algorithm can be represented by a learning tree $\cT$ of depth $T$ in \Cref{def:learning-tree}. Let $u$ be an internal node in the learning tree, and $M_u=\{F_s^u\}\in\cM_c$ be the $c$-copy POVM used in the node. Then the probability of observing the outcome $s$ given underlying state $\rho$ is $p^{\rho}(s|u)=\tr(F^u_s\rho^{\otimes c})$.  The likelihood ratio is thus $L_{\theta,\varphi}(u,s)=\tr(F^u_s\rho_{\theta,\varphi}^{\otimes c})/\tr(F^u_s\rho_0^{\otimes c})$, and 
\begin{align}
\begin{split}
\mathbb{E}_{(\theta,\varphi)\sim\pi}\mathbb{E}_{s\sim p^{\rho_0}(s|u)}\left[\left(L_{\theta,\varphi}(u,s)-1\right)^2\right]&=\mathbb{E}_{(\theta,\varphi)\sim\pi}\mathbb{E}_{s\sim p^{\rho_0}(s|u)}\left[\left(\frac{\tr(F^u_s\rho_{\theta,\varphi}^{\otimes c})}{\tr(F^u_s\rho_0^{\otimes c})}-1\right)^2\right]\\
&=\mathbb{E}_{(\theta,\varphi)\sim\pi}\chi_{M_u}^2\left(\rho_{\theta,\varphi}^{\otimes c}\|\rho_0^{\otimes c}\right)\\
\end{split}
\end{align}
where $\chi_M^2(\rho\|\rho')$ denotes the $\chi^2$-distance between the probability distribution over all measurement outcomes using the POVM $M$ over $\rho$ and $\rho'$.  
In particular, we can choose $(Q,T)$, $\rho_0$ and $\pi$ that (almost) minimize the last line so that the last line 
\begin{align}
\mathbb{E}_{(\theta,\varphi)\sim\pi}\chi_{M}^2\left(\rho_{\theta,\varphi}^{\otimes c}\|\rho_0^{\otimes c}\right) \leq 2 \inf_{\substack{\rho_0 \in \cS^\circ\\ (Q,T) \in \BB(O)}} \min_{\pi\in\cD_{3\epsilon,p}^{{ Q },{ T }}(\rho_0)} \mathbb{E}_{(\theta,\varphi)\sim\pi}\chi_{M}^2\left(\rho_{\theta,\varphi}^{\otimes c}\|\rho_0^{\otimes c}\right).
\end{align}
Let 
\begin{align}\label{eq:minimax_learning}
\delta_{\cM_c}(O) :=  \sup_{M\in\cM_c} \inf_{\substack{\rho_0 \in \cS^\circ\\ (Q,T) \in \BB(O)}} \min_{\pi\in\cD_{3\epsilon,p}^{{ Q },{ T }}(\rho_0)} 
\mathbb{E}_{(\theta,\varphi)\sim\pi}\chi_{M}^2\left(\rho_{\theta,\varphi}^{\otimes c}\|\rho_0^{\otimes c}\right).
\end{align}
We note that there may be be some $\rho_0 \in \cS^\circ$ such that $\DD_{3\epsilon,p}^{{ Q },{ T }}(\rho_0)$ and $\cD_{3\epsilon,p}^{{ Q },{ T }}(\rho_0)$ are empty sets and the distinguishing task is not well defined. In this case, we let $\min_{\pi\in\cD_{3\epsilon,p}^{{ Q },{ T }}(\rho_0)} = \infty$. 
By \Cref{lem:likelihood-ratio}, the sample complexity of this distinguishing task is lower bounded by $\Omega(c/\delta_{\cM_{c}})$.

\subsection{Quantum metrology}

Here, we introduce basic concepts in quantum metrology (i.e. quantum estimation theory), including locally unbiased estimators, Cram\'{e}r--Rao bound, and Fisher information matrix. 

Consider a $d$-dimensional quantum state $\rho_\theta$ in Hilbert space $\mathcal{H}$, where $\theta=(\theta_1,\theta_2,\dots,\theta_m)\in\Theta\subseteq\mathbb{R}^m$. Given a POVM $M=\{M_x\}$, an estimator $\htheta(x)$, a function that maps the measurement outcome to $\Theta$ is called an \emph{unbiased} estimator if
\begin{equation}
\theta=\sum_x\htheta(x)\tr(\rho_\theta M_x)
\end{equation}
for all $\theta\in\Theta$. A \emph{locally unbiased} estimator describes an estimation that is unbiased in the vicinity of one specific value of $\theta$, say $\theta_0$, satisfying
\begin{equation}
\theta_0=\sum_x\htheta(x)\tr(\rho_\theta M_x)\bigg|_{\theta=\theta_0},\quad\delta_{ij}=\frac{\partial}{\partial\theta_i}\sum_x\htheta_j(x)\tr(\rho_\theta M_x)\bigg|_{\theta=\theta_0}.
\end{equation}
Unbiased estimators are locally unbiased, though the converse need not hold. In quantum metrology, the figure of merit is usually taken to be the mean square error matrix (MSEM), defined by
\begin{equation}
V(\rho_\theta,M,\hat{\theta})_{ij}=\sum_x(\hat{\theta}(x)_i-\theta_i)(\hat{\theta}(x)_j-\theta_j)\tr(\rho_\theta M_x).    
\end{equation}
We will also call $V(\rho_\theta,M,\hat{\theta})_{ii}$ the mean square error (MSE) for the estimator $\hat{\theta}_i$. For the estimator $\htheta_\alpha = \sum_i \alpha_i \htheta_i$, the corresponding MSE is given by 
\begin{equation}
\label{eq:alpha-mse}
    \sum_x(\hat{\theta}(x)_\alpha-\theta_\alpha)^2 \tr(\rho_\theta M_x) = \alpha^\top V(\rho_\theta,M,\hat{\theta}) \alpha. 
\end{equation}
Specifically, for locally unbiased estimators, it corresponds to the covariance matrix. 
The MSEM of any (locally) unbiased estimator (at $\theta$) is bounded below by the inverse of the \emph{Fisher information matrix} (FIM),
\begin{equation}
I(\rho_\theta,M)_{ij} :=\sum_{x,p_\theta(x):=\tr(\rho_\theta M_x)\ne0}\frac{1}{p_\theta(x)}\frac{\partial p_\theta(x)}{\partial\theta_i}\frac{\partial p_\theta(x)}{\partial\theta_j},
\end{equation}
through the Cram\'{e}r–Rao (CR) bound,
\begin{equation}
\label{eq:CR-bound}
V(\rho_\theta,M,\hat{\theta}) \succeq I(\rho_\theta,M)^{-1},
\end{equation}
where $V \succeq W $ means $V - W$ is positive semidefinite.
Given $N$ copies of quantum state $\rho_\theta$, using the fact that the FIM is additive, we have 
\begin{equation}
V(\rho_\theta,M^{\otimes N},\hat{\theta}^{(N)}) \succeq I(\rho_\theta^{\otimes N},M^{\otimes N})^{-1} = \frac{1}{N} I(\rho_\theta,M)^{-1},    
\end{equation}
for any locally unbiased estimator $\hat{\theta}^{(N)}$. 

The FIM is closely related to the second-order derivative of $\chi^2$-distance. Consider the probability distribution $p_\theta(x)$. Assume $p_\theta(x) > 0$ for all $x$. Then,
\begin{multline}
\label{eq:taylor}
\chi^2(p_{\theta+\dtheta}\|p_\theta)=\sum_{x}p_\theta(x)\left(\frac{p_{\theta+\dtheta}(x)}{p_\theta(x)}-1\right)^2=\sum_{x}\frac{(p_{\theta+\dtheta}(x)-p_\theta(x))^2}{p_\theta(x)}\\
=\sum_{x}\frac{(\sum_i\partial_i p_\theta(x)\dtheta_i)^2}{p_\theta(x)} =\sum_{x,i,j}\frac{\partial_i p_\theta(x)\partial_j p_\theta(x)\dtheta_i \dtheta_j}{p_\theta(x)}=\sum_{i,j}I(M)_{ij}\dtheta_i \dtheta_j.
\end{multline}
In particular, when $p_{\theta}$ is linear in $\theta$, the above calculation is exact for finite $\dtheta$, a situation that we will encounter later.

Consider the asymptotic situation where we have $N$ copies of quantum state $\rho_\theta$ and $N \rightarrow \infty$. Under certain regularity conditions, the CR bound is saturable by maximum likelihood estimators~\cite{van2000asymptotic} is the sense that 
\begin{equation}
    \sqrt{N}(\hat{\theta}^{(N)}_{\text{MLE}} - \theta) \xrightarrow{\text{d}} \cN(0, I(\rho_\theta,M)^{-1}),
\end{equation}
i.e., $\sqrt{N}(\hat{\theta}^{(N)}_{\text{MLE}} - \theta)$ converges in distribution to a normal distribution centered around $0$ with variance equal to $I(\rho_\theta,M)^{-1}$ as $N 
\rightarrow \infty$. When $N$ is finite, it is unclear whether the CR bound is always saturable. However, when we focus on local estimation at a specific point $\theta_0$ (whose value is known in prior), the following locally unbiased estimator automatically achieves the CR bound, i.e.,
\begin{equation}
V(\rho_\theta,M,\hat{\theta}^{\text{opt}})|_{\theta=\theta_0}=I(\rho_\theta,M)^{-1}|_{\theta=\theta_0}.
\end{equation}
where for a measurement outcome $y$, 
\begin{equation}
\label{eq:opt}
\hat{\theta}^{\text{opt}}(y;\theta_0) := (\theta_{0})_i +\sum_x(\gamma_{i,x}|_{\theta=\theta_0})\delta_{xy},\quad 
\gamma_{i,x}=\sum_j (I(\rho_\theta,M)^{-1})_{ij}\frac{\partial_j p_\theta(x)}{p_\theta(x)}.
\end{equation}
We can easily verify that this estimator achieves the CR bound: 
\begin{align}
V(\rho_\theta,M,\hat{\theta}^{\text{opt}})_{ij}&=\sum_x (\hat{\theta}^{\text{opt}}(x;\theta)_i - \theta_i)(\hat{\theta}^{\text{opt}}(x;\theta)_j - \theta_j)\tr(\rho_\theta M_x)\\
&=\sum_{x,x_1,x_2}\gamma_{i,x_1} \delta_{x_1,x}\gamma_{j,x_2} \delta_{x_2,x} p_\theta(x) =\sum_x\gamma_{i,x}\gamma_{j,x} p_\theta(x) \\
&=\sum_{i'j'}(I(\rho_\theta,M)^{-1})_{ii'}\frac{\partial_{i'}p_\theta(x)}{p_\theta(x)}(I(\rho_\theta,M)^{-1})_{jj'}\frac{\partial_{j'}p_\theta(x)}{p_\theta(x)}p_\theta(x)\\
&=\sum_{i'j'}(I(M)^{-1})_{ii'}I(M)_{i'j'}(I(M)^{-1})_{jj'}= (I(\rho_\theta,M)^{-1})_{ij}.
\end{align}
In practice, however, the optimal locally unbiased estimator \eqref{eq:opt} cannot be directly applied as the local point $\theta_0$ is unknown in prior. 

\subsection{Tail bounds}

We will need the following Chebyshev's inequality and Hoeffding's inequality.

\begin{lemma}[Chebyshev's inequality, see e.g., Corollary 1.6.3 in~\cite{vershynin2018high}]
Let $X$ be a random variable with finite non-zero variance $\sigma^2$ and finite expected value $\mu$. Then for any real number $k>0$, we have
\begin{align}
\Pr[\abs{X-\mu}\geq k\sigma]\leq\frac{1}{k^2}.
\end{align}
\end{lemma}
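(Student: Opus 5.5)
The plan is the standard second-moment argument. Write $Y = (X-\mu)^2$, which is a nonnegative random variable with $\E[Y] = \sigma^2$, finite by assumption. Since $k>0$ and $\sigma>0$, the event $\{\abs{X-\mu}\geq k\sigma\}$ coincides with the event $\{Y \geq k^2\sigma^2\}$, because squaring is monotone on nonnegative reals. So it suffices to bound $\Pr[Y\geq k^2\sigma^2]$.

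For that bound I will use Markov's inequality, proved inline to keep things self-contained. Pointwise,
\begin{equation}
k^2\sigma^2\,\id[Y\geq k^2\sigma^2] \leq Y,
\end{equation}
which holds because $Y\geq 0$: on the event the left side equals $k^2\sigma^2\le Y$, and off the event it is $0\le Y$. Taking expectations and using monotonicity and linearity of expectation gives
\begin{equation}
k^2\sigma^2 \Pr[Y\geq k^2\sigma^2]\leq \E[Y]=\sigma^2 .
\end{equation}
Dividing by $k^2\sigma^2>0$, which is allowed because the variance is non-zero, yields $\Pr[\abs{X-\mu}\geq k\sigma]\leq 1/k^2$.

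There is essentially no obstacle here. The only points needing care are using $\sigma>0$ to justify the division and the equivalence of events, and using finiteness of $\mu$ and $\sigma^2$ so that $Y$ is well defined and integrable.
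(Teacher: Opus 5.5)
Your proof is correct: it is the standard argument of applying Markov's inequality (which you prove inline) to $(X-\mu)^2$, and you correctly use $\sigma>0$ both for the equivalence of events and for the division. The paper gives no proof of its own and simply cites Vershynin, where the result is derived from Markov's inequality in essentially the same way, so your approach matches the intended one.
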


\begin{lemma}[Hoeffding's inequality~\cite{hoeffding1963probability}, see e.g., Theorem 2.2.6 in~\cite{vershynin2018high}]\label{lem:Hoeffding}
Let $X_1, \cdots, X_M$ be $M$ independent random variables such that $X_i\in [a_i, b_i]$ for every $i$. For any $t>0$, we have
\begin{align}
\Pr[\abs{\sum_{i=1}^M \left(X_i-\E[X_i]\right)}\geq t]\leq 2\exp\left(-\frac{2t^2}{\sum_{i=1}^M(b_i-a_i)^2}\right).
\end{align}
\end{lemma}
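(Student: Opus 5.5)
We follow the standard Chernoff (exponential moment) route, so the argument reduces to a moment generating function bound for a single bounded variable. Write $Y_i = X_i - \E[X_i]$ and $S=\sum_{i=1}^M Y_i$. It suffices to show $\Pr[S\ge t]\le \exp\!\big(-2t^2/\sum_i(b_i-a_i)^2\big)$. The same bound for $\Pr[S\le -t]$ then follows by applying it to the variables $-X_i\in[-b_i,-a_i]$, whose ranges have the same lengths. A union bound over the two tails gives the factor $2$.

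\textbf{Step 1 (Chernoff step).} For any $\lambda>0$, Markov's inequality applied to $e^{\lambda S}$ gives $\Pr[S\ge t]\le e^{-\lambda t}\,\E[e^{\lambda S}]$. By independence, $\E[e^{\lambda S}]=\prod_{i=1}^M \E[e^{\lambda Y_i}]$.

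\textbf{Step 2 (Hoeffding's lemma).} Show that a mean-zero random variable $Y\in[a,b]$ satisfies $\E[e^{\lambda Y}]\le \exp(\lambda^2(b-a)^2/8)$. This is the main obstacle, and the plan is as follows.
\begin{itemize}
\item Use convexity of $y\mapsto e^{\lambda y}$ on $[a,b]$ to write $e^{\lambda y}\le \frac{b-y}{b-a}e^{\lambda a}+\frac{y-a}{b-a}e^{\lambda b}$.
\item Take expectations using $\E[Y]=0$. This gives $\E[e^{\lambda Y}]\le e^{L(h)}$, where $h=\lambda(b-a)$, $\gamma=-a/(b-a)\in[0,1]$, and $L(h)=-h\gamma+\log(1-\gamma+\gamma e^{h})$.
\item Check directly that $L(0)=L'(0)=0$. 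Also $L''(h)=u(1-u)\le 1/4$, where $u=\gamma e^h/(1-\gamma+\gamma e^h)\in[0,1]$.
\item Taylor's theorem with remainder then gives $L(h)\le h^2/8$.
\end{itemize}

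\textbf{Step 3 (optimize over $\lambda$).} Combining the steps gives $\Pr[S\ge t]\le \exp\!\big(-\lambda t+\tfrac{\lambda^2}{8}\sum_i(b_i-a_i)^2\big)$. Choose $\lambda=4t/\sum_i(b_i-a_i)^2$. This yields $\exp\!\big(-2t^2/\sum_i(b_i-a_i)^2\big)$, and the two-sided statement follows as noted at the start.
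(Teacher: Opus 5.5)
Your proof is correct. It is the classical Chernoff bound combined with Hoeffding's lemma (the convexity bound on the MGF and the estimate $L''\le 1/4$), which is the argument of Hoeffding's original paper; the paper itself gives no proof and simply cites Hoeffding and Vershynin. Two details should be stated explicitly: Hoeffding's lemma is applied to $Y_i$ on the shifted interval $[a_i-\E[X_i],\,b_i-\E[X_i]]$, which has length $b_i-a_i$ and contains $0$ (so $\gamma\in[0,1]$), and the degenerate case $a_i=b_i$, where $Y_i=0$, is trivial.
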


As a corollary, let $X_1, \cdots, X_M$ be i.i.d. random variables in $[0, 1]$ and $\E[X_i]=\mu$ for every $i$. Then 
\begin{align}
\Pr\left[\sum_{i=1}^M X_i\leq (1-\delta)M\mu\right] \leq e^{-2M\mu^2\delta^2}.
\end{align}

\section{Lower bounds for many-versus-one distinguishing}
\label{sec:distinguish}

In this section, we prove a lower bound for the (oblivious) many-versus-one distinguishing problem in \Cref{prob:distinguish} (\Cref{prob:oblivious-distinguish}). We recap the distinguishing task we consider here for concreteness. Given access to copies of a $d$-dimensional unknown state $\rho$, the goal is to distinguish between the following two cases:
\begin{itemize}
\item (Null hypothesis) $\rho$ is a state $\rho_0$; or
\item (Alternative hypothesis) $\rho$ is a parametrized state $\rho_{\theta,\varphi}$ randomly sampled from a probability distribution $\pi\in\cD_{3\epsilon,p}^{{ Q },{ T }}(\rho_0)$. Here, $\rho_{\theta,\varphi}$ is defined to be
\begin{align}
\rho_{\theta,\varphi}=\rho_0+\frac{1}{d}\sum_{a\in A}\theta_a Q_a+\frac{1}{d}\sum_{b\in B}\varphi_b T_b,
\end{align}
where $\rho_0$ is the density matrix in the null hypothesis, $\{Q_a,T_b\}_{a\in A,b\in B}$ are dual observable bases, and $\theta\in\R^m$ and $\varphi\in\R^{d^2-m-1}$  can be arbitrary vectors such that $\rho_{\theta,\varphi}$ is well-defined and $\norm{\theta}_p=3\epsilon$. 
\end{itemize}

If $B=\emptyset$ and $\{Q_a\}_{a\in A}$ forms a complete basis of the Hilbert space, we called $\{Q_a\}_{a\in A}$ a set of complete observables. Recall from the learning tree framework, we showed in \eqref{eq:minimax_learning} that any protocol described by $\cM_{c,N}$ in \Cref{def:learning-tree} requires $\Omega(c/\delta_{\cM_{c}})$ samples with
\begin{align}
\delta_{\cM_c}(O)
= \sup_{M\in\cM_c} \inf_{\substack{\rho_0 \in \cS^\circ\\ (Q,T) \in \BB(O)}} \min_{\pi\in\cD_{3\epsilon,p}^{{ Q },{ T }}(\rho_0)} 
\mathbb{E}_{(\theta,\varphi)\sim\pi}\chi_{M}^2\left(\rho_{\theta,\varphi}^{\otimes c}\|\rho_0^{\otimes c}\right). 
\end{align}

\subsection{Single-copy measurements: Exact lower bounds}

We first assume that we can only use single-copy measurements ($c=1$).

\subsubsection{Complete observables}\label{sec:distinguish-complete}

We start with the case when the observables form a complete basis. The task reduces to distinguishing between the following two cases:
\begin{itemize}
\item (Null hypothesis) $\rho$ is a state $\rho_0$; or
\item (Alternative hypothesis) $\rho$ is a parametrized state $\rho_{\theta}$ randomly sampled from a probability distribution $\pi\in\cD_{3\epsilon,p}^{ Q }(\rho_0)$. Here, $\rho_{\theta}$ is defined to be
\begin{align}
\rho_{\theta}=\rho_0+\frac{1}{d}\sum_{a\in A}\theta_a Q_a,
\end{align}
where $\rho_0$ is the density matrix in the null hypothesis, $\{Q_a\}_{a\in A}$ forms a complete observable basis, and $\theta\in\R^m$ can be arbitrary vectors such that $\rho_{\theta}$ is well-defined and $\norm{\theta}_p=3\epsilon$. 
\end{itemize} 

\begin{theorem}[Lower bound for \Cref{prob:distinguish}(\ref{prob:oblivious-distinguish}) with complete observables, $c=1$, and $p$-norm error]\label{thm:distinguish_complete}
Using the adaptive measurement strategy with single-copy measurements, the sample complexity required to solve the many-versus-one distinguishing tasks above with any $\rho_\theta$ well-defined and $\theta\in\DD_{3\epsilon,p}^{ Q }(\rho_0)$ is
\begin{align}
N=\Omega\left(\inf_{M\in\cM}\sup_{\rho_0 \in \cS^\circ}\max_{\theta\in\DD_{3\epsilon,p}^{ Q }(\rho_0)}\left(\theta^\top I(\rho_0,M)\theta\right)^{-1}\right)
\end{align}
for any $\epsilon>0$, $p\in[1,\infty]$, where $I(\rho_0,M) := I(\rho_\theta,M)|_{\theta=\0}$ and $\rho_{\theta} = \rho_0 + \frac{1}{d}\sum_{a \in A} \theta_a Q_a$.
\end{theorem}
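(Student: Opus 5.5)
We apply the learning-tree machinery from the Preliminaries with $c=1$. Fix an arbitrary adaptive single-copy protocol, represented as a tree $\cT$ of depth $N$ as in \Cref{def:learning-tree}. The adversary picks $\rho_0\in\cS^\circ$ and a point mass $\pi=\delta_{\theta^\star}$ on some $\theta^\star\in\DD_{3\epsilon,p}^{Q}(\rho_0)$; this is allowed, since the hypothesis class contains all well-defined $\rho_\theta$ with $\norm{\theta}_p=3\epsilon$. By parts 1 and 2 of \Cref{lem:likelihood-ratio} with $\beta=0.9$, a success probability bounded away from the trivial value requires $\Pr_{\ell\sim p^{\rho_0}}[L_{\theta^\star}(\ell)\le 0.9]$ to be bounded below by a constant. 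By part 3 this probability is at most $0.1+O(\delta N)$, where $\delta$ bounds the per-edge $\chi^2$ quantity $\chi^2_{M_u}(\rho_{\theta^\star}\|\rho_0)$ uniformly over internal nodes $u$. Hence $N=\Omega(1/\delta)$, and the remaining work is to choose $(\rho_0,\theta^\star)$ adversarially so that $\delta$ is as small as possible.

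The second step computes the per-edge $\chi^2$ exactly. Since $\rho_\theta$ is affine in $\theta$, the outcome probabilities $p_\theta(x)=\tr(M_x\rho_\theta)$ are linear in $\theta$. The Taylor expansion \eqref{eq:taylor} is therefore exact:
\begin{equation*}
\chi^2_M(\rho_\theta\|\rho_0)=\theta^\top I(\rho_0,M)\theta .
\end{equation*}
Full rank of $\rho_0$ guarantees $p_0(x)>0$ for every outcome with $M_x\neq 0$, so no division issues arise. (Outcomes with $M_x = 0$ have zero probability under every $\rho_\theta$ and can be discarded.) Consequently, for a fixed node with measurement $M_u$ the adversary wants $\min_{\theta}\theta^\top I(\rho_0,M_u)\theta$.

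The main obstacle is adaptivity. Different nodes use different measurements $M_u$, but the adversary must commit to a single $(\rho_0,\theta^\star)$ before seeing the tree. We therefore need
\begin{equation*}
\sup_{M}\ \inf_{\rho_0}\ \min_{\theta}\ \theta^\top I(\rho_0,M)\theta
\end{equation*}
to control every node simultaneously. I would handle this by a max-over-nodes argument. Set $\delta:=\sup_{M\in\cM}\inf_{\rho_0}\min_{\theta}\theta^\top I(\rho_0,M)\theta$, which equals $\big(\inf_M\sup_{\rho_0}\max_\theta(\theta^\top I\theta)^{-1}\big)^{-1}$, i.e.\ the inverse of the claimed bound. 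To make a single adversarial choice work against all nodes, I would allow $\pi$ to be a genuine distribution over $\theta$ and swap $\sup_M$ with $\inf$ over distributions. The objective $\E_\pi\theta^\top I(\rho_0,M)\theta$ is linear in $\pi$ and concave in $M$: the map $(p,q)\mapsto q^2/p$ is jointly convex and $M$ enters through $\tr(M_x\cdot)$, so the Fisher information is convex under POVM mixing, which suggests working with coarse-grainings instead. If the minimax swap proves delicate, the fallback is to use the convention \eqref{eq:minimax_learning} already set up in the preliminaries, where $\delta_{\cM}(O)$ is defined with $\sup_M$ outside. That convention is justified by the standard learning-tree reduction, in which the martingale bound needs only a per-node $\chi^2$ bound for the fixed adversarial $\pi$, with $\pi$ chosen near-optimally for the worst node measurement.

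Finally, combining these steps gives $N=\Omega(c/\delta_{\cM_c})$ with $c=1$. Substituting the exact $\chi^2$ identity yields the stated bound. Taking the infimum over $M$ outside reflects that the protocol designer chooses the measurement. The supremum over $\rho_0$ and maximum over $\theta$ reflect the adversary, and the inversion of $\theta^\top I\theta$ comes from $N=\Omega(1/\delta)$.
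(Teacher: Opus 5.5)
Your load-bearing argument is the paper's proof. The paper starts directly from $\delta_{\cM}(O)$ in \eqref{eq:minimax_learning}, where $\sup_M$ is outermost. It observes that for each fixed $M$ and $\rho_0$ the inner minimum over $\pi$ is attained at a point mass, uses linearity of $\theta\mapsto\tr(M_s\rho_\theta)$ to get $\chi^2_M(\rho_\theta\|\rho_0)=\theta^\top I(\rho_0,M)\theta$ exactly, and inverts. With your fallback to that convention, the proposal is correct relative to the paper and follows the same route. The minimax discussion is not needed for it.

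That discussion should not be relied on as written, for three reasons.

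(i) Under element-wise mixing $\{\lambda M_s+(1-\lambda)M'_s\}$ the Fisher information is convex in $M$, as your joint-convexity remark shows, not concave. So Sion's theorem does not apply with $M$ as the maximizing variable. Coarse-graining does not help. Mixing by direct sum does: $\lambda M\oplus(1-\lambda)M'$, with outcomes $(1,s)$ and $(2,s')$, makes $I(\rho_0,\cdot)$ affine, and then the swap works at fixed $\rho_0$, with compactness on the $\theta$ side.

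(ii) The point mass $\delta_{\theta^\star}$ of your first paragraph, fixed against the whole tree, only certifies $\delta=\min_\theta\sup_M\theta^\top I(\rho_0,M)\theta$. This can exceed $\sup_M\min_\theta\theta^\top I(\rho_0,M)\theta$ by a factor of order $d$, e.g.\ for Pauli observables at $\rho_0=\id/d$. Likewise, a $\pi$ ``chosen near-optimally for the worst node measurement'' does not control the other nodes, so that sentence does not justify the convention.

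(iii) Even the repaired swap leaves $\rho_0$ chosen outside $\sup_M$. It therefore proves only $N=\Omega\big(\sup_{\rho_0}\inf_M\max_{\theta\in\DD_{3\epsilon,p}^{Q}(\rho_0)}(\theta^\top I(\rho_0,M)\theta)^{-1}\big)$, which is at most the stated $\inf_M\sup_{\rho_0}$ bound. Exchanging $\rho_0$ with $M$ would require randomizing $\rho_0$, which changes the null hypothesis.

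The stated ordering is exactly what \eqref{eq:minimax_learning} provides, and the paper's proof, like your fallback, takes it as its starting point.
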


\begin{proof}
The sample complexity lower bound from \eqref{eq:minimax_learning} is then given by $\Omega(1/\delta_{\cM}(O))$ where
\begin{align}
\delta_{\cM}(O)=\sup_{M\in\cM} \inf_{\rho_0 \in \cS^\circ} \min_{\pi\in\cD_{3\epsilon,p}^{ Q }(\rho_0)}\mathbb{E}_{\theta\sim\pi}\chi_{M}^2\left(\rho_{\theta}\|\rho_0\right).
\end{align}
Note that $\{Q_a\}_{a\in A}$ here can be uniquely defined using $\{O_i\}_{i=1}^m$. 
Also note that the optimal $\pi$ distribution in this expression would trivially be the delta function at the value of $\theta$ that minimizes the $\chi^2$-distance. Thus, we have 
\begin{align}
\delta_{\cM}(O)=\sup_{M\in\cM} \inf_{\rho_0 \in \cS^\circ} \min_{\theta\in\DD_{3\epsilon,p}^{ Q }(\rho_0)} \chi_{M}^2\left(\rho_{\theta}\|\rho_0\right).
\end{align}

Given a fixed measurement $M=\{M_s\}\in\cM$ and denote $p_s=\tr(\rho_0 M_s)$, we compute the minimization part above as:
\begin{align}
\min_{\theta\in\DD_{3\epsilon,p}^{ Q }(\rho_0)}\chi^2_M(\rho_\theta \| \rho_0)=\min_{\theta\in\DD_{3\epsilon,p}^{ Q }(\rho_0)}\sum_{s}p_s\left(\frac{\tr(M_s\rho_\theta)}{p_s}-1\right)^2.
\end{align}
We can extend $\tr(M_s\rho_\theta)$ as $\tr(M_s\rho_\theta)=p_s+\sum_i\theta_i\partial_i p_s$. We thus have\allowdisplaybreaks
\begin{align}
\min_{\theta\in\DD_{3\epsilon,p}^{ Q }(\rho_0)}\chi^2_M(\rho_\theta \| \rho_0)&=\min_{\theta\in\DD_{3\epsilon,p}^{ Q }(\rho_0)}\sum_{s}p_s\left(\frac{\tr(M_s\rho_\theta)}{p_s}-1\right)^2\\
&=\min_{\theta\in\DD_{3\epsilon,p}^{ Q }(\rho_0)}\sum_{s}\frac{(\sum_i\theta_i\partial_i p_s)^2}{p_s}\\
&=\min_{\theta\in\DD_{3\epsilon,p}^{ Q }(\rho_0)}\sum_{s,i,j}\frac{\theta_i\theta_j\partial_i p_s\partial_j p_s}{p_s}\\
&=\min_{\theta\in\DD_{3\epsilon,p}^{ Q }(\rho_0)}\theta^\top I(\rho_0,M)\theta.
\end{align}
Therefore, we have
\begin{align}
\delta_{\cM}(O)= \sup_{M\in\cM} \inf_{\rho_0 \in \cS^\circ}  \min_{\theta\in\DD_{3\epsilon,p}^{ Q }(\rho_0)}\theta^\top I(\rho_0,M)\theta.
\end{align}
Taking the inverse, the sample complexity lower bound is then given by
\begin{align}
N=\Omega\left(\inf_{M\in\cM} \sup_{\rho_0 \in \cS^\circ} \max_{\theta\in\DD_{3\epsilon,p}^{ Q }(\rho_0)}\left(\theta^\top I(\rho_0,M)\theta\right)^{-1}\right)
\end{align}
as claimed.
\end{proof}

\subsubsection{A general set of observables}\label{sec:distinguish-general}

We then consider a general dual observable basis ${ Q } = \{Q_a\}_{a\in A}$ and ${ T } = \{T_b\}_{b \in B}$ with single-copy measurement protocols ($c=1$). We show the following sample complexity lower bound:
\begin{theorem}[Lower bound for \Cref{prob:distinguish}(\ref{prob:oblivious-distinguish}) at $c=1$ and $p$-norm error]\label{thm:distinguish_general}
Using the adaptive measurement strategy with single-copy measurements, the sample complexity required to solve the many-versus-one distinguishing tasks above with any $\rho_{\theta,\varphi}$ well-defined and $(\theta,\varphi)\in\DD_{3\epsilon,p}^{{ Q },{ T }}(\rho_0)$ is
\begin{align}
N=\Omega\left(\inf_{M\in\cM}\sup_{\substack{\rho_0 \in \cS^\circ\\ (Q,T) \in \BB(O)}}\max_{(\theta,\varphi)\in\DD_{3\epsilon,p}^{{ Q },{ T }}(\rho_0)}\left((\theta,\varphi)^\top I(\rho_0,M)(\theta,\varphi)\right)^{-1}\right)
\end{align}
for any $\epsilon>0$, $p\in[1,\infty]$, where $I(\rho_0,M) = I(\rho_{\theta,\varphi},M)|_{\theta=\varphi=\0}$ and $\rho_{\theta,\varphi} = \rho_0 + \frac{1}{d}\sum_{a \in A} \theta_a Q_a + \frac{1}{d}\sum_{b \in B} \varphi_b T_b$.
\end{theorem}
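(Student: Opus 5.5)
The argument will mirror the proof of \Cref{thm:distinguish_complete}, with the extra parameters $\varphi$ and the extra infimum over dual bases $(Q,T)\in\BB(O)$ carried along. The starting point is the learning-tree reduction in \eqref{eq:minimax_learning} with $c=1$. Combining \Cref{lem:likelihood-ratio} (Le Cam, the one-sided likelihood ratio, and the martingale technique) shows that any adaptive single-copy protocol needs $\Omega(1/\delta_{\cM}(O))$ samples, where
\begin{align}
\delta_{\cM}(O)=\sup_{M\in\cM}\inf_{\substack{\rho_0\in\cS^\circ\\ (Q,T)\in\BB(O)}}\min_{\pi\in\cD_{3\epsilon,p}^{Q,T}(\rho_0)}\mathbb{E}_{(\theta,\varphi)\sim\pi}\chi^2_M(\rho_{\theta,\varphi}\|\rho_0).
\end{align}
The adversary is free to pick $\rho_0$, the dual basis, and $\pi$ after the node measurement is fixed. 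This is legitimate because the lower bound only needs one hard instance per protocol, and the factor $2$ slack in choosing a near-minimizer costs only a constant.

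First, I restrict $\pi$ to point masses. For fixed $M$, $\rho_0$ and $(Q,T)$, the expectation over $\pi$ of a nonnegative function is minimized by a delta distribution at a minimizing $(\theta,\varphi)\in\DD^{Q,T}_{3\epsilon,p}(\rho_0)$, or at a near-minimizer if the infimum is not attained. This lets me replace $\min_\pi\mathbb{E}_\pi$ by $\min_{(\theta,\varphi)}$.

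Second, I compute the $\chi^2$ divergence exactly. Write $p_s=\tr(\rho_0M_s)$. Since $\rho_{\theta,\varphi}$ is affine in $(\theta,\varphi)$,
\begin{align}
\tr(M_s\rho_{\theta,\varphi})=p_s+\sum_{a\in A}\theta_a\,\tfrac1d\tr(M_sQ_a)+\sum_{b\in B}\varphi_b\,\tfrac1d\tr(M_sT_b).
\end{align}
Plugging this into $\chi^2_M(\rho_{\theta,\varphi}\|\rho_0)=\sum_s(\tr(M_s\rho_{\theta,\varphi})-p_s)^2/p_s$ and expanding the square gives $(\theta,\varphi)^\top I(\rho_0,M)(\theta,\varphi)$. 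This is exactly \eqref{eq:taylor}, which is exact rather than approximate because $p_{\theta,\varphi}$ is linear. The full-rank assumption $\rho_0\in\cS^\circ$ guarantees $p_s>0$ for every nonzero $M_s$, so no outcome is dropped. The Fisher matrix here depends on $(Q,T)$ through the derivatives $\partial_c p_s$.

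Third, I invert the inner expressions. The resulting quantity is $\delta_\cM(O)=\sup_M\inf_{\rho_0,(Q,T)}\min_{(\theta,\varphi)}(\theta,\varphi)^\top I(\rho_0,M)(\theta,\varphi)$. Taking reciprocals turns $\sup_M$ into $\inf_M$, $\inf_{\rho_0,(Q,T)}$ into $\sup_{\rho_0,(Q,T)}$, and $\min$ into $\max$, which gives the claimed bound. If $\DD^{Q,T}_{3\epsilon,p}(\rho_0)$ is empty, the convention $\min=\infty$ makes that $(\rho_0,Q,T)$ irrelevant. If the quadratic form vanishes, the bound is infinite, which is consistent.

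The main subtlety is the first step, together with justifying that the adversarial choice of $(\rho_0,Q,T,\pi)$ may depend on $M$, even though in an adaptive tree $M$ varies from node to node. I handle this as in \Cref{thm:distinguish_complete}. The martingale bound requires a single $\delta$ that works uniformly over all nodes. For any protocol, I therefore take the uniform (sup over $M$) value, and choose one instance achieving at most twice $\inf_{\rho_0,(Q,T)}\min_\pi$ for the worst measurement. This yields per-node $\chi^2$ at most $2\delta_\cM(O)$, and hence $t=\Omega(1/\delta_\cM(O))$.
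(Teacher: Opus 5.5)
Your three computational steps are the paper's proof: replacing $\min_{\pi}\mathbb{E}_{\pi}$ by a point mass for fixed $M$, the exact identity $\chi^2_M(\rho_{\theta,\varphi}\|\rho_0)=(\theta,\varphi)^\top I(\rho_0,M)(\theta,\varphi)$ from linearity of $p_{\theta,\varphi}$, and inverting the $\sup/\inf/\min$. The paper simply starts from \eqref{eq:minimax_learning} and does nothing more. The problem is your final paragraph, where you go beyond the paper to justify that starting point; the argument there does not work.

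The martingale bound needs \emph{one} instance $(\rho_0,Q,T,\pi)$, fixed before the protocol runs, with $\mathbb{E}_\pi\chi^2_{M_u}(\rho_{\theta,\varphi}\|\rho_0)\le\delta$ at \emph{every} node $u$. The adversary cannot re-pick it ``after the node measurement is fixed''. A near-minimizer for a single measurement $M^\ast$ controls only $\mathbb{E}_\pi\chi^2_{M^\ast}$, so ``per-node $\chi^2$ at most $2\delta_{\cM}(O)$'' does not follow. Combined with your point-mass reduction it is in fact false. Take the complete Pauli set with $p=\infty$ and $\epsilon\le 1/3$. The point mass $\rho_0=\id/d$, $\theta=3\epsilon e_i$ minimizes $\chi^2_M$ over $\DD_{3\epsilon,\infty}^{Q}(\id/d)$ for the Haar-random measurement $M$. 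Yet a node measuring in the eigenbasis of $P_i$ sees $\chi^2=9\epsilon^2$, while $\delta_{\cM}(O)\le 9\epsilon^2 d/(d^2-1)$ by the bound on $\min_i I(\id/d,M)_{ii}$ in the proof of \Cref{thm:oblivious-pauli}. Against an adaptive tree the hard instance must be a genuine mixture. For example, the uniform mixture over $\{\pm 3\epsilon e_i\}_i$ gives $\mathbb{E}_\pi\chi^2_M\le 9\epsilon^2 d/(d^2-1)$ for every $M$ simultaneously. Reducing to a single $M$ therefore needs a minimax exchange, not a choice of near-minimizer.

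The correct replacement works at fixed $\rho_0$ and $(Q,T)$. Probability measures on the compact set $\DD_{3\epsilon,p}^{Q,T}(\rho_0)$ form a weak-$*$ compact convex set. The FIM is affine under probabilistic mixing of POVMs (keeping the label), so the achievable FIMs form a convex set. Sion's theorem then gives $\min_\pi\sup_{M}\mathbb{E}_\pi\chi^2_M=\sup_M\min_{(\theta,\varphi)\in\DD_{3\epsilon,p}^{Q,T}(\rho_0)}(\theta,\varphi)^\top I(\rho_0,M)(\theta,\varphi)$, in the spirit of \lemmaref{lemma:minimax}.

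Because the null hypothesis $\rho_0^{\otimes N}$ involves one fixed $\rho_0$, this yields $N=\Omega\big(\sup_{\rho_0,(Q,T)}\inf_{M}\max_{(\theta,\varphi)}(\cdots)^{-1}\big)$, with $\sup_{\rho_0}$ \emph{outside} $\inf_M$. That quantity is at most, and not obviously equal to, the one in the statement. Reaching the ordering $\inf_M\sup_{\rho_0,(Q,T)}$ would require randomizing $\rho_0$ as well, which changes the null hypothesis; the paper only does this later, for $c$-copy measurements, by modifying the problem.

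So either present the theorem as the paper does, as a direct consequence of \eqref{eq:minimax_learning}, or replace your last paragraph with this minimax argument and state explicitly which ordering of $\inf_M$ and $\sup_{\rho_0}$ it delivers.
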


\begin{proof}
The sample complexity lower bound from \eqref{eq:minimax_learning} is then given by $\Omega(1/\delta_{\cM}(O))$ where
\begin{align}
\delta_{\cM}(O)
&= \sup_{M\in\cM} \inf_{\substack{\rho_0 \in \cS^\circ\\ (Q,T) \in \BB(O)}} \min_{\pi\in\cD_{3\epsilon,p}^{{ Q },{ T }}(\rho_0)}\mathbb{E}_{(\theta,\varphi)\sim\pi}\chi_{M}^2\left(\rho_{\theta,\varphi}\|\rho_0\right)\\
&= \sup_{M\in\cM} \inf_{\substack{\rho_0 \in \cS^\circ\\ (Q,T) \in \BB(O)}} \min_{(\theta,\varphi)\in\DD_{3\epsilon,p}^{{ Q },{ T }}(\rho_0)} \chi_{M}^2\left(\rho_{\theta,\varphi}\|\rho_0\right)
\end{align}
Given a fixed measurement $M=\{M_s\}\in\cM$ and denote $p_s=\tr(\rho_0 M_s)$, we compute the minimization part above as:
\begin{align}
\min_{(\theta,\varphi)\in\DD_{3\epsilon,p}^{{ Q },{ T }}(\rho_0)}\chi^2_M(\rho_{\theta,\varphi}\| \rho_0)=\min_{(\theta,\varphi)\in\DD_{3\epsilon,p}^{{ Q },{ T }}(\rho_0)}\sum_{s}p_s\left(\frac{\tr(M_s\rho_{\theta,\varphi})}{p_s}-1\right)^2.
\end{align}
We can extend $\tr(M_s\rho_{\theta,\varphi})$ as $\tr(M_s\rho_\theta)=p_s+\sum_i\theta_i\partial_i p_s+\sum_{i'}\varphi_{i'}\partial_{i'}p_s$. We thus have
\begin{align}
\begin{split}
\min_{(\theta,\varphi)\in\DD_{3\epsilon,p}^{{ Q },{ T }}(\rho_0)}\chi^2_M(\rho_\theta \| \rho_0)&=\min_{\theta\in\DD_{3\epsilon,p}^{{ Q },{ T }}(\rho_0)}\sum_{s}p_s\left(\frac{\tr(M_s\rho_\theta)}{p_s}-1\right)^2\\
&=\min_{(\theta,\varphi)\in\DD_{3\epsilon,p}^{{ Q },{ T }}(\rho_0)}\sum_{s}\frac{(\sum_i\theta_i\partial_i p_s+\sum_{i'}\varphi_{i'}\partial_{i'}p_s)^2}{p_s}\\
&=\min_{(\theta,\varphi)\in\DD_{3\epsilon,p}^{{ Q },{ T }}(\rho_0)}(\theta,\varphi)^\top I(\rho_0,M)(\theta,\varphi).
\end{split}
\end{align}
Therefore, we have
\begin{align}
\delta_{\cM}(O)= \sup_{M\in\cM} \inf_{\substack{\rho_0 \in \cS^\circ\\ (Q,T) \in \BB(O)}} \min_{(\theta,\varphi)\in\DD_{3\epsilon,p}^{{ Q },{ T }}(\rho_0)}(\theta,\varphi)^\top I(\rho_0,M)(\theta,\varphi).
\end{align}
Taking the inverse, the sample complexity lower bound is then given by
\begin{align}
N=\Omega\left(\inf_{M\in\cM} \sup_{\substack{\rho_0 \in \cS^\circ\\ (Q,T) \in \BB(O)}} \max_{(\theta,\varphi)\in\DD_{3\epsilon,p}^{{ Q },{ T }}(\rho_0)}\left((\theta,\varphi)^\top I(\rho_0,M)(\theta,\varphi)\right)^{-1}\right)
\end{align}
as claimed.
\end{proof}

\subsection{Duality between many-versus-one distinguishing and parameter estimation}

Before going further to explore the lower bounds in \Cref{thm:distinguish_complete} and \Cref{thm:distinguish_general}, we first reveal a key relationship between the lower bounds of these many-versus-one distinguishing tasks and parameter estimation tasks. 

In the complete observables case, from \Cref{thm:distinguish_complete}, we have shown that the sample lower bound for distinguishing between $\rho_0$ versus $\rho_\theta$ is given by 
\begin{align}
N=\Omega\left(\inf_{M\in\cM} \sup_{\rho_0 \in \cS^\circ} \max_{\theta\in\DD_{3\epsilon,p}^{ Q }(\rho_0)}\left(\theta^\top I(\rho_0,M)\theta\right)^{-1}\right)
\end{align}
for any $\epsilon>0$, $p\in[1,\infty]$, and $\rho_0$. However, this sample complexity lower bound in the form
\begin{align}
\max_{\theta}\left(\theta^\top I(\rho_0,M)\theta\right)^{-1}
\end{align}
is different from the usual lower (upper) bound in parameter estimation, which is in the form of 
\begin{align}
\max_{\alpha} \alpha^\top I(\rho_0,M)^{-1} \alpha.
\end{align}
The latter represents a lower bound on the MSE that can be naturally derived from the CR bound (\eqref{eq:CR-bound}) using \eqref{eq:alpha-mse} in parameter estimation. 

A similar situation appears for the case of general observables. The sample complexity lower bound for distinguishing between $\rho_0$ versus $\rho_{\theta,\varphi}$ from \Cref{thm:distinguish_general} is given by
\begin{align}
N=\Omega\left(\inf_{M\in\cM} \sup_{\substack{\rho_0 \in \cS^\circ\\ (Q,T) \in \BB(O)}} \max_{(\theta,\varphi)\in\DD_{3\epsilon,p}^{{ Q },{ T }}(\rho_0)}\left((\theta,\varphi)^\top I(\rho_0,M)(\theta,\varphi)\right)^{-1}\right)
\end{align}
for any $\epsilon>0$, $p\in[1,\infty]$, and $\rho_0$. This is again in the form of 
\begin{align}
\max_{\theta,\varphi}\left((\theta,\varphi)^\top I(\rho_0,M)(\theta,\varphi)\right)^{-1},
\end{align}
which is different from the lower (upper) bound in parameter estimation in the form of 
\begin{align}
\max_{\alpha} (\alpha,0)^\top I(\rho_0,M)^{-1}(\alpha,0) . 
\end{align}

Surprisingly, we show the following duality relationship between many-versus-one distinguishing and parameter estimation using the following mathematical fact, by temporarily extending the domains of each optimization to the entire space with bounded $p$- and $q$-norms.
\begin{lemma}[Duality between many-versus-one distinguishing and parameter estimation]\label{lem:duality}
Let the FIM $I(\rho_0,M)$ be a $(\abs{A}+\abs{B}) \times (\abs{A}+\abs{B})$ positive semi-definite matrix. We use $(\theta,\varphi)$ to represent a joint vector whose first $\abs{A}$ columns are $\theta$ and last $\abs{B}$ columns are $\varphi$. We have
\begin{align}
\begin{split}
\max_{\norm{\theta}_p \geq 1, \theta \in \bR^{\abs{A}},\varphi \in \bR^{\abs{B}}} \left((\theta,\varphi)^\top I(\rho_0,M) (\theta,\varphi)\right)^{-1}&=\max_{\norm{\alpha}_q \leq 1, \alpha \in \bR^{\abs{A}}} (\alpha,0)^\top I(\rho_0,M)^{-1} (\alpha,0)\\
&=\max_{\norm{\alpha}_q \leq 1, \alpha \in \bR^{\abs{A}}} \alpha^\top (I(\rho_0,M)^{-1})_{AA} \alpha
\end{split}
\end{align}
for $1/p+1/q=1$, where the subscript $_{AA}$ represents the upper-left block of the FIM restricted to $a \in A$.
\end{lemma}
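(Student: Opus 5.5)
The plan is to reduce both sides to a single Schur complement and then apply a Hölder/Cauchy--Schwarz duality. I first treat the case where $I := I(\rho_0,M)$ is positive definite, and afterwards remove this assumption by a limiting argument. Write $I$ in block form with blocks $I_{AA}, I_{AB}, I_{BA}, I_{BB}$ and set $S := I_{AA} - I_{AB} I_{BB}^{-1} I_{BA}$.

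\textbf{Step 1 (eliminate $\varphi$).} For fixed $\theta$, the quadratic form $(\theta,\varphi)^\top I(\theta,\varphi)$ is convex in $\varphi$. It is minimized at $\varphi^\star = -I_{BB}^{-1} I_{BA}\theta$, with minimum value $\theta^\top S\theta$. By the block-inversion formula, $(I^{-1})_{AA} = S^{-1}$. This also gives the second equality in the statement, since $(\alpha,0)^\top I^{-1}(\alpha,0) = \alpha^\top (I^{-1})_{AA}\alpha$. The left-hand side therefore equals
\[
\Bigl(\min_{\norm{\theta}_p\ge 1}\theta^\top S\theta\Bigr)^{-1},
\]
where the maximum over $\varphi$ of the reciprocal is the reciprocal of the minimum over $\varphi$. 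So it suffices to prove
\[
\min_{\norm{\theta}_p\ge1}\theta^\top S\theta\cdot \max_{\norm{\alpha}_q\le1}\alpha^\top S^{-1}\alpha = 1 .
\]

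\textbf{Step 2 (two inequalities).} Both directions rest on the Cauchy--Schwarz bound in the $S$-inner product, $(\alpha\cdot\theta)^2 \le (\theta^\top S\theta)(\alpha^\top S^{-1}\alpha)$, combined with Hölder's inequality.

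For the first direction, fix $\theta$ with $\norm{\theta}_p\ge1$. Choose the Hölder-dual vector $\alpha$ with $\norm{\alpha}_q=1$ and $\alpha\cdot\theta = \norm{\theta}_p \ge 1$. Cauchy--Schwarz then gives $\alpha^\top S^{-1}\alpha \ge 1/(\theta^\top S\theta)$, so the right-hand side is at least the left-hand side.

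For the converse, take any $\alpha\ne0$ with $\norm{\alpha}_q\le1$ and set $\theta = S^{-1}\alpha/(\alpha^\top S^{-1}\alpha)$. Then $\alpha\cdot\theta = 1$, so Hölder gives $\norm{\theta}_p \ge \alpha\cdot\theta/\norm{\alpha}_q \ge 1$. Hence $\theta$ is feasible, and $\theta^\top S\theta = 1/(\alpha^\top S^{-1}\alpha)$. This shows the left-hand side is at least the right-hand side.

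\textbf{Step 3 (singular FIM), which I expect to be the main obstacle.} The FIM may be singular, for instance when $M$ is not informationally complete. In that case one must fix conventions: the reciprocal of $0$ is $+\infty$, and inverses are read either as pseudoinverses or as limits. I would replace $I$ by $I_\delta = I + \delta\,\id$ and apply Steps 1--2 to $I_\delta$. Then I would let $\delta\to0$, using two monotonicity facts: $\min_\varphi (\theta,\varphi)^\top I_\delta (\theta,\varphi)$ decreases to its $\delta=0$ value, and $(I_\delta^{-1})_{AA}$ increases monotonically, possibly diverging exactly in directions of $\alpha$ for which some $\theta$ with $\alpha\cdot\theta\neq 0$ has zero Fisher information.

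The remaining work is to check that both sides converge to the same extended-real value. On the side of the maximum, exchanging $\max$ with the monotone limit is justified by compactness of the $q$-ball. On the side of the minimum over the unbounded feasible set, I would restrict to $\norm{\theta}_p = 1$ by homogeneity and minimize over $\varphi$ inside the limit. This yields equality in $[0,\infty]$, consistent with the pseudoinverse convention used for $I_{BB}^{-1}$ elsewhere in the paper.
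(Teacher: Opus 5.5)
Your proof is correct and is essentially the paper's argument: Cauchy--Schwarz in the Fisher metric plus H\"older duality, with witness $I(\rho_0,M)^{-1}(\alpha,0)$, whose $A$-block is your $S^{-1}\alpha$ and whose $B$-block is your $\varphi^\star$. You simply route it through the Schur complement first, and your Step~3 goes beyond the paper, which tacitly assumes $I(\rho_0,M)$ is invertible. In that step only the limit reading is valid: reading $I(\rho_0,M)^{-1}$ as the Moore--Penrose pseudoinverse of the full FIM breaks the identity (e.g.\ $I(\rho_0,M)=0$ gives $+\infty$ on the left but $0$ on the right), so drop the ``either as pseudoinverses'' alternative.
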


\begin{proof}
We only need to show
\begin{align}
\label{eq:temp-pf}
\min_{\norm{\theta}_p \geq 1, \theta \in \bR^{\abs{A}},\varphi \in \bR^{\abs{B}}} (\theta,\varphi)^\top I(\rho_0,M) (\theta,\varphi)  = \frac{1}{ \max_{\norm{\alpha}_q \leq 1, \alpha \in \bR^{\abs{A}}} (\alpha,0)^\top I(\rho_0,M)^{-1} (\alpha,0)}.
\end{align}
Assume $(\theta^\star,\varphi^\star)$ minimizes $(\theta,\varphi)^\top I(\rho_0,M) (\theta,\varphi)$, $\norm{\theta^\star}_p = 1$. Then there exists $\alpha^\diamond$ such that $\abs{\alpha^\diamond \cdot \theta^\star} = 1$ and $\norm{\alpha^\diamond}_q \leq 1$. The Cauchy-Schwarz inequality implies
\begin{align}
\begin{split}
(\alpha^\diamond,0)^\top I(\rho_0,M)^{-1} (\alpha^\diamond,0) \geq& \frac{1}{(\theta^\star,\varphi^\star)^\top I(\rho_0,M) (\theta^\star,\varphi^\star)}\\
=& \frac{1}{\min_{\norm{\theta}_p \geq 1, \theta \in \bR^{\abs{A}},\varphi \in \bR^{\abs{B}}} (\theta,\varphi)^\top I(\rho_0,M) (\theta,\varphi)}, 
\end{split}
\end{align}
which means the left-hand side of \eqref{eq:temp-pf} is no smaller than the right-hand side. 

On the other hand, assume $\alpha^\star$ maximizes $(\alpha,0)^\top I(\rho_0,M)^{-1} (\alpha,0)$, which must satisfy $\norm{\alpha^\star}_q = 1$. Let $(\theta^{\diamond},\varphi^{\diamond}) = \frac{I(\rho_0,M)^{-1}(\alpha^\star,0)}{\norm{( I(\rho_0,M)^{-1}(\alpha^\star,0) )_A }_p}$, we have\allowdisplaybreaks
\begin{align}
(\theta^\diamond,\varphi^{\diamond})^\top I(\rho_0,M) (\theta^\diamond,\varphi^{\diamond})&= \frac{(\alpha^\star,0)^\top I(\rho_0,M)^{-1}(\alpha^\star,0)}{\norm{(I(\rho_0,M)^{-1}(\alpha^\star,0))_A}_p^2}\\
&= \frac{(\alpha^\star,0)^\top I(\rho_0,M)^{-1}(\alpha^\star,0)}{\norm{\alpha^\star}_q^2 \norm{(I(\rho_0,M)^{-1}(\alpha^\star,0))_A}_p^2}\\
&\leq \frac{(\alpha^\star,0)^\top I(\rho_0,M)^{-1}(\alpha^\star,0)}{((\alpha^\star,0)^\top I(\rho_0,M)^{-1}(\alpha^\star,0))^2}\\
&= \frac{1}{(\alpha^\star,0)^\top I(\rho_0,M)^{-1}(\alpha^\star,0)},
\end{align}
which means the left-hand side of \eqref{eq:temp-pf} is no larger than the right-hand side. Here we use $(\cdot)_A$ to denote the first $\abs{A}$ columns of a vector, and we use the generalized Cauchy-Schwarz: $\norm{\beta}_p\norm{\alpha}_q \geq \abs{\beta\cdot\alpha}$. 
\end{proof}

\subsection{Single-copy measurements: Threshold on \texorpdfstring{$\epsilon$}{eps} to saturate the upper bound}

\Cref{lem:duality} has shown that 
\begin{align}
\min_{\norm{\theta}_p \geq 1, \theta \in \bR^{\abs{A}},\varphi \in \bR^{\abs{B}}} (\theta,\varphi)^\top I(\rho_0,M) (\theta,\varphi)  = \frac{1}{{ \max_{\norm{\alpha}_q \leq 1, \alpha \in \bR^{\abs{A}}} (\alpha,0)^\top I(\rho_0,M)^{-1} (\alpha,0)}}
\end{align}
for $1/p+1/q=1$. However, a more detailed observation on \Cref{thm:distinguish_complete} and \Cref{thm:distinguish_general} indicates that the duality cannot directly work as the maximization over $(\theta,\varphi)$ is over $\DD_{3\epsilon,p}^{{ Q },{ T }}(\rho_0)$ instead of simply the sphere $\norm{\theta}_p=3\epsilon$. 

Here, we propose a threshold $\eta$ of $\epsilon$ such that the lower bounds in \Cref{thm:distinguish_complete} and \Cref{thm:distinguish_general} can be written in the form of maximization over $\norm{\theta}_p=3\epsilon$.

\subsubsection{Complete observables}
We start with the case of complete observables in \Cref{thm:distinguish_complete}. We have shown that the lower bound is given by $\Omega(1/\delta_{\cM}(O))$, where $\delta_{\cM}(O)$ is defined by
\begin{align}
\delta_{\cM}(O)=\sup_{M\in\cM} \inf_{\rho_0 \in \cS^\circ} \min_{\theta\in\DD_{3\epsilon,p}^{ Q }(\rho_0)}\theta^\top I(\rho_0,M)\theta.
\end{align}

Our goal is to apply the duality result in \Cref{lem:duality} to the above lower bound. To this end, we prove the following result.
\begin{theorem}[Lower bound in the high-precision regime with complete observables, $c=1$, and $p$-norm error]\label{thm:lower_threshold}
Using the adaptive measurement strategy with single-copy measurements, the sample complexity required to solve the many-versus-one distinguishing task (\Cref{prob:oblivious-distinguish}) between any well-defined $\rho_\theta$ ($\theta\in\DD_{3\epsilon,p}^{ Q }(\rho_0)$) and $\rho_0$ is
\begin{align}
N=\Omega\left(\inf_{M\in\cM} \sup_{\rho_0 \in \cS^\circ} \max_{\norm{\theta}_p=1}\frac{1}{\epsilon^2}\left(\theta^\top I(\rho_0,M)\theta\right)^{-1}\right)=\Omega\left(\inf_{M\in\cM} \sup_{\rho_0 \in \cS^\circ} \max_{\norm{\alpha}_q \leq 1} \frac{1}{\epsilon^2}\alpha^\top I(\rho_0,M)^{-1} \alpha\right)
\end{align}
with any
\begin{align}
\epsilon\leq\eta=\frac{1}{6 \norm{(\norm{Q_1}_\infty,\ldots,\norm{Q_{d^2-1}}_\infty)}_q},
\end{align}
and for any $p\in[1,\infty]$, where $I(\rho_0,M) = I(\rho_\theta,M)|_{\theta=\0}$ and $\rho_{\theta} = \rho_0 + \frac{1}{d}\sum_{a \in A} \theta_a Q_a$.
\end{theorem}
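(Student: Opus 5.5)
The plan is to start from the lower bound of \Cref{thm:distinguish_complete}, restrict the reference state to $\cS_{1/2}$, and show that below the threshold $\eta$ the feasible set $\DD_{3\epsilon,p}^{Q}(\rho_0)$ is the whole sphere $\{\norm{\theta}_p=3\epsilon\}$. \Cref{lem:duality} then converts the result into the $\alpha$-form. A final comparison of Fisher information matrices shows that the restriction to $\cS_{1/2}$ loses only a factor of $2$.

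\emph{Step 1 (restriction).} Since we are lower bounding a $\sup$ over $\rho_0\in\cS^\circ$, restricting to $\rho_0\in\cS_{1/2}\subseteq\cS^\circ$ gives a valid, possibly weaker, lower bound:
\[
N=\Omega\Big(\inf_{M}\sup_{\rho_0\in\cS_{1/2}}\max_{\theta\in\DD_{3\epsilon,p}^{Q}(\rho_0)}(\theta^\top I(\rho_0,M)\theta)^{-1}\Big).
\]

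\emph{Step 2 (removing the positivity constraint).} Every $\rho_0\in\cS_{1/2}$ satisfies $\rho_0\succeq \id/(2d)$. The $Q_a$ are traceless, so $\rho_\theta$ has unit trace automatically. By the triangle inequality and Hölder's inequality, with $v=(\norm{Q_1}_\infty,\ldots,\norm{Q_{d^2-1}}_\infty)$,
\[
\Big\|\tfrac1d\textstyle\sum_a\theta_aQ_a\Big\|_\infty\le \tfrac1d\norm{\theta}_p\norm{v}_q=\tfrac{3\epsilon}{d}\norm{v}_q\le\tfrac{1}{2d}
\]
whenever $\epsilon\le\eta$. Hence $\rho_\theta\succeq0$ for every $\theta$ with $\norm{\theta}_p=3\epsilon$, so $\DD_{3\epsilon,p}^{Q}(\rho_0)$ is the full $p$-sphere of radius $3\epsilon$. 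The objective is quadratic in $\theta$, so rescaling $\theta\to 3\epsilon\theta$ turns the inner maximum into $\frac{1}{9\epsilon^2}\max_{\norm{\theta}_p=1}(\theta^\top I\theta)^{-1}$. The minimum of $\theta^\top I\theta$ over $\norm{\theta}_p\ge1$ is attained on $\norm{\theta}_p=1$, again by scaling. \Cref{lem:duality} with $B=\emptyset$ then identifies this with $\max_{\norm{\alpha}_q\le1}\alpha^\top I(\rho_0,M)^{-1}\alpha$. If $I$ is singular, both sides are $+\infty$, consistent with the conventions used earlier.

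\emph{Step 3 (back to $\cS^\circ$).} It remains to show that the supremum over $\cS_{1/2}$ is within a constant of the supremum over $\cS^\circ$. In the complete case, $Q$ is determined by $O$ and does not depend on $\rho_0$. Given $\sigma\in\cS^\circ$, set $\rho_0=\frac12\sigma+\frac{\id}{2d}\in\cS_{1/2}$. Then $\tr(\rho_0M_s)\ge\frac12\tr(\sigma M_s)$ for every outcome $s$, while the numerators $\tr(Q_aM_s)\tr(Q_{a'}M_s)$ are unchanged. This gives $I(\rho_0,M)\preceq 2I(\sigma,M)$, hence
\[
(\theta^\top I(\rho_0,M)\theta)^{-1}\ge\tfrac12(\theta^\top I(\sigma,M)\theta)^{-1}
\]
for every $\theta$. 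Taking $\max_\theta$ and then $\sup_\sigma$ shows that $\sup_{\cS_{1/2}}$ dominates half of $\sup_{\cS^\circ}$, for each fixed $M$, before the infimum over $M$ is taken. Combining the three steps yields both claimed forms of the bound.

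The main obstacle is Step 2: the threshold must be uniform over all $\rho_0$. This is exactly why the minimum-eigenvalue guarantee of $\cS_{1/2}$ is needed, and why Step 3 is required to make the restriction cost-free up to a constant factor. The care point in Step 3 is that the comparison must be made pointwise for each fixed $M$, before taking the infimum over $M$.
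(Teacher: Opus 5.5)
Your proposal is correct and follows essentially the same route as the paper: restrict $\rho_0$ to $\cS_{1/2}$, use the H\"older bound $\|\sum_a\theta_aQ_a\|_\infty\le\|\theta\|_p\|v\|_q\le 1/2$ to show the whole sphere $\norm{\theta}_p=3\epsilon$ is feasible below $\eta$, invoke \Cref{lem:duality}, and return to $\cS^\circ$ via $I(\frac12\sigma+\frac{\id}{2d},M)\preceq 2I(\sigma,M)$ applied for each fixed $M$. Your justification of that last comparison via $\tr(\rho_0M_s)\ge\frac12\tr(\sigma M_s)$ is slightly more direct than the paper's (which uses $\id\succeq\sigma$ and picks up a factor $2d/(d+1)\le 2$), and you correctly prove only the direction of the chain that the lower bound needs.
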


\begin{proof}
Our hope is to show that there is some threshold $\eta$ such that when $\epsilon<\eta$, we have
\begin{align}
\begin{split}
\min_{\theta\in\DD_{3\epsilon,p}^{ Q }(\rho_0)}\theta^\top I(\rho_0,M)\theta&=\min_{\norm{\theta}_p=3\epsilon}\theta^\top I(\rho_0,M)\theta\\
&=9\epsilon^2\min_{\norm{\theta}_p=1}\theta^\top I(\rho_0,M)\theta\\
&=\frac{9\epsilon^2}{{\max_{\norm{\alpha}_q \leq 1} \alpha^\top I(\rho_0,M)^{-1} \alpha}}
\end{split}
\end{align}
where the last step is due to the duality result in \Cref{lem:duality}. 

Unfortunately, there does not exist such a choice of $\eta$ that applies to all $\rho_0$, because when $\rho_0$ is close to the boundary of the density matrix space, e.g., when $\rho_0$ is singular, $\theta \in \DD_{3\epsilon,p}^{ Q }(\rho_0)$ cannot holds for all $\theta$ satisfying $\norm{\theta}_p = 3\epsilon$ even when $\epsilon$ is small. Luckily, we can restrict the range of $\rho_0$ without significantly increasing the value of 
\begin{equation}
\delta_{\cM}(O)= \sup_{M\in\cM} \inf_{\rho_0 \in \cS^\circ} \min_{\theta \in \DD_{3\epsilon,p}^{ Q }(\rho_0)} \chi^2_M(\rho_\theta \| \rho_0). 
\end{equation}
Defining
\begin{equation}
    \cS_{1/2} := \Big\{\rho \Big| \rho = \frac{1}{2}\Big(\sigma + \frac{\id}{d}\Big),\text{ for some density matrix }\sigma\Big\}, 
\end{equation}
and using 
\begin{align}
\label{eq:s-1/2-equiv}
\begin{split}
&\quad \theta^\top I\left(\frac{1}{2}\sigma+\frac{\id}{2d},M\right) \theta = \chi^2_M\bigg(\frac{1}{2}\sigma+\frac{\id}{2d} + \frac{1}{d}\sum_a\theta_aQ_a \bigg\| \frac{1}{2}\sigma+\frac{\id}{2d}\bigg)\\
&= \sum_s \frac{\trace(M_s \frac{1}{d} \sum_a \theta_a Q_a)^2}{\trace(M_s (\frac{1}{2}\sigma+\frac{\id}{2d}))}\leq  \sum_s \frac{\trace(M_s \frac{1}{d} \sum_a \theta_a Q_a)^2}{\trace(M_s (\frac{1}{2}\sigma+\frac{1}{2d}\sigma))}\\
&= \frac{2d}{d+1} \sum_s \frac{\trace(M_s \frac{1}{d} \sum_a \theta_a Q_a)^2}{\trace(M_s \sigma)} \\
&\leq 2 \sum_s \frac{\trace(M_s \frac{1}{d} \sum_a \theta_a Q_a)^2}{\trace(M_s \sigma)} = 2 \chi^2_M(\sigma_\theta \| \sigma) = 2 \theta^\top I\left(\sigma,M\right) \theta, 
\end{split}
\end{align}
where $\sigma_\theta = \sigma + \frac{1}{d}\sum_{a} \theta_a Q_a$, we have both 
\begin{align}
2 \inf_{\rho_0 \in \cS_{1/2}} \min_{\theta \in \DD_{3\epsilon/2,p}^{ Q }(\rho_0)} \theta^\top I\left(\rho_0,M\right) \theta \leq \inf_{\rho_0 \in \cS^\circ} \min_{\theta \in \DD_{3\epsilon,p}^{ Q }(\rho_0)} \theta^\top I\left(\rho_0,M\right) \theta \leq \inf_{\rho_0 \in \cS_{1/2}} \min_{\theta \in \DD_{3\epsilon,p}^{ Q }(\rho_0)} \theta^\top I\left(\rho_0,M\right) \theta,
\end{align}
because $2\theta \in \DD_{3\epsilon,p}^{ Q }(\rho_0)$ implies $\theta \in \DD_{3\epsilon/2,p}^{ Q }(\rho_0/2+\id/2d)$, 
and 
\begin{align}
\frac{1}{2}\inf_{\rho_0 \in \cS_{1/2}} \min_{\norm{\theta}_p = 3\epsilon} \theta^\top I\left(\rho_0,M\right) \theta \leq \inf_{\rho_0 \in \cS^\circ} \min_{\norm{\theta}_p = 3\epsilon} \theta^\top I\left(\rho_0,M\right) \theta \leq \inf_{\rho_0 \in \cS_{1/2}} \min_{\norm{\theta}_p = 3\epsilon} \theta^\top I\left(\rho_0,M\right) \theta. 
\end{align}

Now we try to find an $\eta$ such that when $\epsilon < \eta$, $\norm{\theta}_p = 3 \epsilon$ implies $\theta \in \DD_{3\epsilon,p}^{ Q }(\rho_0)$ for all $\rho_0 \in \cS_{1/2}$. For example, we can choose $\eta$ as claimed
\begin{align}
\eta =  \frac{1}{6 \norm{(\norm{Q_1}_\infty,\ldots,\norm{Q_{d^2-1}}_\infty)}_q },
\end{align}
such that 
\begin{align}
\norm{ \sum_a \theta_a Q_a }_\infty \leq \sum_a \abs{\theta_a}\norm{Q_a}_\infty \leq  \norm{\theta}_p \norm{(\norm{Q_1}_\infty,\ldots,\norm{Q_{d^2-1}}_\infty)}_q \leq \frac{1}{2}.
\end{align}
This makes sure when $\rho_0 \in \cS_{1/2}$, for any $\norm{\theta}_p = 3\epsilon$, $\rho_\theta$ is well-defined. 
Thus when $\epsilon \leq \eta$, 
\begin{align}
\begin{split}
&\quad \inf_{M\in\cM} \sup_{\rho_0 \in \cS^\circ} \max_{\theta \in \DD_{3\epsilon,p}^{ Q }(\rho_0)}\left(\theta^\top I(\rho_0,M)\theta\right)^{-1}
\leq \inf_{M\in\cM} \sup_{\rho_0 \in \cS^\circ } \max_{\norm{\theta}_p = 3\epsilon}\left(\theta^\top I(\rho_0,M)\theta\right)^{-1}
\end{split}
\end{align}
by definition, and 
\begin{align}
\begin{split}
&\quad \inf_{M\in\cM} \sup_{\rho_0 \in \cS^\circ} \max_{\theta \in \DD_{3\epsilon,p}^{ Q }(\rho_0)}\left(\theta^\top I(\rho_0,M)\theta\right)^{-1}\geq \inf_{M\in\cM} \sup_{\rho_0 \in \cS_{1/2}} \max_{\theta \in \DD_{3\epsilon,p}^{ Q }(\rho_0)}\left(\theta^\top I(\rho_0,M)\theta\right)^{-1}\\
&= \inf_{M\in\cM} \sup_{\rho_0 \in \cS_{1/2}} \max_{\norm{\theta}_p = 3\epsilon}\left(\theta^\top I(\rho_0,M)\theta\right)^{-1}\geq \frac{1}{2}\inf_{M\in\cM} \sup_{\rho_0 \in \cS^\circ } \max_{\norm{\theta}_p = 3\epsilon}\left(\theta^\top I(\rho_0,M)\theta\right)^{-1}.
\end{split}
\end{align}
The theorem is then proved using \Cref{lem:duality}. 
\end{proof}

\subsubsection{A general set of observables}

We also compute a threshold for the case of a general set of observables in the distinguishing task between $\rho_0$ and $\rho_{\theta,\varphi} = \rho_0 + \frac{1}{d}\sum_a \theta_a Q_a + \frac{1}{d} \sum_b \varphi_b T_b$, where and $(\theta,\varphi)\in\DD_{3\epsilon,p}^{{ Q },{ T }}(\rho_0)$. To so do, we will explore different choices of dual basis ${ Q } = \{Q_a\}_{a\in A}$ and ${ T } = \{T_b\}_{b\in B}$ and pick a suitable one where a threshold can be easily obtain, as the dual basis is no longer uniquely defined as in the case of complete observables.

We first prove the following lemma. 
\begin{lemma}[Independence of the lower bound on the choice of dual basis]
\label{lemma:independence}
The quantity 
\begin{align}
(I(\rho_0,M)^{-1})_{AA}
\end{align}
is invariant under different choices of ${ Q }$, ${ T }$ as long as \eqref{eq:dual} holds. 
\end{lemma}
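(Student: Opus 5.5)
Fix $\rho_0$ and $M$, and write $(Q~T)$ for the full dual basis, viewed as a linear map from coefficient vectors $(\theta,\varphi)\in\bR^{d^2-1}$ to traceless Hermitian operators. The plan has two steps: first show that any two admissible choices are related by a block-triangular change of basis, and then show that such a change leaves $(I^{-1})_{AA}$ unchanged.

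\textbf{Step 1 (form of the transformation).} Let $(Q,T)$ and $(Q',T')$ both belong to $\BB(O)$. Since both are bases of the traceless Hermitian operators, there is an invertible real matrix $C$ with $(Q'~T') = (Q~T)\,C$. Write $C$ in blocks as $\begin{pmatrix} C_{11} & C_{12}\\ C_{21} & C_{22}\end{pmatrix}$.

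Now pair both sides with the $O_i$ using \eqref{eq:dual}. Expand $Q'_a=\sum_{a'}(C_{11})_{a'a}Q_{a'}+\sum_b (C_{21})_{ba}T_b$ and take the trace against $O_i$. The left side gives $d\delta_{ia}$, and the right side gives $d\,(C_{11})_{ia}$, so $C_{11}=\id$. Doing the same for $T'_b$ gives $0=d\,(C_{12})_{ib}$, so $C_{12}=0$.

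Hence $C=\begin{pmatrix}\id & 0\\ C_1 & C_2\end{pmatrix}$, with $C_2$ invertible because $C$ is. This is exactly the transformation already displayed in the technical overview.

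\textbf{Step 2 (transformation of the FIM).} The probabilities are $p_s=\tr(M_s\rho_{\theta,\varphi})$, and they are linear in the coefficients. The parametrization in the new basis at coefficients $v'$ equals the old one at $v=Cv'$. By the chain rule, the derivatives at the origin transform as $\partial' p_s = C^\top \partial p_s$, so
\[
I' = C^\top I\, C.
\]
This can also be read off directly from $I_{cc'}=\sum_x \tr(R_cM_x)\tr(R_{c'}M_x)/\tr(\rho_0M_x)$, which is bilinear in the basis operators.

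Assuming $I$ is invertible, this gives $I'^{-1} = C^{-1} I^{-1} C^{-\top}$. The inverse is lower block triangular:
\[
C^{-1}=\begin{pmatrix}\id & 0\\ -C_2^{-1}C_1 & C_2^{-1}\end{pmatrix}.
\]
Its first block row is $(\id~~0)$. Therefore
\[
(I'^{-1})_{AA} = (\id~~0)\, I^{-1}\, (\id~~0)^\top = (I^{-1})_{AA}.
\]

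\textbf{Main obstacle: singular $I$.} The delicate point is when $I$ is not invertible, for example when $M$ is not informationally complete on the $B$-directions. I would first treat the invertible case as above. I would then define $(I^{-1})_{AA}$ in the general case either via the Schur-complement expression $(I_{AA}-I_{AB}I_{BB}^{+}I_{BA})^{-1}$ (on the appropriate support), or as the limit of the regularization $I+\lambda\,\id$ with $\lambda\to0$.

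For the Schur-complement route, the invariance check is as follows. Under $C$, the $B$-columns of $I$ become $C_2$-mixtures, and the $A$-directions are shifted by elements of $\operatorname{range}$ of the $B$-block. The Schur complement of the $BB$ block is then unchanged, because it is the minimum of $v_{\theta}$-quadratic forms over $\varphi$. Concretely, $\theta^\top S\theta=\min_\varphi (\theta,\varphi)^\top I(\theta,\varphi)$, and the substitution $\varphi\mapsto C_1\theta+C_2\varphi$ is a bijection of $\bR^{|B|}$ for each fixed $\theta$.

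This variational characterization gives basis-independence cleanly even in the singular case, and it is the argument I would actually write. The invertible-case matrix identity would then serve as a sanity check.
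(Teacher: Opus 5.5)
Your argument is correct, and it reaches the conclusion by a somewhat different and in places more careful route than the paper.

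\textbf{Step 1.} The paper simply asserts that any two admissible bases are related by $\begin{pmatrix}\id & 0\\ C_1 & C_2\end{pmatrix}$. You actually derive $C_{11}=\id$ and $C_{12}=0$ by pairing with the $O_i$, which fills in that step.

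\textbf{Invertible case.} The paper invokes the identity $(I^{-1})_{AA}=(I_{AA}-I_{AB}I_{BB}^{-1}I_{BA})^{-1}$. It then expands $I'=C^\top I C$ blockwise to check that the Schur complement is unchanged, which tacitly requires $I$ (and $I_{BB}$) to be invertible. Your argument bypasses the Schur complement entirely. Because the first block row of $C^{-1}$ is $(\id~~0)$, the identity $(C^{-1}I^{-1}C^{-\top})_{AA}=(I^{-1})_{AA}$ is immediate.

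\textbf{Singular case.} Your argument uses $\theta^\top S\theta=\min_\varphi(\theta,\varphi)^\top I(\theta,\varphi)$ together with the bijection $\varphi\mapsto C_1\theta+C_2\varphi$. This is the more robust of the two approaches. It shows that the generalized Schur complement $S$ itself is basis-independent for every positive semidefinite $I$, with pseudoinverses handled automatically. Consequently, any convention for $(I^{-1})_{AA}$ that is a function of $S$ is invariant. It also matches the minimization over $\varphi$ that appears in the duality lemma.

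\textbf{What each route buys.} The paper's explicit block computation gives a concrete formula that it reuses immediately afterwards, picking $C_1=-I_{BB}^{-1}I_{BA}$ to block-diagonalize the FIM. Your variational picture yields the same choice, namely the minimizer $\varphi^\star=-I_{BB}^{+}I_{BA}\theta=C_1\theta$.

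\textbf{One small caveat.} The regularization $I+\lambda\id$ you mention as an alternative definition is not basis-covariant. In the new basis it corresponds to $I+\lambda C^{-\top}C^{-1}$ in the old one, so its basis-independence would itself have to be routed through the Schur complement. Since that is the route you commit to, this does not affect your proof.
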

\begin{proof}
Let both $(Q~T) := (Q_1~Q_2~\cdots~Q_m~T_{m+1}~T_{m+2}~\cdots~T_{d^2-1})$
and $(Q'~T') := (Q'_1~Q'_2~\cdots~Q'_m~T'_{m+1}~T'_{m+2}~\cdots~T'_{d^2-1})$ \sloppy be valid choices of dual basis that satisfy \eqref{eq:dual}, i.e., 
\begin{align}
\trace(O_i Q_a) = d \delta_{ia},\quad \trace(O_i T_b) = 0, 
\end{align}
where we view $(Q~T)$ (or $(Q'~T')$) as a $d \times (d(d^2-1)) $ block row vector where each block is a $d \times d$ dual observable. Then they must be related by the following: 
\begin{align}
\begin{pmatrix}
    Q' & T' 
\end{pmatrix}
=
\begin{pmatrix}
    Q+TC_1 & TC_2
\end{pmatrix} = 
\begin{pmatrix}
    Q & T
\end{pmatrix}
\begin{pmatrix}
    \id & 0 \\
    C_1 & C_2 
\end{pmatrix},
\end{align}
where we $C_1 \in \bR^{\abs{B} \times \abs{A}} = \bR^{d^2-1-m \times m}$ and $C_2 \in = \bR^{\abs{B} \times \abs{B}} = \bR^{d^2-1-m \times d^2-1-m}$ to represent linear transformations on the matrix blocks. $C_1$ can be arbitrary and $C_2$ needs to be invertible. 
Furthermore, 
\begin{align}
\rho_{\theta,\varphi} = \rho_0 + \frac{1}{d}\sum_a \theta_a Q_a + \frac{1}{d} \sum_b \varphi_b T_b = \rho_0 + \frac{1}{d}\sum_a \theta'_a Q'_a + \frac{1}{d} \sum_b \varphi'_b T'_b,
\end{align}
where 
\begin{align}
\begin{pmatrix}
    \theta' \\ \varphi' 
\end{pmatrix} = 
\begin{pmatrix}
    \id & 0 \\
    C_1 & C_2 
\end{pmatrix}^{-1}
\begin{pmatrix}
    \theta \\ \varphi
\end{pmatrix}
= \begin{pmatrix}
    \id & 0 \\
    - C_2^{-1} C_1 & C_2^{-1}
\end{pmatrix} 
\begin{pmatrix}
    \theta \\ \varphi
\end{pmatrix}
= \begin{pmatrix}
    \theta \\
    - C_2^{-1} C_1 \theta + C_2^{-1} \varphi
\end{pmatrix}. 
\end{align}
As a result, the corresponding FIMs are related by the following: 
\begin{align}
I(\rho_0,M)' &= 
\begin{pmatrix}
    \id & C_1^\top \\
    0 & C_2^\top 
\end{pmatrix} 
\begin{pmatrix}
    I(\rho_0,M)_{AA} & I(\rho_0,M)_{AB} \\ 
    I(\rho_0,M)_{BA} & I(\rho_0,M)_{BB}
\end{pmatrix}
\begin{pmatrix}
    \id & 0 \\
    C_1 & C_2 
\end{pmatrix}.
\label{eq:fim-trans}
\end{align}
Since $({I(\rho_0,M)}^{-1})_{AA}$ is equal to the inverse of the Schur complement of ${I(\rho_0,M)}$, i.e. 
\begin{align}
    ({I(\rho_0,M)}^{-1})_{AA} = ( I(\rho_0,M)_{AA} - I(\rho_0,M)_{AB}  I(\rho_0,M)_{BB}^{-1} I(\rho_0,M)_{BA} )^{-1},
\end{align}
we only need to show the Schur complement $I(\rho_0,M)_{AA} - I(\rho_0,M)_{AB}  I(\rho_0,M)_{BB}^{-1} I(\rho_0,M)_{BA}$ is invariant under the basis transformation. 
This can be seen from \eqref{eq:fim-trans}. 
\allowdisplaybreaks
\begin{align}
\begin{split}
&\quad I(\rho_0,M)_{AA}' - I(\rho_0,M)_{AB}'  {I(\rho_0,M)_{BB}'}^{-1} I(\rho_0,M)_{BA}' \\
&= 
I(\rho_0,M)_{AA} + C_1^\top I(\rho_0,M)_{BA} + I(\rho_0,M)_{AB} C_1 + C_1^\top I(\rho_0,M)_{BB} C_1 \\
&\qquad - (I(\rho_0,M)_{AB} + C_1^\top I(\rho_0,M)_{BB}) I(\rho_0,M)_{BB}^{-1} (I(\rho_0,M)_{BA} + I(\rho_0,M)_{BB} C_1)\\
& = I(\rho_0,M)_{AA} - I(\rho_0,M)_{AB}  {I(\rho_0,M)_{BB}}^{-1} I(\rho_0,M)_{BA}. 
\end{split}
\end{align}
\end{proof}

Now we define 
\begin{equation}
    \Gamma^{\ob}_{p}(\{O_i\}_{i=1}^m) := \inf_{M\in\cM} \sup_{\rho_0 \in \cS^\circ} \max_{\norm{\alpha}_q \leq 1} \alpha^\top (I(\rho_0,M)^{-1}_{AA} \alpha ,
\end{equation}
which is only a function of observables $\{O_i\}_{i=1}^m$ because the expression is independent of different choices of the dual basis. Below we show there is a unique choice of basis that allows us to derive the following threshold result. 

\begin{theorem}[Lower bound in the high-precision regime with general observables, $c=1$, and $p$-norm error]\label{thm:distinguish_thres_general}
Using the adaptive measurement strategy with single-copy measurements, we consider the many-versus-one distinguishing task (\Cref{prob:oblivious-distinguish}) between any well-defined $\rho_{\theta,\varphi}$ ($(\theta,\varphi)\in\DD_{3\epsilon,p}^{{ Q },{ T }}(\rho_0)$) and $\rho_0$. 
The sample complexity required to solve it is
\begin{align}
\begin{split}
N&=\Omega\left(\inf_{M\in\cM}\sup_{\rho_0 \in \cS^\circ}\max_{\norm{\theta}_p=1}\frac{1}{\epsilon^2}\left((\theta,\varphi)^\top I(\rho_0,M)(\theta,\varphi)\right)^{-1}\right)\\
&=\Omega\left(\inf_{M\in\cM}\sup_{\rho_0 \in \cS^\circ}\max_{\norm{\alpha}_q \leq 1} \frac{1}{\epsilon^2}(\alpha,0)^\top I(\rho_0,M)^{-1} (\alpha,0)\right) = \Omega\left(\frac{\Gamma^{\ob}_{p}(\{O_i\}_{i=1}^m)}{\epsilon^2}\right), \\
\end{split}
\end{align}
with any
\begin{align}\label{eq:thres_general_distinguish}
\epsilon\leq\eta^{\ob} := \inf_{\rho_0 \in \cS_{1/2}} \frac{1}{6 \norm{\left(\norm{Q'_1(\rho_0)}_\infty,\ldots,\norm{Q'_{\abs{A}}(\rho_0)}_\infty\right)}_q },
\end{align}
and for any $p\in[1,\infty]$, where $I(\rho_0,M):= I(\rho_{\theta,\varphi},M)|_{\theta=\varphi=\0}$ and $\rho_{\theta,\varphi} = \rho_0 + \frac{1}{d}\sum_{a \in A} \theta_a Q_a + \frac{1}{d}\sum_{b \in B} \varphi_b T_b$. 

Here $Q'(\rho_0)$ is a special choice of $Q$ as a function of $\rho_0$. It can be found by first picking arbitrary $Q$ and $T$ satisfying \eqref{eq:dual}, and then picking $M^\star \in \cM$ 
satisfying 
\begin{multline}
\label{eq:optimal-rho-m}
    \inf_{M\in\cM} \sup_{\rho_0 \in \cS_{1/2}} \max_{(\theta,\varphi) \in \DD_{3\epsilon,p}^{{ Q' },{ T }}(\rho_0)} \left((\theta,\varphi)^\top I(\rho_0,M)' (\theta,\varphi)\right)^{-1} \geq \\ \frac{1}{2} \sup_{\rho_0 \in \cS_{1/2}} \max_{(\theta,\varphi) \in \DD_{3\epsilon,p}^{{ Q' },{ T }}(\rho_0)} \left((\theta,\varphi)^\top I(\rho_0,M^\star)' (\theta,\varphi)\right)^{-1},
\end{multline}
where 
\begin{equation}
\label{eq:I-prime}
I(\rho_0,M)' = 
\begin{pmatrix}
    \id & C_1^\top \\
    0 & \id
\end{pmatrix} 
\begin{pmatrix}
    I(\rho_0,M)_{AA} & I(\rho_0,M)_{AB} \\ 
    I(\rho_0,M)_{BA} & I(\rho_0,M)_{BB}
\end{pmatrix}
\begin{pmatrix}
    \id & 0 \\
    C_1 & \id
\end{pmatrix} \text{  is block-diagonal},
\end{equation}
and here $I(\rho_0,M)'$ is the FIM when the dual basis is taken as $(Q'~T)$ where $Q' = Q + T C_1$ for $C_1 =  - (I(\rho_0,M)_{BB})^{-1} I(\rho_0,M)_{BA}$ is a function of $\rho_0$ and $M$ (Here $^{-1}$ means the pseudoinverse). 
Then let $Q'(\rho_0) := Q + T C_1^\star$ for $C_1^\star =  - (I(\rho_0,M^\star)_{BB})^{-1} I(\rho_0,M^\star)_{BA}$. 
\end{theorem}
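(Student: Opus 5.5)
The plan mirrors the proof of \Cref{thm:lower_threshold}, with the basis-transformation freedom of \Cref{lemma:independence} handling the nuisance block.

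The starting point is \Cref{thm:distinguish_general}. It bounds $N$ below by $\Omega(1/\delta_\cM(O))$, where $\delta_\cM(O)$ takes an infimum over $\rho_0$ and over dual bases $(Q,T)\in\BB(O)$. Since the adversary may choose the dual basis, I am free to evaluate the bound at any $(Q,T)\in\BB(O)$, and I will choose $(Q'(\rho_0),T)$ depending on $\rho_0$. Only a lower bound on $N$ is needed, so it suffices to exhibit, for each $M$, some $\rho_0$ and basis for which $\max_{(\theta,\varphi)}((\theta,\varphi)^\top I(\theta,\varphi))^{-1}$ is at least a constant times $\max_{\|\alpha\|_q\le1}\alpha^\top (I^{-1})_{AA}\alpha/\epsilon^2$.

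\emph{Step 1 (restriction to $\cS_{1/2}$).} I repeat the mixing argument of \eqref{eq:s-1/2-equiv} with $\varphi$ included. Replacing $\rho_0$ by $\rho_0/2+\id/(2d)$ changes $\chi^2$ by at most a factor of $2$, and the domain $\DD$ only shrinks by rescaling $\epsilon\to\epsilon/2$. Hence, up to constants, the $\sup$ over $\rho_0$ may be taken over $\cS_{1/2}$ on both sides.

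\emph{Step 2 (block-diagonal basis).} Fix $\rho_0\in\cS_{1/2}$ and the near-optimal $M^\star$ of \eqref{eq:optimal-rho-m}. I then pass to $Q'=Q+TC_1^\star$, so that $I(\rho_0,M^\star)'$ is block-diagonal as in \eqref{eq:I-prime}, and the quadratic form splits as $\theta^\top I'_{AA}\theta+\varphi^\top I'_{BB}\varphi$. Setting $\varphi=0$ costs nothing: restricting the max over $(\theta,\varphi)$ to $\varphi=0$ only lowers it, and by the Schur-complement formula together with block-diagonality, $(I'^{-1})_{AA}=(I'_{AA})^{-1}=(I^{-1})_{AA}$, which is basis-independent by \Cref{lemma:independence}. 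With $\varphi=0$, well-definedness only requires $\rho_0+\frac1d\sum_a\theta_aQ'_a\succeq0$. For $\rho_0\in\cS_{1/2}$ this is guaranteed whenever $\|\sum_a\theta_aQ'_a\|_\infty\le 1/2$. By Hölder, $\|\sum_a\theta_aQ'_a\|_\infty\le \|\theta\|_p\,\|(\|Q'_a\|_\infty)_a\|_q\le 3\epsilon/(6\eta^\ob)\le 1/2$ for $\epsilon\le\eta^\ob$. So the whole sphere $\{\|\theta\|_p=3\epsilon\}\times\{0\}$ lies in $\DD^{Q',T}_{3\epsilon,p}(\rho_0)$.

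\emph{Step 3 (duality and assembly).} On that sphere, $\max_{\|\theta\|_p=3\epsilon}(\theta^\top I'_{AA}\theta)^{-1}=\frac{1}{9\epsilon^2}\max_{\|\alpha\|_q\le1}\alpha^\top (I'_{AA})^{-1}\alpha$ by \Cref{lem:duality} applied to $I'_{AA}$ alone. Taking $\sup_{\rho_0}$ and using the $1/2$-optimality of $M^\star$ yields the claimed $\Omega(\Gamma^\ob_p/\epsilon^2)$ bound, after undoing the $\cS_{1/2}$ restriction via Step 1.

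The main obstacle is the order of quantifiers. $Q'$ is built from $M^\star$, while the bound must hold against every $M$ chosen by the algorithm, since the basis may depend on $\rho_0$ but the measurement comes from the protocol. I would handle this exactly as the statement is set up. The infimum over $M$ is replaced, up to a factor $2$, by the single $M^\star$ in \eqref{eq:optimal-rho-m}, and $\eta^\ob$ is defined through $Q'(\rho_0)$ built from that $M^\star$, so everything is evaluated at $M^\star$. A further point to check is that the pseudoinverse in $C_1$ still makes $I'$ block-diagonal when $I_{BB}$ is singular. This holds because $I$ is PSD, so $\mathrm{range}(I_{BA})\subseteq\mathrm{range}(I_{BB})$ and $I_{BB}I_{BB}^{-1}I_{BA}=I_{BA}$.
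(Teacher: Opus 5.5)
Your proposal is correct and follows the paper's proof essentially step for step. You restrict $\rho_0$ to $\cS_{1/2}$ and the dual basis to $(Q'(\rho_0),T)$, replace $\inf_{M}$ by the near-optimal $M^\star$ of \eqref{eq:optimal-rho-m} at the cost of a factor $2$, use block-diagonality of $I(\rho_0,M^\star)'$ together with $\epsilon\le\eta^{\ob}$ to place $\{\norm{\theta}_p=3\epsilon\}\times\{0\}$ inside $\DD_{3\epsilon,p}^{Q',T}(\rho_0)$, apply \Cref{lem:duality} and \lemmaref{lemma:independence}, and return from $\cS_{1/2}$ to $\cS^\circ$ via \eqref{eq:s-1/2-equiv}, exactly as the paper does; your extra check that the pseudoinverse choice of $C_1$ still block-diagonalizes $I$, because $\mathrm{range}(I_{BA})\subseteq\mathrm{range}(I_{BB})$ for positive semidefinite $I$, is a detail the paper leaves implicit.
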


\begin{proof}
We first assume the FIM $I(\rho_0,M)$ is block-diagonal for any $\rho_0$, i.e., $I(\rho_0,M)_{AB} = 0$. 
Then we want to find a threshold on $\epsilon$ such that 
\begin{align}
\begin{split}
\min_{ (\theta,\varphi)\in\DD_{3\epsilon,p}^{{ Q },{ T }}(\rho_0)} \chi^2_{M} (\rho_{\theta,\varphi} \| \rho_0) 
&= \min_{ (\theta,\varphi)\in\DD_{3\epsilon,p}^{{ Q },{ T }}(\rho_0)}(\theta,\varphi)^\top I(\rho_0,M) (\theta,\varphi) \\
&= \min_{\norm{\theta}_p = 3\epsilon} (\theta,\varphi)^\top I(\rho_0,M) (\theta,\varphi).
\end{split}\label{eq:gel-th-expr} 
\end{align}
Since the FIM $I(\rho_0,M)$ is block-diagonal,   
\begin{align}
\begin{split}
\min_{\norm{\theta}_p = 3\epsilon} (\theta,\varphi)^\top I(\rho_0,M) (\theta,\varphi) &= \min_{\norm{\theta}_p = 3\epsilon} (\theta^\top I(\rho_0,M)_{AA} \theta + \varphi^\top I(\rho_0,M)_{BB} \varphi) \\
&= \min_{\norm{\theta}_p = 3\epsilon} \theta^\top I(\rho_0,M)_{AA} \theta.
\end{split}
\end{align}
Furthermore, if for all $\theta$ satisfying $\norm{\theta}_p = 3\epsilon$, $(\theta,0) \in \DD_{3\epsilon,p}^{{ Q },{ T }}(\rho_0)$, then 
\begin{align}
\min_{(\theta,\varphi)\in\DD_{3\epsilon,p}^{{ Q },{ T }}(\rho_0)} (\theta,\varphi)^\top I(\rho_0,M) (\theta,\varphi) = \min_{\norm{\theta}_p = 3\epsilon} \theta^\top I(\rho_0,M)_{AA} \theta,
\end{align}
because
\begin{align}
\min_{(\theta,\varphi)\in\DD_{3\epsilon,p}^{{ Q },{ T }}(\rho_0)} (\theta,\varphi)^\top I(\rho_0,M) (\theta,\varphi) \leq \min_{(\theta,\varphi)\in\DD_{3\epsilon,p}^{{ Q },{ T }}(\rho_0)} (\theta,0)^\top I(\rho_0,M) (\theta,0) = \min_{\norm{\theta}_p = 3\epsilon} \theta^\top I(\rho_0,M)_{AA} \theta,
\end{align}
and 
\begin{align}
\min_{(\theta,\varphi)\in\DD_{3\epsilon,p}^{{ Q },{ T }}(\rho_0)} (\theta,\varphi)^\top I(\rho_0,M) (\theta,\varphi) \geq \min_{\norm{\theta}_p = 3\epsilon} (\theta,\varphi)^\top I(\rho_0,M) (\theta,\varphi) = \min_{\norm{\theta}_p = 3\epsilon} \theta^\top I(\rho_0,M)_{AA} \theta. 
\end{align}
Therefore, under the assumption that $I(\rho_0,M)$ is block-diagonal, \eqref{eq:gel-th-expr} holds if $\norm{\theta}_p = 3\epsilon$ implies $(\theta,0)\in\DD_{3\epsilon,p}^{{ Q },{ T }}(\rho_0)$. 

Next, we show any $I(\rho_0,M)$ can be made block-diagonal by properly choosing basis $\{Q_a,T_b\}$ for each $\rho_0$ and $M$. The basis transformation 
\begin{align}
\begin{pmatrix}
    Q' & T' 
\end{pmatrix}
=
\begin{pmatrix}
    Q+TC_1 & TC_2
\end{pmatrix} = 
\begin{pmatrix}
    Q & T
\end{pmatrix}
\begin{pmatrix}
    \id & 0 \\
    C_1 & C_2 
\end{pmatrix}
\end{align}
corresponds to the new FIM
\begin{align}
I(\rho_0,M)' = 
\begin{pmatrix}
    \id & C_1^\top \\
    0 & C_2^\top 
\end{pmatrix} 
\begin{pmatrix}
    I(\rho_0,M)_{AA} & I(\rho_0,M)_{AB} \\ 
    I(\rho_0,M)_{BA} & I(\rho_0,M)_{BB}
\end{pmatrix}
\begin{pmatrix}
    \id & 0 \\
    C_1 & C_2 
\end{pmatrix}.
\end{align}
Here we take $C_2 = \id$, which means
\begin{align}
T' = T
\end{align}
and 
\begin{align}
C_1 =  - ( I(\rho_0,M)_{BB} ) ^{-1}I(\rho_0,M)_{BA} 
\label{eq:block-diagonal-basis}
\end{align}
where $(I(\rho_0,M)_{BB})^{-1}$ is the pseudoinverse of $I(\rho_0,M)_{BB}$ on its support, is a solution of $C_1$ such that 
\begin{align}
I(\rho_0,M)'_{AB} =  I(\rho_0,M)_{AB}  C_2 + C_1^\top  I(\rho_0,M)_{BB} C_2  = 0.
\end{align}
The corresponding choice of basis $Q' = Q+TC_1$ makes the corresponding FIM $I(\rho_0,M)'$ block-diagonal. It implies 
\begin{equation}
\frac{1}{6 \norm{\left(\norm{Q'_1}_\infty,\ldots,\norm{Q'_{\abs{A}}}_\infty\right)}_q }
\end{equation}
is a threshold on $\epsilon$ below which \eqref{eq:gel-th-expr} holds after the basis transformation. Below we fix $Q'$ to be the above choice of basis (for different $\rho_0$ and $M$) such that $I(\rho_0,M)'$ is block-diagonal. 

Finally, when 
\begin{align}
\epsilon \leq \eta = \inf_{\rho_0 \in \cS_{1/2}} \frac{1}{6 \norm{\left(\norm{Q'_1(\rho_0)}_\infty,\ldots,\norm{Q'_{\abs{A}}(\rho_0)}_\infty\right)}_q },
\end{align}
where $Q'(\rho_0) = Q'$ that is defined above to make $I(\rho_0,M^\star)$ block-diagonal, which is a function of $\rho_0$, 
we have 
\begin{align}
\begin{split}
&\quad\inf_{M\in\cM} \sup_{\substack{\rho_0 \in \cS^\circ\\ (Q,T) \in \BB(O)}} \max_{(\theta,\varphi) \in \DD_{3\epsilon,p}^{{ Q },{ T }}(\rho_0)}\left((\theta,\varphi)^\top I(\rho_0,M)(\theta,\varphi)\right)^{-1}\\
&\leq \inf_{M\in\cM} \sup_{\rho_0 \in \cS^{\circ}} \max_{\norm{\theta}_p = 3\epsilon} \left((\theta,\varphi)^\top I(\rho_0,M)(\theta,\varphi)\right)^{-1}\\
\end{split}
\end{align}
where we use \lemmaref{lemma:independence}, and using \eqref{eq:optimal-rho-m}, \lemmaref{lem:duality}, \lemmaref{lemma:independence} and \eqref{eq:s-1/2-equiv}, we have 
\allowdisplaybreaks
\begin{align}
\begin{split}
&\quad \inf_{M\in\cM} \sup_{\substack{\rho_0 \in \cS^\circ\\ (Q,T) \in \BB(O)}} \max_{(\theta,\varphi) \in \DD_{3\epsilon,p}^{{ Q },{ T }}(\rho_0)}\left((\theta,\varphi)^\top I(\rho_0,M)(\theta,\varphi)\right)^{-1}
 \\
&\geq \inf_{M\in\cM} \sup_{\rho_0 \in \cS_{1/2}} \max_{(\theta,\varphi) \in \DD_{3\epsilon,p}^{{ Q' },{ T }}(\rho_0)} \left((\theta,\varphi)^\top I(\rho_0,M)' (\theta,\varphi)\right)^{-1}\\
&\geq \frac{1}{2} \sup_{\rho_0 \in \cS_{1/2}} \max_{(\theta,\varphi) \in \DD_{3\epsilon,p}^{{ Q' },{ T }}(\rho_0)} \left((\theta,\varphi)^\top I(\rho_0,M^\star)' (\theta,\varphi)\right)^{-1}\\
&= \frac{1}{2} \sup_{\rho_0 \in \cS_{1/2}} \max_{\norm{\theta}_p = 3\epsilon} \left((\theta,\varphi)^\top I(\rho_0,M^\star)' (\theta,\varphi)\right)^{-1}\\
&= \frac{1}{2} \sup_{\rho_0 \in \cS_{1/2}} \min_{\norm{\alpha}_q \leq 3\epsilon} \left( \alpha^\top {I(\rho_0,M^\star)'}^{-1}_{AA} \alpha\right)^{-1}\\
&= \frac{1}{2} \sup_{\rho_0 \in \cS_{1/2}} \min_{\norm{\alpha}_q \leq 3\epsilon} \left( \alpha^\top {I(\rho_0,M^\star)}^{-1}_{AA} \alpha\right)^{-1}\\
&= \frac{1}{2} \sup_{\rho_0 \in \cS_{1/2}} \max_{\norm{\theta}_p = 3\epsilon} \left((\theta,\varphi)^\top I(\rho_0,M^\star) (\theta,\varphi)\right)^{-1}\\
&\geq \frac{1}{2} \inf_{M \in \cM} \sup_{\rho_0 \in \cS_{1/2}} \max_{\norm{\theta}_p = 3\epsilon} \left((\theta,\varphi)^\top I(\rho_0,M) (\theta,\varphi)\right)^{-1}\\
&\geq \frac{1}{4} \inf_{M \in \cM} \sup_{\rho_0 \in \cS^\circ} \max_{\norm{\theta}_p = 3\epsilon} \left((\theta,\varphi)^\top I(\rho_0,M) (\theta,\varphi)\right)^{-1}. 
\end{split}
\end{align}
\allowdisplaybreaks
The theorem is then proved. Note that the above chain of inequalities holds when $T$ is fixed and only the choice of $Q$ is optimized over. 
\end{proof}

\subsection{Few-copy measurements}

We then consider protocols using $(c\geq1)$-copy measurements in $\cM_c$. We now show that for the many-versus-one distinguishing task, if we only care about $\epsilon^2$ term, which is the case when $\epsilon$ is below a certain threshold, then the lower bound for these protocols can only achieve at most an $O(c)$ reduction from the single-copy protocol. 

We first prove the following lemma, which will be used later in the proof of the theorem.
\begin{lemma}[Minimax theorem]
\label{lemma:minimax}
Fix $T$ in the dual basis. Let $\tcD$ denote the set of all probability distributions of $(\rho_0,\tQ,\theta,\varphi)$ over the set 
\begin{equation}
    \{(\rho_0,\tilde{Q},\theta,\varphi) | \norm{\theta}_p = 3\epsilon, \theta \in \bR^{\abs{A}}, \varphi \in \bR^{\abs{B}},\rho_0 \in \cS_{1/2}, \tilde{Q} = Q + T C_1, C_1 \in \bR^{\abs{B} \times \abs{A}}\},  
\end{equation}
$\tcD^{T} \subseteq \tcD$ denote the set of all probability distributions of $(\rho_0,\tQ,\theta,\varphi)$ over the set 
\begin{equation}
    \{(\rho_0,\tQ,\theta,\varphi) | \rho_0 \in \cS_{1/2}, (\theta,\varphi) \in \DD_{3\epsilon,p}^{{ \tQ },{ T }}(\rho_0), \tQ = Q + T C_1,C_1 \in \bR^{\abs{B} \times \abs{A}}\}, 
\end{equation}
$\tcD^{'T} \subseteq \tcD^T$ denote the set of all probability distributions of $(\rho_0,\tQ,\theta,\varphi)$ over the set 
\begin{equation}
    \{(\rho_0,\tQ,\theta,\varphi) | \rho_0 \in \cS_{1/2}, (\theta,\varphi) \in \DD_{3\epsilon,p}^{{ \tQ },{ T }}(\rho_0), \tQ = Q'(\rho_0)\}, 
\end{equation}
where $Q'(\rho_0)$ was defined in the statement of \Cref{thm:distinguish_thres_general}. We have 
\begin{gather}
\label{eq:minimax}
\sup_{M \in \cM} \inf_{\substack{\rho_0 \in \cS_{1/2},\tQ}} \min_{\norm{\theta}_p = 3\epsilon}  (\theta,\varphi)^\top \tI(\rho_0,M) (\theta,\varphi) = \inf_{\pi \in \tcD}  \sup_{M \in \cM} \bE_{\pi} (\theta,\varphi)^\top \tI(\rho_0,M) (\theta,\varphi),\\
\sup_{M \in \cM} \inf_{\substack{\rho_0 \in \cS_{1/2},\tQ}} \min_{(\theta,\varphi) \in \DD_{3\epsilon,p}^{{ Q },{ T }}(\rho_0)}  (\theta,\varphi)^\top \tI(\rho_0,M) (\theta,\varphi) = \inf_{\pi \in \tcD^{T}}  \sup_{M \in \cM} \bE_{\pi} (\theta,\varphi)^\top \tI(\rho_0,M) (\theta,\varphi),\label{eq:minimax-2}\\
\sup_{M \in \cM} \inf_{\substack{\rho_0 \in \cS_{1/2}}} \min_{(\theta,\varphi) \in \DD_{3\epsilon,p}^{{ Q' },{ T }}(\rho_0)}  (\theta,\varphi)^\top I(\rho_0,M)' (\theta,\varphi) = \inf_{\pi \in \tcD^{'T}}  \sup_{M \in \cM} \bE_{\pi} (\theta,\varphi)^\top I(\rho_0,M)' (\theta,\varphi). \label{eq:minimax-3}
\end{gather}
Here we use $\bE_{\pi}$ as a short-hand of $\bE_{(\rho_0,\tQ,\theta,\varphi)\sim\pi}$ and $\tI$ as a short-hand of the FIM when the dual basis is taken as $(\tQ~T)$. Here $\tQ$ implicitly belongs to the set $\{\tQ|\tQ = Q + T C_1,\forall C_1 \in \bR^{\abs{B}\times \abs{A}}\}$ which is a function of $\{O_i\}_{i=1}^m$ only. 
\end{lemma}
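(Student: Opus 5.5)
The plan is to derive all three identities from Sion's minimax theorem, applied to the bilinear payoff
\begin{equation*}
f(M,\pi) := \bE_{\pi}\,(\theta,\varphi)^\top \tI(\rho_0,M)(\theta,\varphi)
\end{equation*}
on the product of a convex set of measurements and a convex set of distributions.

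\textbf{Step 1: reduce the left-hand sides to infima over distributions.} For a fixed $M$, the inner infimum over $(\rho_0,\tQ,\theta,\varphi)$ of a pointwise quantity equals the infimum of its expectation over all distributions on the same domain. One direction holds because any point gives a delta distribution; the other holds because an expectation is at least the infimum. Hence each left-hand side equals $\sup_{M}\inf_{\pi} f(M,\pi)$, with $\pi$ ranging over $\tcD$, $\tcD^{T}$, or $\tcD^{'T}$ as appropriate. In the third identity I take $\tQ = Q'(\rho_0)$ and the FIM $I(\rho_0,M)'$ built from it. Since $Q'(\rho_0)$ is defined through the fixed $M^\star$, this FIM is a genuine function of $(\rho_0,M)$, and the argument goes through unchanged. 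The ordering $\sup\inf \le \inf\sup$ is then trivial, so the real content is the reverse inequality.

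\textbf{Step 2: check the hypotheses of Sion's theorem~\cite{sion1958general}.}
\begin{enumerate}
\item \emph{Convexity in $M$.} Convex combinations of POVMs are POVMs, realized by classical randomization. By \eqref{eq:taylor} and the fact that $p$ is linear, $(\theta,\varphi)^\top \tI(\rho_0,M)(\theta,\varphi) = \chi^2_M(\rho_{\theta,\varphi}\|\rho_0)$. The $\chi^2$-divergence after a channel is jointly convex as a function of the output distribution pair. Mixing POVMs, after relabeling outcomes to be disjoint, makes the outcome distributions convex combinations of the originals. The resulting $\chi^2$ is therefore the corresponding convex combination, which is linear, hence concave. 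If one instead merges outcomes, the value only decreases, so restricting to disjoint-outcome mixtures is without loss for the supremum.
\item \emph{Linearity in $\pi$.} $f$ is linear in $\pi$, and the sets $\tcD$, $\tcD^{T}$, $\tcD^{'T}$ are convex.
\item \emph{Compactness.} This is the main obstacle, since Sion needs one side compact with semicontinuity. I would put compactness on the measurement side. By Carath\'eodory-type arguments (cf.\ $\cM^{[d^2]}$), every FIM is attained, up to an arbitrarily small loss, by POVMs with at most $O(d^4)$ outcomes, and this set is compact. On it, $M\mapsto \chi^2_M$ is lower semicontinuous, because $\tr(\rho_0M_s)\ge \lambda_{\min}(\rho_0)\tr M_s$ and $\lambda_{\min}(\rho_0)\ge 1/(2d)$ for $\rho_0\in\cS_{1/2}$. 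The $\cS_{1/2}$ bound keeps each term $\tr(M_s X)^2/\tr(\rho_0 M_s)$ bounded by $O(d)\,\|X\|_\infty^2\tr M_s$, so the map is in fact continuous.
\end{enumerate}

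\textbf{Step 3: handle the unbounded domain.} Upper semicontinuity in $\pi$ is only needed in the weak form that $f$ is linear and finite. Finiteness fails because $\varphi$ and $C_1$ range over unbounded sets. I would handle this by truncating to bounded $\varphi$ and $C_1$, applying Sion on the truncation, and then letting the truncation grow. Both sides are monotone in the truncation, and the infimum over the full domain is the limit. With these hypotheses in place, Sion gives equality in \eqref{eq:minimax}, \eqref{eq:minimax-2} and \eqref{eq:minimax-3}.
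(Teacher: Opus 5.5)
\textbf{Gap: concavity and compactness are checked on two different convex structures.} Your Step 1 is correct, and so is your observation that the disjoint-outcome mixture $\{\lambda M_s\}_s\cup\{(1-\lambda)M'_{s'}\}_{s'}$ makes $f$ affine in $M$. The problem is Step 2. Sion's theorem needs one compact convex set on which $f$ is quasi-concave in the maximizing variable, and you verify these two properties for different convex structures.

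The disjoint mixture adds outcome counts, so the set of POVMs with at most $O(d^4)$ outcomes is not closed under it. The structure in which that set \emph{is} convex and compact is the outcome-wise one, $\{\lambda M_s+(1-\lambda)M'_s\}_s$. Under it, $M\mapsto\sum_s\tr(M_sX)^2/\tr(M_s\rho_0)$ (with $X=\rho_{\theta,\varphi}-\rho_0$) is \emph{convex}, by joint convexity of $(a,b)\mapsto a^2/b$. That is the curvature Sion requires of the minimizing variable, not of the one under $\sup_M$. Your remark that merging outcomes only lowers the value does not restore quasi-concavity along segments of that set.

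\textbf{The bounded-outcome restriction cannot be rescued.} Your Carath\'eodory reduction treats one FIM (one $\rho_0$, or one $\pi$) at a time. At best it gives $\sup_{\cM}f(\cdot,\pi)=\sup_{\cM^{[K]}}f(\cdot,\pi)$ for each fixed $\pi$. A minimax identity on $\cM^{[K]}$ would then additionally force $\sup_{\cM^{[K]}}\inf_\pi f=\sup_{\cM}\inf_\pi f$, i.e.\ that a single bounded-outcome POVM is worst-case optimal. Nothing in your argument gives this for $K=O(d^4)$. For $K=d^2$ it is false already for one qubit, with the three Pauli operators as observables and $p=2$:
\begin{enumerate}
\item Here $\inf_\pi f(M,\pi)=9\epsilon^2\inf_{\rho_0\in\cS_{1/2}}\lambda_{\min}(I(\rho_0,M))$.
\item At $\rho_0=\id/2$ this is at most a third of the largest possible FIM trace.
\item The six-outcome mixture of the three Pauli-basis measurements stays at or above that cap for every $\rho_0\in\cS_{1/2}$.
\item A POVM with at most four outcomes cannot do the same. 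Meeting the cap at $\id/2$ and to first order around it forces rank-one elements whose Bloch vectors, weighted by $\tr M_s$, form a four-point spherical $3$-design, and no such design exists.
\end{enumerate}

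\textbf{The paper's proof.} It takes exactly this route: Chiribella et al.'s reduction to $\cM^{[d^2]}$, then Sion with the outcome-wise structure, where it notes that $f$ is convex in $M$. It therefore has the same defect, and the inequality on $\cM^{[d^2]}$ to which it reduces the lemma fails in the example above. The lemma itself is nevertheless true.

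\textbf{The missing idea: a compact convex model of measurements on which $f$ is affine.} Encode a POVM by the measure $\mu_M=\sum_s\tr(M_s)\,\delta_{M_s/\tr M_s}$ on the compact state space $\cS$. Then:
\begin{itemize}
\item The constraint $\sum_sM_s=\id$ becomes $\int\sigma\,\mathrm{d}\mu(\sigma)=\id$.
\item Your disjoint mixture becomes exactly $\lambda\mu_M+(1-\lambda)\mu_{M'}$.
\item The payoff is $f(M,\pi)=\bE_\pi\int\tr(\sigma X)^2/\tr(\sigma\rho_0)\,\mathrm{d}\mu_M(\sigma)$.
\item The set $\{\mu\ge0:\int\sigma\,\mathrm{d}\mu=\id\}$ is convex and weak-$*$ compact, since its total mass is $d$.
\item Each of its elements is itself a POVM in $\cM$ with outcome space $\cS$, so the supremum is unchanged.
\item Since $\lambda_{\min}(\rho_0)\ge1/(2d)$ on $\cS_{1/2}$, the integrand is continuous and bounded by $2d\|X\|_\infty^2$. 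Hence $f(\cdot,\pi)$ is affine and continuous for every finitely supported $\pi$.
\end{itemize}
Ky Fan's minimax theorem needs compactness and semicontinuity only on the maximizing side, together with affinity in both variables. It therefore exchanges $\sup_\mu$ and $\inf_\pi$ over finitely supported $\pi$. Extending to all of $\tcD$, $\tcD^T$, $\tcD^{'T}$ is then immediate: $\inf_\pi f(M,\pi)$ is already the infimum over point masses, while enlarging the set of $\pi$ can only lower $\inf_\pi\sup_M f$.

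This also makes your truncation in Step 3 unnecessary. As written, that step would additionally need a justification for interchanging the truncation limit with $\sup_M$.
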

\begin{proof}
We first apply the following result~\cite{chiribella2007continuous} which states that any POVM $M(S)$ for $S \subseteq \Omega$ can be decomposed as 
\begin{equation}
    M(S) = \int_{\cX} \mathrm{d}x p(x) E^{(x)}(S),  
\end{equation}
where $x \in \cX$ is a suitable random variable, $p(x)$ a probability density and $E^{(x)}$ is a POVM with finite support, i.e. 
\begin{equation}
    E^{(x)}(S) = \sum_{i=1}^{d^2} \id[\omega_i \in S] E_i,
\end{equation}
where $\{\omega_i\}_{i=1}^{d^2} \subseteq \Omega$, $\id[\cdot]$ is the indicator function and $E_i$ is a POVM with (at most) $d^2$ outcomes. Since $I(\rho_0,M) = \int_{\cX} \mathrm{d}x p(x) I(\rho_0,E^{(x)})$ from the definition of FIM, we have 
\begin{align}
\begin{split}
\sup_{M \in \cM} (\theta,\varphi)^\top I(\rho_0,M) (\theta,\varphi) &\leq \sup_{p(x)} \int_{\cX} \mathrm{d}x p(x) \sup_{E^{(x)} \in \cM^{[d^2]}} (\theta,\varphi)^\top  I(\rho_0,E^{(x)}) (\theta,\varphi)\\
&= \sup_{E \in \cM^{[d^2]}} (\theta,\varphi)^\top  I(\rho_0,E) (\theta,\varphi),
\end{split}
\end{align}
and then $\sup_{M \in \cM} (\theta,\varphi)^\top I(\rho_0,M) (\theta,\varphi) = \sup_{M \in \cM^{[d^2]}} (\theta,\varphi)^\top I(\rho_0,M) (\theta,\varphi)$. Similarly, 
\begin{equation}
\label{eq:finiteoutcome}
    \sup_{M \in \cM} \bE_{\pi} (\theta,\varphi)^\top I(\rho_0,M) (\theta,\varphi) = \sup_{M \in \cM^{[d^2]}} \bE_{\pi} (\theta,\varphi)^\top I(\rho_0,M) (\theta,\varphi). 
\end{equation}
Next we show,
\begin{equation}
    \inf_{\pi \in \tcD}  \sup_{M \in \cM} \bE_{\pi} (\theta,\varphi)^\top \tI(\rho_0,M) (\theta,\varphi) =  \sup_{M \in \cM}  \inf_{\pi \in \tcD} \bE_{\pi} (\theta,\varphi)^\top \tI(\rho_0,M) (\theta,\varphi), 
\end{equation}
which proves \eqref{eq:minimax}. Since the other direction is trivial and thanks to \eqref{eq:finiteoutcome}, we only need to show
\begin{equation}
    \inf_{\pi \in \tcD}  \sup_{M \in \cM^{[d^2]}} \bE_{\pi} (\theta,\varphi)^\top \tI(\rho_0,M) (\theta,\varphi) \leq  \sup_{M \in \cM^{[d^2]}}   \inf_{\pi \in \tcD} \bE_{\pi} (\theta,\varphi)^\top \tI(\rho_0,M) (\theta,\varphi). 
\end{equation}
This can be proven using Sion's minimax theorem~\cite{sion1958general} to exchange the order of $\inf$ and $\sup$. It states that $\inf_{x\in X} \sup_{y \in Y}f(x,y) = \sup_{y \in Y} \inf_{x\in X}  f(x,y)$ if $X$ is a convex and compact subset of a linear topological space, $Y$ is a convex subset of a linear topological space, $x \mapsto f(x,y)$ is continuous and convex, $y \mapsto f(x,y)$ is continuous and concave. First, to exchange the order of $\inf_\pi$ and $\sup_M$,  we notice $\cM^{[d^2]}$ is a convex, compact subset of a linear topological space, $\tcD$ is a convex subset of a linear topological space, $\bE_\pi (\theta,\varphi)^\top I(\rho_0,M) (\theta,\varphi)$ is convex in $M$ and concave in $\pi$. \eqref{eq:minimax-2} and \eqref{eq:minimax-3} can be proven similarly. Note that here when we say $\cM^{[d^2]}$ is convex and $\bE_\pi (\theta,\varphi)^\top I(\rho_0,M) (\theta,\varphi)$ is convex in $M$, we implicitly use the following definition of convex combination: 
\begin{equation}
    \lambda \{M_s\}_{s=1}^{d^2} + (1-\lambda) \{M'_s\}_{s=1}^{d^2} = \{\lambda M_s + (1-\lambda) M'_{s}\}_{s=1}^{d^2}. 
\end{equation}
\end{proof}

\iffalse
One key implication of \lemmaref{lemma:minimax} we will use is the following. We first notice from \lemmaref{lem:duality} and \lemmaref{lemma:independence}, the left-hand side of \eqref{eq:minimax} is invariant under arbitrary transformations on the dual basis, e.g. replacing $T$ with $c_2 T$. Thus we have 
\begin{align}
\inf_{\pi \in \tcD}  \sup_{M \in \cM} \bE_{\pi} (\theta,\varphi)^\top \tI(\rho_0,M) (\theta,\varphi) = \inf_{\pi \in \tcD}  \sup_{M \in \cM} \bE_{\pi} (\theta,c_2\varphi)^\top \tI(\rho_0,M) (\theta,c_2\varphi),
\end{align}
for any $c_2 > 0$. 
\fi

Below we derive an important property implied by \lemmaref{lemma:minimax} that will be used in the proof later. Assume $\pi^\star$ be a nearly optimal distribution for some fixed $T$ such that 
\begin{align}
\sup_{M \in \cM} \bE_{\pi^\star} (\theta,\varphi)^\top I(\rho_0,M)' (\theta,\varphi)  
&\leq 2 \sup_{M \in \cM}  \inf_{\pi \in \tcD^{'T}}   \bE_{\pi} (\theta,\varphi)^\top I(\rho_0,M)' (\theta,\varphi) \\ 
&\leq 4 \inf_{\pi \in \tcD^{'T}}   \bE_{\pi} (\theta,\varphi)^\top I(\rho_0,M^\star)' (\theta,\varphi)\\
&= 4 \inf_{\pi \in \tcD^{'T}}  \bE_{\pi}\Big[ \theta^\top I(\rho_0,M^\star)'_{AA} \theta + \varphi^\top I(\rho_0,M^\star)'_{BB} \varphi\Big]. 
\end{align}
Define for any $c_2 > 0$, 
\begin{align}
    \Lambda_{c_2} \pi(\rho_0,\tQ,\theta,\varphi) := c_2 \pi(\rho_0,\tQ,\theta,c_2 \varphi).  
\end{align}
Then since $I'$ is block-diagonal we have for any $c_2 \geq 1$, 
\begin{align}
    \sup_{M \in \cM} \bE_{\Lambda_{c_2}\pi^\star} (\theta,\varphi)^\top I(\rho_0,M)' (\theta,\varphi) 
    &\leq \sup_{M \in \cM} \bE_{\pi^\star} (\theta,\varphi)^\top I(\rho_0,M)' (\theta,\varphi)\\
    &\leq 4 \inf_{\pi \in \tcD^{'T}}  \bE_{\pi} \Big[ \theta^\top I(\rho_0,M^\star)'_{AA} \theta + \varphi^\top I(\rho_0,M^\star)'_{BB} \varphi \Big]. 
\end{align} 
To conclude, $\Lambda_{c_2} \pi^\star$ for any $c_2 \geq 1$ is also a nearly optimal distribution satisfying 
\begin{multline}
     \inf_{\pi \in \tcD}  \sup_{M \in \cM} \bE_{\pi} (\theta,\varphi)^\top \tI(\rho_0,M) (\theta,\varphi) \leq \sup_{M \in \cM} \bE_{\Lambda_{c_2}\pi^\star} (\theta,\varphi)^\top I(\rho_0,M)' (\theta,\varphi) \\ \leq 16 \inf_{\pi \in \tcD}  \sup_{M \in \cM} \bE_{\pi} (\theta,\varphi)^\top \tI(\rho_0,M) (\theta,\varphi), 
\label{eq:near-optimal}
\end{multline}
which can be seen from the proof of \Cref{thm:distinguish_thres_general}.

\iffalse
Assume $\epsilon \leq \eta^\ob$. Then we have 
\begin{align}
\begin{split}
\sup_{M \in \cM} \bE_{\pi^\star} (\theta,\varphi)^\top \tI(\rho_0,M) (\theta,\varphi)   &\leq 2 \inf_{\pi \in \tcD^T}  \sup_{M \in \cM} \bE_{\pi} (\theta,\varphi)^\top \tI(\rho_0,M) (\theta,\varphi)\\
&\leq 4 \inf_{\pi \in \tcD}  \sup_{M \in \cM} \bE_{\pi} (\theta,\varphi)^\top \tI(\rho_0,M) (\theta,\varphi)\\
&= 4 \inf_{\pi \in \tcD}  \sup_{M \in \cM} \bE_{\pi} (\theta,c_2 \varphi)^\top \tI(\rho_0,M) (\theta,c_2\varphi)\\
\end{split}
\end{align}
where 
the second inequality is due to the proof of \Cref{thm:distinguish_thres_general}. There, we only apply transformations on $Q$ to make $I(\rho_0,M)$ block-diagonal and leave $T$ fixed which justify why we can fix $T$ here. To summarize, we have 
\begin{align}\label{eq:near-optimal-2} 
\sup_{M \in \cM} \bE_{\pi^\star} (\theta,\varphi)^\top \tI(\rho_0,M) (\theta,\varphi) \leq 4 \inf_{\pi \in \tcD^{c_2 T}_2}  \sup_{M \in \cM} \bE_{\pi} (\theta,c_2 \varphi)^\top \tI(\rho_0,M) (\theta,c_2\varphi). 
\end{align}
It implies when $\epsilon \leq \eta^\ob$ and $\pi^\star \in \tcD^{T}$ is a nearly optimal distribution satisfying \eqref{eq:near-optimal}, it also satisfies \eqref{eq:near-optimal-2} for any $c_2 > 0$. 
\fi

Finally, we have the following theorem. 
\begin{theorem}[Lower bound for \Cref{prob:distinguish}(\ref{prob:oblivious-distinguish}) and $p$-norm error]\label{thm:distinguish-c-copy}
Using (possibly adaptive) measurement strategy with $c$-copy measurements, consider the many-versus-one distinguishing tasks above with any $\rho_{\theta,\varphi}$ well-defined and $(\theta,\varphi)\in\DD_{3\epsilon,p}^{{ Q },{ T }}(\rho_0)$.  
The sample complexity required to solve this task is
\begin{align}
N =\Omega\left(\frac{\Gamma^{\ob}_{p}(\{O_i\}_{i=1}^m) }{c \epsilon^2}\right), 
\end{align}
for any $\epsilon\leq \min\{\eta^{\ob},\eta^{\ob}_{c} \}$ where
\begin{align}
    \eta^{\ob}_{c} :=  \min\left\{ \frac{1}{3c a_{\max} \sqrt{\Gamma^{\ob}_{p}(\{O_i\}_{i=1}^m)}  }  , \frac{1}{12 c \sqrt{ a_{\max}}} \right\}, 
\end{align}
with 
\begin{gather}
a_{\max}:=\max_{\rho_0\in\cS_{1/2},\|\theta\|_p=1} a_{\rho_0}(\theta),
\\
a_{\rho_0}(\theta):=\frac{\tr\big((\theta\cdot \Vec{Q}')\rho_0^{-1}(\theta\cdot \Vec{Q}')\big)}{d^2}= \theta^\top G^{(\rho_0)}\theta,\quad 
G^{(\rho_0)}_{ij}:= \frac{\tr(Q_i'\rho_0^{-1}Q_j')}{d^{2}}, 
\end{gather}
$Q'$ is a function of $\rho_0$ as defined in \Cref{thm:distinguish_thres_general}, 
and $p\in[1,\infty]$.  
\end{theorem}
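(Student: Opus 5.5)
The plan is to start from the learning-tree bound in \eqref{eq:minimax_learning}: any protocol in $\cM_{c,N}$ needs $\Omega(c/\delta_{\cM_c}(O))$ samples. It therefore suffices to show $\delta_{\cM_c}(O)=O(c^2\epsilon^2/\Gamma^\ob_p)$ for $\epsilon\le\min\{\eta^\ob,\eta^\ob_c\}$.

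\emph{Reduction to a single-copy quantity.} As in \Cref{thm:distinguish_thres_general}, restrict the adversary to $\rho_0\in\cS_{1/2}$ and the basis $(Q'(\rho_0),T)$. Since $\epsilon\le\eta^\ob$, every $(\theta,\varphi)$ with $\norm{\theta}_p=3\epsilon$ and $\varphi$ small is admissible. We then allow the adversary to randomize over $(\rho_0,\theta,\varphi)$ via \lemmaref{lemma:minimax}, so that for each fixed $M\in\cM_c$ we only need to bound $\bE_\pi\chi^2_M(\rho_{\theta,\varphi}^{\otimes c}\|\rho_0^{\otimes c})$ for a well-chosen $\pi$. We take $\pi=\Lambda_{c_2}\pi^\star$ with $\pi^\star$ nearly optimal in the single-copy minimax. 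By \eqref{eq:near-optimal}, its single-copy value is within a factor $16$ of $\inf_\pi\sup_M$, which by the single-copy threshold theorem is $\Theta(\epsilon^2/\Gamma^\ob_p)$, uniformly in $c_2\ge1$.

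\emph{Expansion in copies.} Write $\rho_{\theta,\varphi}=\rho_0+\Delta$ and expand
\[
\rho_{\theta,\varphi}^{\otimes c}-\rho_0^{\otimes c}=\sum_{k\ge1}\sum_{|S|=k}\Delta^{\otimes S}\otimes\rho_0^{\otimes S^c}.
\]
Use $\sqrt{\chi^2}$ as a norm on the vector $s\mapsto \tr(F_s X)/\sqrt{\tr(F_s\rho_0^{\otimes c})}$ and apply the triangle inequality to split it into a first-order part ($k=1$) and higher-order parts.

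For the first-order term, each summand equals $\sqrt{\chi^2}$ of a single-copy measurement $G^{[i]}$ obtained by tracing the other copies against $\rho_0$. This gives $\sqrt{\chi^2}\le\sum_i\big((\theta,\varphi)^\top I(\rho_0,G^{[i]})(\theta,\varphi)\big)^{1/2}$, the same mechanism as $I(\rho_0^{\otimes c},M)^{1/2}\preceq\sum_i I(\rho_0,G^{[i]})^{1/2}$. After taking $\bE_\pi$ and applying Cauchy--Schwarz, this is at most $c\sqrt{\sup_G\bE_\pi(\cdot)}=O(c\,\epsilon/\sqrt{\Gamma^\ob_p})$.

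For the $k$-th order term, bound it by the operator Cauchy--Schwarz inequality $\chi^2\le\tr\big(X\,(\rho_0^{\otimes c})^{-1}X\big)$, which factorizes. This gives at most $\binom{c}{k}\big(\tr(\Delta\rho_0^{-1}\Delta)\big)^{k/2}$.

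\emph{Removing $\varphi$ and fixing the threshold.} The $\varphi$-dependence in the higher-order terms is the main obstacle, since $\tr(\Delta\rho_0^{-1}\Delta)$ involves both $\theta$ and $\varphi$. Here the scaling $\Lambda_{c_2}$ is used: letting $c_2\to\infty$ collapses $\varphi\to0$, while the first-order bound stays uniform by \eqref{eq:near-optimal}. In the limit $\tr(\Delta\rho_0^{-1}\Delta)=d^2\cdot 9\epsilon^2 a_{\rho_0}(\theta/3\epsilon)/d^2\le 9\epsilon^2a_{\max}$. The higher-order total is then at most
\[
\sum_{k\ge2}\binom{c}{k}(3\epsilon\sqrt{a_{\max}})^k\le (1+3\epsilon\sqrt{a_{\max}})^c-1-3c\epsilon\sqrt{a_{\max}}.
\]
When $3c\epsilon\sqrt{a_{\max}}\le1/4$, this is $O\big((c\epsilon\sqrt{a_{\max}})^2\big)$. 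Because $\epsilon\le 1/(3c\,a_{\max}\sqrt{\Gamma^\ob_p})$, this is in turn $O(c\epsilon/\sqrt{\Gamma^\ob_p})$, i.e.\ at most comparable to the first-order term; these are exactly the two conditions in $\eta^\ob_c$.

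\emph{Conclusion.} Combining the two parts gives $\sqrt{\delta_{\cM_c}}=O(c\epsilon/\sqrt{\Gamma^\ob_p})$. Hence $\delta_{\cM_c}(O)=O(c^2\epsilon^2/\Gamma^\ob_p)$, and $N=\Omega(c/\delta_{\cM_c})=\Omega(\Gamma^\ob_p/(c\epsilon^2))$.

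A delicate point is making the limit $c_2\to\infty$ rigorous. The $\varphi$-contribution to the higher-order terms vanishes continuously, and the minimax step must be applied with $M\in\cM_c$ fixed before choosing $\pi$. To handle this, we would optimize the single-copy reduction inside the sup over the induced measurements $G^{[i]}$.
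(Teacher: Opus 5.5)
Your proposal follows the paper's own proof essentially step for step. It uses the same triangle inequality for $\sqrt{\chi^2}$ over subsets $S\subseteq[c]$, and the same partial-trace reduction of the $|S|=1$ terms to a single-copy Fisher information controlled by the near-optimal $\Lambda_{c_2}\pi^\star$ of \eqref{eq:near-optimal}. The higher-order terms are handled by the same operator Cauchy--Schwarz bound $\tr\big(X(\rho_0^{\otimes c})^{-1}X\big)$, followed by the same $c_2\to\infty$ squeezing of $\varphi$ and binomial-tail estimate, yielding exactly the two conditions in $\eta^{\ob}_{c}$.

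The one step to watch is that the induced single-copy POVM $G^{[i]}$ depends on $\rho_0$. So when $\pi^\star$ randomizes $\rho_0$, your bound of the first-order term by $\sup_G\bE_\pi(\cdot)$ does not follow automatically. The paper glosses the same point in the same place: it writes $\bE_{\pi^\star}\sup_{M}$ inside $\Delta_1$, but then invokes a bound that only controls $\sup_M\bE_{\pi^\star}$.
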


\begin{proof}

Note that the sample complexity bound from \eqref{eq:minimax_learning} is now $\Omega(c/\delta_{\cM_c}(O))$ where
\begin{align}
\delta_{\cM_c}(O) =  \sup_{M\in\cM_c} \inf_{\substack{\rho_0 \in \cS^\circ\\ (Q,T) \in \BB(O)}} \min_{(\theta,\varphi)\in\DD_{3\epsilon,p}^{{ Q },{ T }}(\rho_0)} 
\chi_{M}^2\left(\rho_{\theta,\varphi}^{\otimes c}\|\rho_0^{\otimes c}\right).
\end{align}
We denote the $c$-copy POVM as $M=\{M_s\}_s$, and the classical distributions
\begin{align}
q_s=\tr(M_s \rho_0^{\otimes c}),\qquad 
p_s(\theta,\varphi)=\tr\left(M_s \rho_{\theta,\varphi}^{\otimes c}\right),
\end{align}
the $\chi^2$ divergence induced by $M$ is
\begin{align}
\chi^2_M(\rho_{\theta,\varphi}^{\otimes c}\|\rho_0^{\otimes c})
=\sum_s \frac{\big(p_s(\theta,\varphi)-q_s\big)^2}{q_s}.
\end{align}
We expand $p_s(\theta,\varphi)$ around $\rho_0$. Using the notation 
\begin{align}
\rho_{\theta,\varphi}
=\rho_0 +\frac{(\theta,\varphi)\cdot(\Vec{Q},\Vec{T})}{d},
\end{align}
where $\Vec{Q}=(Q_1,...,Q_{\abs{A}})$ and $\Vec{T}=(T_1,...,T_{\abs{B}})$, the multinomial expansion over subsets $S\subseteq[c]$ is 
\begin{align}
\rho_{\theta,\varphi}^{\otimes c}
=\sum_{S\subseteq[c]}\left(\frac{(\theta,\varphi)\cdot(\Vec{Q},\Vec{T})}{d}\right)^{(S)}
\otimes \rho_0^{([c]\backslash S)},
\end{align}
where $(\cdot)^{(S)}$ denotes placing the operator on the tensor factors indexed by $S$ and identity elsewhere. 
Hence
\begin{align}
p_s(\theta,\varphi)-q_s
=\sum_{\emptyset\neq S\subseteq[c]} 
\tr\left(M_s\left((\theta,\varphi)\cdot(\Vec{Q},\Vec{T})\right)^{(S)}  \otimes \rho_0^{([c]\backslash S)}\right)\frac{1}{d^{|S|}},
\end{align}
Therefore,
\begin{align}
\chi^2_M(\rho_{\theta,\varphi}^{\otimes c}\|\rho_0^{\otimes c})
&=\sum_s \frac{1}{\tr(M_s \rho_0^{\otimes c})}\left(\sum_{\emptyset\neq S\subseteq[c]}\frac{\tr\left(M_s\left((\theta,\varphi)\cdot(\Vec{Q},\Vec{T})\right)^{(S)}  \otimes \rho_0^{([c]\backslash S)}\right)}{d^{|S|}}\right)^{2}. 
\end{align} 
For some fixed $T$ in the dual basis, we pick some $\pi^\star \in \tcD^T$ satisfying \eqref{eq:near-optimal}.
Let $(\bar{\theta},\bar{\varphi}) := (\theta,\varphi)/d$. Since
\allowdisplaybreaks
\begin{align}
\begin{split}
& \chi^2_M(\rho_{\theta,\varphi}^{\otimes c}\|\rho_0^{\otimes c})\\
= & \sum_s \frac{1}{\tr(M_s \rho_0^{\otimes c})}\left(
\sum_{\emptyset\neq S\subseteq[c]}\frac{\tr\left(M_s\left((\theta,\varphi)\cdot(\Vec{Q},\Vec{T})\right)^{(S)} \rho_0^{([c]\backslash S)}\right)}{d^{|S|}}
\right)^{2} \\
\le& \left(\sum_{\emptyset \ne S \subseteq [c]} \frac{1}{w_S} \sum_s \frac{\tr\left(M_s\left((\theta,\varphi)\cdot(\Vec{Q},\Vec{T})\right)^{(S)} \rho_0^{([c]\backslash S)}\right)^2}{\tr(M_s \rho_0^{\otimes c})d^{2|S|}}\right)\left(\sum_{\emptyset \ne S \subseteq [c]} w_S\right)\\
\leq&\left(
\sum_{\emptyset\neq S\subseteq[c]}\sqrt{ \sum_s \frac{\tr\left(M_s\left((\bar{\theta},\bar{\varphi})\cdot(\Vec{Q},\Vec{T})\right)^{(S)} \rho_0^{([c]\backslash S)}\right)^{2}}{\tr(M_s\rho_0^{\otimes c})}}\right)^{2},
\end{split}
\end{align}
where the third step used Cauchy-Schwartz and the last step optimizes over all choices of $w_S$, we have for any fixed $T$, 
\begin{align}
\begin{split}
\sqrt{\delta_{\cM_c}(O)}
&\leq  \sup_{M\in\cM_c} \inf_{\substack{\rho_0 \in \cS_{1/2}}} \min_{(\theta,\varphi)\in\DD_{3\epsilon,p}^{{ Q },{ T }}(\rho_0)} 
\sqrt{\chi_{M}^2\left(\rho_{\theta,\varphi}^{\otimes c}\|\rho_0^{\otimes c}\right)}\\
&\leq  \sup_{M\in\cM_c} \inf_{\pi \in \tcD^T} \bE_\pi 
\sqrt{\chi_{M}^2\left(\trho_{\theta,\varphi}^{\otimes c}\|\rho_0^{\otimes c}\right)}\\ 
&\leq   \sup_{M\in\cM_c} \inf_{\pi \in \tcD^T} \bE_\pi 
\sum_{\emptyset\neq S\subseteq[c]}\sqrt{ \sum_s \frac{\tr\left(M_s\left((\bar{\theta},\bar{\varphi})\cdot(\Vec{\tQ},\Vec{T})\right)^{(S)} \rho_0^{([c]\backslash S)}\right)^{2}}{\tr(M_s\rho_0^{\otimes c})}}\\
&\leq   \sup_{M\in\cM_c} \bE_{\pi^\star} 
\sum_{\emptyset\neq S\subseteq[c]}\sqrt{ \sum_s \frac{\tr\left(M_s\left((\bar{\theta},\bar{\varphi})\cdot(\Vec{\tQ},\Vec{T})\right)^{(S)} \rho_0^{([c]\backslash S)}\right)^{2}}{\tr(M_s\rho_0^{\otimes c})}} \leq \Delta_1^{T,\pi^\star} + \Delta_2^{T,\pi^\star}, \\
\end{split}
\end{align}
where $\trho_{\theta,\varphi} = \rho_0 + \frac{1}{d}\sum_{a \in A} \theta_a \tQ_a + \frac{1}{d}\sum_{b \in B} \varphi_b T_b$, 
\begin{align}
\Delta_1^{T,\pi^\star} := 
\sum_{\abs{S} = 1}\sqrt{\bE_{\pi^\star}   \sup_{M\in\cM_c}  \sum_s \frac{\tr\left(M_s\left((\bar{\theta},\bar{\varphi})\cdot(\Vec{\tQ},\Vec{T})\right)^{(S)} \rho_0^{([c]\backslash S)}\right)^{2}}{\tr(M_s\rho_0^{\otimes c})}},
\end{align}
and 
\begin{align}
\Delta_2^{T,\pi^\star} :=
\sum_{\abs{S} > 1} \sqrt{  \bE_{\pi^\star}  \sup_{M\in\cM_c} \sum_s \frac{\tr\left(M_s\left((\bar{\theta},\bar{\varphi})\cdot(\Vec{\tQ},\Vec{T})\right)^{(S)} \rho_0^{([c]\backslash S)}\right)^{2}}{\tr(M_s\rho_0^{\otimes c})}}. 
\end{align}
Note that the above inequality still holds when replacing $\pi^\star$ with any other distribution.

We first prove the lowest-order term is bounded as expected. 
Define, from any $c$-copy POVM element $M_s$, a single-copy operator
\begin{align}
\label{eq:def-G-s}
G_s:=\tr_{[c]\setminus\{1\}}
\big( (\id \otimes \rho_0^{(c-1)}) M_s \big). 
\end{align}
We can verify that $\{G_s\}_s$ is a valid single-copy measurement.
\begin{itemize}
    \item Positivity: $G_s\ge 0$ since $G_s^\dagger = \tr_{[c]\setminus\{1\}} \big( M_s (\id \otimes \rho_0^{(c-1)})  \big) = \tr_{[c]\setminus\{1\}} \big( (\id \otimes \rho_0^{(c-1)})   M_s \big) = G_s$ by cyclicity and $\tr(A G_s)=\tr(M_s (A\otimes \rho_0^{\otimes(c-1)}))$ for any operator $A$, implying $G_s$ is positive semidefinite. 
    \item Completeness: $\sum_s G_s= \tr_{[c]\setminus\{1\}}(\id\otimes \rho_0^{\otimes(c-1)})=\id$ (because $\tr(\rho_0)=1$). So $G=\{G_s\}_s$ is a valid single-copy POVM.
    \item Crucial identities: For every $x$ and every observable $A$ acting on the first copy,
    \begin{align}
    \frac{\tr\left(M_s(A^{(1)}\otimes \rho_0^{\otimes(c-1)})\right)^2}{\tr(M_s\rho_0^{\otimes c})}=\frac{\tr(AG_s)^2}{\tr(G_s\rho_0)}.
    \end{align}
\end{itemize}
Now, we note that $A=(\bar{\theta},\bar{\varphi})\cdot(\Vec{Q},\Vec{T})$, the $|S|=1$ functional equals the single-copy functional for $G$ with base state $\rho_0$:
\begin{align}
\sum_s \frac{\tr\left(M_s\left((\bar{\theta},\bar{\varphi})\cdot(\Vec{Q},\Vec{T})\right)^{(1)} \rho_0^{ (c-1)}\right)^{2}}{\tr(M_s\rho_0^{ \otimes c})} = \sum_s \frac{\tr\left(G_s\left((\bar{\theta},\bar{\varphi})\cdot(\Vec{Q},\Vec{T})\right)\right)^{2}}{\tr(G_s\rho_0)} = (\theta,\varphi)^\top I(\rho_0,G) (\theta,\varphi).\label{eq:c-to-1}
\end{align}
Here $^{(1)}$ means operator acting on the first qudit. 
The mapping sends any $M\in\mathcal M_c$ to some $G\in\cM$ that is a single-copy POVM. Furthermore, any single-copy POVM $G$ can be written in the form of \eqref{eq:def-G-s} when taking $M_s = G_s \otimes \id^{(c-1)}$. Therefore,
\begin{align}
\sup_{M\in\cM_c} \sum_s \frac{\tr\left(M_s\left((\bar{\theta},\bar{\varphi})\cdot(\Vec{Q},\Vec{T})\right)^{(1)} \rho_0^{\otimes(c-1)}\right)^{2}}{\tr(M_s\rho_0^{\otimes c})}
= \sup_{M \in \cM} (\theta,\varphi)^\top I(\rho_0,M) (\theta,\varphi).
\end{align}
Due to \eqref{eq:near-optimal}, we have 
\begin{equation}
\frac{(3\epsilon)^2}{\Gamma^{\ob}_{p}(\{O_i\}_{i=1}^m)} \leq \frac{(\Delta_1^{T,\Lambda_{c_2}\pi^\star})^2}{c} \leq \frac{16 (3\epsilon)^2}{\Gamma^{\ob}_{p}(\{O_i\}_{i=1}^m)} ,
\end{equation}
for any $c_2 \geq 1$. 

It remains to derive the threshold on $\epsilon$ below which we can ignore the contributions from higher-order terms. Since 
\begin{align}
\sqrt{\delta_{\cM_c}(O)} \leq \Delta_1^{T,\Lambda_{c_2}\pi^\star} + \Delta_2^{T,\Lambda_{c_2}\pi^\star}
\end{align}
holds for any $c_2 \geq 1$, it is sufficient to show 
\begin{equation}
\inf_{c_2 \geq 1} \Delta_2^{T,\Lambda_{c_2}\pi^\star} \leq 
c \sqrt{ \frac{(3\epsilon)^2}{\Gamma^{\ob}_{p}(\{O_i\}_{i=1}^m)}}. 
\end{equation}

Fix a subset $S$ with $|S|=k$. Set
\begin{align}
N_s:=((\rho_0)^{1/2})^{\otimes c} M_s ((\rho_0)^{1/2})^{\otimes c} \qquad(\Rightarrow\ \sum_s N_s=\rho_0^{\otimes c},  \tr(N_s)=\tr(M_s\rho_0^{\otimes c})),
\end{align}
and define
\begin{align}
B_S:=\bigotimes_{i\in S}\big(\rho_0^{ -1/2}\left((\bar{\theta},\bar{\varphi})\cdot(\Vec{Q},\Vec{T})\right)^{(i)}\rho_0^{ -1/2}\big)\otimes\bigotimes_{i\notin S} \id.
\end{align}
Then
\begin{align}
\tr\Big(M_s \left((\bar{\theta},\bar{\varphi})\cdot(\Vec{Q},\Vec{T})\right)^{(S)}\otimes\rho_0^{ ([c]\backslash S )}\Big)=\tr\big(N_sB_S\big),\qquad\tr(M_s\rho_0^{ \otimes c})=\tr(N_s).
\end{align}
By Cauchy–Schwarz in the Hilbert–Schmidt inner product,
\begin{align}
\sum_s \frac{\tr(N_s B_S)^2}{\tr(N_s)}\le \tr\left(B_S\left(\sum_s N_s\right)B_S\right)=\tr\big(B_S\rho_0^{\otimes c}B_S\big).
\end{align}
Evaluating the right-hand side factorizes over tensor slots, giving
\begin{align}
\tr\big(B_S\rho_0^{\otimes c}B_S\big)=\Big[\tr\left(\left((\bar{\theta},\bar{\varphi})\cdot(\Vec{Q},\Vec{T})\right)\rho_0^{ -1}\left((\bar{\theta},\bar{\varphi})\cdot(\Vec{Q},\Vec{T})\right)\right)\Big]^k.
\end{align}
Therefore, we have 
\begin{align}
\begin{split}
\inf_{c_2\geq 1}    \Delta_2^{T,\Lambda_{c_2}\pi^\star}  & =  
\inf_{c_2\geq 1} \sum_{\abs{S} > 1} \sqrt{  \bE_{\pi^\star}  \sup_{M\in\cM_c} \sum_s \frac{\tr\left(M_s\left((\bar{\theta},\bar{\varphi})\cdot(\Vec{\tQ},\Vec{T})\right)^{(S)} \rho_0^{([c]\backslash S)}\right)^{2}}{\tr(M_s\rho_0^{\otimes c})}}\\
&\leq \inf_{c_2\geq 1} \sum_{k=2}^c \binom{c}{k} \frac{1}{d^{k}} \bE_{\Lambda_{c_2}\pi^\star} \Big[\tr\left(\left((\bar{\theta},\bar{\varphi})\cdot(\Vec{\tQ},\Vec{T})\right)\rho_0^{ -1}\left((\bar{\theta},\bar{\varphi})\cdot(\Vec{\tQ},\Vec{T})\right)\right)\Big]^{k/2}\\  
&\leq \sum_{k=2}^c \binom{c}{k} \frac{1}{d^{k}} \bE_{\scrp^\star} \Big[\tr\left(\left((\bar{\theta})\cdot(\Vec{Q}')\right)\rho_0^{ -1}\left((\bar{\theta})\cdot(\Vec{Q}')\right)\right)\Big]^{k/2},
\end{split}
\end{align}
where $\bE_{\scrp^\star} = \bE_{(\rho_0,\tQ,\varphi) \sim \scrp^\star}$. 
Here we use, as $c_2 \rightarrow \infty$, 
\begin{equation}
    \lim_{c_2 \rightarrow \infty} \Lambda_{c_2}\pi^\star(\rho_0,\tQ,\theta,\varphi) = \scrp^\star(\rho_0,\theta) \delta(\varphi) \delta(\tilde Q - Q'(\rho_0)),
\end{equation}
where $\scrp^\star(\rho_0,\theta) = \int \pi^\star(\rho_0,\tQ,\theta,\varphi) \mathrm{d}\varphi \mathrm{d}\tQ$ and $\delta(\cdot)$ is the delta function. The $\delta(\tilde Q - Q'(\rho_0))$ part stems from the definition of $\tcD^{'T}$ and the $\delta(\varphi)$ part is because infinite squeezing in the domain of $\varphi$ collapses the distribution into a delta function. Define the quadratic form
\begin{align}
\begin{split}
&a_{\rho_0}(\theta):=\frac{\tr\bigg(\left(\theta\cdot \Vec{Q}'\right)\rho_0^{-1}\left(\theta\cdot \Vec{Q}'\right)\bigg)}{d^2}= \theta^\top G^{(\rho_0)}\theta ,\\ &G^{(\rho_0)}_{ij}:= \frac{\tr(Q_i'\rho_0^{-1}Q_j')}{d^{2}}.
\end{split}
\end{align}
With $x = \sqrt{ a_{\rho_0}(\theta) } $, 
\begin{align}
\inf_{c_2\geq 1}    \Delta_2^{T,\Lambda_{c_2}\pi^\star} \leq  \bE_{\scrp^\star}  [(1 + x)^c - 1 - c x]. 
\end{align}
Using $(1+x)^c\le e^{cx}$ and $e^t\le 1+t+\tfrac{t^2}{2}e^t$ for $t\ge0$,
\begin{align}
\frac{\inf_{c_2\geq 1}    \Delta_2^{T,\Lambda_{c_2}\pi^\star}}{c \sqrt{ \frac{(3\epsilon)^2}{\Gamma^{\ob}_{p}(\{O_i\}_{i=1}^m)}}}\le \frac{\bE_{\scrp^\star}[ (cx)^2 e^{cx} ] }{2 c \sqrt{ \frac{(3\epsilon)^2}{\Gamma^{\ob}_{p}(\{O_i\}_{i=1}^m)}}}. 
\end{align}
We define
\begin{align}
a_{\max}:=\max_{\rho_0\in\cS_{1/2},\|\theta\|_p=1} a_{\rho_0}(\theta),
\end{align}
where we explicitly indicate the dependence of $a_{\max}$ on $Q'$. 
Therefore, with $x \leq (3\epsilon) \sqrt{ a_{\max} }$, we get
\begin{align}
\bigg( \frac{\inf_{c_2\geq 1}    \Delta_2^{T,\Lambda_{c_2}\pi^\star}}{c \sqrt{ \frac{(3\epsilon)^2}{\Gamma^{\ob}_{p}(\{O_i\}_{i=1}^m)}}} \bigg)^2 
\leq \frac{c^2 (3\epsilon)^4 a_{\max}^2 \Gamma^{\ob}_{p}(\{O_i\}_{i=1}^m)  e^{6c \epsilon \sqrt{ a_{\max}}} }{4(3\epsilon)^2 }  . 
\end{align}
To ensure the above is upper bounded by $1$, it suffices to impose e.g. $6c \epsilon \sqrt{ a_{\max}} \leq 1/2$ and $\frac{c^2 (3\epsilon)^4 a_{\max}^2 \Gamma^{\ob}_{p}(\{O_i\}_{i=1}^m)  }{4(3\epsilon)^2 } \leq 1/4$, which implies \sloppy
\begin{align}
\epsilon \leq \min\left\{ \frac{1}{3c a_{\max} \sqrt{\Gamma^{\ob}_{p}(\{O_i\}_{i=1}^m)}  }  , \frac{1}{12 c \sqrt{ a_{\max}}} \right\}
\end{align}
is sufficient. 
\end{proof}

\section{Lower bounds for unbiased, bounded estimation}
\label{sec:unbiased}

In this section, we use the CR bound to prove a lower bound on the sample complexity required to solve \Cref{prob:p-estimation}, i.e., estimation of parameters with $p$-norm error. Since the CR bound applies to only unbiased estimators, the sample complexity lower bound is also restricted to unbiased estimators. However, the bounds applies to general learning and estimation with $p$-norm error that is not restricted to the oblivious cases. 

\subsection{Single-copy measurements}

\begin{theorem}[Lower bound for estimation with $p$-norm error using bounded and unbiased estimators]\label{thm:estimation-lower-single}
Using the adaptive measurement strategy with single-copy measurements, the sample complexity of $\rho$ required to obtain a bounded, unbiased estimator of $\theta$ in \Cref{prob:p-estimation} is
\begin{equation}
\label{eq:unbiased-lower}
    N = \Omega\left(\inf_{M\in\cM} \sup_{\rho_0 \in \cS^\circ} \frac{\norm{\diag\big( (I(\rho_0,M)^{-1})_{AA}\big)^{1/2}}_p^2}{\epsilon^2 \log(m^{1/p}/\epsilon) }\right) 
    = \Omega\left(\frac{ \Gamma_{p}(\{O_i\}_{i=1}^m) }{\epsilon^2 \log(m^{1/p}/\epsilon)} \right),
\end{equation}
for any $\epsilon > 0$ and $p \in [2,\infty]$, where $I(\rho_0,M) = I(\rho_{\theta,\varphi},M)|_{\theta=\varphi=0}$, $\rho_{\theta,\varphi} = \rho_0 + \frac{1}{d}\sum_{a \in A} \theta_a Q_a + \frac{1}{d}\sum_{b \in B} \varphi_b T_b$, and $\diag(\cdot)$ represents a diagonal matrix whose diagonal entries are those of $(\cdot)$. Here ``bounded'' means the values of estimators $\htheta_i$ are always away from true values by a constant. 
\end{theorem}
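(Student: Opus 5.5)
The proof goes through the Cram\'{e}r--Rao bound in three stages, following the outline in the technical overview. First, an adaptive single-copy protocol has a per-round Fisher information bounded by a fixed measurement. Second, unbiasedness then bounds the $p$-average RMSE from below. Third, for $p\ge2$, a small $p$-norm error with high probability forces a small $p$-average RMSE, at the cost of a logarithmic factor coming from boundedness. Throughout, we evaluate at the true parameter $(\theta,\varphi)=(0,0)$, i.e. $\rho=\rho_0$, where $\rho_0\in\cS^\circ$ is chosen adversarially.

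\textbf{Step 1 (adaptivity).} Write the protocol as a learning tree of depth $N$ as in \Cref{def:learning-tree}. By the chain rule, the FIM of the leaf distribution at $(0,0)$ equals
\begin{equation}
\sum_{t=1}^N \bE_{u_t}\, I(\rho_0,M_{u_t}),
\end{equation}
where $u_t$ is the depth-$t$ node reached under $\rho_0$. A mixture of POVMs has FIM equal to the average of the FIMs. Hence this total equals $N\, I(\rho_0,\bar M)$ for the single averaged POVM $\bar M$, which labels each outcome by the node it came from.

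\textbf{Step 2 (CR bound).} An unbiased estimator $\htheta$ of $\theta$ (possibly depending on the whole leaf) is also locally unbiased for $\theta$ jointly with $\varphi$. The CR bound \eqref{eq:CR-bound} on the full parameter vector, restricted to the $AA$ block, gives
\begin{equation}
\bE(\htheta_i-\theta_i)^2 \ge \tfrac1N (I(\rho_0,\bar M)^{-1})_{ii}.
\end{equation}
Taking the $p/2$-norm of the vector of variances yields
\begin{equation}
\Big\|\big(\bE(\htheta_i-\theta_i)^2\big)_i^{1/2}\Big\|_p^2 \ge \tfrac1N\big\|\diag\big((I(\rho_0,\bar M)^{-1})_{AA}\big)^{1/2}\big\|_p^2 .
\end{equation}
We then take the supremum over $\rho_0$ and the infimum over the measurement on the right-hand side. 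This is legitimate because $\bar M$ depends on $\rho_0$ only through the tree, and the protocol must succeed for every $\rho_0$.

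\textbf{Step 3 (error to RMSE), the main obstacle.} We must show that if $\|\htheta-\theta\|_p<\epsilon$ with probability $\ge 1-\delta$ and $|\htheta_i-\theta_i|\le C$, then
\begin{equation}
\big\|(\bE|\htheta_i-\theta_i|^2)^{1/2}\big\|_p \lesssim \epsilon
\end{equation}
up to logarithmic factors. For $p\ge2$, Jensen's inequality (concavity of $x\mapsto x^{2/p}$) gives
\begin{equation}
\sum_i(\bE e_i^2)^{p/2}\le \bE\sum_i |e_i|^p = \bE\|e\|_p^p,
\end{equation}
where $e=\htheta-\theta$. This is why $p\ge 2$ is needed. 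We then split on the good event and its complement:
\begin{equation}
\bE\|e\|_p^p \le \epsilon^p + \delta\, m C^p .
\end{equation}
Choosing $\delta\le(\epsilon/(m^{1/p}C))^p$ makes the second term negligible. To reach such a small $\delta$ from constant success probability, we boost the success probability with median-of-means, which costs a factor $O(\log(1/\delta))=O(\log(m^{1/p}/\epsilon))$. This accounts for the $\log(m^{1/p}/\epsilon)$ in \eqref{eq:unbiased-lower}.

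There is a subtlety here: the boosted estimator may lose exact unbiasedness. We handle it by applying the CR bound before boosting, to a repetition-averaged unbiased estimator. Concretely, we use the mean of $k=O(\log(m^{1/p}/\epsilon))$ independent unbiased runs. It remains unbiased and has variance reduced by $k$. Its tail can be controlled by Hoeffding's inequality (\Cref{lem:Hoeffding}) using boundedness, combined with the median-type high-probability guarantee. If that route fails, we instead argue directly that the truncated MSE on the failure event is at most $\delta C^2$ per coordinate.

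Combining the three steps gives
\begin{equation}
N\cdot O(\log(m^{1/p}/\epsilon))\,\epsilon^2 \gtrsim \inf_M\sup_{\rho_0}\big\|\diag\big((I^{-1})_{AA}\big)^{1/2}\big\|_p^2,
\end{equation}
which is exactly $\Gamma_p$ and proves \eqref{eq:unbiased-lower}.
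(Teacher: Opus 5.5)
Your outline is the same as the paper's: chain rule for adaptivity, Cram\'er--Rao on the $AA$ block, Jensen for $p\ge 2$, then boosting. However, Step 3 has a genuine gap exactly where you flagged it, and your repair does not close it. Cram\'er--Rao must be applied to an \emph{unbiased} estimator whose $p$-average RMSE is $O(\epsilon)$. The hypothesis only gives one unbiased run with $\|\htheta-\theta\|_p<\epsilon$ with probability $2/3$ and $|\htheta_i-\theta_i|\le C$. Nothing in it prevents that run from having $\bE(\htheta_i-\theta_i)^2\approx C^2/3$ in every coordinate (errors $\pm C$, placed symmetrically on the failure event), i.e.\ $p$-average RMSE of order $m^{1/p}C$.

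Your proposed fixes all fail against this example:
\begin{itemize}
\item The mean of $k$ runs stays unbiased, but its MSE only drops by a factor $k$. You would need $k\gtrsim m^{2/p}C^2/\epsilon^2$, not $O(\log(m^{1/p}/\epsilon))$.
\item Hoeffding for the mean only concentrates at scale $C\sqrt{\log(1/\delta)/k}$.
\item The median-type high-probability guarantee belongs to the median, not the mean.
\item The fallback (failure-event contribution $\le\delta C^2$) needs $\delta$ small, which is only available after the boosting that introduces bias.
\end{itemize}
The paper's proof does not supply the missing idea either. It boosts with the $\|\cdot\|_p$ geometric median of $K$ copies and simply asserts that this is still unbiased before invoking Cram\'er--Rao. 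That is false in general: already for $m=1$, the median of three i.i.d.\ unbiased $\{0,1\}$-valued estimates with mean $\mu$ has mean $3\mu^2-2\mu^3$. Closing the gap needs a bias-tolerant inequality. One option is $\mathrm{Cov}(\htheta_K)\succeq J\,I^{-1}J^\top$, with $J$ the Jacobian of $\bE[\htheta_K]$, combined with an averaging (van Trees type) argument that controls $J$ from the uniform $O(\epsilon)$ bias. Alternatively, the hypothesis could be phrased directly in terms of MSE.

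Separately, the justification at the end of your Step 2 is wrong. The averaged POVM has elements $\frac1N p^{\rho_0}(u)F^u_s$, so it depends on $\rho_0$ through the node probabilities, not only through the tree. Write $g(\rho_0,M):=\|\diag((I(\rho_0,M)^{-1})_{AA})^{1/2}\|_p^2$ and let $\xi(\rho_0)$ be the $p$-average RMSE at $\rho_0$. For each $\rho_0$ you get $N\xi(\rho_0)^2\ge g(\rho_0,\bar M(\rho_0))$, and taking the supremum over $\rho_0$ yields only $\sup_{\rho_0}\inf_M g\le\Gamma_p$, not $\Gamma_p$ itself. The inf--sup form is fine for non-adaptive $M^{\otimes N}$. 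For adaptive protocols, the paper's chain-rule step has the same defect: its $\tilde M$ carries the weights $p_{x_1,\ldots,x_r}$, which are computed under $\rho_0$. It is not clear the exchange is even true, since an adaptive protocol can spend a few copies locating $\rho_0$ and then measure with an $M$ tailored to it.

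A minor point: taking $\delta\le(\epsilon/(m^{1/p}C))^p$ costs $p\log(m^{1/p}C/\epsilon)$, not $O(\log(m^{1/p}/\epsilon))$. Splitting $\sqrt{\bE e_i^2}\le\sqrt{\bE[e_i^2\,\id[G]]}+C\sqrt{\delta}$ before applying Jensen only needs $\delta\le\epsilon^2/(m^{2/p}C^2)$.
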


\begin{proof}
The proof consists of three steps. 
\begin{enumerate}[wide, labelwidth=!,itemindent=!,labelindent=0pt, leftmargin=0em, label={\arabic*.}, parsep=0pt]
    \item 
    First, we will show that for a fixed non-adaptive single-copy measurement strategy $M^{\otimes {N_1}}$, we need a sample complexity of 
    \begin{equation}
            {N_1} = \Omega\left(\sup_{\rho_0\in\cS^\circ} \frac{\norm{\diag\big( (I(\rho_0,M)^{-1})_{AA}\big)^{1/2}}_p^2}{\xi^2}\right)
    \end{equation}
    to construct an unbiased estimator $\htheta$ that achieves a $p$-average root-MSE smaller than $\xi$, i.e., 
    \begin{equation}
    \label{eq:p-norm-error}
        \left(\sum_{i=1}^m \bE[(\htheta_i-\theta_i)^2]^{p/2}\right)^{1/p} < \xi. 
    \end{equation}
    Given ${N_1}$ copies of parametrized quantum state 
    \begin{equation}
    \rho_{\theta,\varphi} = \rho_0 + \frac{1}{d}\sum_{a \in A} \theta_a Q_a + \frac{1}{d}\sum_{b \in B} \varphi_b T_b,
    \end{equation}
    where $\rho_0$ is known, $\{\theta_a\}_{a \in A}$ are to be estimated and $\{\varphi_b\}_{b \in B}$ are unknown (i.e. nuisance parameters), the CR bound states for any unbiased estimator $(\htheta^{({N_1})},\varphi^{({N_1})})$,
    \begin{equation}
        V(\rho_{\theta,\varphi},M^{\otimes {N_1}},(\htheta^{({N_1})},\varphi^{({N_1})})) \succeq \frac{1}{{N_1}} I(\rho_{\theta,\varphi},M)^{-1},
    \end{equation}
    where $V(\rho_{\theta,\varphi},M^{\otimes {N_1}},(\htheta^{({N_1})},\varphi^{({N_1})}))$ and $I(\rho_{\theta,\varphi},M)$ are the MSEM and the FIM with respect to both parameters $\{\theta_a\}_{a\in A}$ and $\{\varphi_b\}_{b\in B}$. We can take the upper left blocks of the matrices that only involve entries in $A$, which gives 
    \begin{equation}
    \label{eq:non-adaptive-CR}
        V(\rho_{\theta,\varphi},M^{\otimes {N_1}},\htheta^{({N_1})}) \succeq \frac{1}{{N_1}} (I(\rho_{\theta,\varphi},M)^{-1})_{AA},
    \end{equation}    
    Furthermore, we have 
    \begin{equation}
        \trace(\diag(V(\rho_{\theta,\varphi},M^{\otimes {N_1}},\htheta^{({N_1})}))^{p/2}) = \sum_{i=1}^m \bE[(\htheta_i-\theta_i)^2]^{p/2}. 
    \end{equation}
    That implies when \eqref{eq:p-norm-error} holds, 
    \begin{equation}
        \xi \geq 
        \frac{1}{\sqrt{{N_1}}}  \trace\Big(\diag\Big((I(\rho_{\theta,\varphi},M)^{-1})_{AA}\Big)^{p/2}\Big)^{1/p}.
    \end{equation}
    Since we would like the above to hold for arbitrary $(\theta,\varphi)$ such that $\rho_{\theta,\varphi}$ is well defined. Let 
    \begin{equation}
        \DD(\rho_0) := \{(\theta,\varphi)|\rho_{\theta,\varphi} \text{ is a full-rank density matrix}\},
    \end{equation}
    which is an open set in $\bR^{d^2-1}$. 
    Then 
    \begin{align}
    \begin{split}
        \xi 
        &\geq 
        \frac{1}{\sqrt{{N_1}}} \sup_{(\theta,\varphi) \in \DD(\rho_0)} \trace\Big(\diag\Big((I(\rho_{\theta,\varphi},M)^{-1})_{AA}\Big)^{p/2}\Big)^{1/p} \\
        &=  \frac{1}{\sqrt{{N_1}}} \sup_{\rho_0 \in \cS^\circ} \trace\Big(\diag\Big((I(\rho_0,M)^{-1})_{AA}\Big)^{p/2}\Big)^{1/p},
    \end{split}\label{eq:non-adaptive-CR-2}
    \end{align}
    where the last equality holds because choosing a specific value of $(\theta,\varphi)$ is equivalent to replacing the original $\rho_0$ with $\rho_{\theta,\varphi}$ and then setting its value to be zero. 
    \item Next, we show for $p \in [2,\infty]$, if an unbiased, bounded estimator $\htheta$ satisfies 
    \begin{equation}
        \norm{\htheta - \theta}_p = \bigg(\sum_{i=1}^m | \htheta_i - \theta_i |^p\bigg)^{1/p} < \epsilon, 
    \end{equation} 
    with probability $> 1-\delta$, and 
    \begin{equation}
        |\htheta_i - \theta_i| \leq u, 
    \end{equation}
    for all $i$ and some constant $u$, as required by \Cref{prob:p-estimation}, then we can construct another unbiased estimator $\htheta_{K}$ that satisfies \eqref{eq:p-norm-error} with $\xi = 4\sqrt{2}\epsilon$ and overhead $K = O\big(\log(m^{1/p}/\epsilon)\big)$, (i.e. $\htheta_{K}$ uses $K$ samples of the estimator $\htheta$). Without loss of generality, we assume $\delta < 1/3$. 
    
    Taking $K$ independent samples of $\htheta$, we define 
    \begin{equation} 
        \htheta_{K} := \text{Median}_p (\htheta^{[1]},\ldots,\htheta^{[K]}) = \arg\min_\ttheta \sum_{\ell=1}^K \norm{\htheta^{[\ell]} - \ttheta}_p,
    \end{equation}
    where $\htheta^{[\ell]}$ are the $\ell$-th sample of $\htheta$ and $\text{Median}_p (\htheta^{[1]},\ldots,\htheta^{[K]})$ is the geometric median. One property of the median estimator is if $\|\htheta^{[\ell]} - \bE[\htheta]\|_p < \epsilon$ holds for ratio $\kappa > 1/2$ of all $\ell$, we must have  $\|\htheta_{K} - \bE[\htheta]\|_p < \frac{2\kappa}{2\kappa - 1} \epsilon$. It can be seen by noting that 
    \begin{gather}
        \|\htheta_{K} - \htheta^{[\ell
        ]}\|_p - \|\htheta^{[\ell
        ]} - \bE[\htheta]\|_p > \|\htheta_{K} - \bE[\htheta]\|_p - 2\epsilon,\quad\text{for $\ell$ satisfying }\|\htheta^{[\ell]} - \bE[\htheta]\|_p < \epsilon; 
        \\ 
        \|\htheta_{K} - \htheta^{[\ell
        ]}\|_p - \|\htheta^{[\ell
        ]} - \bE[\htheta]\|_p \geq - \|\htheta_{K} - \bE[\htheta]\|_p ,\quad\text{otherwise. }
    \end{gather}
    Summing over all $\ell$, we have $0 > (2\kappa - 1) \|\htheta_{K} - \bE[\htheta]\|_p - 2 \kappa \epsilon$. We can take e.g. $\kappa = 2/3$, and it implies if $\|\htheta^{[\ell]} - \bE[\htheta]\|_p < \epsilon$ holds for ratio $2/3$ of all $\ell$, we must have  $\|\htheta_{K} - \bE[\htheta]\|_p < 4\epsilon$. Using the Hoeffding bound, we have 
    \begin{align}
    \begin{split}
        \prob\left[ \|\htheta_{K} - \bE[\htheta]\|_p 
        \geq 4\epsilon \right] &\leq  \prob\left[ \sum_{\ell=1}^K \id( \|\htheta^{[\ell]} - \bE[\htheta]\|_p \geq \epsilon ) \geq \frac{K}{3} \right]\\
        &\leq \exp\left(-2K \left(\frac{1}{3} -  \prob[\|\htheta - \bE[\htheta]\|_p \geq \epsilon]\right)^2\right)\\
        &< \exp\left(-2K \left(\frac{1}{3} - \delta \right)^2\right). 
    \end{split}
    \end{align}
    When $p < \infty$, using Jensen's inequality, 
    \begin{align}
    \begin{split}
        \sum_{i=1}^m \bE[|\htheta_{K,i}-\theta_i|^2]^{p/2} 
        &\leq \sum_{i=1}^m \bE[|\htheta_{K,i}-\theta_i|^p] = \bE[\|\htheta_K - \theta\|_{p}^{p}] \\
        &\leq \prob[\|\htheta_K - \theta\|_{p} < 4\epsilon] (4\epsilon)^p + \prob[\|\htheta_K - \theta\|_{p} \geq 4\epsilon] mu^p  \\
        &< \exp\bigg(-2K \bigg(\frac{1}{3} - \delta \bigg)^2\bigg)  mu^p + \bigg(1 - \exp\bigg(-2K \bigg(\frac{1}{3} - \delta \bigg)^2\bigg)\bigg) (4\epsilon)^p.  
    \end{split}
    \end{align}
    When $p = \infty$, 
    \begin{align}
        \max_i \bE[|\htheta_{K,i}-\theta_i|^2] 
        <  \exp\bigg(-2K \bigg(\frac{1}{3} - \delta \bigg)^2\bigg)  u^2 + \bigg(1 - \exp\bigg(-2K \bigg(\frac{1}{3} - \delta \bigg)^2\bigg)\bigg) (4\epsilon)^2.  
    \end{align}
    In both cases, when $K = \Omega\big( \log(m^{1/p}/\epsilon) \big)$, we can have 
    \begin{equation}
        \left(\sum_{i=1}^m \bE[(\htheta_{K,i}-\theta_i)^2]^{p/2}\right)^{1/p} \leq 4\sqrt{2}\epsilon. 
    \end{equation}

    \item 
    Given an unbiased, bounded estimator that achieves a $p$-norm estimation error with probability $> 2/3$ which takes $N$ copies, and use the estimator $K = \Theta(\log(m^{1/p}/\epsilon))$ times to calculate a Geometric Median estimator, we obtain an unbiased estimator on $N_1 = N \times K$ copies satisfying \eqref{eq:p-norm-error} with $\xi = 4\sqrt{2}\epsilon$. The CR bound implies 
    \begin{equation}
            N = \frac{N_1}{K} = \Omega\left(\sup_{\rho_0 \in \cS^\circ} \frac{\norm{\diag\big( (I(\rho_0,M)^{-1})_{AA}\big)^{1/2}}_p}{\epsilon^2 \log(m^{1/p}/\epsilon) }\right). 
    \end{equation} 
    Here we generalize the above discussion to adaptive single-copy measurements. Consider an adaptive measurement on $N$ copies of quantum states defined by $M^{(1,N)} = \{M^{(1,N)}_{x_1,\ldots,x_N}\} \in \cM_{1,N}$ where 
    \begin{equation}
        M^{(1,N)}_{x_1,\ldots,x_N} = M^{(1)}_{x_1} \otimes M^{(2)}_{x_1,x_2} \cdots \otimes   M^{(N)}_{x_1,x_2,\ldots,x_N}, 
    \end{equation}
    where $M^{(r)}_{x_1,\ldots,x_{r-1},x_r}$ is the POVM operator on the $r$-th copy that depends on all previous outcomes $(x_1,\ldots,x_{r-1})$ with measurement outcome $x_r$. The superscript in $M^{(1,N)}$ means $M$ acts on states from the $1$st to the $N$th copy. Using the CR bound, we have for any unbiased estimator $\htheta^{(N_1)}$, 
    \begin{equation}
        V(\rho_{\theta,\varphi},(M^{(1,N)})^{\otimes K},\htheta^{(N_1)}) \succeq \frac{1}{N_1} ( N  I(\rho_{\theta,\varphi}^{\otimes N},M^{(1,N)})^{-1})_{AA},
    \end{equation}
    which is a generalization of \eqref{eq:non-adaptive-CR}. Similarly, \eqref{eq:non-adaptive-CR-2} generalizes to 
    \begin{equation}
        \xi \geq \frac{1}{\sqrt{N_1}} \sup_{\rho_0 \in \cS^\circ} \trace\Big(\diag\Big((N I(\rho_0^{\otimes N},M^{(1,N)})^{-1})_{AA}\Big)^{p/2}\Big)^{1/p}. 
    \end{equation}
    The lower bound needs to apply to all adaptive POVMs, i.e., 
    \begin{equation}
        \xi \geq \frac{1}{\sqrt{N_1}} \inf_{M^{(1,N)} \in \cM_{1,N}} \sup_{\rho_0 \in \cS^\circ} \trace\Big(\diag\Big((N I(\rho_0^{\otimes N},M^{(1,N)})^{-1})_{AA}\Big)^{p/2}\Big)^{1/p}.
    \end{equation}
    To prove \eqref{eq:unbiased-lower} holds, we only need to show 
    \begin{gather}
        v^{(N)} := \inf_{M^{(1,N)} \in \cM_{1,N}} \sup_{\rho_0 \in \cS^\circ} \norm{\diag\Big((N I(\rho_0^{\otimes N},M^{(1,N)})^{-1})_{AA}\Big)^{1/2}}_p \\ 
        = \inf_{M \in \cM} \sup_{\rho_0 \in \cS^\circ} \norm{\diag\Big((I(\rho_0,M)^{-1})_{AA}\Big)^{1/2}}_p =: v^{(1)}.
    \end{gather}
    First, we note that $v^{(N)} \leq v^{(1)}$ by definition. On the other hand, $v^{(N)} \geq v^{(1)}$ holds because the FIM of adaptive measurements can be written as 
\begin{align}
I(\rho_0^{\otimes r},M^{(1,r)}) &= I(\{p_{x_1,x_2,\ldots,x_{r-k}}\}) + \sum_{(x_1,x_2,\ldots,x_{r-k})} p_{x_1,x_2,\ldots,x_{r-k}} I(\rho_0^{\otimes k},M^{(r-k+1,r)}_{x_1,x_2,\ldots,x_{r-k}}),\\
&= I(\rho_0^{\otimes r - k},M^{(1,r-k)}) + \sum_{(x_1,x_2,\ldots,x_{r-k})} p_{x_1,x_2,\ldots,x_{r-k}} I(\rho_0^{\otimes k},M^{(r-k+1,r)}_{x_1,x_2,\ldots,x_{r-k}}),
\end{align}
where $p_{x_1,x_2,\ldots,x_{r-k}}$ is the probability of obtaining measurement outcomes $x_1,x_2,\ldots,x_{r-k}$ and $M^{(r-k+1,r)}_{x_1,x_2,\ldots,x_{r-k}}$ is the POVM acting on states from the $r-k+1$ to the $r$th copy that depends on all previous measurement outcomes. Applying this decomposition trick multiple times, we have 
\begin{align}
\begin{split}
I(\rho_0^{\otimes N},M^{(1,N)}) 
&= I(\rho_0,M^{(1)}) + \sum_{x_1}p_{x_1} I(\rho_0,M_{x_1}^{(2)}) + \cdots + \sum_{x_1,\cdots ,x_{N-1}}p_{x_1,\cdots ,x_{N-1}} I(\rho_0,M_{x_1,\cdots ,x_{N-1}}^{(N)}) \\
&= N I(\rho_0,\tM),
\end{split}
\end{align}
where $\tM$ includes the POVM operators $\left\{\frac{1}{N}M^{(1)}_{x_1},\frac{1}{N}p_{x_1}M^{(2)}_{x_1,x_2},\ldots,\frac{1}{N}p_{x_1,\ldots ,x_{N-1}}M^{(N)}_{x_1,\ldots ,x_{N}} \right\}$ and is a single-copy measurement. This implies $v^{(N)} \geq v^{(1)}$, proving the theorem.  
\end{enumerate}
\end{proof}

\begin{theorem}[Lower bound for oblivious estimation using bounded and unbiased estimators]\label{thm:estimation-lower-single-ob}
Using the adaptive measurement strategy with single-copy measurements, the sample complexity of $\rho$ required to obtain a bounded, unbiased estimator of $\theta_\alpha$ for all $\alpha$ satisfying $\norm{\alpha}_q \leq 1$ in \Cref{prob:oblivious-p-estimation} is
\begin{equation}
\label{eq:unbiased-lower-ob}
    N = \Omega\left(\frac{ \Gamma_{p}^\ob(\{O_i\}_{i=1}^m) }{\epsilon^2 \log(1/\epsilon)} \right),
\end{equation}
for any $\epsilon > 0$ and $p \in [1,\infty]$. Here ``bounded'' means the value of estimator $\htheta_\alpha$ is always away from the true value by a constant for any $\alpha$. 
\end{theorem}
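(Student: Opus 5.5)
I will mirror the three-step proof of \Cref{thm:estimation-lower-single}, replacing the $p$-average RMSE with the worst-case variance along directions $\alpha$ in the $q$-ball.

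\textbf{Step 1 (non-adaptive Cram\'er--Rao bound).} Fix a non-adaptive strategy $M^{\otimes N_1}$. Suppose there is a family of unbiased estimators $\htheta_\alpha$ of $\theta_\alpha=\alpha\cdot\theta$ such that
\begin{equation}
\bE[(\htheta_\alpha-\theta_\alpha)^2]<\xi^2 \quad \text{for all } \norm{\alpha}_q\le 1 .
\end{equation}
For each fixed $\alpha$, $\htheta_\alpha$ is an unbiased estimator of the scalar linear functional $\alpha\cdot\theta$ in the full model $(\theta,\varphi)$. The scalar CR bound for a linear functional gives
\begin{equation}
\bE[(\htheta_\alpha-\theta_\alpha)^2]\ge \frac{1}{N_1}(\alpha,0)^\top I(\rho_{\theta,\varphi},M)^{-1}(\alpha,0)=\frac{1}{N_1}\alpha^\top (I^{-1})_{AA}\alpha .
\end{equation}
This can be obtained by applying \eqref{eq:CR-bound} to a vector estimator whose $\alpha$-component is $\htheta_\alpha$. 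Alternatively, use Cauchy--Schwarz with the score in direction $I^{-1}(\alpha,0)$, which avoids building a full estimator of all parameters. Taking the maximum over $\alpha$ and the supremum over $(\theta,\varphi)\in\DD(\rho_0)$ (equivalently over $\rho_0\in\cS^\circ$, as in \eqref{eq:non-adaptive-CR-2}) yields
\begin{equation}
N_1\ge \sup_{\rho_0\in\cS^\circ}\max_{\norm{\alpha}_q\le1}\frac{\alpha^\top (I(\rho_0,M)^{-1})_{AA}\alpha}{\xi^2}.
\end{equation}

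\textbf{Step 2 (from high-probability error to MSE).} Suppose $|\htheta_\alpha-\theta_\alpha|<\epsilon$ with probability $>2/3$ and $|\htheta_\alpha-\theta_\alpha|\le u$. Take $K$ independent repetitions and form the scalar median $\htheta_{\alpha,K}$. Hoeffding gives $\Pr[|\htheta_{\alpha,K}-\theta_\alpha|\ge\epsilon]\le e^{-2K/36}$, and boundedness then gives
\begin{equation}
\bE[(\htheta_{\alpha,K}-\theta_\alpha)^2]\le \epsilon^2+e^{-K/18}u^2\le 2\epsilon^2
\end{equation}
once $K=\Theta(\log(1/\epsilon))$. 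Since this is a single scalar, no $m^{1/p}$ factor appears, which explains the $\log(1/\epsilon)$ in \eqref{eq:unbiased-lower-ob}. It also explains why all $p\in[1,\infty]$ are allowed: there is no passage from a $p$-norm of a vector to per-coordinate variances, so the restriction $p\ge2$ of \Cref{thm:estimation-lower-single} disappears.

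The main obstacle is that the median does not preserve unbiasedness, which Step 1 requires. I would handle it as the paper implicitly does in \Cref{thm:estimation-lower-single}: treat the boosted estimator's mean-square error as controlled and apply CR to it. If a cleaner route is needed, I would replace CR for unbiased estimators by a biased CR/van Trees form. Here the bias of the median is at most $O(e^{-K/18}u)$, and its derivative in $\theta$ is small on the relevant neighbourhood. This perturbs the bound only by a constant factor.

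\textbf{Step 3 (adaptivity).} With $N_1=NK$ and $\xi=\sqrt2\epsilon$, Steps 1 and 2 give
\begin{equation}
N\ge \frac{\sup_{\rho_0}\max_{\alpha}\alpha^\top (I(\rho_0,M)^{-1})_{AA}\alpha}{2\epsilon^2K}.
\end{equation}
For adaptive protocols in $\cM_{1,N}$, I reuse the chain-rule decomposition from the proof of \Cref{thm:estimation-lower-single}:
\begin{equation}
I(\rho_0^{\otimes N},M^{(1,N)})=N\,I(\rho_0,\tM)
\end{equation}
for a single-copy POVM $\tM$. It follows that the adaptive quantity $\inf_{M^{(1,N)}}\sup_{\rho_0}\max_\alpha\alpha^\top(NI(\rho_0^{\otimes N},M^{(1,N)})^{-1})_{AA}\alpha$ equals its single-copy counterpart. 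Taking the infimum over $M$ then gives $N=\Omega(\Gamma^\ob_p/(\epsilon^2\log(1/\epsilon)))$.
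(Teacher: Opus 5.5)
Your three steps are the paper's three steps: a scalar Cram\'er--Rao bound for $\alpha\cdot\theta$ with $\varphi$ as nuisance, a scalar median with $K=\Theta(\log(1/\epsilon))$, and the chain-rule reduction of adaptive protocols. Step~1 is correct. There are two gaps, and both are also present in the paper's proof.

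\textbf{Gap 1: the median step.} This is the one you flag yourself: the $K$-median is not unbiased, so Step~1 cannot be applied to it. The paper's proof simply asserts that the median estimator is unbiased, so appealing to what the paper ``implicitly does'' does not close the gap. Your patch also fails as written.

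The bias is not $O(e^{-K/18}u)$; it can be of order $\epsilon$. The sample median concentrates at the population median of $\htheta_\alpha$, and your hypotheses only place that within $\epsilon$ of $\theta_\alpha$. For instance, take $\htheta_\alpha=\theta_\alpha+\epsilon/2$ with probability $9/10$ and $\theta_\alpha-9\epsilon/2$ with probability $1/10$. This is unbiased, bounded, and $\epsilon$-accurate with probability $9/10$, yet its $K$-median has bias close to $\epsilon/2$.

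A bias on the same scale as the target error gives no pointwise control of its derivative. The biased Cram\'er--Rao factor $(1+\partial b)^2$ can then be close to zero (Hodges-type superefficiency). A van Trees bound, which needs no unbiasedness at all, does work. However, it requires a prior on a segment of length $\gg\epsilon$ (in units of $\alpha\cdot\theta$) along the least-favourable direction, on which the FIM stays comparable. That gives the bound only below an instance-dependent threshold on $\epsilon$, not for every $\epsilon>0$.

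\textbf{Gap 2: the adaptivity step.} The single-copy POVM $\tM$ built by the chain rule has weights $p_{x_1,\ldots,x_{r-1}}$ computed under the state at which the bound is evaluated. So $\tM=\tM_{\rho_0}$ changes with $\rho_0$, and what the decomposition actually gives is
\[
\sup_{\rho_0}\max_{\norm{\alpha}_q\le1}\alpha^\top\big(I(\rho_0,\tM_{\rho_0})^{-1}\big)_{AA}\alpha\ \ge\ \sup_{\rho_0}\inf_{M\in\cM}\max_{\norm{\alpha}_q\le1}\alpha^\top\big(I(\rho_0,M)^{-1}\big)_{AA}\alpha .
\]
This max--min quantity is at most $\Gamma^{\ob}_p=\inf_M\sup_{\rho_0}(\cdots)$, and there is no general reason for the two to coincide. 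Concluding $v^{(N)}\ge v^{(1)}$ needs a minimax exchange in $(M,\rho_0)$ that is not available here. \lemmaref{lemma:minimax} concerns a different quantity with randomized $\rho_0$.

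This is a real obstruction, not a presentational one. Consider an adaptive protocol that first localizes $\rho$ by coarse tomography, then measures with a POVM optimized for the estimate $\hat\rho_0$, and outputs the sample mean of the local estimator of \lemmaref{lemma:local-opt}. By linearity of $p_{\theta,\varphi}$, that estimator is exactly unbiased for every state, not only on $\DDD(\hat\rho_0)$. Its variance is governed by the max--min quantity, not by $\Gamma^{\ob}_p$. So Step~3 can only be completed by showing that max--min and min--max agree up to constant or logarithmic factors for the observables at hand.
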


\begin{proof}
Similar to the proof of \Cref{thm:estimation-lower-single}, the proof consists of three steps. 
\begin{enumerate}[wide, labelwidth=!,itemindent=!,labelindent=0pt, leftmargin=0em, label={\arabic*.}, parsep=0pt]
    \item 
    First, we will show that for a fixed non-adaptive single-copy measurement strategy $M^{\otimes {N_1}}$, we need a sample complexity of 
    \begin{equation}
            {N_1} = \Omega\left(\sup_{\rho_0\in\cS^\circ} \max_{\norm{\alpha}_q \leq 1}\frac{\alpha^\top (I(\rho_0,M)^{-1})_{AA}\alpha}{\xi^2}\right)
    \end{equation}
    to construct an unbiased estimator $\htheta_\alpha$ that achieves a MSE smaller than $\xi$ for all $\alpha$ satisfying $\norm{\alpha}_q \leq 1$, i.e., 
    \begin{equation}
    \label{eq:p-norm-error-ob}
        \sum_x(\hat{\theta}(x)_\alpha-\theta_\alpha)^2 \tr(\rho_\theta M_x) 
        < \xi^2. 
    \end{equation}
    Given ${N_1}$ copies of parametrized quantum state 
    \begin{equation}
    \rho_{\theta,\varphi} = \rho_0 + \frac{1}{d}\sum_{a \in A} \theta_a Q_a + \frac{1}{d}\sum_{b \in B} \varphi_b T_b,
    \end{equation}
    where $\rho_0$ is known, $\{\theta_a\}_{a \in A}$ are to be estimated and $\{\varphi_b\}_{b \in B}$ are unknown (i.e. nuisance parameters), the CR bound states for any unbiased estimator $(\htheta^{({N_1})},\varphi^{({N_1})})$,
    \begin{equation}
        V(\rho_{\theta,\varphi},M^{\otimes {N_1}},(\htheta^{({N_1})},\varphi^{({N_1})})) \succeq \frac{1}{{N_1}} I(\rho_{\theta,\varphi},M)^{-1},
    \end{equation}
    where $V(\rho_{\theta,\varphi},M^{\otimes {N_1}},(\htheta^{({N_1})},\varphi^{({N_1})}))$ and $I(\rho_{\theta,\varphi},M)$ are the MSEM and the FIM with respect to both parameters $\{\theta_a\}_{a\in A}$ and $\{\varphi_b\}_{b\in B}$. We can take the upper left blocks of the matrices that only involve entries in $A$, which gives 
    \begin{equation}
    \label{eq:non-adaptive-CR-ob}
        V(\rho_{\theta,\varphi},M^{\otimes {N_1}},\htheta^{({N_1})}) \succeq \frac{1}{{N_1}} (I(\rho_{\theta,\varphi},M)^{-1})_{AA},
    \end{equation}    
    Furthermore, we have 
    \begin{equation}
    \alpha^\top V(\rho_{\theta,\varphi},M^{\otimes {N_1}},\htheta^{({N_1})}) \alpha = \sum_x(\hat{\theta}(x)_\alpha-\theta_\alpha)^2 \tr(\rho_\theta M_x)
    \end{equation}
    That implies when \eqref{eq:p-norm-error-ob} holds for all $\alpha$, 
    \begin{equation}
        \xi \geq 
        \frac{1}{\sqrt{{N_1}}}  \max_{\norm{\alpha}_q \leq 1}\alpha^\top (I(\rho_{\theta,\varphi},M)^{-1})_{AA} \alpha.
    \end{equation}
    Since we would like the above to hold for arbitrary $(\theta,\varphi)$ such that $\rho_{\theta,\varphi}$ is well defined. 
    Then 
    \begin{align}
    \begin{split}
        \xi 
        &\geq 
        \frac{1}{\sqrt{{N_1}}} \sup_{(\theta,\varphi) \in \DD(\rho_0)} \max_{\norm{\alpha}_q \leq 1}\alpha^\top (I(\rho_{\theta,\varphi},M)^{-1})_{AA} \alpha  \\
        &=  \frac{1}{\sqrt{{N_1}}} \sup_{\rho_0 \in \cS^\circ} \max_{\norm{\alpha}_q \leq 1}\alpha^\top (I(\rho_0,M)^{-1})_{AA} \alpha ,
    \end{split}\label{eq:non-adaptive-CR-2-ob}
    \end{align}
    where the last equality holds because choosing a specific value of $(\theta,\varphi)$ is equivalent to replacing the original $\rho_0$ with $\rho_{\theta,\varphi}$ and then setting its value to be zero. 
    \item Next, we show for $p \in [1,\infty]$, if unbiased, bounded estimators $\htheta_\alpha$ satisfies for all $\alpha$
    \begin{equation}
        |\hat{\theta}_\alpha - \theta_\alpha| 
     < \epsilon, 
    \end{equation} 
    with probability $> 1-\delta$, and 
    \begin{equation}
        |\htheta_\alpha - \theta_\alpha| \leq u,\forall \alpha, \text{ s.t. } \norm{\alpha}_q \leq 1, 
    \end{equation}
    for all measurement outcomes and some constant $u$, as required by \Cref{prob:oblivious-p-estimation}, then we can construct another unbiased estimator $\htheta_{\alpha,\text{Med}}$ that satisfies \eqref{eq:p-norm-error-ob} with $\xi = 2\epsilon$ and overhead $K = O\big(\log(1/\epsilon)\big)$. Without loss of generality, we assume $\delta < 1/3$. 
    
    Taking $K$ independent samples of $\htheta_\alpha$, we define 
    \begin{equation} 
        \htheta_{\alpha,\text{Med}} := \text{Median} (\htheta^{[1]},\ldots,\htheta^{[K]}),
    \end{equation}
    where $\htheta^{[\ell]}$ are the $\ell$-th sample of $\htheta_\alpha$. One property of the median estimator is if $|\htheta^{[\ell]} - \bE[\htheta]| < \epsilon$ holds for ratio $\kappa > 1/2$ of all $\ell$, we must have  $|\htheta_{\alpha,\text{Med}} - \bE[\htheta]| < \epsilon$. 
    Using the Hoeffding bound, we have 
    \begin{align}
        \prob\left[ |\htheta_{\alpha,\text{Med}} - \bE[\htheta]| 
        \geq \epsilon \right] &\leq  \prob\left[ \sum_{\ell=1}^K \id( |\htheta^{[\ell]} - \bE[\htheta]| \geq \epsilon ) \geq \frac{K}{2} \right] 
        < \exp\left(-2 K \left(\frac{1}{2} - \delta \right)^2\right). 
    \end{align} 
    Then 
    \begin{align}
        \sum_x(\hat{\theta}(x)_\alpha-\theta_\alpha)^2 \tr(\rho_\theta M_x) 
        <  \exp\left(-2 K \left(\frac{1}{2} - \delta \right)^2\right)  u^2 + \bigg(1 - \exp\left(-2 K \left(\frac{1}{2} - \delta \right)^2\right)\bigg)\bigg) \epsilon^2.  
    \end{align}
    In both cases, when $K = \Omega\big( \log(1/\epsilon) \big)$, we can achieve 
    \begin{equation}
        \sum_x(\hat{\theta}(x)_\alpha-\theta_\alpha)^2 \tr(\rho_\theta M_x)  \leq (2\epsilon)^2. 
    \end{equation}

    \item 
    Given an unbiased, bounded $\htheta_\alpha$ estimator that achieves $\epsilon$ error with probability $> 2/3$ which takes $N$ copies, and use the estimator $K = \Theta(\log(1/\epsilon))$ times to calculate a Median estimator, we obtain an unbiased estimator on $N_1 = N \times K$ copies satisfying \eqref{eq:p-norm-error-ob} with $\xi = 2\epsilon$. The CR bound implies 
    \begin{equation}
            N = \frac{N_1}{K} = \Omega\left( \frac{ \sup_{\rho_0\in\cS^\circ} \max_{\norm{\alpha}_q \leq 1}\alpha^\top (I(\rho_0,M)^{-1})_{AA}\alpha }{\epsilon^2 \log(1/\epsilon) }\right). 
    \end{equation} 
    Here we generalize the above discussion to adaptive single-copy measurements. Consider an adaptive measurement on $N$ copies of quantum states defined by $M^{(1,N)} = \{M^{(1,N)}_{x_1,\ldots,x_N}\} \in \cM_{1,N}$ where 
    \begin{equation}
        M^{(1,N)}_{x_1,\ldots,x_N} = M^{(1)}_{x_1} \otimes M^{(2)}_{x_1,x_2} \cdots \otimes   M^{(N)}_{x_1,x_2,\ldots,x_N}, 
    \end{equation}
    where $M^{(r)}_{x_1,\ldots,x_{r-1},x_r}$ is the POVM operator on the $r$-th copy that depends on all previous outcomes $(x_1,\ldots,x_{r-1})$ with measurement outcome $x_r$. The superscript in $M^{(1,N)}$ means $M$ acts on states from the $1$st to the $N$th copy. Using the CR bound, we have for any unbiased estimator $\htheta^{(N_1)}$, 
    \begin{equation}
        V(\rho_{\theta,\varphi},(M^{(1,N)})^{\otimes K},\htheta^{(N_1)}) \succeq \frac{1}{N_1} ( N  I(\rho_{\theta,\varphi}^{\otimes N},M^{(1,N)})^{-1})_{AA},
    \end{equation}
    which is a generalization of \eqref{eq:non-adaptive-CR-ob}. Then 
    \begin{equation}
        \xi \geq \frac{1}{\sqrt{N_1}} \inf_{M^{(1,N)} \in \cM_{1,N}} \sup_{\rho_0 \in \cS^\circ} \max_{\norm{\alpha}_q \leq 1}\alpha^\top (N I(\rho_0^{\otimes N},M^{(1,N)})^{-1})_{AA} \alpha  .
    \end{equation}
    To prove \eqref{eq:unbiased-lower} holds, we only need to show 
    \begin{gather}
        v^{(N)} := \inf_{M^{(1,N)} \in \cM_{1,N}} \sup_{\rho_0 \in \cS^\circ} \max_{\norm{\alpha}_q \leq 1}\alpha^\top (N I(\rho_0^{\otimes N},M^{(1,N)})^{-1})_{AA} \alpha   \\ 
        = \inf_{M \in \cM} \sup_{\rho_0 \in \cS^\circ} \max_{\norm{\alpha}_q \leq 1}\alpha^\top (I(\rho_0,M)^{-1})_{AA} \alpha   =: v^{(1)}.
    \end{gather}
    First, we note that $v^{(N)} \leq v^{(1)}$ by definition. On the other hand, $v^{(N)} \geq v^{(1)}$ holds because the FIM of adaptive measurements can be written as 
\begin{align}
I(\rho_0^{\otimes r},M^{(1,r)}) &= I(\{p_{x_1,x_2,\ldots,x_{r-k}}\}) + \sum_{(x_1,x_2,\ldots,x_{r-k})} p_{x_1,x_2,\ldots,x_{r-k}} I(\rho_0^{\otimes k},M^{(r-k+1,r)}_{x_1,x_2,\ldots,x_{r-k}}),\\
&= I(\rho_0^{\otimes r - k},M^{(1,r-k)}) + \sum_{(x_1,x_2,\ldots,x_{r-k})} p_{x_1,x_2,\ldots,x_{r-k}} I(\rho_0^{\otimes k},M^{(r-k+1,r)}_{x_1,x_2,\ldots,x_{r-k}}),
\end{align}
where $p_{x_1,x_2,\ldots,x_{r-k}}$ is the probability of obtaining measurement outcomes $x_1,x_2,\ldots,x_{r-k}$ and $M^{(r-k+1,r)}_{x_1,x_2,\ldots,x_{r-k}}$ is the POVM acting on states from the $r-k+1$ to the $r$th copy that depends on all previous measurement outcomes. Applying this decomposition trick multiple times, we have 
\begin{align}
\begin{split}
I(\rho_0^{\otimes N},M^{(1,N)}) 
&= I(\rho_0,M^{(1)}) + \sum_{x_1}p_{x_1} I(\rho_0,M_{x_1}^{(2)}) + \cdots + \sum_{x_1,\cdots ,x_{N-1}}p_{x_1,\cdots ,x_{N-1}} I(\rho_0,M_{x_1,\cdots ,x_{N-1}}^{(N)}) \\
&= N I(\rho_0,\tM),
\end{split}
\end{align}
where $\tM$ includes the POVM operators $\left\{\frac{1}{N}M^{(1)}_{x_1},\frac{1}{N}p_{x_1}M^{(2)}_{x_1,x_2},\ldots,\frac{1}{N}p_{x_1,\ldots ,x_{N-1}}M^{(N)}_{x_1,\ldots ,x_{N}} \right\}$ and is a single-copy measurement. This implies $v^{(N)} \geq v^{(1)}$, proving the theorem.  
\end{enumerate}
\end{proof}

\subsection{Few-copy measurements}

We then consider protocols using $(c\geq1)$-copy measurements in $\cM_c$. Similar to the distinguishing task in \Cref{thm:distinguish-c-copy}, we show that if we only care about $\epsilon^2$ term, which is the case when $\epsilon$ is below a certain threshold, then the lower bound for these unbiased estimators for \Cref{prob:p-estimation} can again only achieve at most an $O(c)$ reduction from single-copy unbiased estimators. Formally, we have the following theorem

\begin{theorem}[Lower bound for estimation and oblivious estimation using bounded, unbiased estimators, and $c$-copy measurements]
Using (possibly adaptive) measurement strategy with $c$-copy measurements, the sample complexity of $\rho$ required to obtain a bounded, unbiased estimator of $\theta$ in \Cref{prob:p-estimation} (or a bounded, unbiased estimator of $\theta_\alpha$ for all $\alpha$ satisfying $\norm{\alpha}_q \leq 1$ in \Cref{prob:oblivious-p-estimation}) is
\begin{equation}
\label{eq:unbiased-lower-c}
    N 
    = \Omega\left(\frac{ \Gamma_{p}(\{O_i\}_{i=1}^m) }{c \epsilon^2 \log(m^{1/p}/\epsilon)} \right),
\end{equation}
for any $\epsilon > 0$ and $p \in [2,\infty]$, 
or 
\begin{equation}
        N 
    = \Omega\left(\frac{ \Gamma_{p}^\ob(\{O_i\}_{i=1}^m) }{c \epsilon^2 \log(1/\epsilon)} \right), 
    \label{eq:unbiased-lower-c-ob}
\end{equation}
for any $\epsilon > 0$ and $p \in [1,\infty]$, respectively, where $I(\rho_0,M) = I(\rho_{\theta,\varphi},M)|_{\theta=\varphi=0}$, $\rho_{\theta,\varphi} = \rho_0 + \frac{1}{d}\sum_{a \in A} \theta_a Q_a + \frac{1}{d}\sum_{b \in B} \varphi_b T_b$. 
\label{thm:estimation-lower-c}
\end{theorem}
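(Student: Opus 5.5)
The plan is to reuse the three-step structure of the proofs of \Cref{thm:estimation-lower-single} and \Cref{thm:estimation-lower-single-ob} essentially verbatim. Steps 1 and 2 need no change: the Cram\'{e}r--Rao bound for unbiased estimators, and the conversion from high-probability $p$-norm (resp.\ oblivious) error to $p$-average RMSE (resp.\ MSE) via the geometric median or median with overhead $K=O(\log(m^{1/p}/\epsilon))$ (resp.\ $O(\log(1/\epsilon))$), never used the single-copy structure. The only new ingredient is Step 3, which must control the Fisher information of an adaptive $c$-copy protocol.

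Concretely, an adaptive protocol in $\cM_{c,N}$ consists of $N/c$ rounds, each a $c$-copy POVM chosen based on previous outcomes. I would apply the chain rule for Fisher information exactly as in the single-copy proofs, with each ``copy'' replaced by a block of $c$ copies. This gives
\begin{equation}
I(\rho_0^{\otimes N},M^{(1,N)}) = \frac{N}{c}\, I(\rho_0^{\otimes c},\tM)
\end{equation}
for some $c$-copy POVM $\tM$. Here $\tM$ is the mixture of the round-wise POVMs weighted by outcome-history probabilities and $1/(N/c)$, and the FIM is affine under such mixtures of POVMs. This reduces the task to a non-adaptive comparison between $I(\rho_0^{\otimes c},\tM)$ and single-copy FIMs.

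The key claim is that for every $c$-copy POVM $M$ there is a single-copy POVM $G$ with $I(\rho_0^{\otimes c},M)\preceq c^2 I(\rho_0,G)$. To prove it, I would fix a direction $v=(\theta,\varphi)$ and set $A=\frac1d v\cdot(\vec Q,\vec T)$. The derivative of $\tr(M_s\rho_v^{\otimes c})$ at $v=0$ is then the sum over $i\in[c]$ of $\tr(M_s A^{(i)}\otimes\rho_0^{([c]\setminus i)})$. By the triangle inequality in $\ell_2(1/q_s)$ (the same Cauchy--Schwarz step with weights $w_S$ used in \Cref{thm:distinguish-c-copy}), $\sqrt{v^\top I(\rho_0^{\otimes c},M)v}\le\sum_{i=1}^c\sqrt{v^\top I(\rho_0,G^{[i]})v}$. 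Here $G^{[i]}_s=\tr_{[c]\setminus i}((\id\otimes\rho_0^{\otimes(c-1)})M_s)$, which is a valid single-copy POVM by the argument around \eqref{eq:def-G-s}, and the identity \eqref{eq:c-to-1} applies. Since $x\mapsto\sqrt{v^\top I(\rho_0,\cdot)v}$ is concave under POVM mixtures (the FIM is affine), I take $G=\frac1c\sum_i G^{[i]}$ (outcomes labeled by $(i,s)$). Then $\sum_i\sqrt{v^\top I(\rho_0,G^{[i]})v}\le c\sqrt{v^\top I(\rho_0,G)v}$ by Jensen, giving $v^\top I(\rho_0^{\otimes c},M)v\le c^2\,v^\top I(\rho_0,G)v$ for all $v$, i.e.\ the Loewner inequality.

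The main obstacle is that $G$ depends on $\rho_0$, while $\Gamma_p$ takes $\inf_M\sup_{\rho_0}$ with $M$ independent of $\rho_0$. I would handle this by observing that in the Cram\'{e}r--Rao argument the supremum over $\rho_0$ arises from a supremum over the true state $\rho_{\theta,\varphi}$ with the protocol fixed. So I will lower bound the error at the worst state $\rho^\star$ for the single-copy quantity and use $G=G(\rho^\star)$ only at that point. Concretely, Loewner order gives $(I(\rho_0^{\otimes c},\tM)^{-1})_{AA}\succeq c^{-2}(I(\rho_0,G_{\rho_0})^{-1})_{AA}$ pointwise in $\rho_0$. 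Then
\begin{equation}
\sup_{\rho_0}\Phi\big(c^{-2}(I(\rho_0,G_{\rho_0})^{-1})_{AA}\big)\ge c^{-2}\inf_{M\in\cM}\sup_{\rho_0}\Phi(\cdots)
\end{equation}
should follow if the infimum over single-copy $M$ can be traded for a $\rho_0$-dependent choice. Here $\Phi$ denotes the $p$-norm of the diagonal square roots or the $\alpha$-maximum. If that exchange fails, I would instead invoke the minimax \lemmaref{lemma:minimax}-style argument (convexity of the map $M\mapsto$ FIM, restriction to $\cM^{[d^2]}$, Sion) to justify it, at the cost of a constant.

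Combining the pieces, the estimator built from $N_1=NK$ copies satisfies
\begin{equation}
\xi^2\ge \frac{c}{N_1}\cdot\frac{1}{c^2}\cdot\Gamma\cdot c=\frac{\Gamma}{c\,N_1},
\end{equation}
since $N/c$ rounds each contribute $I(\rho_0^{\otimes c},\cdot)$, and this loses one factor of $c$ relative to single-copy. With $\xi=O(\epsilon)$, this yields \eqref{eq:unbiased-lower-c} and \eqref{eq:unbiased-lower-c-ob}.
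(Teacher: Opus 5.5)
Your outline is the paper's: reuse Steps 1--2 of \Cref{thm:estimation-lower-single} and \Cref{thm:estimation-lower-single-ob}, collapse the adaptive protocol to one $c$-copy POVM by the chain rule, and compare with single-copy FIMs through $G^{[i]}_s=\tr_{[c]\setminus\{i\}}\big((\id^{(i)}\otimes\rho_0^{([c]\setminus\{i\})})M_s\big)$ and $G=\frac1c\bigoplus_i G^{[i]}$. Your quadratic-form derivation, $\sqrt{v^\top I(\rho_0^{\otimes c},M)v}\le\sum_i\sqrt{v^\top I(\rho_0,G^{[i]})v}\le c\sqrt{v^\top I(\rho_0,G)v}$, is the correct way to get $I(\rho_0^{\otimes c},M)\preceq c^2 I(\rho_0,G)$. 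The paper instead asserts $I(\rho_0^{\otimes c},M)^{1/2}\preceq\sum_i I(\rho_0,G^{[i]})^{1/2}$ and then squares a Loewner inequality between square roots, which is not valid in general. There is a minor slip in your last display: $\frac{c}{N_1}\cdot\frac{1}{c^2}\cdot c=\frac{1}{N_1}$. The right bookkeeping is that the total FIM is $\preceq\frac{N}{c}\cdot c^2 I(\rho_0,G)=cN\,I(\rho_0,G)$.

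The genuine gap is the step you flagged, and neither of your remedies closes it. Let $\Phi(I)$ be $\|\diag((I^{-1})_{AA})^{1/2}\|_p^2$ or $\max_{\|\alpha\|_q\le1}\alpha^\top(I^{-1})_{AA}\alpha$. Because $G=G_{\rho_0}$, for each $M\in\cM_c$ you only get
$\sup_{\rho_0}\Phi(I(\rho_0^{\otimes c},M))\ge c^{-2}\sup_{\rho_0}\inf_{G\in\cM}\Phi(I(\rho_0,G))$.
Replacing $\sup_{\rho_0}\inf_G$ by $\inf_G\sup_{\rho_0}$ is a minimax equality, not bookkeeping.

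\begin{itemize}
\item Sion gives the exchange when $\Phi$ is concave in $\rho_0$. This holds for $\Gamma_2$, since $(I(\rho_0,G)^{-1})_{aa}=\min_{g:\,\sum_x g(x)G_x=O_a}\sum_x g(x)^2\tr(\rho_0G_x)-\tr(O_a\rho_0)^2$ is concave in $\rho_0$ and convex in $G$ under labeled mixtures. However, $\max_\alpha$ and the $\ell_{p/2}$-norm for $p>2$ break concavity.
\item Randomizing $\rho_0$ does make Sion applicable. But you then need $\E_\pi\Phi(I(\rho_0,G_{\rho_0}))\ge\inf_G\E_\pi\Phi(I(\rho_0,G))$, which fails precisely because $G_{\rho_0}$ moves with $\rho_0$.
\item The same dependence already sits in your chain-rule step, since the weights defining $\tM$ are outcome-history probabilities under $\rho_0$.
\end{itemize}

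This is not a technicality. Here the locally optimal estimator at any full-rank $\hat\rho_0$ is globally unbiased, because the model is affine. So an adaptive scheme can first find $\hat\rho_0$ by tomography and then use a measurement optimized for $\hat\rho_0$. Its second-stage covariance is $\preceq 2(I(\hat\rho_0,M_{\hat\rho_0})^{-1})_{AA}$ whenever $\rho\in\DDD(\hat\rho_0)$, so its error is governed by $\sup_{\rho_0}\inf_M$.

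The paper's proof does not supply the missing exchange either. It deduces \eqref{eq:lower-c} from ``for any $M\in\cM_c$ there exists some $G\in\cM$'' with $G$ depending on $\rho_0$, and its single-copy step $v^{(N)}\ge v^{(1)}$ has the same issue. So you are not missing an idea the paper provides. As written, though, both arguments prove the bound only with $\sup_{\rho_0}\inf_M$ in place of $\inf_M\sup_{\rho_0}$ in $\Gamma_p$ and $\Gamma^\ob_p$.

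Separately, you inherit an issue from the paper's Step 2. The (geometric) median of unbiased estimates is generally biased, so applying the unbiased Cram\'er--Rao bound to $\htheta_K$ needs an additional argument.
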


\begin{proof}
First, we notice that following \Cref{thm:estimation-lower-single}, to prove \eqref{eq:unbiased-lower-c}, we only need to prove
\begin{equation}
\label{eq:lower-c}
    \inf_{M\in\cM_c} \sup_{\rho_0 \in \cS^\circ} \norm{\diag\big( (I(\rho_0,M)^{-1})_{AA}\big)^{1/2}}_p^2 \geq \frac{1}{c^2} \inf_{M\in\cM} \sup_{\rho_0 \in \cS^\circ} \norm{\diag\big( (I(\rho_0,M)^{-1})_{AA}\big)^{1/2}}_p^2.
\end{equation}
The proof follows a similar argument with \Cref{thm:distinguish-c-copy}. Recall that 
\begin{align}
\chi^2_M(\rho_{\theta,\varphi}^{\otimes c}\|\rho_0^{\otimes c})\leq&\left(
\sum_{\emptyset\neq S\subseteq[c]}\sqrt{ \sum_s \frac{\tr\left(M_s\left((\theta,\varphi)\cdot(\Vec{Q},\Vec{T})\right)^{(S)} \rho_0^{([c]\backslash S)}\right)^{2}}{d^2 \tr(M_s\rho_0^{\otimes c})}}\right)^{2},
\end{align}
for any $c$-copy measurement $M \in \cM_c$. Taking the lowest order Taylor expansion (as in \eqref{eq:taylor}) and using \eqref{eq:c-to-1}, we have 
\begin{align}
I(\rho_0^{\otimes c},M)^{1/2} \preceq \sum_{i=1}^c I(\rho_0,G^{[i]})^{1/2},
\end{align}
where $G^{[i]}$ is a single-copy POVM defined by 
\begin{equation}
    G^{[i]}_s := \tr_{[c]\setminus\{i\}}
\big( (\id^{(i)} \otimes \rho_0^{([c]\setminus\{i\})}) M_s \big). 
\end{equation}
Note that 
\begin{equation}
    \big(\sqrt{I}-\sqrt{J}\big)^2 \succeq 0 ~~\Rightarrow~~ \sqrt{p I + (1-p)J} \succeq p \sqrt{I} + (1-p) \sqrt{J},  \forall 0 \leq p \leq 1, 
\end{equation}
i.e. $\sqrt{(\cdot)}$ is operator concave. Then we have 
\begin{align}
\frac{1}{c}I(\rho_0^{\otimes c},M)^{1/2} \preceq \frac{1}{c} \sum_{i=1}^c I(\rho_0,G^{[i]})^{1/2} \preceq \left( \frac{1}{c} \sum_{i=1}^c I(\rho_0,G^{[i]})  \right)^{1/2} = I(\rho_0,G)^{1/2}. 
\end{align}
where $G$ is a single-copy POVM with measurement outcomes $(i,s)$ such that 
$G_{(i,s)} := \frac{1}{c} G^{[i]}_s$. \eqref{eq:lower-c} is then proven using the above inequality, which means for any $M \in \cM_c$ there exists some $G \in \cM$ such that $I(\rho_0^{\otimes c},M) \preceq c^2 I(\rho_0,G)$. \eqref{eq:unbiased-lower-c-ob} can be proven similarly. 
\end{proof}

\section{Optimal estimator for high-precision shadow tomography}
\label{sec:upper}

Here we construct an estimator that performs optimally for both \Cref{prob:p-estimation} and \Cref{prob:oblivious-p-estimation} when the target precision is sufficiently small. We will first show the estimator that saturates the CR bound (up to a constant factor) for states in a neighborhood $\DD(\rho_0) \subseteq \cS$ of some specific state $\rho_0$. Then we introduce a tomography procedure that pre-determines $\rho_0$ such that our state $\rho \in \DDD(\rho_0)$. Finally, we analyze the performance of the estimator in terms of the metrics in \Cref{prob:p-estimation} and \Cref{prob:oblivious-p-estimation} and calculate the thresholds of precision below which the sample complexity is optimal. 

\subsection{Optimal unbiased estimator within local regions}
\label{sec:local-est}

Although the CR bound is in general not necessarily saturable with finite sample complexity, for our specific shadow tomography problem here, we show the optimal locally unbiased estimator also performs optimally as a globally unbiased estimator within local regions up to a factor of two. The linearity of the probability distribution in $(\theta,\varphi)$ is the key property we use here.  

\begin{lemma}[Optimal estimation within local regions]
\label{lemma:local-opt}
Given $\rho_0 \in \cS^\circ$, consider quantum states parametrized as $\rho_{\theta,\varphi} = \rho_0 + \frac{1}{d}\sum_{a \in A} \theta_a Q_a + \frac{1}{d}\sum_{b \in B} \varphi_b T_b$, where $(\theta,\varphi) \in \DDD(\rho_0) = \DD(\rho_0) \cap -\DD(\rho_0)$. For any fixed POVM $M$, there exists an unbiased estimator $\htheta$ of $\theta$ such that the corresponding MSEM 
\begin{equation}
    V(\rho_{\theta,\varphi},M,\htheta) \preceq  2 (I(\rho_0,M)^{-1})_{AA},
\end{equation}
where $I(\rho_0,M) = I(\rho_{\theta,\varphi},M)|_{\theta=\varphi=0}$. 
\end{lemma}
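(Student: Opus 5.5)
The plan is to take the optimal locally unbiased estimator \eqref{eq:opt} at the reference point $(\theta,\varphi)=(0,0)$, built from the full FIM over both $\theta$ and the nuisance parameters $\varphi$, and keep only its $A$-components. Write $p_{\theta,\varphi}(x)=\tr(\rho_{\theta,\varphi}M_x)$ and $p_0(x)=\tr(\rho_0 M_x)$. Since $\rho_0\in\cS^\circ$, we have $p_0(x)>0$ for every outcome with $M_x\neq 0$, and we discard outcomes with $M_x=0$. Let $I:=I(\rho_0,M)$, first assumed invertible. Define
\begin{equation*}
\gamma(x):=I^{-1}\frac{\nabla p(x)}{p_0(x)},\qquad \hat\theta(x):=\big(\gamma(x)\big)_A ,
\end{equation*}
where $\nabla p(x)\in\bR^{\abs{A}+\abs{B}}$ collects $\tr(Q_aM_x)/d$ and $\tr(T_bM_x)/d$.

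The first step is global unbiasedness, which rests on the linearity $p_{\theta,\varphi}(x)=p_0(x)+(\theta,\varphi)\cdot\nabla p(x)$ (exactly, with no Taylor remainder). First, $\sum_x\nabla p(x)=0$ because the $Q_a,T_b$ are traceless and $\sum_xM_x=\id$. Second, $\sum_x \nabla p(x)\nabla p(x)^\top/p_0(x)=I$. Hence
\begin{equation*}
\sum_x p_{\theta,\varphi}(x)\gamma(x)=I^{-1}\Big(\sum_x\nabla p(x)+I(\theta,\varphi)\Big)=(\theta,\varphi),
\end{equation*}
so $\bE[\hat\theta]=\theta$ for every $(\theta,\varphi)$, not only infinitesimally near $0$.

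The second step bounds the MSEM at an arbitrary point of $\DDD(\rho_0)$. Because the estimator is unbiased, its MSEM is a covariance matrix and therefore satisfies
\begin{equation*}
V(\rho_{\theta,\varphi},M,\hat\theta)\preceq\Big(\sum_xp_{\theta,\varphi}(x)\gamma(x)\gamma(x)^\top\Big)_{AA}.
\end{equation*}
The key observation is that $(\theta,\varphi)\in-\DD(\rho_0)$ means $\rho_{-\theta,-\varphi}=2\rho_0-\rho_{\theta,\varphi}\succeq0$. This gives $\rho_{\theta,\varphi}\preceq2\rho_0$, hence $p_{\theta,\varphi}(x)\le2p_0(x)$ for every $x$. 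Since each $\gamma(x)\gamma(x)^\top\succeq0$,
\begin{equation*}
\sum_xp_{\theta,\varphi}(x)\gamma\gamma^\top\preceq2\sum_xp_0(x)\gamma\gamma^\top=2I^{-1}II^{-1}=2I^{-1}.
\end{equation*}
This is the same computation as the CR-saturation check after \eqref{eq:opt}. Taking the $AA$ block, which preserves $\preceq$, yields the claim $V\preceq2(I^{-1})_{AA}$.

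The main obstacle is a singular FIM $I$. In that case I would replace $I^{-1}$ by the pseudoinverse. Unbiasedness of the $A$-components then requires that $I^{+}I$ act as the identity on the $A$-coordinates, i.e.\ that $\theta$ is identifiable; if it is not, $(I^{-1})_{AA}$ is infinite and the statement holds trivially. To make this rigorous I would either invoke the Schur-complement form used in \lemmaref{lemma:independence}, or perturb $M$ by mixing in a small informationally complete POVM, apply the invertible case, and pass to the limit.
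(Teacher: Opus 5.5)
Your proposal is correct and follows the paper's proof. It uses the same locally optimal estimator at $(\theta,\varphi)=(0,0)$ and gets global unbiasedness from exact linearity of $p_{\theta,\varphi}(x)$; your direct evaluation $\sum_x p_{\theta,\varphi}(x)\gamma(x)=(\theta,\varphi)$ is a cleaner version of the paper's derivative-plus-connectedness argument. The factor $2$ comes from $\rho_{\theta,\varphi}+\rho_{-\theta,-\varphi}=2\rho_0$, which you apply in the equivalent pointwise form $p_{\theta,\varphi}(x)\le 2p_0(x)$, whereas the paper sums $g(\theta,\varphi)+g(-\theta,-\varphi)$.

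Your handling of a singular FIM goes beyond the paper, which tacitly assumes $I(\rho_0,M)$ is invertible. Of your two suggested fixes, only the pseudoinverse/Schur-complement route works for a fixed $M$. Perturbing $M$ changes the outcome space on which the estimator is defined.
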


\begin{proof}
First, we note that $\DDD(\rho_0)$ is the maximal region that satisfies the follow two properties. 
\begin{itemize}
    \item It is a closed, connected region. 
    \item $(\theta,\varphi) \in \DDD(\rho_0) \Leftrightarrow (-\theta,-\varphi) \in \DDD(\rho_0)$. 
\end{itemize}
The first property holds because $\DD(\rho_0)$ is a closed region, and all $\theta \in \DD(\rho_0)$ is connected to $\rho_0$, and the second property directly follows from the definition of $\DDD(\rho_0)$. 

Let $p_{\theta,\varphi}(x) = \trace(\rho_{\theta,\varphi} M_x)$ for some POVM $M = \{M_x\}_x$. 
We define the following estimator which is the optimal locally unbiased estimator at $\theta,\varphi = 0$: 
\begin{equation}
\htheta^{\text{opt},\rho_0}_a(y) = \sum_{c,x} (I(\rho_0,M)^{-1})_{ac} \frac{\partial_c p_{\theta,\varphi}(x)}{p_{\theta,\varphi}(x)}\bigg|_{\theta,\varphi = 0} \delta_{xy},
\end{equation}
where we use index $c$ to represent all parameters in $A \cup B$, and $y$ is the measurement outcome, which is exactly the optimal locally estimator (\eqref{eq:opt}) at $\theta,\varphi = 0$. 

First, we show it is an unbiased estimator within $\DDD(\rho_0)$. Let 
\begin{equation}
f(\theta,\varphi)_a := \bE_{\theta,\varphi}[\htheta^{\text{opt},\rho_0}_a(x)] = \sum_{c,x} (I(\rho_0,M)^{-1})_{ac} \frac{\partial_c p_{\theta,\varphi}(x)}{p_{\theta,\varphi}(x)}\bigg|_{\theta,\varphi = 0}  p_{\theta,\varphi}(x),    
\end{equation}
where the expectation is taken for measurement on state $\rho_{\theta,\varphi}$. Then 
\begin{equation}
f(0,0)_a = \sum_{c,x} (I(\rho_0,M)^{-1})_{ac} \frac{\partial_c p_{\theta,\varphi}(x)}{p_{\theta,\varphi}(x)}  p_{\theta,\varphi}(x)  \bigg|_{\theta,\varphi = 0} = 0. 
\end{equation}
For any $(\theta,\varphi) \in \DDD^\circ(\rho_0)$, where we use $ \DDD^\circ$ to denote the interior of $\DDD$, 
\begin{align}
\partial_{c'} f(\theta,\varphi)_a &= \sum_{c,x} (I(\rho_0,M)^{-1})_{ac} \frac{\partial_c p_{\theta,\varphi}(x)}{p_{\theta,\varphi}(x)}\bigg|_{\theta,\varphi = 0}  \partial_{c'} p_{\theta,\varphi}(x)\\
&= \sum_{c,x} (I(\rho_0,M)^{-1})_{ac} \frac{\partial_c p_{\theta,\varphi}(x)}{p_{\theta,\varphi}(x)}\bigg|_{\theta,\varphi = 0}  \partial_{c'} p_{\theta,\varphi}(x)\big|_{\theta,\varphi = 0}\\
&= \sum_{c'} (I(\rho_0,M)^{-1})_{ac}  I(\rho_0,M)_{cc'} = \delta_{ac},\quad \forall (\theta,\varphi) \in \DDD(\rho_0), 
\end{align}
where we use the fact that $p_{\theta,\varphi}(x)$ is a linear function for all parameters, and thus 
\begin{equation}
   \partial_{c'} p_{\theta,\varphi}(x) =  \partial_{c'} p_{\theta,\varphi}(x)\big|_{\theta,\varphi = 0}. 
\end{equation}
The above implies the unbiasedness of the estimator, i.e. 
\begin{equation}
f(\theta,\varphi) = \theta,\quad \forall (\theta,\varphi) \in \DDD(\rho_0),
\end{equation}
because $\DDD(\rho_0)$ is closed and connected. 

Next, we calculate the covariance matrix (i.e. the MSEM) of the estimator.
\begin{equation}
g_{aa'}(\theta,\varphi) = \bE_{\theta,\varphi}[(\htheta^{\text{opt},\rho_0}_a - \theta_a)(\htheta^{\text{opt},\rho_0}_{a'} - \theta_{a'})] = \bE_{\theta,\varphi}[\htheta^{\text{opt},\rho_0}_a \htheta^{\text{opt},\rho_0}_{a'}] - \theta_a \theta_{a'}.     
\end{equation}
Specifically, {\small
\begin{gather}
g_{aa}(\theta,\varphi) + \theta_a^2 = \bE_{\theta,\varphi}[(\htheta^{\text{opt},\rho_0}_a)^2] = \sum_{x,c,c'} (I(\rho_0,M)^{-1})_{ac} \frac{\partial_c p_{\theta,\varphi}(x)}{p_{\theta,\varphi}(x)} \bigg|_{\theta,\varphi = 0} (I(\rho_0,M)^{-1})_{ac'} \frac{\partial_{c'} p_{\theta,\varphi}(x)}{p_{\theta,\varphi}(x)} \bigg|_{\theta,\varphi = 0}  p_{\theta,\varphi}(x),\\
\begin{split}
 g_{aa}(-\theta,-\varphi) + \theta_a^2 &= \bE_{-\theta,-\varphi}[(\htheta^{\text{opt},\rho_0}_a)^2] \\ & =  \sum_{x,c,c'} (I(\rho_0,M)^{-1})_{ac} \frac{\partial_c p_{\theta,\varphi}(x)}{p_{\theta,\varphi}(x)} \bigg|_{\theta,\varphi = 0} (I(\rho_0,M)^{-1})_{ac'} \frac{\partial_{c'} p_{\theta,\varphi}(x)}{p_{\theta,\varphi}(x)} \bigg|_{\theta,\varphi = 0}  p_{-\theta,-\varphi}(x),
\end{split}\\
\begin{split}
g_{aa}(\theta,\varphi) + g_{aa}(-\theta,-\varphi) + 2\theta_a^2 &= 2 \sum_{x,c,c'} (I(\rho_0,M)^{-1})_{ac} \frac{\partial_c p_{\theta,\varphi}(x)}{p_{\theta,\varphi}(x)} \bigg|_{\theta,\varphi = 0} (I(\rho_0,M)^{-1})_{ac'} \frac{\partial_{c'} p_{\theta,\varphi}(x)}{p_{\theta,\varphi}(x)} \bigg|_{\theta,\varphi = 0}  p_{0,0}(x) \\
&= 2 \sum_{c,c'} (I(\rho_0,M)^{-1})_{ac} I(\rho_0,M)_{cc'} (I(\rho_0,M)^{-1})_{ac'} = 2 (I(\rho_0,M)^{-1})_{aa}, 
\end{split}
\end{gather}
}where we use the fact that $p_{\theta,\varphi}(x) + p_{-\theta,-\varphi}(x) = 2 p_{0,0}(x)$. Analogously, we can show for any real $m$-dimensional vector $(v_{a})_a$, 
\begin{equation}
    \sum_{aa'} v_{a}v_{a'} ( g_{aa'}(\theta,\varphi) + g_{aa'}(-\theta,-\varphi) + 2\theta_a\theta_{a'}) = \sum_{aa'} v_{a}v_{a'} 2 (I(\rho_0,M)^{-1})_{aa'}, 
\end{equation}
which implies 
\begin{equation}
g(\theta,\varphi) \preceq g(\theta,\varphi) + g(-\theta,-\varphi) \preceq 2 (I(\rho_0,M)^{-1})_{AA}. 
\end{equation}
It means the MSEM of the estimator $\htheta^{\text{opt},\rho_0}$ is upper bounded by $2 (I(\rho_0,M)^{-1})_{AA}$, proving the lemma. 
A crucial assumption we use above implicitly is $\rho_{-\theta,-\varphi}$ is well-defined, which is guaranteed because $(\theta,\varphi) \in \DDD(\rho_0) \Leftrightarrow (-\theta,-\varphi) \in \DDD(\rho_0)$. 
\end{proof}

\subsection{Finding \texorpdfstring{$\rho_0$}{rho0} via state tomography}

Here we discuss given an unknown state $\rho$, how to find $\rho_0$ such that $\rho \in \DDD(\rho_0)$ so that the above estimator applies. Here we abuse the notation a bit and say $\rho \in \DDD(\rho_0)$ if and only if $\rho = \rho_{\theta,\varphi}$ for some $(\theta,\varphi) \in \DDD(\rho_0)$. Finding $\rho_0$ is in general a difficult task, especially when $\rho$ is singular. 
However, we can without loss of generality consider only states within a restricted set of states: 
\begin{equation}
    \cS_{1/2} = \Big\{\rho \Big| \rho = \frac{1}{2}\Big(\sigma + \frac{\id}{d}\Big),\text{ for some density matrix }\sigma\Big\}. 
\end{equation}
We can always assume $\rho \in \cS_{1/2}$, because if not, we can apply the following quantum channel on the unknown state $\rho$
\begin{equation}
    \rho \mapsto \frac{1}{2}\rho + \frac{\id}{2d} 
\end{equation}
whose output state belongs to $\cS_{1/2}$ and then perform the estimation on the output state. The channel maps the expectation values $\trace(O_i \rho) \mapsto \frac{1}{2}\trace(O_i \rho)$, which induces at most a constant factor in the estimation precision and the sample complexity bounds. In this case, it is sufficient to find an estimator of $\rho$ that is within $1/(4d)$ of its operator norm. 
\begin{lemma}
\label{lemma:tomography}
For any state $\rho \in \cS_{1/2}$, if $\|\rho_0 - \rho\|_{\infty} \leq {1}/(4d)$, $\rho \in \DDD(\rho_0)$. 
\end{lemma}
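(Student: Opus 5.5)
Since $\{Q_a\}\cup\{T_b\}$ together with $\rho_0$ spans all Hermitian operators of trace one, $\rho$ can be written as $\rho_{\theta,\varphi}$ for a unique $(\theta,\varphi)$, namely the coefficients of $d(\rho-\rho_0)$ in the traceless basis. By the definition $\DDD(\rho_0)=\DD(\rho_0)\cap-\DD(\rho_0)$, it is enough to check two things: $\rho_{\theta,\varphi}=\rho\succeq 0$, which holds because $\rho$ is a state, and $\rho_{-\theta,-\varphi}\succeq 0$. Both operators have unit trace, since the $Q_a,T_b$ are traceless. Linearity of the parameterization gives
\begin{equation}
\rho_{-\theta,-\varphi}=\rho_0-(\rho-\rho_0)=2\rho_0-\rho ,
\end{equation}
so the whole statement reduces to showing that $2\rho_0-\rho\succeq 0$.

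Write $\Delta:=\rho_0-\rho$, so that $\|\Delta\|_\infty\le 1/(4d)$ and $2\rho_0-\rho=\rho+2\Delta$. The assumption $\rho\in\cS_{1/2}$ gives $\rho=\tfrac12\sigma+\tfrac{\id}{2d}\succeq \tfrac{\id}{2d}$. Weyl's inequality (or simply $2\Delta\succeq-2\|\Delta\|_\infty\id$) then yields
\begin{equation}
\rho+2\Delta\succeq \frac{\id}{2d}-\frac{2\id}{4d}=0 .
\end{equation}
Hence $\rho_{-\theta,-\varphi}$ is a valid density matrix, and $(\theta,\varphi)\in\DDD(\rho_0)$.

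One point needs care: the hypothesis does not say that $\rho_0$ is itself a state, and the local estimator of \lemmaref{lemma:local-opt} requires $\rho_0\in\cS^\circ$. The same calculation covers this: $\rho_0=\rho+\Delta\succeq \tfrac{\id}{2d}-\tfrac{\id}{4d}=\tfrac{\id}{4d}\succ 0$. Positivity of $\rho_0$ therefore holds automatically once $\rho_0$ has unit trace, and unit trace can be imposed by shifting the tomographic estimate by a multiple of $\id$ without spoiling the $1/(4d)$ bound beyond a constant factor. The main thing to get right is this bookkeeping: if $\DD$ is defined through full-rank density matrices, as in \Cref{thm:estimation-lower-single}, then strict positivity of $2\rho_0-\rho$ requires the inequality $\|\Delta\|_\infty\le 1/(4d)$ to be slightly strict, or else one works with the closure as in \lemmaref{lemma:local-opt}. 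Beyond that, the argument is only the two-line spectral estimate above.
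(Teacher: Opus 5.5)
Your proof is correct and follows the paper's argument. Both reduce membership in $\DDD(\rho_0)$ to $2\rho_0-\rho\succeq 0$, then combine $\rho\succeq \id/(2d)$ with $2(\rho_0-\rho)\succeq -2\|\rho_0-\rho\|_\infty\id\succeq-\id/(2d)$. Your two side remarks are accurate bookkeeping that the paper leaves largely implicit: first, $\rho_0\succeq\id/(4d)$ follows automatically once it has unit trace; second, the boundary case $\|\rho_0-\rho\|_\infty=1/(4d)$ only gives $\succeq 0$, so $\DD$ must be read as the closed set, as in \lemmaref{lemma:local-opt}.
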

\begin{proof}
We first note that $\rho \in \DDD(\rho_0)$ if and only if 
\begin{equation}
    \rho_0 - (\rho -\rho_0) \succeq 0 ~~\Leftrightarrow~~ 
    \rho - (\rho-\rho_0) - (\rho -\rho_0) \succeq 0.
\end{equation}
It holds when $\|\rho_0 - \rho\|_{\infty} \leq {1}/(4d)$ because 
\begin{equation}
    \rho - (\rho-\rho_0) - (\rho -\rho_0) \succeq \rho - 2 \norm{\rho - \rho_0}_\infty \id \succeq \rho - \id/(2d) \succeq 0. 
\end{equation}
\end{proof}
The above implies that any algorithm that produces an estimate of $\rho$ such that its $\infty$-norm distance to $\rho$ is at most $1/(4d)$ will be sufficient to serve as the first step to determine $\rho_0$, prior to applying the unbiased estimator $\htheta^{\text{opt},\rho_0}$ in the second step. To achieve the target accuracy in tomography with single-copy measurements, one can apply the Haar random measurement $\{d\ket{v}\bra{v}{\rm d}v\}$ and use the estimator
\begin{equation}
    \hrho_0(N_0,u) = \frac{1}{N_0} \sum_{i=1}^{N_0} ((d+1)\ket{u_i}\bra{u_i} - \id), 
\end{equation}
where $u = (u_1,\ldots,u_{N_0})$ are the outcomes from measuring $\rho$. With probability $1-\delta$~\cite{kueng2017low,gutaFastStateTomography2020,chenCotler2025Lecture6TomographyII}, 
\begin{equation}
    \norm{\hrho_0(N_0,u) - \rho}_\infty \leq \text{Constant} \times \max\left\{\frac{d+\log(1/\delta)}{N},\sqrt{\frac{d+\log(1/\delta)}{N}}\right\} .
\end{equation}
In particular, $\hrho_0(N_0,u)$ must be a well-defined density matrix when $\rho \in \cS_{1/2}$ and $\norm{\hrho_0(N_0,u) - \rho}_\infty \leq 1/4d$.  
Therefore, $N_0 = O(d^3)$ is sufficient to achieve the target precision with high probability and find a local region $\DDD(\rho_0)$ to apply the optimal local estimator. 

\subsection{Conversion from MSEM to \texorpdfstring{$p$-norm}{p-norm} error}

We showed in \Cref{sec:local-est}, the optimal locally unbiased estimator $\htheta^{\text{opt},\rho_0}$ performs optimally in estimating $\theta$ in $\rho_{\theta,\varphi}$ within $\DDD(\rho_0)$ in the sense that it achieves, up to a factor of two, the optimal MSEM given by the CR bound. Our goal is to learn observables with $p$-norm error. Below we show, using the (coordinate-wise) median-of-means estimator, we can, with probability at least $1-\delta$, obtain a bounded $p$-norm error using the estimator of bounded $p$-average RMSE with an overhead of $O(\log(m/\delta))$. 

\begin{lemma}
\label{lemma:MoM}
Fix the POVM $M$. For any unbiased estimator $\htheta$ that achieves
\begin{equation}
    \norm{\diag(V(\rho_{\theta,\varphi},M,\htheta))^{1/2}}_p \leq \xi,
\end{equation}
there is another estimator $\htheta_\MoM$ that achieves $p$-norm error $\epsilon$ with probability at least $1-\delta$ that uses 
\begin{equation}
  O\left(\log\left(\frac{m}{\delta}\right)\right) \times O\left(\frac{\xi^2}{\epsilon^2}\right)
\end{equation}
samples of $\htheta$. 
\end{lemma}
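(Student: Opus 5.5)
We use the median-of-means construction applied coordinate-wise. Let $\sigma_i := V(\rho_{\theta,\varphi},M,\htheta)_{ii}^{1/2}$ be the root-MSE of the $i$-th coordinate, so that $\norm{(\sigma_1,\ldots,\sigma_m)}_p \le \xi$. Since $\htheta$ is unbiased, $\sigma_i^2$ is exactly the variance of $\htheta_i$.

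We draw $n = \lceil 36\xi^2/\epsilon^2\rceil$ independent samples of $\htheta$ and average them to form a batch mean $\bar\theta$. Each coordinate $\bar\theta_i$ is unbiased and has variance $\sigma_i^2/n \le \sigma_i^2\epsilon^2/(36\xi^2)$.

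The first idea is to apply Chebyshev's inequality coordinate-wise. With the per-coordinate target $\epsilon_i := 3\sigma_i\epsilon/\xi$, this gives
\[
\prob[|\bar\theta_i-\theta_i| \ge \epsilon_i] \le \frac{\sigma_i^2/n}{\epsilon_i^2} \le \frac{1}{4}.
\]
Note that $\norm{(\epsilon_i)_i}_p = 3\epsilon\,\norm{\sigma}_p/\xi \le 3\epsilon$, so a uniform coordinate-wise guarantee already controls the $p$-norm error, up to rescaling $\epsilon$ by a constant.

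We then repeat this $K = O(\log(m/\delta))$ times independently, obtaining batch means $\bar\theta^{[1]},\ldots,\bar\theta^{[K]}$. For each coordinate $i$ we output the median $\htheta_{\MoM,i} := \mathrm{Median}_\ell\,\bar\theta^{[\ell]}_i$. The median can be off by at least $\epsilon_i$ only if at least half of the batch means are off by $\epsilon_i$. Each batch is off with probability at most $1/4$, so Hoeffding's inequality (\lemmaref{lem:Hoeffding}) gives
\[
\prob\big[|\htheta_{\MoM,i}-\theta_i|\ge\epsilon_i\big] \le \exp(-2K(1/4)^2) = e^{-K/8}.
\]
Choosing $K = \lceil 8\log(m/\delta)\rceil$ makes this at most $\delta/m$. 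A union bound over the $m$ coordinates then shows that, with probability at least $1-\delta$, $|\htheta_{\MoM,i}-\theta_i|<\epsilon_i$ holds for every $i$ simultaneously. On this event $\norm{\htheta_\MoM-\theta}_p < \norm{(\epsilon_i)_i}_p \le 3\epsilon$.

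Finally, replacing $\epsilon$ by $\epsilon/3$ throughout yields $p$-norm error $\epsilon$ with probability at least $1-\delta$. The total number of samples is $nK = O(\xi^2/\epsilon^2)\cdot O(\log(m/\delta))$, as claimed. The case $p=\infty$ is covered by the same argument, since then $\max_i \epsilon_i \le 3\epsilon$.

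The only real subtlety is that the error budget must be allocated to each coordinate in proportion to $\sigma_i$, rather than uniformly. A uniform budget of $\epsilon$ per coordinate would give error $m^{1/p}\epsilon$ in $p$-norm. Weighting by $\sigma_i$ makes the $p$-norm of the per-coordinate errors scale as $\norm{\sigma}_p\epsilon/\xi$, which is exactly what we need.

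A second point to handle is that a single sample size $n$ must work for all coordinates at once. This holds because $\epsilon_i/\sigma_i = 3\epsilon/\xi$ does not depend on $i$, so the same $n$ gives the same Chebyshev bound in every coordinate. Coordinates with $\sigma_i=0$ are estimated exactly, with zero error, and can be ignored.
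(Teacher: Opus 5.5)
Your proposal is correct and follows the paper's proof. Both use coordinate-wise median-of-means: Chebyshev bounds each batch mean's failure probability by $1/4$, Hoeffding gives $e^{-K/8}$ for the median, and a union bound over the $m$ coordinates sets $K = O(\log(m/\delta))$. Your $\sigma_i$-proportional thresholds $\epsilon_i = 3\sigma_i\epsilon/\xi$ are exactly the paper's thresholds $2\sigma_j/\sqrt{B}$, giving $p$-norm error $2\xi/\sqrt{B}$. Your batch size is larger than needed (the Chebyshev bound is actually $1/324$), but this only affects constants.
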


\begin{proof}
    Given an unbiased estimator $\hat{\theta}$ for $\theta \in \mathbb{R}^m$ satisfying the $p$-average RMSE bound, i.e. 
\begin{align}
\norm{\text{diag}(V(\rho_{\theta,\varphi},M,\htheta))^{1/2}}_p = \left( \sum_{j=1}^m \sigma_j^p \right)^{1/p} \le \xi, 
\end{align}
where $\sigma_j^2 := \Var[\htheta_j] = V(\rho_{\theta,\varphi},M,\htheta)_{jj}$ is the variance of estimating $\theta_j$. Our goal is to construct an estimator $\hat{\theta}_{\text{MoM}}$ such that with probability at least $1-\delta$:
\begin{align}
\norm{\hat{\theta}_{\text{MoM}} - \theta}_p \le \epsilon
\end{align}
using a sample complexity overhead that is poly-logarithmic in $m$ and $1/\delta$.

We start with the definition of the coordinate-wise median-of-means estimator: 
\begin{enumerate}[wide, labelwidth=!,itemindent=!,labelindent=0pt, leftmargin=0em, label={\arabic*.}, parsep=0pt]
    \item \textit{Sampling:} Collect $N_1 = K \times B$ independent samples of $\hat{\theta}$. Divide them into $K$ batches, each of size $B$.
    \item \textit{Batch Averaging:} For each batch $\ell \in \{1, \dots, K\}$, compute the empirical mean:
    \begin{align}
    \hat{\theta}^{[\ell]}_B := \frac{1}{B} \sum_{i=1}^B \hat{\theta}^{[\ell,i]},
    \end{align}
    where $\hat{\theta}^{[\ell,i]}$ is the $i$-th sample of $\htheta$ in the $\ell$-th batch. 
    \item \textit{Coordinate-wise Median:} For each coordinate $j \in \{1, \dots, m\}$, compute the median of the batch means:
    \begin{align}
    \hat{\theta}_{\text{MoM}, j} = \text{Median}\left( \hat{\theta}^{[1]}_{B,j},\ldots,\hat{\theta}^{[K]}_{B,j} \right). 
    \end{align}
\end{enumerate}

We now try to analyze the performance of the estimator. 
\begin{enumerate}[wide, labelwidth=!,itemindent=!,labelindent=0pt, leftmargin=0em, label={\arabic*.}, parsep=0pt]
    \item \textit{Bounding probability for a single batch (Chebyshev). }Fix a coordinate $j$, the variance of the batch mean $\hat{\theta}^{[\ell]}_{B,j}$ is $\sigma_j^2/B$. By Chebyshev's inequality, for any $\ell$:
    \begin{align}
    \Pr\left[ \left| \hat{\theta}^{[\ell]}_{B,j} - \theta_j \right| > 2 \frac{\sigma_j}{\sqrt{B}} \right] \le \frac{\Var[\hat{\theta}^{[\ell]}_{B,j}]}{(2 \sigma_j / \sqrt{B})^2} = \frac{\sigma_j^2/B}{4 \sigma_j^2/B} = \frac{1}{4}.
    \end{align}
    Let us define the ``bad'' event for the $\ell$-th batch on coordinate $j$ as $\mathcal{E}_{\ell, j} := \big\{ | \hat{\theta}^{[\ell]}_{B,j} - \theta_j | > 2 \frac{\sigma_j}{\sqrt{B}} \big\}$. We have established that $\Pr[\mathcal{E}_{\ell, j}] \le 1/4$.
    \item \textit{Bounding probability for the Median (Chernoff/Hoeffding). }For the median $\hat{\theta}_{\text{MoM}, j}$ to deviate from $\theta_j$ by more than $2 \sigma_j / \sqrt{B}$, more than half of the batches must satisfy the bad event $\mathcal{E}_{\ell, j}$. Let $S_j = \sum_{\ell=1}^K \id[\mathcal{E}_{\ell, j}]$, where $\id[\cdot]$ is the indicator function. The expected number of bad batches is $\mathbb{E}[S_j] \le K/4$. The failure condition for the median is $S_j \ge K/2$. Using Hoeffding's inequality:
    \begin{align}
    \Pr\left[ \left| \hat{\theta}_{\text{MoM}, j} - \theta_j \right| > 2 \frac{\sigma_j}{\sqrt{B}} \right] &\le \Pr\left[ S_j \ge \frac{K}{2} \right] \\
    &\le \exp\left( -2K \left( \frac{1}{2} - \frac{1}{4} \right)^2 \right) = \exp\left( -\frac{K}{8} \right).
    \end{align}
    \item \textit{Union Bound over Coordinates. }We require all coordinates to satisfy their respective bounds simultaneously to preserve the sum-structure of the $p$-norm. Apply the union bound over all $m$ coordinates. Let $\mathcal{F}$ be the event that \textit{any} coordinate $j$ fails (i.e., $| \hat{\theta}_{\text{MoM}, j} - \theta_j | > 2 \frac{\sigma_j}{\sqrt{B}}$ for some $j$).
    \begin{align}
    \Pr[\mathcal{F}] \le \sum_{j=1}^m \exp\left( -\frac{K}{8} \right) = m \exp\left( -\frac{K}{8} \right).
    \end{align}
    To ensure this failure probability is at most $\delta$, we set:
    \begin{align}
    m \exp\left( -\frac{K}{8} \right) \le \delta \implies K \ge 8 \ln\left( \frac{m}{\delta} \right).
    \end{align}
    \item \textit{Bounding the $p$-norm error. } Conditioned on the success event $\mathcal{F}^c$ (the complement of failure, which occurs with probability $\ge 1-\delta$), we have that for all $j \in \{1, \dots, m\}$:
    \begin{align}
    \left| \hat{\theta}_{\text{MoM}, j} - \theta_j \right| \le \frac{2 \sigma_j}{\sqrt{B}}.
    \end{align}
    Now, we compute the $p$-norm of the error vector:
    \begin{align}
    \norm{\hat{\theta}_{\text{MoM}} - \theta}_p = \left( \sum_{j=1}^m \left| \hat{\theta}_{\text{MoM}, j} - \theta_j \right|^p \right)^{1/p}
    \end{align}
    Substituting the coordinate-wise bounds:
    \begin{align}
    \norm{\hat{\theta}_{\text{MoM}} - \theta}_p \le \left( \sum_{j=1}^m \left( \frac{2 \sigma_j}{\sqrt{B}} \right)^p \right)^{1/p} = \frac{2}{\sqrt{B}} \left( \sum_{j=1}^m \sigma_j^p \right)^{1/p}.
    \end{align}
    Using the initial assumption that the $p$-average RMSE is bounded by $\xi$:
    \begin{align}
    \norm{\hat{\theta}_{\text{MoM}} - \theta}_p \le \frac{2 \xi}{\sqrt{B}}.
    \end{align}
\end{enumerate}

To achieve a target $p$-norm error of $\epsilon$, we set $\frac{2 \xi}{\sqrt{B}} \le \epsilon$, which implies $B \ge \frac{4 \xi^2}{\epsilon^2}$. The total number of samples required is:
\begin{align}
N_1 = K \times B = O\left( \log\left(\frac{m}{\delta}\right) \right) \times O\left( \frac{\xi^2}{\epsilon^2} \right).
\end{align}
Thus, using the coordinate-wise median-of-means estimator, we obtain a bounded $p$-norm error using the estimator of bounded $p$-average RMSE with an overhead of $O(\log(m/\delta))$.
\end{proof}

\subsection{Algorithm, sample complexity, and threshold}

Here we describe the algorithm that combines the three steps introduced above, analyze the corresponding sample complexity and derive the threshold below which the sample complexity matches our lower bound (up to logarithmic overhead). We consider the case of shadow tomography (\Cref{prob:p-learning}, equivalent to \Cref{prob:p-estimation}) and the case of oblivious single-observable estimation (\Cref{prob:oblivious-p-learning}, equivalent to \Cref{prob:oblivious-p-estimation}) separately below. 

\subsubsection{Shadow estimation with \texorpdfstring{$p$}{p}-norm error}

\begin{theorem}[Upper bound for estimation with \texorpdfstring{$p$}{p}-norm error]\label{thm:p-estimation-upper}
For any $\epsilon > 0$ and $p \in [1,\infty]$ 
there exists an algorithm that uses 
\begin{align}
\begin{split}
N &= O(d^{3}) + O\left(\frac{\log(m)}{\epsilon^2} \inf_{M\in\cM} \sup_{\substack{\rho_0 \in \cS^\circ}} {\norm{\diag\big( (I(\rho_0,M)^{-1})_{AA}\big)^{1/2}}^2_p} \right) \\
&= O(d^{3}) + O\left(\frac{\log(m) \Gamma_{p}(\{O_i\}_{i=1}^m)}{\epsilon^2}\right)
\end{split}
\end{align}
copies of $\rho$ and single-copy measurements to solve \Cref{prob:p-learning} (or equivalently, \Cref{prob:p-estimation}). In particular, when 
\begin{gather}
    \epsilon \leq \overline{\eta} = \sqrt{ \frac{\log(m)\Gamma_{p}(\{O_i\}_{i=1}^m)}{d^{3}} },\\
    N = O\left(\frac{\log(m) \Gamma_{p}(\{O_i\}_{i=1}^m)}{\epsilon^2}\right).  
\end{gather}
\end{theorem}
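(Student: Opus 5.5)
We assemble the two-step algorithm from Lemmas~\ref{lemma:tomography}, \ref{lemma:local-opt} and \ref{lemma:MoM}.

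\textbf{Preprocessing.} First, replace $\rho$ by $\frac12\rho+\frac{\id}{2d}\in\cS_{1/2}$. This halves every $\tr(O_i\rho)$, so it costs only a constant factor in $\epsilon$. From now on we assume $\rho\in\cS_{1/2}$.

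\textbf{Step one: coarse tomography.} Use $N_0=O(d^3)$ copies and the Haar-random single-copy measurement with estimator $\hrho_0(N_0,u)$. With probability at least $1-\delta/2$ this gives $\norm{\hrho_0-\rho}_\infty\le 1/(4d)$. In that event $\hrho_0$ is a valid full-rank density matrix, and Lemma~\ref{lemma:tomography} gives $\rho\in\DDD(\hrho_0)$. Relative to the reference $\rho_0:=\hrho_0$, write $\rho=\rho_{\theta,\varphi}$. Since $\theta_a=\tr(O_a\rho)-\tr(O_a\hrho_0)$, estimating $\theta$ is the same as estimating the targets.

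\textbf{Choice of measurement.} Fix $M^\star\in\cM$ attaining the infimum in $\Gamma_p$ up to a factor of $2$. Then for every full-rank $\rho_0$,
\[
\norm{\diag\big((I(\rho_0,M^\star)^{-1})_{AA}\big)^{1/2}}_p^2\le 2\,\Gamma_p .
\]
This holds in particular for the random $\hrho_0$, whatever it turns out to be. This is the point where the $\sup_{\rho_0}$ in $\Gamma_p$ is used: the algorithm commits to one measurement that is good at every reference state.

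\textbf{Step two: local estimation.} Condition on $\hrho_0$. Lemma~\ref{lemma:local-opt} provides an estimator $\htheta^{\mathrm{opt},\hrho_0}$ that is unbiased on $\DDD(\hrho_0)$ and satisfies $V\preceq 2(I(\hrho_0,M^\star)^{-1})_{AA}$. Its $p$-average RMSE is therefore
\[
\xi=\norm{\diag(V)^{1/2}}_p\le \sqrt{2}\,\norm{\diag\big((I(\hrho_0,M^\star)^{-1})_{AA}\big)^{1/2}}_p\le 2\sqrt{\Gamma_p}.
\]
Feed fresh copies, each measured with $M^\star$, into Lemma~\ref{lemma:MoM} with failure probability $\delta/2$. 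This gives $\norm{\htheta_\MoM-\theta}_p\le\epsilon$ using $O(\log(m/\delta))\cdot O(\Gamma_p/\epsilon^2)$ copies. Fresh copies make the two stages independent, so conditioning on $\hrho_0$ is legitimate. A union bound over the two stages with constant $\delta$ gives
\[
N=O(d^3)+O\big(\log(m)\,\Gamma_p/\epsilon^2\big).
\]

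\textbf{Threshold and expected obstacle.} For the threshold statement: $\epsilon\le\overline\eta$ is exactly the condition $d^3\le \log(m)\Gamma_p/\epsilon^2$, so the tomography term is absorbed into the second term.

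The main obstacle is technical. The infimum over $M$ need not be attained by a finite-outcome POVM. I would handle this by taking a near-minimizer, using the finite-outcome reduction from the proof of Lemma~\ref{lemma:minimax}. A second issue is that $I(\hrho_0,M^\star)$ may be singular on the $B$-block. Since only $(I^{-1})_{AA}$, i.e.\ the inverse Schur complement, enters, I would use a pseudoinverse for the nuisance part. If $(I^{-1})_{AA}$ itself is infinite, then $\Gamma_p=\infty$ and the bound holds trivially.
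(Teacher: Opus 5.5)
Your proposal is correct and matches the paper's proof essentially step for step. Both arguments mix into $\cS_{1/2}$, run Haar-random coarse tomography and apply Lemma~\ref{lemma:tomography} to get $\rho\in\DDD(\hrho_0)$, fix a measurement within a factor $2$ of the infimum defining $\Gamma_p$, use the locally optimal unbiased estimator of Lemma~\ref{lemma:local-opt} to obtain $p$-average RMSE $O(\sqrt{\Gamma_p})$, and finish with coordinate-wise median-of-means via Lemma~\ref{lemma:MoM}. Your extra remarks (independence of the two stages, full-rankness of $\hrho_0$, a pseudoinverse for a singular nuisance block) are sound refinements that the paper leaves implicit.
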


\begin{proof}
We first pick a POVM $M^\diamond \in \cM$ that achieves 
\begin{align}
\sup_{\rho_0 \in \cS^\circ } \norm{\diag(I(\rho_0,M^\diamond)^{-1}_{AA})^{1/2}}^2_p  \leq 2 \inf_{M \in \cM } \sup_{\rho_0 \in \cS^\circ } \norm{\diag(I(\rho_0,M)^{-1}_{AA})^{1/2}}^2_p = 2  \Gamma_{p}(\{O_i\}_{i=1}^m). 
\end{align}
Then we consider the following algorithm: 
\begin{algorithm}[htbp]
    \DontPrintSemicolon
    \caption{Shadow tomography with bounded $p$-norm error}\label{alg:learning}
    \KwInput{Observables $\{O_i\}_{i=1}^m$, $p \in [1,\infty]$, $\delta, \epsilon>0$, $N = N_0 + N_1$ copies of a $d$-dimensional state $\rho$}
    \KwOutput{Estimators $\hat{o} = (\hat{o}_i,\ldots,\hat{o}_m)$}
    \Goal{With probability at least $1-\delta$, $(\sum_{i=1}^m \abs{\trace(O_i\rho) - \hato_i}^p)^{1/p} < \epsilon$ }
    Mix all $\rho$'s evenly with the maximally mixed state: $\rho \mapsto \frac{1}{2}\rho + \frac{\id}{2d}$\\
    Apply the Haar random measurement on each of the $N_0$ copies of $\rho$ and obtain $\rho_0 = \hrho_0(N_0,u)$ as a coarse estimate $\rho$, where $u = (u_1,\ldots,u_{N_0})$ are the measurement outcomes, such that $\norm{\rho_0 - \rho}_\infty \leq 1/4d$ with high probability\\
    Apply POVM $M^\diamond$ on each of the $N_1$ copies of $\rho$ and obtain $N_1$ unbiased estimates of $\theta$ using the optimal unbiased estimator $\htheta^{\text{opt},\rho_0}$\\
    Divide the $N_1$ estimates into $K$ groups each with $B$ elements and calculate the coordinate-wise median-of-means estimator $\hat{\theta}_{\text{MoM}}$\\
    Return $\hat{o}_i =2 ( (\hat{\theta}_{\text{MoM}})_i + \trace(O_i \rho_0) )$\\
\end{algorithm}

Here picking 
\begin{gather}
N_0 = O(d^2 (d + \log(1/\delta))), \quad N_1 = O\left(\frac{\log(m)}{\epsilon^2} \Gamma_{p}(\{O_i\}_{i=1}^m) \right), 
\\
K = O\left( \log\left(\frac{m}{\delta}\right)\right),\quad 
B = O\left(\frac{1}{\epsilon^2} \Gamma_{p}(\{O_i\}_{i=1}^m) \right), 
\end{gather}
is sufficient to guarantee the desired performance of our algorithm. To see this, we first notice by \lemmaref{lemma:tomography}, $\rho \in \DDD(\rho_0)$ with high probability. Then using \lemmaref{lemma:local-opt}, we have 
\begin{align}
V(\rho_{\theta,\varphi},M^\diamond,\htheta^{\text{opt},\rho_0}) \preceq 2 (I(\rho_0,M^\diamond)^{-1})_{AA},
\end{align}
and 
\begin{align}
\norm{\text{diag}(V(\rho_{\theta,\varphi},M^\diamond,\htheta^{\text{opt},\rho_0})^{1/2})}^2_p \leq 2 
\norm{\text{diag}((I(\rho_0,M^\diamond)^{-1})_{AA})^{1/2})}^2_p \leq 4 \Gamma_{p}(\{O_i\}_{i=1}^m). 
\end{align}
Using \lemmaref{lemma:MoM}, it then follows that a coordinate-wise median-of-means estimator with $K = O(\log(m/\delta))$ groups of $B = O(\Gamma_{p}(\{O_i\}_{i=1}^m)/\epsilon^2)$ elements is sufficient for our purpose. 
\end{proof}

Note that from \cite{chiribella2007continuous} (see also the proof of \lemmaref{lemma:minimax}), here we can choose $M^\diamond$ as a POVM with at most $d^2$ measurement outcomes. 

\subsubsection{Oblivious single-observable estimation}

\begin{theorem}[Upper bound for oblivious single-observable estimation]\label{thm:ob-estimation-upper}
For any $\epsilon > 0$ and $p \in [1,\infty]$ 
there exists an algorithm that uses 
\allowdisplaybreaks
\begin{align}
N &= O(d^{3}) + O\left(\frac{1}{\epsilon^2}\inf_{M\in\cM} \sup_{\substack{\rho_0 \in \cS^\circ}}\max_{\substack{\alpha\in\bR^m\\ \norm{\alpha}_q \leq 1}} {\alpha^\top (I(\rho_0,M)^{-1})_{AA} \alpha} \right)\\
&= O(d^{3}) + O\left(\frac{\Gamma^{\ob}_{p}(\{O_i\}_{i=1}^m)}{\epsilon^2}\right)
\end{align}
copies of $\rho$ and single-copy measurements to solve \Cref{prob:oblivious-p-learning} (or equivalently, \Cref{prob:oblivious-p-estimation}). When 
\begin{equation}
    \epsilon \leq \overline{\eta}^{\ob} = \sqrt{ \frac{\Gamma^{\ob}_{p}(\{O_i\}_{i=1}^m)}{d^{3}} },
\end{equation}
\begin{equation}
    N = O\left(\frac{\Gamma^{\ob}_{p}(\{O_i\}_{i=1}^m) }{\epsilon^2}\right).  
\end{equation}
\end{theorem}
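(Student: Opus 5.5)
The proof runs parallel to that of \Cref{thm:p-estimation-upper}, with the coordinate-wise median-of-means step replaced by a scalar median-of-means step that is performed after $\alpha$ is revealed.

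First, we fix a POVM $M^\diamond\in\cM$ (with at most $d^2$ outcomes, by the decomposition of \cite{chiribella2007continuous}) that achieves
\begin{equation}
\sup_{\rho_0\in\cS^\circ}\max_{\norm{\alpha}_q\le1}\alpha^\top (I(\rho_0,M^\diamond)^{-1})_{AA}\alpha\le 2\,\Gamma^\ob_p(\{O_i\}_{i=1}^m).
\end{equation}
The algorithm begins as \algo{learning} does. It mixes each copy with the maximally mixed state, so that $\rho\in\cS_{1/2}$ and every expectation value is rescaled by a known factor $1/2$. It then uses $N_0=O(d^2(d+\log(1/\delta)))=O(d^3)$ Haar-random single-copy measurements to produce $\rho_0=\hrho_0(N_0,u)$ with $\norm{\rho_0-\rho}_\infty\le 1/(4d)$ with high probability. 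By \lemmaref{lemma:tomography}, this gives $\rho\in\DDD(\rho_0)$.

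Next, we measure each of the remaining $N_1$ copies with $M^\diamond$ and record the vector-valued estimates $\htheta^{\text{opt},\rho_0}(y)\in\bR^m$. These are stored classically, so no choice of $\alpha$ is needed at measurement time. By \lemmaref{lemma:local-opt}, each such estimate is unbiased with MSEM at most $2(I(\rho_0,M^\diamond)^{-1})_{AA}$. Hence, for every fixed $\alpha$ with $\norm{\alpha}_q\le1$, the scalar $\alpha\cdot\htheta^{\text{opt},\rho_0}$ is unbiased for $\theta_\alpha$. By \eqref{eq:alpha-mse}, its variance is at most
\begin{equation}
2\,\alpha^\top (I(\rho_0,M^\diamond)^{-1})_{AA}\alpha\le 4\,\Gamma^\ob_p .
\end{equation}

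Once $\alpha$ is revealed, we split the $N_1$ samples into $K$ batches of size $B$, average $\alpha\cdot\htheta$ within each batch, and output the median of the batch means, plus the known offset $\alpha\cdot(\trace(O_i\rho_0))_i$, with the result multiplied by $2$ to undo the mixing. This is the one-coordinate case of \lemmaref{lemma:MoM}. Chebyshev's inequality makes each batch mean $2\sqrt{4\Gamma^\ob_p/B}$-accurate with probability at least $3/4$, and Hoeffding's inequality bounds the median's failure probability by $e^{-K/8}$. The guarantee is required only for the single revealed $\alpha$, so no union bound over $m$ coordinates is needed. Therefore $K=O(\log(1/\delta))=O(1)$ and $B=O(\Gamma^\ob_p/\epsilon^2)$ suffice, and this is why no $\log m$ factor appears. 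The total sample count is $N_0+N_1=O(d^3)+O(\Gamma^\ob_p/\epsilon^2)$. When $\epsilon\le\overline\eta^\ob=\sqrt{\Gamma^\ob_p/d^3}$, the first term is dominated by the second.

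The main point to check is the obliviousness. The measurement ($M^\diamond$) and the reference state ($\rho_0$ from the first stage) are chosen independently of $\alpha$, and the variance bound holds uniformly over all $\norm{\alpha}_q\le1$. The bound on $M^\diamond$ is a supremum over $\rho_0$, so it applies to whatever random $\rho_0$ the first stage returns. Finally, conditioning on the tomography success event, which has constant probability, and intersecting it with the median's success event keeps the overall failure probability at most $1/3$.
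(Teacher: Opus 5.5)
Your proposal is correct and follows the paper's proof. Both fix a near-optimal $M^\diamond$, mix with $\id/d$, and run $O(d^3)$-copy Haar-random tomography so that \lemmaref{lemma:tomography} gives $\rho\in\DDD(\rho_0)$. Both then use \lemmaref{lemma:local-opt} to bound $\Var[\alpha\cdot\htheta^{\text{opt},\rho_0}]\le 4\Gamma^{\ob}_p$ and finish with a scalar median-of-means after $\alpha$ is revealed, hence no $\log m$ factor.

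Two points are worth stating explicitly. First, on the tomography success event $\rho_0\succeq\frac{1}{4d}\id$ and $\rho_0$ has unit trace, so $\rho_0\in\cS^\circ$ and the supremum defining $M^\diamond$'s guarantee really does cover it. Second, your choice $B=O(\Gamma^{\ob}_p/\epsilon^2)$ is the correct one; the paper's proof writes $\Gamma_p$ there, which is a typo.
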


\begin{proof}
We first pick a POVM $M^\diamond \in \cM$ that achieves 
\begin{multline}
\sup_{\substack{\rho_0 \in \cS^\circ}}\max_{\substack{\alpha\in\bR^m\\ \norm{\alpha}_q \leq 1}} {\alpha^\top (I(\rho_0,M^\diamond)^{-1})_{AA} \alpha} \\ \leq 2 \inf_{M\in\cM} \sup_{\substack{\rho_0 \in \cS^\circ}}\max_{\substack{\alpha\in\bR^m\\ \norm{\alpha}_q \leq 1}} {\alpha^\top (I(\rho_0,M)^{-1})_{AA} \alpha} = 2 \Gamma^{\ob}_{p}(\{O_i\}_{i=1}^m). 
\end{multline}
Then we consider the following algorithm: 
\begin{algorithm}[htbp]
    \DontPrintSemicolon
    \caption{Oblivious estimation of $O_\alpha = \sum_i \alpha_i O_i$ for $\norm{\alpha}_q \leq 1$ }\label{alg:oblivious-learning}
    \KwInput{Observables $\{O_i\}_{i=1}^m$, $p \in [1,\infty]$, $\delta, \epsilon>0$, copies of a $d$-dimensional state $\rho$
    }
    \KwOutput{Estimator $\hat{o}_\alpha$}
    \Goal{With probability at least $1-\delta$, $|\trace(\rho O_\alpha) - \hat{o}_\alpha| < \epsilon$ }
    Mix all $\rho$'s evenly with the maximally mixed state: $\rho \mapsto \frac{1}{2}\rho + \frac{\id}{2d}$\\
    Apply the Haar random measurement on each of the $N_0$ copies of $\rho$ and obtain $\rho_0 = \hrho_0(N_0,u)$ as a coarse estimate $\rho$, where $u = (u_1,\ldots,u_{N_0})$ are the measurement outcomes, such that $\norm{\rho_0 - \rho}_\infty \leq 1/4d$ with high probability\\
    Apply POVM $M^\diamond$ on each of the $N_1$ copies of $\rho$ and obtain $N_1$ unbiased estimates of $\theta$ using the optimal unbiased estimator $\htheta^{\text{opt},\rho_0}$\\
    Reveal $\alpha \in \bR^m$, which satisfies $\norm{\alpha}_q \leq 1$\\
    Divide the $N_1$ samples of estimates $\htheta_\alpha := \sum_{i=1}^m \alpha_i \htheta^{\text{opt},\rho_0}_i$ into $K$ groups each with $B$ elements and calculate the median-of-means estimator $\hat{\theta}_{\text{MoM},\alpha}$ \\
    Return $\hat{o}_\alpha = 2 (\hat{\theta}_{\text{MoM},\alpha} + \trace(\rho_0 O_\alpha))$\\
    \end{algorithm}

Here picking 
\begin{gather}
N_0 = O(d^2 (d + \log(1/\delta))), \quad N_1 = O\left(\frac{\log(1/\delta)}{\epsilon^2} \Gamma_{p}(\{O_i\}_{i=1}^m) \right),\\
K = O(\log(1/\delta)),\quad B = O\left(\frac{1}{\epsilon^2} \Gamma_{p}(\{O_i\}_{i=1}^m) \right),
\end{gather}
is sufficient to guarantee the desired performance of our algorithm. To see this, we first notice by \lemmaref{lemma:tomography}, $\rho \in \DDD(\rho_0)$ with high probability. Then using \lemmaref{lemma:local-opt}, we have 
\begin{align}
V(\rho_{\theta,\varphi},M^\diamond,\htheta^{\text{opt},\rho_0}) \preceq 2 (I(\rho_0,M^\diamond)^{-1})_{AA},
\end{align}
and 
\begin{align}
\alpha^\top V(\rho_{\theta,\varphi},M^\diamond,\htheta^{\text{opt},\rho_0}) \alpha \leq  2 \alpha^\top (I(\rho_0,M^\diamond)^{-1})_{AA} \alpha \leq 4 \Gamma^{\ob}_{p}(\{O_i\}_{i=1}^m). 
\end{align}
Furthermore, let $\htheta_\alpha = \sum_{i=1}^m \alpha_i (\htheta^{\text{opt},\rho_0})_i$, we have 
\begin{align}
\Var[\htheta_\alpha] = \alpha^\top V(\rho_{\theta,\varphi},M^\diamond,\htheta^{\text{opt},\rho_0}) \alpha \leq 4 \Gamma^{\ob}_{p}(\{O_i\}_{i=1}^m).  
\end{align}
We can consider a simpler variant of \lemmaref{lemma:MoM}, where only a single parameter $\theta_\alpha = \sum_{i=1}^\alpha \alpha_i \theta_i$ is to be estimated, which implies whenever $\Var[\htheta_\alpha] \leq \xi^2$, $K = O(\log(1/\delta))$ and $B = O(\xi^2/\epsilon^2)$ samples can guarantee the Median-of-Mean estimator $\htheta_{\MoM,\alpha}$ has an additive error within $\epsilon/2$ with probability at least $1-\delta$, which achieves our desired precision. 
\end{proof}

Note that from \cite{chiribella2007continuous} (see also the proof of \lemmaref{lemma:minimax}), here we can choose $M^\diamond$ as a POVM with at most $d^2$ measurement outcomes.

\subsection{Simple relation between \texorpdfstring{$\Gamma_p$}{Gamma\_b} and \texorpdfstring{$\Gamma^\ob_p$}{Gamma\^op\_b}}

Due to the tightness of our bounds for both the oblivious estimation task and the shadow estimation task, we already know that 
\begin{equation}
    \Gamma^\ob_p(\{O_i\}_{i=1}^m) = O(\Gamma_p(\{O_i\}_{i=1}^m))
\end{equation}
for all $p \in [1,\infty]$, which means $\Gamma^\ob_p$ is no larger than $\Gamma_p$ up to constant. Here we show the constant is simply $1$. 
\begin{lemma}\label{lem:simple_relation}
For all $p \in [1,\infty]$, $\Gamma^\ob_p(\{O_i\}_{i=1}^m) \leq \Gamma_p(\{O_i\}_{i=1}^m).$ The equality holds when $p = \infty$. 
\end{lemma}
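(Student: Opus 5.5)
The plan is to compare the two inner quantities pointwise for each fixed measurement $M$ and state $\rho_0$, and only then take $\sup_{\rho_0}$ and $\inf_M$. Since both of these operations are monotone, the pointwise comparison transfers directly to $\Gamma^\ob_p$ and $\Gamma_p$. Write $\Sigma := (I(\rho_0,M)^{-1})_{AA}$, which is positive semidefinite. By \lemmaref{lemma:independence} it is independent of the choice of dual basis. Let $\sigma_i := \Sigma_{ii}^{1/2}$. The goal is then to show
\begin{equation}
\max_{\norm{\alpha}_q\le 1}\alpha^\top\Sigma\alpha \;\le\; \norm{(\sigma_1,\ldots,\sigma_m)}_p^2 = \norm{\diag(\Sigma)^{1/2}}_p^2 ,
\end{equation}
where the Schatten $p$-norm of a diagonal matrix coincides with the $L_p$ norm of its diagonal.

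\textbf{The inequality.} Positive semidefiniteness gives $|\Sigma_{ij}|\le \sigma_i\sigma_j$ (each $2\times 2$ principal minor is nonnegative). Hence
\begin{equation}
\alpha^\top\Sigma\alpha \le \sum_{i,j}|\alpha_i||\alpha_j|\sigma_i\sigma_j = \Big(\sum_i |\alpha_i|\sigma_i\Big)^2 \le \norm{\alpha}_q^2\,\norm{\sigma}_p^2 ,
\end{equation}
where the last step is H\"older's inequality. Maximizing over $\norm{\alpha}_q\le1$, then taking $\sup_{\rho_0\in\cS^\circ}$ and $\inf_{M\in\cM}$ on both sides, gives $\Gamma^\ob_p\le\Gamma_p$ for every $p\in[1,\infty]$.

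\textbf{Equality at $p=\infty$.} Here $q=1$. The function $\alpha\mapsto\alpha^\top\Sigma\alpha$ is convex, so its maximum over the $L_1$ unit ball is attained at an extreme point $\pm e_i$. That maximum therefore equals $\max_i\Sigma_{ii} = \max_i\sigma_i^2 = \norm{\sigma}_\infty^2$. The two inner quantities thus coincide for every $(M,\rho_0)$, and so do $\Gamma^\ob_\infty$ and $\Gamma_\infty$. This also recovers the formula $\inf_M\sup_{\rho_0}\max_a (I^{-1})_{aa}$ stated in the corollary.

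\textbf{Main obstacle.} The only subtle point is making sure the comparison is genuinely pointwise in $(M,\rho_0)$, so that the $\inf\sup$ structure can be passed through. This holds because the same matrix $\Sigma$ appears in both definitions. When $I(\rho_0,M)$ is singular, I would interpret $I^{-1}$ as in the definitions: either both sides are infinite, or one uses the pseudoinverse, for which the argument is unchanged.
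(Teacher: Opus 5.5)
Your proposal is correct and follows essentially the same route as the paper: a pointwise bound $\sqrt{\alpha^\top\Sigma\alpha}\le\sum_i|\alpha_i|\sigma_i$, then H\"older, then passing through $\sup_{\rho_0}$ and $\inf_M$, with equality at $p=\infty$ from convexity and the extreme points $\pm e_i$ of the $L_1$ ball. The only cosmetic difference is how you get the pointwise bound: you use the entrywise estimate $|\Sigma_{ij}|\le\sigma_i\sigma_j$, while the paper applies the triangle inequality to the columns of $\Sigma^{1/2}$; these are equivalent.
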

\begin{proof}
To prove the inequality, we only need to show for any matrix $R \succeq 0$, 
\begin{equation}
\label{eq:gamma-gamma-ob}
\max_{\|\alpha\|_q \le 1} \alpha^\top R \alpha \le \bigl\|\diag(R)^{1/2}\bigr\|_p^2.
\end{equation}
Let $v_i$ denote the $i$-th column of $\sqrt{R}$. Then for any $\alpha$,
\begin{align}
\sqrt{\alpha^\top R \alpha}
= \|\sqrt{R}\alpha\|_2
= \left\|\sum_{i=1}^m \alpha_i v_i\right\|_2 
\le \sum_{i=1}^m |\alpha_i|\|v_i\|_2.
\end{align}
Moreover, $\|v_i\|_2^2 = R_{ii}$. We have 
\begin{align}
\sqrt{\alpha^\top R \alpha} \le \sum_{i=1}^m |\alpha_i| \sqrt{R_{ii}} . 
\end{align}
For any $\|\alpha\|_q\le 1$, by H\"{o}lder's inequality,
\begin{align}
\sqrt{\alpha^\top R \alpha} = \sum_{i=1}^m |\alpha_i| \sqrt{R_{ii}} \leq \left(\sum_{i=1}^m \sqrt{R_{ii}}^p\right)^{1/p} = \bigl\|\diag(R)^{1/2}\bigr\|_p, 
\end{align}
proving \eqref{eq:gamma-gamma-ob}. Finally, we note that when $p = \infty$ and $q = 1$, 
\begin{align}
\max_{\norm{\alpha}_1 \leq 1} \alpha^\top R \alpha = \max_{i} R_{ii} = \bigl\|\diag(R)^{1/2}\bigr\|^2_\infty.
\end{align}
This is because $\alpha^\top R \alpha$ is a convex function in $\alpha$, and the maximum can be taken at extreme points, i.e. when $\alpha$ has only one entry equal to one. 
\end{proof}

\section{Example: Pauli estimation}
\label{sec:example}

We showcase our results with a concrete example of Pauli observable estimation. While the case of $p=\infty$ has been thoroughly investigated in~\cite{chen2024optimal} with tight sample complexity bound obtained, our framework extends the analysis to all $p\in[1,\infty]$ in the high-precision regime. 
Furthermore, our bounds are tight up to log factors for all $p\ge2$.

\begin{theorem}[Oblivious learning of Pauli observables, single-copy measurements]\label{thm:oblivious-pauli}
    Let $d=2^n$ and 
    $\{O_i\coleq P_i\}_{i=1}^{d^2-1}$ be all $n$-qubit traceless Pauli operators. Then,
    \begin{equation}
    \Omega\!\left({d}\right)\le \Gamma^\ob_p\le
    \left\{
    \begin{aligned}
         &O\!\left({d\log d}\right),\quad&&\textrm{if~}p\in[2,\infty],\\
         &O\!\left({d^{\frac4p-1}\log d}\right),\quad&&\textrm{if~}p\in[1,2).\\
    \end{aligned}
    \right.
    \end{equation}
    Furthermore, the following sample complexity bounds for the oblivious learning problems (\Cref{prob:oblivious-p-learning}) hold:
    \begin{equation}
    \begin{aligned}
        N& = \Omega(\Gamma^\ob_p/\epsilon^2) = \Omega(d/\varepsilon^2),\quad&&\textrm{when}~\varepsilon< \frac16d^{\frac 2p-2}.
    \\N&= O(\Gamma^\ob_p/\epsilon^2)  = \left\{
    \begin{aligned}
         &O\left({d\log d}/\epsilon^2\right),\quad\textrm{if~}p\in[2,\infty],\\
         &O\left({d^{\frac4p-1}\log d}/\epsilon^2\right),\quad\textrm{if~}p\in[1,2),\\
    \end{aligned}
    \right.\quad&&\textrm{when}~
          \varepsilon< c_0d^{-1}.
    \end{aligned}
    \end{equation}
    Here $c_0>0$ is some absolute constant.
\end{theorem}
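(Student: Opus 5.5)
For the complete Pauli set, the dual basis is forced: $\Tr(P_iP_j)=d\delta_{ij}$, so $Q_a=P_a$ and $B=\emptyset$. The FIM is
\begin{equation*}
I(\rho_0,M)_{ab}=\frac{1}{d^2}\sum_x \frac{\Tr(P_aM_x)\Tr(P_bM_x)}{\Tr(\rho_0M_x)}.
\end{equation*}
The plan is to bound $\Gamma^\ob_p=\inf_M\sup_{\rho_0}\max_{\|\alpha\|_q\le1}\alpha^\top I(\rho_0,M)^{-1}\alpha$ from both sides. Then feed these bounds and explicit thresholds into \Cref{thm:lower_threshold} for the lower bound and \Cref{thm:ob-estimation-upper} for the upper bound.

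\emph{Lower bound $\Gamma^\ob_p=\Omega(d)$.} Take $\rho_0=\id/d$, which is admissible up to the limit inside $\cS^\circ$. Then
\begin{equation*}
\Tr I(\id/d,M)=\frac{1}{d}\sum_x\frac{\sum_a\Tr(P_aM_x)^2}{\Tr M_x}.
\end{equation*}
Using $\sum_a\Tr(P_aM_x)^2=d\Tr(M_x^2)-(\Tr M_x)^2\le d(\Tr M_x)^2$, we get $\Tr I\le \sum_x\Tr M_x=d$. Next, choose $\alpha$ supported on one coordinate $e_a$, which has $\|e_a\|_q=1$. Then $\max_a (I^{-1})_{aa}\ge \max_a 1/I_{aa}\ge (d^2-1)/\Tr I\ge (d^2-1)/d=\Omega(d)$. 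This holds for every $M$, so $\Gamma^\ob_p=\Omega(d)$ for all $p$. Combined with \Cref{thm:lower_threshold}, this gives $N=\Omega(d/\epsilon^2)$ once $\epsilon\le\eta=1/(6\|(\|P_a\|_\infty)_a\|_q)=\frac{1}{6}(d^2-1)^{-1/q}\ge\frac16 d^{\frac2p-2}$, using $1/q=1-1/p$.

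\emph{Upper bound.} I would exhibit one measurement: the uniform mixture over the $d+1$ mutually unbiased bases (or, if a log factor is acceptable, random Clifford/stabilizer-basis measurements), and bound $\sup_{\rho_0}\max_\alpha \alpha^\top I^{-1}\alpha$. In this case $I(\rho_0,M)$ is a frame operator weighted by $1/\Tr(\rho_0M_x)$. Because $\Tr(\rho_0M_x)\le \Tr M_x\cdot\|\rho_0\|_\infty$, one gets a domination $I(\rho_0,M)\succeq$ something like the $\rho_0=\id/d$ FIM, which is $\approx \frac{1}{d+1}\id$ for a 2-design. However, $\|\rho_0\|_\infty$ can be $1$, so the sup over $\rho_0$ is the real issue.

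I would handle this in two ways. First, restrict to $\cS_{1/2}$ via the mixing trick \eqref{eq:s-1/2-equiv}, which costs a constant. Second, use the log-factor slack and randomize over Clifford-rotated bases. By the Cauchy–Schwarz/convexity bound $\sum_x(\cdot)^2/p_x\ge(\sum_x|\cdot|)^2$ per basis, one gets $\alpha$-dependent lower bounds on $I$ of order $1/(d\log d)$ in every Pauli direction. This yields $\|I^{-1}\|_\infty=O(d\log d)$.

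Then $\max_{\|\alpha\|_q\le1}\alpha^\top I^{-1}\alpha\le \|I^{-1}\|_\infty\max_{\|\alpha\|_q\le1}\|\alpha\|_2^2$. This factor equals $1$ for $q\le2$, giving $O(d\log d)$ when $p\ge2$. For $q>2$ it equals $(d^2-1)^{1-2/q}=d^{2(2/p-1)}$, giving $O(d^{4/p-1}\log d)$ when $p<2$. Finally, \Cref{thm:ob-estimation-upper} gives the sample complexity whenever the tomography cost $O(d^3)$ is dominated. Since $\Gamma^\ob_p=\Omega(d)$, this holds when $\epsilon\le\sqrt{\Gamma^\ob_p/d^3}$, which is implied by $\epsilon<c_0d^{-1}$.

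\emph{Main obstacle.} The hard step is the uniform-in-$\rho_0$ operator-norm bound $\|I(\rho_0,M)^{-1}\|_\infty=O(d\log d)$ for near-pure $\rho_0$. There the outcome probabilities are very non-uniform, and the naive domination fails by a factor of $d$. I would try to control it as in the Pauli analyses of \cite{chen2024optimal}: average over a Clifford twirl, and lower bound $\sum_x \Tr(P M_x)^2/\Tr(\rho_0 M_x)$ by a Cauchy–Schwarz against $\sum_x\Tr(\rho_0M_x)=1$ within each stabilizer basis containing $P$. This gives at least $\frac{1}{d^2}(\sum_x|\Tr(PM_x)|)^2$ per such basis. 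I would then count the fraction of such bases, which is where the $\log d$ would enter.
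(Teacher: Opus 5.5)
The parts of your plan outside the upper bound on $\Gamma^\ob_p$ are fine and match the paper. These are the lower bound $\Gamma^\ob_p=\Omega(d)$ (the primal form of the paper's one-hot/twirling argument at $\rho_0=\id/d$), the threshold $\frac16 d^{2/p-2}$, the reduction $\max_{\|\alpha\|_q\le1}\alpha^\top I^{-1}\alpha\le\|I^{-1}\|_\infty\max_{\|\alpha\|_q\le1}\|\alpha\|_2^2$, and the final appeal to Theorem~\ref{thm:ob-estimation-upper}.

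The gap is exactly the step you flag. You need one measurement $M$ with $\lambda_{\min}(I(\rho_0,M))=\Omega(1/(d\log d))$ for \emph{every} $\rho_0$, and neither of your routes provides it.

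The MUB mixture provably fails. Take $b_1\neq b_2$ in one basis $B_0$ of the stabilizer MUB, $\rho_0=\frac14(|b_1\rangle\langle b_1|+|b_2\rangle\langle b_2|)+\frac{\id}{2d}\in\cS_{1/2}$, and $X=\sum_a\theta_aP_a=|b_1\rangle\langle b_1|-|b_2\rangle\langle b_2|$. Since $|\langle b|b_i\rangle|^2=1/d$ for $b\notin B_0$, $X$ has zero diagonal in every other basis, so
\[
\theta^\top I(\rho_0,M)\theta=\frac{1}{d^2(d+1)}\sum_{B,b}\frac{\langle b|X|b\rangle^2}{\langle b|\rho_0|b\rangle}=\frac{8}{d(d+1)(d+2)},\qquad \|\theta\|_2^2=\frac{2}{d}.
\]
Hence $\lambda_{\min}(I)\le 4/((d+1)(d+2))$. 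Because $\theta$ has $d/2$ entries of size $2/d$, Lemma~\ref{lem:duality} gives $\max_{\|\alpha\|_q\le1}\alpha^\top I^{-1}\alpha=\Omega(d^{1+2/p})$. This exceeds the claimed bound for every $p\in(1,\infty)$, e.g.\ it is $\Omega(d^2)$ at $p=2$. Restricting to $\cS_{1/2}$ does not help, since this $\rho_0$ already lies there.

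The same example shows why your per-basis Cauchy--Schwarz cannot close the gap. It yields $I_{PP}\ge 1/(d+1)$ for every Pauli, since each $P$ is diagonal in one MUB basis, yet $\lambda_{\min}(I)=O(1/d^2)$. Lower bounds on diagonal entries control neither $\|I^{-1}\|_\infty$ nor $(I^{-1})_{aa}$, because $(I^{-1})_{aa}\ge 1/I_{aa}$ points the wrong way. For a general $X$, the only available bound $(\sum_b|\langle b|X|b\rangle|)^2\ge\sum_b\langle b|X|b\rangle^2$ loses a factor $d$ and just reproduces the naive $O(d^2)$. For random Clifford bases you give no argument. Stabilizer states form only a 3-design, so the moment control below is not available for them.

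The missing idea, which is the paper's, is to use the Haar-uniform POVM $\{d|\psi\rangle\langle\psi|\,\mathrm{d}\psi\}$. For it, $\theta^\top I(\rho_0,M)\theta=\E_\psi\langle\psi|X|\psi\rangle^2/(d\langle\psi|\rho_0|\psi\rangle)$, and uniformity in $\rho_0$ comes from high moments:
\[
\E_\psi\langle\psi|\rho_0|\psi\rangle^k=\Tr(\rho_0^{\otimes k}\Pi_k)\Big/\binom{d+k-1}{k}\le \frac{k!}{d^k}\quad\text{for every state, since } \Pi_k\preceq\id .
\]
With $k=\lceil c\log d\rceil$ this gives $\Pr[\langle\psi|\rho_0|\psi\rangle\ge c\log d/d]=O(\sqrt{\log d}\,d^{-c})$. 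Off this bad set the denominator is below $c\log d$, so the FIM is at least $\frac{1}{c\log d}\bigl(\frac{\|\theta\|_2^2}{d+1}-\text{bad part}\bigr)$. The bad part is at most $\Pr[\text{bad}]\,\|\theta\|_1^2\le\Pr[\text{bad}]\,d^2\|\theta\|_2^2$, which is negligible for $c\ge4$. This gives $I(\rho_0,M)\succeq\Omega(1/(d\log d))\,\id$ uniformly in $\rho_0$, after which your norm conversion finishes the proof. The $\log d$ is the price of the tail threshold, not of counting bases.
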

\begin{proof}
    When $O_i=P_i$ for all $i\in[m]$ where $m=d^2-1$, it is easy to see that the dual operator basis is uniquely determined as $Q_i=P_i$ for all $i\in[m]$.
    Recall the dual representation of $\Gamma_p^\ob$ thanks to \Cref{lem:duality},
    \begin{equation}
        (\Gamma^\ob_p)^{-1} = \sup_{M\in\cM}\inf_{\rho_0\in\cS^\circ} \min_{\|\theta\|_p = 1} \theta^\top I(\rho_0,M) \theta.
    \end{equation}
    Let us first derive an upper bound for this,
    \begin{equation}
    \begin{aligned}
        (\Gamma^\ob_p)^{-1} &\le \sup_{M\in\cM}\min_{\|\theta\|_p = 1} \theta^\top I(\id/d,M) \theta
        \\&\le \sup_{M\in\cM}\min_{i\in[m]} I(\id/d,M)_{ii}
        \\&= \sup_{M\in\cM}\min_{i\in[m]} \sum_x\frac{\tr^2(M_xP_i)}{d\Tr(M_x)}
        \\&\le \sup_{M\in\cM}\frac1m\sum_{i\in[m]} \sum_x\frac{\tr^2(M_xP_i)}{d\Tr(M_x)}
        \\&\le \sup_{M\in\cM} \frac{d}{d^2-1}\sum_x\frac{\tr(M_x^2)}{d\Tr(M_x)}
        \\&\le \frac{d}{d^2-1} = O(d^{-1}).
    \end{aligned}
    \end{equation}
    The first line fixes $\rho_0=\id/d$. The second line restricts $\theta$ to one-hot vectors. The fifth line uses the twirling formula for Pauli operators: $\sum_{P\in\cP_n}\Tr^2(PM_x)=d\Tr(M_x^2)$. The last line uses $\Tr(M_x^2)\le\tr^2(M_x)$ and the normalization condition of POVMs.

    Next, we derive a lower bound. By fixing the $M$ to be the Haar random measurement $\cM_\mu\coleq\{d\ketbra{\psi}{\psi}\}_{\psi\sim\mu}$, where $\mu$ is the Haar measure over $d$-dimensional pure states:
    \begin{equation}\label{eq:eg_Gamma_lower}
    \begin{aligned}
        (\Gamma^\ob_p)^{-1} &\ge \inf_{\rho_0\in\cS^\circ} \min_{\|\theta\|_p = 1} \theta^\top I(\rho_0,M_\mu) \theta
        \\&= \inf_{\rho_0\in\cS^\circ} \min_{\|\theta\|_p = 1}\E_{\psi\in\mu}\frac{\braket{\psi|\sum_i\theta_iP_i|\psi}^2}{d\braket{\psi|\rho_0|\psi}}.
    \end{aligned}
    \end{equation}
    For any $x>0$ and $k\in\mathbb N_+$, we have the following concentration bound:
    \begin{equation}
        \Pr_{\psi\sim\mu}\left(\braket{\psi|\rho_0|\psi}\ge x\right)\le\frac{\E_{\psi\sim\mu}\braket{\psi|\rho_0|\psi}^k}{x^k} = \frac{\Tr\left(\rho_0^{\otimes k}\Pi_k\right)}{\binom{d+k-1}{k}x^k}\le \frac{k!}{d^kx^k}\le \frac{\sqrt{2\pi k}k^k e^{\frac1{12k}}}{e^kd^kx^k}.
    \end{equation}
    The first line uses the $k$-th order Markov inequality. The second line uses the Haar integral formula, where $\Pi_k$ is the projector onto the symmetric subspace. The third line uses that $\Pi_k\le\id$ and simple algebra on the binomial coefficients. The last line uses a non-asymptotic Stirling's upper bound on $k!$~\cite{robbins1955remark}. 
    Now we take $x\coleq ({c\log d})/{d}$ and $k = \lceil c\log d\rceil$ for some constant $c\ge4$. The above inequality becomes
    \begin{equation}
        \Pr_{\psi\sim\mu}\left(\braket{\psi|\rho_0|\psi}\ge \frac{c\log d}{d}\right)\le C \frac{\sqrt{\log d}}{d^c}.
    \end{equation}
    For some constant $C>0$ that depends only on $c$. Now, fix any $\rho_0\in\cS^\circ$ and $\theta\in\mathbb{R}^m$. Call $\psi$ good if $\braket{\psi|\rho_0|\psi}<(c\log d)/d$ and bad elsewise.
    We have:
    \begin{equation}
    \begin{aligned}
        \E_{\psi\in\mu}\frac{\braket{\psi|\sum_i\theta_iP_i|\psi}^2}{d\braket{\psi|\rho_0|\psi}} &\ge \int_{\psi:\mathrm{good}}\mathrm{d}\mu_\psi\frac{\braket{\psi|\sum_i\theta_iP_i|\psi}^2}{d\braket{\psi|\rho_0|\psi}}
        \\&>\frac{1}{c\log d}\int_{\psi:\mathrm{good}}\mathrm{d}\mu_\psi{\braket{\psi|\sum_i\theta_iP_i|\psi}^2}
        \\&= \frac{1}{c\log d}\left(\int \mathrm{d}\mu_\psi{\braket{\psi|\sum_i\theta_iP_i|\psi}^2} - \int_{\psi:\mathrm{bad}}\mathrm{d}\mu_\psi{\braket{\psi|\sum_i\theta_iP_i|\psi}^2}\right)
        \\&\ge \frac{1}{c\log d}\left( \frac{d\|\theta\|_2^2}{d(d+1)} - C\frac{\sqrt{\log d}}{d^c}\|\theta\|_1^2\right)
        \\&\ge \frac{\|\theta\|^2_2}{c\log d}\left( \frac1{d+1} - C\frac{\sqrt{\log d}}{d^{c-2}} \right)
        \\&= \|\theta\|_2^2 ~\Omega\left(\frac{1}{d\log d}\right).
    \end{aligned}
    \end{equation}
    The first line restricts the integral to good $\psi$'s. The second line uses the definitions of good $\psi$'s. For the fourth line, the first term uses the Haar integral formula for $k=2$ and $\Tr(P_i)=0$ and $\Tr(P_iP_j)=d\delta_{ij}$; the second term uses $\braket{\psi|P_i|\psi}\le1$. The fifth line uses $\|\theta\|_1^2\le m\|\theta\|_2^2$. 
    The last line holds as long as we take, say, $c\ge 4$.
    Put this back to~\eqref{eq:eg_Gamma_lower}:
    \begin{equation}
        (\Gamma_p^\ob)^{-1}\ge \min_{\|\theta\|_p=1}\|\theta\|_2^2~\Omega\left(\frac{1}{d\log d}\right) = \left\{\begin{aligned}
            &\Omega\left(\frac{1}{d\log d}\right),\quad\textrm{if}~p\in[2,\infty],\\
            &\Omega\left(\frac{1}{d^{\frac{4}{p}-1}\log d}\right),\quad\textrm{if}~p\in[1,2).\\
        \end{aligned}\right.
    \end{equation}
    This concludes our proof for the bounds on $\Gamma_p^\ob$. 
    
    Now we prove the claimed thresholds. The lower bound threshold immediately follows from \Cref{thm:lower_threshold}:
    \begin{equation}
        \eta^{\ob}_p = \frac{1}{6\left\|\|P_1\|_\infty,\cdots,\|P_m\|_\infty\right\|_q} = \frac16 (d^2-1)^{-\frac1q}\ge \frac16 d^{-\frac2q} = \frac16d^{\frac2p-2}.
    \end{equation}
    This means $\varepsilon\le\textrm{R.H.S.}$ is sufficient to guarantee $\varepsilon\le\eta_p^\ob$. For the upper bound threshold, combining \Cref{thm:ob-estimation-upper} with our lower bound of $\Gamma_p^\ob=\Omega(d)$, we have
    \begin{equation}
        \overline{\eta}^{\ob}_p = \sqrt{\Gamma_p^\ob/d^3} = \Omega(d^{-1}).
    \end{equation}
    Therefore, there exists an absolute constant $c_0>0$ such that when $\varepsilon<c_0d^{-1}$ it is guaranteed that $\epsilon<\overline{\eta}^{\ob}_p$. This completes the proof.
\end{proof}

\begin{corollary} For shadow estimation of complete Pauli observables, 
    $\Omega(d)\leq \Gamma_2 \leq O(d^3\log d)$. 
\end{corollary}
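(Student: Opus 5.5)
The plan is to sandwich $\Gamma_2$ between two quantities already controlled in the proof of \Cref{thm:oblivious-pauli}. For the lower bound I will use \Cref{lem:simple_relation}; for the upper bound I will reuse the Haar-measurement estimate from that proof, reading it as a matrix inequality on the Fisher information. Throughout, $O_i=Q_i=P_i$ for all $i\in[m]$ with $m=d^2-1$. The observables are complete, so $B=\emptyset$ and $(I(\rho_0,M)^{-1})_{AA}=I(\rho_0,M)^{-1}$.

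\emph{Lower bound.} By \Cref{lem:simple_relation} with $p=2$, $\Gamma_2\ge\Gamma^\ob_2$. \Cref{thm:oblivious-pauli} gives $\Gamma^\ob_p=\Omega(d)$ for every $p$, in particular $p=2$. Together these yield $\Gamma_2=\Omega(d)$ immediately.

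\emph{Upper bound.} Since $\Gamma_2$ is an infimum over $M\in\cM$, I fix $M=M_\mu$, the Haar random measurement. Then
\begin{equation}
\Gamma_2\le\sup_{\rho_0\in\cS^\circ}\norm{\diag\big(I(\rho_0,M_\mu)^{-1}\big)^{1/2}}_2^2=\sup_{\rho_0\in\cS^\circ}\Tr\big(I(\rho_0,M_\mu)^{-1}\big).
\end{equation}
The proof of \Cref{thm:oblivious-pauli} established, for every $\rho_0\in\cS^\circ$ and every $\theta\in\bR^m$, the bound $\theta^\top I(\rho_0,M_\mu)\theta\ge \|\theta\|_2^2\,\Omega\big(1/(d\log d)\big)$, with a constant independent of $\rho_0$. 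This is exactly the operator inequality $I(\rho_0,M_\mu)\succeq \Omega\big(1/(d\log d)\big)\,\id$. Hence $I(\rho_0,M_\mu)$ is invertible and $I(\rho_0,M_\mu)^{-1}\preceq O(d\log d)\,\id$. Each diagonal entry is therefore $O(d\log d)$, and summing over the $m=d^2-1$ coordinates gives $\Tr(I(\rho_0,M_\mu)^{-1})=O(d^3\log d)$ uniformly in $\rho_0$. Taking the supremum over $\rho_0$ gives $\Gamma_2=O(d^3\log d)$.

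The only delicate point is recognizing that the earlier good/bad-$\psi$ estimate is uniform in both $\rho_0$ and $\theta$, so that it upgrades to a spectral lower bound on the full FIM. This is what makes the inversion step valid. No new computation beyond that estimate is needed; the gap between the two bounds comes from summing $d^2$ diagonal entries, each bounded only by the worst-case eigenvalue.
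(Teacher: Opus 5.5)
Your proof is correct and follows essentially the same route as the paper. The lower bound is identical. For the upper bound, the paper uses $\Tr(I^{-1})\le m\,\lambda_{\max}(I^{-1})$ to get $\Gamma_2\le m\,\Gamma_2^{\ob}$ and then invokes $\Gamma_2^{\ob}=O(d\log d)$. That last bound is exactly your uniform Haar-measurement estimate $I(\rho_0,M_\mu)\succeq\Omega(1/(d\log d))\,\id$, routed through the definition of $\Gamma_2^{\ob}$ instead of inlined as you do.
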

\begin{proof}
    The lower bound part follows from \Cref{lem:simple_relation}. The upper bound part can be seen as follows:
    \begin{equation}
    \begin{aligned}
        \Gamma_2&=\inf_{M\in\cM}\sup_{\rho_0\in\cS^\circ}\Tr(I_{\rho_0,M}^{-1})
        \\&\le m \inf_{M\in\cM}\sup_{\rho_0\in\cS^\circ}\left(\min_{\|\theta\|_2=1}\theta^\top I_{\rho_0,M}\theta\right)^{-1}
        \\&=m~\Gamma_2^\ob
        \\&=O(d^3\log d).
    \end{aligned}
    \end{equation}
    The second line uses the variational expression for the minimal eigenvalue.
\end{proof}

We also consider Pauli observable estimation with few-copy measurements. We propose the following lower bound regarding \Cref{prob:p-estimation}(\ref{prob:oblivious-p-estimation}) and \Cref{prob:distinguish}(\ref{prob:oblivious-distinguish}). 

\begin{theorem}[Lower bound of Pauli observables for \Cref{prob:p-estimation}(\ref{prob:oblivious-p-estimation}) and \Cref{prob:distinguish}(\ref{prob:oblivious-distinguish}), $c$-copy measurements]\label{thm:lower-c-copy-pauli}
Let $d=2^n$ and $\{P_i\}_{i=1}^{d^2-1}$ be all $n$-qubit traceless Pauli operators. 

Using (possibly adaptive) measurement strategy with $c$-copy measurements, the sample complexity of $\rho$ parametrized by $\{P_i\}_{i=1}^{d^2-1}$ required to obtain a bounded, unbiased estimator of $\theta$ in \Cref{prob:p-estimation} (or a bounded, unbiased estimator of $\theta_\alpha$ for all $\alpha$ satisfying $\norm{\alpha}_q \leq 1$ in \Cref{prob:oblivious-p-estimation}) is
\begin{equation}
N = \Omega\left(\frac{d}{c \epsilon^2 \log(1/\epsilon)} \right), 
\label{eq:unbiased-lower-c-Pauli}
\end{equation}
for any $\epsilon > 0$ and $p \in [1,\infty]$.

Using (possibly adaptive) measurement strategy with $c$-copy measurements, consider the many-versus-one distinguishing tasks in \Cref{prob:distinguish}(\ref{prob:oblivious-distinguish}) with states $\rho$ parametrized by $\{P_i\}_{i=1}^{d^2-1}$. The sample complexity required to solve this task is
\begin{align}
N =\Omega\left(\frac{d}{c \epsilon^2}\right), 
\label{eq:distinguish-lower-c-Pauli}
\end{align}
for any 
\begin{align}
\label{eq:distinguish-lower-c-Pauli-th}
\epsilon \leq  
\begin{cases}
 \min\left\{ \frac{1}{6}d^{\frac{2}{p}-2}, 
\frac{c_1}{c d^{\frac{2}{p}-\frac{1}{2}}\sqrt{\log d}  } 
 \right\}
, \quad&\textrm{if}~p\in[1,2),\\
  \min\left\{ \frac{1}{6}d^{\frac{2}{p}-2},
\frac{c_1}{c d^{\frac{5}{2}-\frac{4}{p}}\sqrt{\log d}  } 
  \right\}
, \quad&\textrm{if}~p\in[2,\infty], 
\end{cases} 
\end{align} 
where $c_1 > 0$ is an absolute constant. 
\end{theorem}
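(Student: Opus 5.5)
The plan is to specialize the two general $c$-copy lower bounds, \Cref{thm:estimation-lower-c} and \Cref{thm:distinguish-c-copy}, to the complete Pauli basis and supply the Pauli-specific inputs: a lower bound on $\Gamma^\ob_p$, an upper bound on $a_{\max}$, and the threshold $\eta^\ob$.

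For the unbiased-estimation bound \eqref{eq:unbiased-lower-c-Pauli}, I will take the oblivious version \eqref{eq:unbiased-lower-c-ob} of \Cref{thm:estimation-lower-c} and insert $\Gamma^\ob_p=\Omega(d)$ from \Cref{thm:oblivious-pauli}. This gives $N=\Omega(d/(c\epsilon^2\log(1/\epsilon)))$ for all $p\in[1,\infty]$. For \Cref{prob:p-estimation} itself, I will use \Cref{lem:simple_relation}, which gives $\Gamma_p\ge\Gamma^\ob_p$. Note that the Problem~2 bound in \Cref{thm:estimation-lower-c} is only stated for $p\ge2$. The oblivious problem reduces to \Cref{prob:p-estimation} via H\"older's inequality: an unbiased bounded $\htheta$ yields the unbiased bounded estimator $\alpha\cdot\htheta$. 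Hence the oblivious bound transfers to \Cref{prob:p-estimation} for every $p$, and this is the route I will take.

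For the distinguishing bound \eqref{eq:distinguish-lower-c-Pauli}, I will invoke \Cref{thm:distinguish-c-copy}. Because the basis is complete, $B=\emptyset$ and $Q'_i=Q_i=P_i$, so no basis transformation is needed. By \Cref{thm:lower_threshold}, $\eta^\ob\ge\frac16 d^{2/p-2}$, as already computed in the proof of \Cref{thm:oblivious-pauli}. The remaining task is to bound $a_{\max}$ from above. For $\rho_0\in\cS_{1/2}$ we have $\rho_0^{-1}\preceq 2d\,\id$, so
\[
a_{\rho_0}(\theta)\le \frac{2d}{d^2}\tr\Big(\big(\textstyle\sum_i\theta_iP_i\big)^2\Big)=2d\|\theta\|_2^2 .
\]
Maximizing over $\|\theta\|_p=1$ gives $\|\theta\|_2^2\le m^{1-2/p}\le d^{2-4/p}$ when $p\ge2$, and $\|\theta\|_2^2\le 1$ when $p<2$. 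Hence $a_{\max}\le 2d^{3-4/p}$ for $p\ge2$ and $a_{\max}\le 2d$ for $p<2$.

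Next I will evaluate $\eta^\ob_c=\min\{1/(3c\,a_{\max}\sqrt{\Gamma^\ob_p}),\,1/(12c\sqrt{a_{\max}})\}$. For the first term I need an \emph{upper} bound on $\Gamma^\ob_p$, taken from \Cref{thm:oblivious-pauli}: $O(d\log d)$ for $p\ge2$ and $O(d^{4/p-1}\log d)$ for $p<2$.

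For $p\ge2$ the first term is
\[
\Omega\big(1/(c\,d^{3-4/p}\sqrt{d\log d})\big)=\Omega\big(1/(c\,d^{7/2-4/p}\sqrt{\log d})\big).
\]
This is worse than the stated $d^{5/2-4/p}$, so I will need to recheck the bookkeeping. The likely culprit is that the higher-order term enters with the normalization $1/d^k$ from $(\bar\theta,\bar\varphi)=(\theta,\varphi)/d$ on top of the $1/d^2$ already inside $a_{\rho_0}$. I expect tracking which factor of $d$ appears where to reproduce the claimed exponent. For $p<2$, the analogous computation gives $1/(c\,d\sqrt{d^{4/p-1}\log d})=1/(c\,d^{2/p+1/2}\sqrt{\log d})$, which differs from the target by the same single factor of $d$. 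This supports the diagnosis that one extra $1/d$ is lost in the normalization.

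The second term, $1/(12c\sqrt{a_{\max}})$, is dominated by the first in both regimes, since $\sqrt{a_{\max}\Gamma^\ob_p}\ge\Omega(1)$. Finally, I will collect the absolute constants into $c_1$.

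I expect the main obstacle to be this constant and exponent bookkeeping of $a_{\max}$ against the stated thresholds. The rest is direct substitution into earlier results.
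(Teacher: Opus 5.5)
You follow the paper's route: specialize \Cref{thm:estimation-lower-c} and \Cref{thm:distinguish-c-copy} to the Pauli basis, take $Q'=P$, use $\eta^{\ob}\ge\frac16 d^{2/p-2}$, and bound $a_{\max}$ via $\rho_0^{-1}\preceq 2d\,\id$. However, the proposal never reaches the stated threshold, because of an arithmetic slip in the one step specific to this theorem. With $X=\sum_i\theta_iP_i$ you have $\tr(X^2)=d\|\theta\|_2^2$, so
\[
a_{\rho_0}(\theta)=\frac{\tr(X\rho_0^{-1}X)}{d^2}\le\frac{2d}{d^2}\cdot d\|\theta\|_2^2=2\|\theta\|_2^2 ,
\]
not $2d\|\theta\|_2^2$. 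Hence $a_{\max}\le 2$ for $p\in[1,2)$ and $a_{\max}\le 2d^{2-4/p}$ for $p\in[2,\infty]$, as in the paper. Substitute these, together with the upper bounds on $\Gamma^{\ob}_p$ from \Cref{thm:oblivious-pauli}, into the first term $1/(3c\,a_{\max}\sqrt{\Gamma^{\ob}_p})$ of $\eta^{\ob}_c$. This gives $\Omega\big(1/(c\,d^{5/2-4/p}\sqrt{\log d})\big)$ for $p\ge 2$ and $\Omega\big(1/(c\,d^{2/p-1/2}\sqrt{\log d})\big)$ for $p<2$, exactly the claimed exponents. Your observation that the second term $1/(12c\sqrt{a_{\max}})$ is then the larger one still holds, with constants absorbed into $c_1$.

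Your diagnosis of the mismatch is therefore misdirected, and following it would not close the gap. The $1/d^2$ inside $a_{\rho_0}$ is precisely the $(\bar\theta,\bar\varphi)=(\theta,\varphi)/d$ normalization, and $\eta^{\ob}_c$ is stated directly in terms of $a_{\max}$. Nothing in the proof of \Cref{thm:distinguish-c-copy} needs to be reopened. Moreover, an extra $d^{-k}$ in the $k$-th order term would amount to replacing $a_{\rho_0}$ by $a_{\rho_0}/d^2$. That changes the first term of $\eta^{\ob}_c$ by $d^2$, so it cannot account for a single factor of $d$.

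A minor, separate caveat on the estimation part concerns the H\"older reduction. Under coordinatewise boundedness $|\htheta_i-\theta_i|\le u$, it only gives $|\alpha\cdot\htheta-\theta_\alpha|\le\|\htheta-\theta\|_p\le m^{1/p}u$. So for \Cref{prob:p-estimation} at finite $p$, the median step of \Cref{thm:estimation-lower-single-ob} costs $\log(m^{1/p}/\epsilon)$ rather than $\log(1/\epsilon)$. This is the same logarithmic factor as in \eqref{eq:unbiased-lower-c}, and the paper's one-line citation has the same looseness. It affects only log factors.
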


\begin{proof}
\eqref{eq:unbiased-lower-c-Pauli} and \eqref{eq:distinguish-lower-c-Pauli} can be obtained by combining \Cref{thm:distinguish-c-copy} and \Cref{thm:estimation-lower-c} with \Cref{thm:oblivious-pauli}. In the following, we focus on deriving \eqref{eq:distinguish-lower-c-Pauli-th}. Note that with \Cref{thm:oblivious-pauli}, we only need to compute (an upper bound on) $a_{\max}$ defined in \Cref{thm:distinguish-c-copy}. Recall that
\begin{gather}
a_{\max} :=\max_{\rho_0\in\cS_{1/2},\|\theta\|_p=1} a_{\rho_0}(\theta),
\\
a_{\rho_0}(\theta) = \theta^\top G^{(\rho_0)}\theta,\quad 
G^{(\rho_0)}_{ij}:= \frac{\tr(P_i\rho_0^{-1}P_j)}{d^{2}}.
\end{gather}
Given the (Hilbert–Schmidt) orthogonality
\begin{align}
\tr(P_iP_j)=d\delta_{ij},
\end{align}
For the $\rho_0$–weighted Gram matrix
\begin{align}
G^{(\rho_0)}_{ij}:=\frac{\tr\big(P_i\rho_0^{-1}P_j\big)}{d^2},
\end{align}
its Rayleigh quotient on any $\theta\in\mathbb{R}^M$ is
\begin{align}
\theta^\top G^{(\rho_0)}\theta=\frac{\tr\Big(\big(\sum_i\theta_i P_i\big)\rho_0^{-1}\big(\sum_j\theta_j P_j\big)\Big)}{d^2}=\frac{\tr(X\rho_0^{-1}X)}{d^2},\quad X:=\sum_i\theta_i P_i.
\end{align}
Using the operator bounds
\begin{align}
\rho_0^{-1}\ \le\ \lambda_{\max}(\rho_0^{-1}) \id
\end{align}
where $\lambda_{\max}(\cdot)$ denotes the largest eigenvalue, we obtain
\begin{align}
\theta^\top G^{(\rho_0)}\theta\le\frac{\lambda_{\max}(\rho_0^{-1})\tr(X^2)}{d^2}.
\end{align}
Because $\{P_i\}$ are orthogonal with $\tr(P_i^2)=d$, we have
\begin{align}
\tr(X^2)=\sum_{i,j}\theta_i\theta_j\tr(P_iP_j)=d\|\theta\|_2^2.
\end{align}
Therefore
\begin{align}
\theta^\top G^{(\rho_0)}\theta\le\frac{\lambda_{\max}(\rho_0^{-1})}{d}\|\theta\|_2^2,
\end{align}
Using the standard $\ell_p$–$\ell_2$ extrema, we have
\begin{align}
\max_{\|\theta\|_p=1}\|\theta\|_2^2\leq\begin{cases}
1, \quad& p\in[1,2),\\
d^{2-4/p}, & p\in[2,\infty].
\end{cases}
\end{align}
Note that
\begin{align}
\rho_0=\frac12\frac{\id}{d}+\frac12 \sigma,
\end{align}
for some state $\sigma$ as $\rho_0\in\cS_{1/2}$. The spectrum of $\rho_0$ obeys
\begin{align}
\lambda_{\min}(\rho_0)\ \ge\ \frac12\cdot\frac{1}{d},
\end{align}
hence
\begin{align}
\lambda_{\max}(\rho_0^{-1})\le2d,
\end{align}
and 
\begin{align}
    a_{\max} \leq \begin{cases}
2, \quad& p\in[1,2),\\
2 d^{2-4/p}, & p\in[2,\infty].
\end{cases}
\end{align}
Combining with \Cref{thm:distinguish-c-copy}, we immediately obtain \eqref{eq:distinguish-lower-c-Pauli-th}.
\end{proof}

\section*{Acknowledgment}
\addcontentsline{toc}{section}{Acknowledgment}
The authors thanks Sitan Chen, Yunchao Liu and Yuxiang Yang for valuable discussions and feedback.
S.C. acknowledges support from the Institute for Quantum Information and Matter, an NSF Physics Frontiers Center (NSF Grant PHY-2317110).
S.Z. acknowledges funding provided by Perimeter Institute for Theoretical Physics, a research institute supported in part by the Government of Canada through the Department of Innovation, Science and Economic Development Canada and by the Province of Ontario through the Ministry of Colleges and Universities.
S.C. and S.Z. also acknowledge support from the Kavli Institute for Theoretical Physics (NSF Grant PHY-2309135), where part of this work was completed.
W.G. acknowledges support by the Von Neumann Award from Harvard Computer Science and NSF Grant CCF-2430375. 

%%%%%%%%%%%%%%%%%%%%%%%%%%%%%%%%%%%%%%%%%%%%%%%%%%%%%%%%%%%%%%%%%%%%%%%%%%%%%%%%%%%%%%%%%%%%%%%%%%%%%%%%%%%%%%%%%%%%%%%%%%%%%%%%%%%%%%%%%%%%%%%%%%%%%%%%%%%%%%%%%%%%%%%%%%%%%%%%%%%%%%%%%%

\bibliographystyle{unsrt}
\bibliography{ref}

\begin{thebibliography}{10}

\bibitem{degen2017quantum}
Christian~L Degen, Friedemann Reinhard, and Paola Cappellaro.
\newblock Quantum sensing.
\newblock {\em Reviews of modern physics}, 89(3):035002, 2017.

\bibitem{pirandola2018advances}
Stefano Pirandola, B~Roy Bardhan, Tobias Gehring, Christian Weedbrook, and Seth Lloyd.
\newblock Advances in photonic quantum sensing.
\newblock {\em Nature Photonics}, 12(12):724--733, 2018.

\bibitem{Dalzell2025QuantumAlgorithms}
Alexander~M. Dalzell, Sam McArdle, Mario Berta, Przemyslaw Bienias, Chi-Fang Chen, Andr{\'a}s Gily{\'e}n, Connor~T. Hann, Michael~J. Kastoryano, Emil~T. Khabiboulline, Aleksander Kubica, Grant Salton, Samson Wang, and Fernando G. S.~L. Brand{\~a}o.
\newblock {\em Quantum Algorithms: A Survey of Applications and End-to-end Complexities}.
\newblock Cambridge University Press, 2025.

\bibitem{harper2020efficient}
Robin Harper, Steven~T Flammia, and Joel~J Wallman.
\newblock Efficient learning of quantum noise.
\newblock {\em Nature Physics}, 16(12):1184--1188, 2020.

\bibitem{hashim2024practical}
Akel Hashim, Long~B Nguyen, Noah Goss, Brian Marinelli, Ravi~K Naik, Trevor Chistolini, Jordan Hines, JP~Marceaux, Yosep Kim, Pranav Gokhale, et~al.
\newblock A practical introduction to benchmarking and characterization of quantum computers.
\newblock {\em arXiv preprint arXiv:2408.12064}, 2024.

\bibitem{giovannetti2011advances}
Vittorio Giovannetti, Seth Lloyd, and Lorenzo Maccone.
\newblock Advances in quantum metrology.
\newblock {\em Nature photonics}, 5(4):222--229, 2011.

\bibitem{pezze2018quantum}
Luca Pezze, Augusto Smerzi, Markus~K Oberthaler, Roman Schmied, and Philipp Treutlein.
\newblock Quantum metrology with nonclassical states of atomic ensembles.
\newblock {\em Reviews of Modern Physics}, 90(3):035005, 2018.

\bibitem{van2000asymptotic}
Aad~W Van~der Vaart.
\newblock {\em Asymptotic statistics}, volume~3.
\newblock Cambridge university press, 2000.

\bibitem{arunachalam2017guest}
Srinivasan Arunachalam and Ronald De~Wolf.
\newblock Guest column: A survey of quantum learning theory.
\newblock {\em ACM Sigact News}, 48(2):41--67, 2017.

\bibitem{aaronson2018shadow}
Scott Aaronson.
\newblock Shadow tomography of quantum states.
\newblock In {\em Proceedings of the 50th Annual ACM SIGACT Symposium on Theory of Computing}, pages 325--338, 2018.

\bibitem{kay1993fundamentals}
Steven~M Kay.
\newblock {\em Fundamentals of statistical signal processing: Volume I Estimation theory}.
\newblock Prentice-Hall, Inc., 1993.

\bibitem{lehmann2006theory}
Erich~L Lehmann and George Casella.
\newblock {\em Theory of point estimation}.
\newblock Springer Science \& Business Media, 2006.

\bibitem{chen2024optimal}
Sitan Chen, Weiyuan Gong, and Qi~Ye.
\newblock Optimal tradeoffs for estimating {Pauli} observables.
\newblock In {\em 2024 IEEE 65th Annual Symposium on Foundations of Computer Science (FOCS)}, pages 1086--1105, 2024.

\bibitem{huang2022quantum}
Hsin-Yuan Huang, Michael Broughton, Jordan Cotler, Sitan Chen, Jerry Li, Masoud Mohseni, Hartmut Neven, Ryan Babbush, Richard Kueng, John Preskill, et~al.
\newblock Quantum advantage in learning from experiments.
\newblock {\em Science}, 376(6598):1182--1186, 2022.

\bibitem{huang2021information}
Hsin-Yuan Huang, Richard Kueng, and John Preskill.
\newblock Information-theoretic bounds on quantum advantage in machine learning.
\newblock {\em Physical Review Letters}, 126(19):190505, 2021.

\bibitem{aharonov2022quantum}
Dorit Aharonov, Jordan Cotler, and Xiao-Liang Qi.
\newblock Quantum algorithmic measurement.
\newblock {\em Nature Communications}, 13(887):1--9, 2022.

\bibitem{bubeck2020entanglement}
Sebastien Bubeck, Sitan Chen, and Jerry Li.
\newblock Entanglement is necessary for optimal quantum property testing.
\newblock In {\em 2020 IEEE 61st Annual Symposium on Foundations of Computer Science (FOCS)}, pages 692--703. IEEE, 2020.

\bibitem{chen2022exponential}
Sitan Chen, Jordan Cotler, Hsin-Yuan Huang, and Jerry Li.
\newblock Exponential separations between learning with and without quantum memory.
\newblock In {\em 2021 IEEE 62nd Annual Symposium on Foundations of Computer Science (FOCS)}, pages 574--585. IEEE, 2022.

\bibitem{chen2023complexity}
Sitan Chen, Jordan Cotler, Hsin-Yuan Huang, and Jerry Li.
\newblock The complexity of nisq.
\newblock {\em Nat. Commun.}, 14(1):6001, 2023.

\bibitem{chen2025efficient}
Sitan Chen and Weiyuan Gong.
\newblock Efficient pauli channel estimation with logarithmic quantum memory.
\newblock {\em PRX Quantum}, 6(2):020323, 2025.

\bibitem{hu2025ansatz}
Hong-Ye Hu, Muzhou Ma, Weiyuan Gong, Qi~Ye, Yu~Tong, Steven~T. Flammia, and Susanne~F. Yelin.
\newblock Ansatz-free {Hamiltonian} learning with heisenberg-limited scaling.
\newblock {\em PRX Quantum}, 6:040315, Oct 2025.

\bibitem{sion1958general}
Maurice Sion.
\newblock On general minimax theorems.
\newblock {\em Pacific Journal of Mathematics}, 8(1):171--176, 1958.

\bibitem{barndorff2000fisher}
O~E Barndorff-Nielsen and R~D Gill.
\newblock Fisher information in quantum statistics.
\newblock {\em J. Phys. A: Math. Gen.}, 33(24):4481--4490, jun 2000.

\bibitem{hayashi2011comparison}
Masahito Hayashi.
\newblock Comparison between the {C}ramer-{R}ao and the mini-max approaches in quantum channel estimation.
\newblock {\em Commun. Math. Phys.}, 304(3):689--709, 2011.

\bibitem{yang2019attaining}
Yuxiang Yang, Giulio Chiribella, and Masahito Hayashi.
\newblock Attaining the ultimate precision limit in quantum state estimation.
\newblock {\em Communications in Mathematical Physics}, 368(1):223--293, 2019.

\bibitem{chen2022complexity}
Sitan Chen, Jordan Cotler, Hsin-Yuan Huang, and Jerry Li.
\newblock The complexity of {NISQ}.
\newblock {\em Nature Communications}, 14(1):6001, 2023.

\bibitem{banaszek2013focus}
Konrad Banaszek, Marcus Cramer, and David Gross.
\newblock Focus on quantum tomography.
\newblock {\em New Journal of Physics}, 15(12):125020, 2013.

\bibitem{blume2010optimal}
Robin Blume-Kohout.
\newblock Optimal, reliable estimation of quantum states.
\newblock {\em New Journal of Physics}, 12(4):043034, 2010.

\bibitem{gross2010quantum}
David Gross, Yi-Kai Liu, Steven~T Flammia, Stephen Becker, and Jens Eisert.
\newblock Quantum state tomography via compressed sensing.
\newblock {\em Physical Review Letters}, 105(15):150401, 2010.

\bibitem{hradil1997quantum}
Zdenek Hradil.
\newblock Quantum state estimation.
\newblock {\em Physical Review A}, 55(3):R1561, 1997.

\bibitem{haah2016sample}
Jeongwan Haah, Aram~W Harrow, Zhengfeng Ji, Xiaodi Wu, and Nengkun Yu.
\newblock Sample-optimal tomography of quantum states.
\newblock In {\em Proceedings of the Forty-eighth Annual ACM Symposium on Theory of Computing}, pages 913--925, 2016.

\bibitem{odonnell2016efficient}
Ryan O'Donnell and John Wright.
\newblock Efficient quantum tomography.
\newblock In {\em Proceedings of the Forty-eighth Annual ACM Symposium on Theory of Computing}, pages 899--912, 2016.

\bibitem{aaronson2018online}
Scott Aaronson, Xinyi Chen, Elad Hazan, Satyen Kale, and Ashwin Nayak.
\newblock Online learning of quantum states.
\newblock In {\em Advances in neural information processing systems}, volume~31, 2018.

\bibitem{aaronson2019gentle}
Scott Aaronson and Guy~N Rothblum.
\newblock Gentle measurement of quantum states and differential privacy.
\newblock In {\em Proceedings of the 51st Annual ACM SIGACT Symposium on Theory of Computing}, pages 322--333, 2019.

\bibitem{buadescu2021improved}
Costin B\u{a}descu and Ryan O'Donnell.
\newblock Improved quantum data analysis.
\newblock In {\em Proceedings of the 53rd Annual ACM SIGACT Symposium on Theory of Computing}, pages 1398--1411, 2021.

\bibitem{brandao2019quantum}
Fernando~GSL Brand{\~a}o, Amir Kalev, Tongyang Li, Cedric Yen-Yu Lin, Krysta~M Svore, and Xiaodi Wu.
\newblock Quantum {SDP} solvers: Large speed-ups, optimality, and applications to quantum learning.
\newblock In {\em 46th International Colloquium on Automata, Languages, and Programming (ICALP 2019)}. Schloss Dagstuhl-Leibniz-Zentrum fuer Informatik, 2019.

\bibitem{gong2023learning}
Weiyuan Gong and Scott Aaronson.
\newblock Learning distributions over quantum measurement outcomes.
\newblock In {\em International Conference on Machine Learning}, pages 11598--11613. PMLR, 2023.

\bibitem{watts2024quantum}
Adam~Bene Watts and John Bostanci.
\newblock Quantum event learning and gentle random measurements.
\newblock In {\em 15th Innovations in Theoretical Computer Science Conference (ITCS 2024)}. Schloss-Dagstuhl-Leibniz Zentrum f{\"u}r Informatik, 2024.

\bibitem{buadescu2019quantum}
Costin B{\u{a}}descu, Ryan O'Donnell, and John Wright.
\newblock Quantum state certification.
\newblock In {\em Proceedings of the 51st Annual ACM SIGACT Symposium on Theory of Computing}, pages 503--514, 2019.

\bibitem{chen2024optimalshadow}
Sitan Chen, Jerry Li, and Allen Liu.
\newblock Optimal high-precision shadow estimation.
\newblock {\em arXiv:2407.13874}, 2024.

\bibitem{pelecanos2025debiased}
Angelos Pelecanos, Jack Spilecki, and John Wright.
\newblock The debiased keyl's algorithm: a new unbiased estimator for full state tomography.
\newblock {\em arXiv:2510.07788}, 2025.

\bibitem{huang2020predicting}
Hsin-Yuan Huang, Richard Kueng, and John Preskill.
\newblock Predicting many properties of a quantum system from very few measurements.
\newblock {\em Nature Physics}, 16(10):1050--1057, October 2020.

\bibitem{elben2023randomized}
Andreas Elben, Steven~T Flammia, Hsin-Yuan Huang, Richard Kueng, John Preskill, Beno{\^\i}t Vermersch, and Peter Zoller.
\newblock The randomized measurement toolbox.
\newblock {\em Nature Reviews Physics}, 5(1):9--24, 2023.

\bibitem{knill2008randomized}
Emanuel Knill, Dietrich Leibfried, Rolf Reichle, Joe Britton, R~Brad Blakestad, John~D Jost, Chris Langer, Roee Ozeri, Signe Seidelin, and David~J Wineland.
\newblock Randomized benchmarking of quantum gates.
\newblock {\em Phys. Rev. A}, 77(1):012307, 2008.

\bibitem{dankert2009exact}
Christoph Dankert, Richard Cleve, Joseph Emerson, and Etera Livine.
\newblock Exact and approximate unitary 2-designs and their application to fidelity estimation.
\newblock {\em Phys. Rev. A}, 80(1):012304, 2009.

\bibitem{emerson2005scalable}
Joseph Emerson, Robert Alicki, and Karol {\.Z}yczkowski.
\newblock Scalable noise estimation with random unitary operators.
\newblock {\em Journal of Optics B: Quantum and Semiclassical Optics}, 7(10):S347, 2005.

\bibitem{brydges2019probing}
Tiff Brydges, Andreas Elben, Petar Jurcevic, Beno{\^\i}t Vermersch, Christine Maier, Ben~P Lanyon, Peter Zoller, Rainer Blatt, and Christian~F Roos.
\newblock Probing {R}{\'e}nyi entanglement entropy via randomized measurements.
\newblock {\em Science}, 364(6437):260--263, 2019.

\bibitem{chen2022quantum}
Senrui Chen, Sisi Zhou, Alireza Seif, and Liang Jiang.
\newblock Quantum advantages for {P}auli channel estimation.
\newblock {\em Physical Review A}, 105(3):032435, 2022.

\bibitem{chen2024tight}
Senrui Chen, Changhun Oh, Sisi Zhou, Hsin-Yuan Huang, and Liang Jiang.
\newblock Tight bounds on {Pauli} channel learning without entanglement.
\newblock {\em Physical Review Letters}, 132(18):180805, 2024.

\bibitem{seif2024entanglement}
Alireza Seif, Senrui Chen, Swarnadeep Majumder, Haoran Liao, Derek~S Wang, Moein Malekakhlagh, Ali Javadi-Abhari, Liang Jiang, and Zlatko~K Minev.
\newblock Entanglement-enhanced learning of quantum processes at scale.
\newblock {\em arXiv:2408.03376}, 2024.

\bibitem{chen2021hierarchy}
Sitan Chen, Jordan Cotler, Hsin-Yuan Huang, and Jerry Li.
\newblock A hierarchy for replica quantum advantage.
\newblock {\em arXiv:2111.05874}, 2021.

\bibitem{ye2025exponential}
Qi~Ye, Zhenhuan Liu, and Dong-Ling Deng.
\newblock Exponential advantage from one more replica in estimating nonlinear properties of quantum states.
\newblock {\em arXiv:2509.24000}, 2025.

\bibitem{noller2025infinite}
Jan N{\"o}ller, Viet~T Tran, Mariami Gachechiladze, and Richard Kueng.
\newblock An infinite hierarchy of multi-copy quantum learning tasks.
\newblock {\em arXiv:2510.08070}, 2025.

\bibitem{chen2022tight2}
Sitan Chen, Jerry Li, Brice Huang, and Allen Liu.
\newblock Tight bounds for quantum state certification with incoherent measurements.
\newblock In {\em 2022 IEEE 63rd Annual Symposium on Foundations of Computer Science (FOCS)}, pages 1205--1213. IEEE, 2022.

\bibitem{odonnell2025instance}
Ryan O'Donnell and Chirag Wadhwa.
\newblock Instance-optimal quantum state certification with entangled measurements.
\newblock {\em arXiv:2507.06010}, 2025.

\bibitem{chen2022toward}
Sitan Chen, Jerry Li, and Ryan O'Donnell.
\newblock Toward instance-optimal state certification with incoherent measurements.
\newblock In {\em Conference on Learning Theory}, pages 2541--2596. PMLR, 2022.

\bibitem{fawzi2023quantum}
Omar Fawzi, Nicolas Flammarion, Aur{\'e}lien Garivier, and Aadil Oufkir.
\newblock Quantum channel certification with incoherent strategies.
\newblock In {\em COLT 23-36th Annual Conference on Learning Theory}, pages 1--58, 2023.

\bibitem{chen2024optimalstate}
Sitan Chen, Jerry Li, and Allen Liu.
\newblock An optimal tradeoff between entanglement and copy complexity for state tomography.
\newblock In {\em Proceedings of the 56th Annual ACM Symposium on Theory of Computing}, pages 1331--1342, 2024.

\bibitem{anshu2023survey}
Anurag Anshu and Srinivasan Arunachalam.
\newblock A survey on the complexity of learning quantum states.
\newblock {\em Nature Reviews Physics}, pages 1--11, 2023.

\bibitem{helstrom1967minimum}
Carl~W Helstrom.
\newblock Minimum mean-squared error of estimates in quantum statistics.
\newblock {\em Physics Letters A}, 25(2):101--102, 1967.

\bibitem{helstrom1968minimum}
Carl~Wilhelm Helstrom.
\newblock The minimum variance of estimates in quantum signal detection.
\newblock {\em IEEE Trans. Inf. Theory}, 14(2):234--242, 1968.

\bibitem{helstrom1969quantum}
Carl~W Helstrom.
\newblock Quantum detection and estimation theory.
\newblock {\em Journal of Statistical Physics}, 1:231--252, 1969.

\bibitem{holevo2011probabilistic}
Alexander~S Holevo.
\newblock {\em Probabilistic and statistical aspects of quantum theory}, volume~1.
\newblock Springer Science \& Business Media, 2011.

\bibitem{braunstein1994statistical}
Samuel~L. Braunstein and Carlton~M. Caves.
\newblock Statistical distance and the geometry of quantum states.
\newblock {\em Phys. Rev. Lett.}, 72(22):3439--3443, May 1994.

\bibitem{paris2009quantum}
Matteo~GA Paris.
\newblock Quantum estimation for quantum technology.
\newblock {\em Int. J. Quantum Inf.}, 7(supp01):125--137, 2009.

\bibitem{albarelli2019evaluating}
Francesco Albarelli, Jamie~F Friel, and Animesh Datta.
\newblock Evaluating the holevo {C}ram{\'e}r-{R}ao bound for multiparameter quantum metrology.
\newblock {\em Physical Review Letters}, 123(20):200503, 2019.

\bibitem{gorecki2020optimal}
Wojciech G{\'o}recki, Sisi Zhou, Liang Jiang, and Rafa{\l} Demkowicz-Dobrza{\'n}ski.
\newblock Optimal probes and error-correction schemes in multi-parameter quantum metrology.
\newblock {\em Quantum}, 4:288, 2020.

\bibitem{tsang2020quantum}
Mankei Tsang, Francesco Albarelli, and Animesh Datta.
\newblock Quantum semiparametric estimation.
\newblock {\em Physical Review X}, 10(3):031023, 2020.

\bibitem{sidhu2021tight}
Jasminder~S Sidhu, Yingkai Ouyang, Earl~T Campbell, and Pieter Kok.
\newblock Tight bounds on the simultaneous estimation of incompatible parameters.
\newblock {\em Physical Review X}, 11(1):011028, 2021.

\bibitem{hayashi2023tight}
Masahito Hayashi and Yingkai Ouyang.
\newblock Tight {C}ram{\'e}r-{R}ao type bounds for multiparameter quantum metrology through conic programming.
\newblock {\em Quantum}, 7:1094, 2023.

\bibitem{gardner2024achieving}
James~W Gardner, Tuvia Gefen, Simon~A Haine, Joseph~J Hope, and Yanbei Chen.
\newblock Achieving the fundamental quantum limit of linear waveform estimation.
\newblock {\em Physical Review Letters}, 132(13):130801, 2024.

\bibitem{kahn2009local}
Jonas Kahn and M{\u{a}}d{\u{a}}lin Gu{\c{t}}{\u{a}}.
\newblock Local asymptotic normality for finite dimensional quantum systems.
\newblock {\em Communications in Mathematical Physics}, 289(2):597--652, 2009.

\bibitem{yamagata2013quantum}
Koichi Yamagata, Akio Fujiwara, and Richard~D. Gill.
\newblock {Quantum local asymptotic normality based on a new quantum likelihood ratio}.
\newblock {\em The Annals of Statistics}, 41(4):2197 -- 2217, 2013.

\bibitem{matsumoto2002new}
Keiji Matsumoto.
\newblock A new approach to the {C}ram{\'e}r-{R}ao-type bound of the pure-state model.
\newblock {\em Journal of Physics A: Mathematical and General}, 35(13):3111, 2002.

\bibitem{albarelli2019upper}
Francesco Albarelli, Mankei Tsang, and Animesh Datta.
\newblock Upper bounds on the holevo cram$\backslash$'er-rao bound for multiparameter quantum parametric and semiparametric estimation.
\newblock {\em arXiv:1911.11036}, 2019.

\bibitem{carollo2019quantumness}
Angelo Carollo, Bernardo Spagnolo, Alexander~A Dubkov, and Davide Valenti.
\newblock On quantumness in multi-parameter quantum estimation.
\newblock {\em Journal of Statistical Mechanics: Theory and Experiment}, 2019(9):094010, 2019.

\bibitem{demkowicz2020multi}
Rafa{\l} Demkowicz-Dobrza{\'n}ski, Wojciech G{\'o}recki, and M{\u{a}}d{\u{a}}lin Gu{\c{t}}{\u{a}}.
\newblock Multi-parameter estimation beyond quantum fisher information.
\newblock {\em Journal of Physics A: Mathematical and Theoretical}, 53(36):363001, 2020.

\bibitem{li2016fisher}
Nan Li, Christopher Ferrie, Jonathan~A Gross, Amir Kalev, and Carlton~M Caves.
\newblock Fisher-symmetric informationally complete measurements for pure states.
\newblock {\em Physical Review Letters}, 116(18):180402, 2016.

\bibitem{zhu2018universally}
Huangjun Zhu and Masahito Hayashi.
\newblock Universally {F}isher-symmetric informationally complete measurements.
\newblock {\em Physical Review Letters}, 120(3):030404, 2018.

\bibitem{vargas2024near}
C~Vargas, L~Pereira, and A~Delgado.
\newblock Near-optimal pure state estimation with adaptive fisher-symmetric measurements.
\newblock {\em arXiv:2412.04555}, 2024.

\bibitem{zhou2026randomized}
Sisi Zhou and Senrui Chen.
\newblock Randomized measurements for multiparameter quantum metrology.
\newblock {\em PRX Quantum}, 7(1):010314, 2026.

\bibitem{harrow2013church}
Aram~W Harrow.
\newblock The church of the symmetric subspace.
\newblock {\em arXiv:1308.6595}, 2013.

\bibitem{yu1997assouad}
Bin Yu.
\newblock {Assouad, Fano, and Le Cam}.
\newblock {\em Festschrift for Lucien Le Cam: research papers in probability and statistics}, pages 423--435, 1997.

\bibitem{vershynin2018high}
Roman Vershynin.
\newblock {\em High-{{Dimensional Probability}}: {{An Introduction}} with {{Applications}} in {{Data Science}}}.
\newblock {Cambridge University Press}, 1 edition, 2018.

\bibitem{hoeffding1963probability}
Wassily Hoeffding.
\newblock Probability {{Inequalities}} for {{Sums}} of {{Bounded Random Variables}}.
\newblock {\em Journal of the American Statistical Association}, 58(301):13--30, 1963.

\bibitem{chiribella2007continuous}
Giulio Chiribella, Giacomo~Mauro D’Ariano, and Dirk Schlingemann.
\newblock How continuous quantum measurements in finite dimensions are actually discrete.
\newblock {\em Phys. Rev. Lett.}, 98(19):190403, 2007.

\bibitem{kueng2017low}
Richard Kueng, Holger Rauhut, and Ulrich Terstiege.
\newblock Low rank matrix recovery from rank one measurements.
\newblock {\em Applied and Computational Harmonic Analysis}, 42(1):88--116, 2017.

\bibitem{gutaFastStateTomography2020}
M~Guţă, J~Kahn, R~Kueng, and J~A Tropp.
\newblock Fast state tomography with optimal error bounds.
\newblock {\em Journal of Physics A: Mathematical and Theoretical}, 53(20):204001, 2020.

\bibitem{chenCotler2025Lecture6TomographyII}
Sitan Chen and Jordan Cotler.
\newblock Lecture 6: Tomography ii: Single-copy measurements (operator norm).
\newblock Lecture notes for Harvard Physics 272 / CS 2233: Quantum Learning Theory (Fall 2025), September 2025.
\newblock Accessed: 2026-01-16.

\bibitem{robbins1955remark}
Herbert Robbins.
\newblock A remark on {S}tirling's formula.
\newblock {\em The American mathematical monthly}, 62(1):26--29, 1955.

\end{thebibliography}
\addcontentsline{toc}{section}{References}

\clearpage

\end{document}